\documentclass{article}

\usepackage[english]{babel}
\usepackage[letterpaper,top=2cm,bottom=2cm,left=3cm,right=3cm,marginparwidth=1.75cm]{geometry}
\usepackage[utf8]{inputenc}
\usepackage[T1]{fontenc}
\usepackage{setspace}
\usepackage{array}
\usepackage{enumitem}

\usepackage{amsmath,amsthm,amssymb,amsfonts,amscd,keyval}
\usepackage{mathtools,mathrsfs}
\mathtoolsset{mathic=true}
\usepackage{bbm}
\usepackage{adjustbox}
\usepackage{tensor}
\usepackage{braket}
\usepackage{slashed}
\usepackage{tikz-cd}
\usepackage{circuitikz}
\usepackage{multirow}

\usepackage{graphicx}
\usepackage{subcaption}
\usepackage{float}
\usepackage{abstract}
\usepackage{titlesec}
\usepackage{titletoc}
\usepackage[backend=biber,  style=ieee,
  citestyle=numeric-comp, maxcitenames=3, maxbibnames=99 ]{biblatex}
\usepackage[colorlinks=true, allcolors=blue]{hyperref}
\usepackage{microtype} 
\usepackage{url}
\usepackage{tikz}
\usetikzlibrary{arrows.meta,calc}

\usepackage{authblk}

\newcommand{\comments}[1]{}
\DeclareMathOperator{\rk}{rk}
\DeclareMathOperator{\Ima}{Im}

\DeclareMathOperator{\tr}{Tr}

\newcommand{\CZ}{\mathrm{C}Z}
\DeclareMathOperator{\CS}{\mathrm{C}S}

\newcommand{\CCZ}{\mathrm{C}\mathrm{C}Z}

\DeclareMathOperator{\C}{\mathcal{C}}

\DeclareMathOperator{\F}{\mathcal{F}}
\DeclareMathOperator{\HH}{\mathsf{H}}
\DeclareMathOperator{\G}{\mathsf{G}}
\DeclareMathOperator{\x}{\mathrm{x}}

\numberwithin{equation}{section} 
\theoremstyle{definition}
\newtheorem{definition}{Definition}[section]
\newtheorem*{example}{Example}

\theoremstyle{plain}
\newtheorem{remark}{Remark}
\newtheorem{claim}{Claim}
\newtheorem{theorem}[definition]{Theorem}
\newtheorem{proposition}[definition]{Proposition}
\newtheorem{lemma}[definition]{Lemma}
\newtheorem{corollary}[definition]{Corollary}

\newtheorem*{conjecture}{Conjecture}

\title{Theory of (co)homological invariants on quantum LDPC codes}
\author{Zimu Li}
\author{Yuguo Shao}
\author{Fuchuan Wei}
\author{Yiming Li}
\author{Zi-Wen Liu}
\affil{Yau Mathematical Sciences Center, Tsinghua University}
\date{}
\begin{document}

\maketitle

\begin{abstract}
	With recent breakthroughs in the construction of good quantum low-density parity-check (qLDPC) codes and nearly good quantum locally testable codes (qLTCs) via high-dimensional expanders and combinatorial sheaves, the study of (co)homological invariants of quantum code complexes, which fundamentally underlie their logical operations, has become evidently important. In this work, we establish a systematic framework for mathematically analyzing these invariants across a broad spectrum of constructions, from hypergraph product (HGP) codes to sheaf codes, by synthesizing tools from graph theory, algebraic topology, and number theory.
	
	Using commutative algebra, we generalize the notion of canonical logical representatives from HGP codes to the sheaf code setting, resolving a long-standing challenge in explicitly characterizing sheaf codewords. Building on this foundation, we present the first comprehensive computation of cup products within the intricate framework of sheaf codes. Given Artin's primitive root conjecture which holds under the generalized Riemann hypothesis, we prove that $\tilde{\Theta}(N)$ independent cup products can be supported on almost good qLDPC codes and qLTCs of length $N$, opening the possibility of achieving linearly many parallel, nontrivial, constant-depth multi-controlled-$Z$ gates.
	
	Moreover, by interpreting sheaf codes as covering spaces of HGP codes via graph lifts, we propose a scheme that inductively generates families of both HGP and sheaf codes in an interlaced fashion from a constant-size HGP code. Notably, the induction preserves all (co)homological invariants of the initial code. This perspective yields appealing mathematical and practical dividends: it provides a general framework for lifting invariants or logical gates, such as a constant-depth $\CCZ$ gate, from small codes to infinite code families, and enables efficient verification of such features in large code families by checking them on small instances.

    Our theory provides a substantive methodology for studying (co)homological invariants in HGP codes and extends it to the more complex setting of sheaf codes. In doing so, we reveal deep and unexpected connections among qLDPC codes, topology, and algebra, thereby laying the groundwork for future advances in quantum coding, fault tolerance, and physics.
\end{abstract}

\newpage
\tableofcontents


\section{Introduction}\label{sec:intro}

\subsection{Background}\label{sec:background}

The quest for quantum low-density parity-check (qLDPC) codes that simultaneously achieve high encoding rate, large code distance and desired fault-tolerant logical gates is crucial to fault-tolerant quantum computation. The construction of qLDPC codes has seen rapid progress in recent years, driven by the synthesis of ideas from classical coding theory~\cite{Tanner1981,SS1996,kalachev2023ProductExpansion,PK2023RobustlyTestable,Golowich_Lin2024,kalachev2025}, high-dimensional expanders~\cite{Hastings2021FiberBundle,PK2021,PK2022Good,Jeronimo2021,QuantumTanner2022}, and algebraic topology~\cite{DHLV2022,Breuckmann2024Cups,First2024,Dinur2024sheaf,Panteleev2024,Zhu2025A,Zhu2025B}, which culminated in the recent discovery of good qLDPC codes that have constant encoding rate and linear distance~\cite{PK2022Good,QuantumTanner2022,DHLV2022}. 
Nonetheless, quantum computing evidently requires more than just good codes: universal logical quantum information processing with low overhead is essential, with non-Clifford gates such as 
$T$ and $\mathrm{C}\mathrm{C}Z$ incurring dominating costs and hence constituting the key bottleneck. Utilizing the intrinsic code structures to realize encoded logical gates in a fault-tolerant manner is a natural and compelling route toward highly efficient fault tolerance schemes. Yet, while the standard parameters have been successfully made asymptotically optimal, combining fault-tolerant non-Clifford logical gates with high parameters remains a serious challenge that has drawn intensive research interest, with various kinds of partial progress from both mathematical~\cite{RainbowCode,Wills2024magic,He2025addressable,Golowich_Venkatesan2025,Nguyen2025CCZ,Breuckmann2024Cups,Lin2024transversal,Golowich_Lin2024,Li2025Poincare} and physical perspectives~\cite{10.21468/SciPostPhys.14.4.065,Chen_2023,Zhu2025A,Zhu2025B,zhu2026nonabelianqldpctqftformalism,christos2026nonabelianquantumlowdensityparity}.

Underlying the developments of qLDPC codes is an elegant and powerful topological correspondence: quantum Calderbank--Shor--Steane (CSS) codes can be understood as cochain complexes over combinatorial complexes, where physical qubits correspond to cells of a certain dimension and stabilizers are given by coboundary operators~\cite{Kitaev_2003,Freedom2002,Bombin_2007,Bombin_2013,Kubica2015,PinCode,Zeng_2019,Zeng_2020}. This viewpoint naturally leads to the study of logical gates or code symmetries via cohomological invariants. Cup and cap products are fundamental operations in algebraic topology that provide bilinear maps on (co)homology classes. Recently, there have been intensive research efforts on using these operations to induce logical diagonal gates, such as the multi-controlled-$Z$ gates~\cite{10.21468/SciPostPhys.14.4.065,Chen_2023,Breuckmann2024Cups,Lin2024transversal,Zhu2025A,Zhu2025B,Li2025Poincare}, on CSS codes. Crucially, when the underlying complex is sparse, i.e., each cell is incident to a constant number of higher-dimensional cells, the circuit resulting from well-structured invariants automatically has constant depth, thereby keeping error propagation within a bounded range and yielding a fault-tolerant implementation~\cite{Gottesman1998,Gottesman2013}. As explained, for universal fault-tolerant quantum computation, non-Clifford gates, which correspond to high-order invariants, are of outstanding significance.  They are also of interest for magic state distillation~\cite{Bravyi_2012,Krishna_Tillich2019,Wills2024magic,He2025addressable,Nguyen2025CCZ}, where high gate parallelizability is a particularly desirable feature for reducing the overhead.

The sheaf code framework~\cite{First2024,Dinur2024sheaf,Panteleev2024,Li2025Poincare} generalizes the above constructions by assigning local classical codes to cells of a high-dimensional expander, producing cochain complexes with local coefficients.  It provides a highly powerful and flexible mechanism for code construction, which encompasses the asymptotically good codes and furthermore, achieves almost-good quantum locally testable codes (qLTC) with complexes in higher dimensions~\cite{PK2022Good,DHLV2022,QuantumTanner2022,Dinur2024sheaf,First2024}. However, the increased algebraic complexity of sheaf codes presents formidable challenges: the structure of logical representatives becomes opaque, and the computation of cup products, which is already nontrivial on ordinary complexes, is further complicated by the presence of local coefficient systems. Prior work has largely focused on the existence and code parameter bounds of such codes, leaving the explicit characterization of their logical gates little understood. However, following the breakthrough construction of good qLDPC codes, one of the most pressing and consequential goals is to simultaneously achieve fault-tolerant non-Clifford gates on qLDPC codes or qLTCs with high rates and distances.


\subsection{Main results}\label{sec:results}

In this work, we establish a systematic framework for analyzing cohomological invariants in sheaf codes, filling the important gap in the understanding and methodology for their logical gates. We develop methods to construct explicit, well-structured logical representatives, compute their cup products, and demonstrate that the resulting invariant forms can support a large number of parallel, constant-depth nontrivial logical gates. These results utilize powerful techniques from algebraic topology, commutative algebra, number theory, graph theory, and extremal combinatorics~\cite{Artin1966,Hooley1967,Marcus2015I,Hall_2018,Agarwal2016,Jeronimo2021,Mantel1907,Turan1941,Razborov2010,Keevash2011} and forge striking links between quantum coding theory and several deep mathematical conjectures, as explained below.

We first introduce a few key notions needed to state our main results more specifically. The formal definitions can be found in the following sections. Let $\mathcal{G}_0$ be an $n$-regular graph with a constant number of vertices. Let $\F_1, \ldots, \F_\rho$ be the local coefficients or sheaves generated by classical local codes as in~\cite{Dinur2024sheaf,kalachev2025}. Using abelian lifts inductively~\cite{Chandrasekaran2017,Hall_2018,Agarwal2016,Jarviniemi2021}, we define a double sequence including both ordinary $t$-fold Cartesian products of lifted graphs $X_i'$ and $t$-dimensional expanders $X_i$ (see Section~\ref{sec:induct_boundary} for details). Let $C^\bullet(X_i,\F_1), \ldots, C^\bullet(X_i,\F_\rho)$ be cochain complexes associated with $X_i$. Then we consider $\rho$ (almost) good qLDPC codes defined on $1 \leq p_1, \ldots, p_\rho \leq t-1$ levels of these cochain complexes. 

Suppose $p_1 + \cdots + p_\rho \leq t$. Given codewords $x_1 \in C^{p_1}(X_i, \F_1), \ldots, x_\rho \in C^{p_\rho}(X_i, \F_\rho)$, we leverage the following cohomological invariant~\cite{Li2025Poincare} to implement constant-depth multi-controlled-$Z$ gates within the logical space:
\begin{align}\label{eq:invariant_form_Intro}
	T_\xi(x_1, \ldots, x_\rho) := \langle x_1 \smile_c \cdots \smile_c x_\rho, \ \xi  \rangle.
\end{align} 
It is defined by the cup product $x_1 \smile_c \cdots \smile_c x_\rho$ on cubical complexes and the inner product is a special case of cap product with a cycle $\xi \in C_{p_1 + \cdots + p_\rho}(X_i, \F_1 \otimes \cdots \otimes \F_\rho)$. For instance, let $\rho = t = 2$ with $p_1 = p_2 = 1$, then \eqref{eq:invariant_form_Intro} induces $\CZ$ gates. Good qLDPC code families have been constructed from 2-dimensional expanders~\cite{PK2022Good,QuantumTanner2022,DHLV2022} with logical $\CZ$ gates established in~\cite{Li2025Poincare}. To achieve the non-Clifford $\CCZ$ or high order gates, one needs $t \geq 3$, where only almost-good qLDPC codes are known~\cite{Dinur2024sheaf}. We also define $C^\bullet(X_i',\F_1), \ldots,C^\bullet(X_i',\F_\rho)$ and relevant concepts in a similar way. It turns out that they are simply hypergraph products, yet serve as a great aid to our analysis (see Section~\ref{sec:proof_HGP} for more details). 

A key goal of this work is to demonstrate that the logical operation derived from the cohomological invariant $T_\xi$ is both nontrivial and efficiently constructible. Furthermore, it is implementable by a constant-depth circuit and thus fault-tolerant, provided that each $X_i$ has sparse coboundary operators. We also care about the number of disjoint or parallel logical $\mathrm{C}^{\rho-1} Z$, which is equal to the subrank of $T_\xi$ as a $\rho$-tensor, out of $\Theta(N)$ physical $\mathrm{C}^{\rho-1} Z$ gates in the circuit where $N$ denotes the physical system size. 

Particularly, with the help of fundamental and powerful results from algebra and number theory, we establish the following results.

\begin{theorem}\label{thm:cup_sheaf_1_Intro}
	Given Artin's primitive root conjecture whose validity is implied by the generalized Riemann hypothesis~\cite{Hooley1967}, there are infinitely many primes $l$, $X'$ with size $O(\log l)$, and $l$-lift $X$ such that there exists a (nearly) good qLDPC code family defined on $X$ by the sheaf $\F$, for which we establish the following cohomological invariants:
	\begin{enumerate}
		\item We find $\Theta(N)$ inequivalent logical representatives, which we call polarized logical representatives, that support exclusively on cubes of the same type of generators and binary bits (Theorem~\ref{thm:polarized}). They are first order (co)homological invariants corresponding to logical Pauli operators. 
		
		\item Furthermore, we prepare $t$ sheaves $\F_1, \ldots,\F_t$ from Theorem~\ref{thm:local_codes} and define the qLDPC codes with nearly optimal parameters
		\begin{align}
			C^0(X,\F_j) \to C^1(X,\F_j) \to C^2(X,\F_j).
		\end{align}
		From each code, there are exactly $l = \Theta(N/\mathrm{polylog} \,N)$ inequivalent logical representatives. They form $t$-tuples $( L_{1}^{i_1}$, \ldots,$L_{t}^{i_t})$ such that
		\begin{align}\label{eq:sheaf_cup_thm_Intro}
			L_{1}^{i_1} \smile_c \cdots \smile_c L_{t}^{i_t} = \delta_{i_1, \ldots,i_t} [(h_{i_1},v_1, \ldots,v_t); a_1, \ldots,a_t],
		\end{align}
		where $\delta_{i_1, \ldots,i_t}$ is the multi-index delta function, $v_i$ denotes a vertex of the base graph, and $a_i$ corresponds to an edge incident with $v_i$ (Theorem~\ref{thm:cup_sheaf_1}). These cup products are $t$-th order cohomological invariants.
	\end{enumerate}
\end{theorem}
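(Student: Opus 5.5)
The plan is to build $X$ as an abelian $\mathbb{Z}_l$-lift of a small base complex, and to let Artin's conjecture control how the relevant group algebra decomposes.

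\textbf{Step 1 (choosing $l$ and the base).} The lifted complex is a $\mathbb{Z}_l$-cover of a constant-size (or $O(\log l)$-size) Cartesian product $X'$ of base graphs. Its cochain spaces are therefore free modules over $R=\mathbb{F}_2[\mathbb{Z}_l]=\mathbb{F}_2[x]/(x^l-1)$. The conjecture enters through the factorization
\begin{align*}
x^l-1=(x-1)\Phi_l(x).
\end{align*}
If $2$ is a primitive root mod $l$, then $\Phi_l$ is irreducible over $\mathbb{F}_2$, so $R\cong \mathbb{F}_2\oplus \mathbb{F}_{2^{l-1}}$. Artin's conjecture (implied by GRH via Hooley) supplies infinitely many such primes $l$. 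The coboundary maps of $C^\bullet(X,\F)$ become matrices over $R$, and the cohomology splits into a trivial-character part and a single large-field part. I would take the lift data from the inductive graph-lift construction of Section~\ref{sec:induct_boundary}. I would take expansion, and hence the near-good distance and rate, from the cited Dinur--Lin--Vidick / Panteleev--Kalachev type results for random abelian lifts with $|X'|=O(\log l)$. Then $N=\Theta(l\log l)$, which gives $l=\Theta(N/\mathrm{polylog}\,N)$.

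\textbf{Step 2 (polarized representatives, item 1).} Using this commutative-algebra decomposition, I would compute cohomology over each factor of $R$ separately. Over the field factor, the local codes of $\F$ have a Künneth-like tensor structure. I would pick bases of the local codes so that each cohomology class has a representative supported on cubes whose generator labels all lie in a single type and which carry a single bit. This generalizes the canonical HGP representatives $e_i\otimes(\text{codeword})$. Translating these by the $\mathbb{Z}_l$-action and taking $\mathbb{F}_{2^{l-1}}$-multiples gives $\Theta(N)$ inequivalent classes. Inequivalence follows by pairing against dual polarized homology representatives and checking that the pairing matrix is nondegenerate, as in the HGP case. This is what Theorem~\ref{thm:polarized} establishes.

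\textbf{Step 3 (cup products, item 2).} I would take $\F_1,\dots,\F_t$ from Theorem~\ref{thm:local_codes}. These local codes are chosen so that each code $C^0\to C^1\to C^2$ has exactly $l$ logical classes. Concretely, the classes are the $\mathbb{Z}_l$-translates $L_j^{i}=h_i\cdot L_j$ of one polarized representative in direction $j$. Each $L_j^i$ is supported on cubes that are extended only in direction $j$, lie at fiber position $h_i$, and carry a single local bit $a_j$. The cubical cup product $\smile_c$ multiplies supports along complementary directions. It is therefore nonzero only when every factor sits over the same lift index, giving $\delta_{i_1,\dots,i_t}$, and then it yields the single top cube $[(h_{i_1},v_1,\dots,v_t);a_1,\dots,a_t]$.

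\textbf{Main obstacle.} The hard step is controlling the coefficients when the cup product is taken with local coefficient systems. The cross terms must cancel, and the $\F_1\otimes\dots\otimes\F_t$-valued cochain must be exactly one basis element, not a sum of several. I would handle this by requiring the local codes of Theorem~\ref{thm:local_codes} to have a single-bit dual support at the chosen edges $a_j$, so that the product localizes. I would then verify that the resulting classes are nontrivial by pairing with a cycle $\xi$ as in \eqref{eq:invariant_form_Intro}.
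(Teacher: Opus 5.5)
There is a genuine gap in Step~3, which carries all of item~2. A representative $L_j^i$ cannot ``lie at fiber position $h_i$''. It is a cocycle whose components in the directions $k\neq j$ form an element of $\ker\hat H_k$ over $R=\mathbb{F}_q[C_l]$ (a lifted Tanner-type codeword), so its support necessarily spreads over many fibers. Hence $\delta_{i_1,\ldots,i_t}$ cannot come from disjointness of supports.

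What $\smile_c$ actually sees is the path of $1$-cubes inside the designated base $t$-cube. Over the fiber of that cube, the cup product is the pointwise product, up to fixed shifts, of the induced vectors $\mathrm{v}_j\in\mathbb{F}_q[C_l]$. These are obtained by evaluating the local vectors of $L_j$ with $\bigotimes_{k\neq j}h_k^T(a_k,\text{-})$, as in \eqref{eq:induced_vector}. For translates $\x^{i}\cdot L_j$ of an arbitrary polarized representative the induced vectors are $\x^{i}\mathrm{v}_j$, and their product is generally nonzero on several fibers and for off-diagonal index tuples. You need each $\mathrm{v}_j$ to be a monomial, i.e.\ a unit, so the ideal generated by the induced vectors must be all of $R$.

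This is exactly where the paper spends Artin's conjecture. Artin leaves only the maximal ideals $\langle 1+\x\rangle$ and the $1$-dimensional $\langle 1+\x+\cdots+\x^{l-1}\rangle$. The all-ones lifts escape the first because $l$ is odd, and $\Theta(l)$ independent induced vectors escape the second. Producing those vectors needs two further ingredients:
\begin{enumerate}
\item Claims~\ref{claim_1}--\ref{claim_2}, giving a constant fraction of vertices and at least $d_0$ edges $a_j$ via the dual distance.
\item Flatness of $R$-modules (Lemma~\ref{lemma:product}). In 3D this gives $\ker\delta^1_{\text{res}}$ from $\ker\hat H_2\otimes_R\ker\hat H_3$, so the evaluation ideal is $I_2I_3=R$ and $(v_k,a_k)$ can be chosen independently per direction.
\end{enumerate}
Lemma~\ref{lemma:basis} then takes $R$-combinations $\sum_i f_{ij}(\x)x_i$ whose induced vectors are exactly $\x^j$.

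Your remedy, a ``single-bit dual support'' at $a_j$, does not touch this. Read as normalizing the column $h_j(\text{-},a_j)$, it is a change of basis of $\mathbb{F}_q^{m_j}$ and gives an isomorphic sheaf. Read as a weight-one dual codeword, it violates the dual-distance requirement. Either way it acts on the local coefficient spaces, not on the fiber index.

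Your CRT/K\"unneth viewpoint could in fact supply the missing step. Under Artin, $R\cong\mathbb{F}_q\times\mathbb{F}_{q^{l-1}}$, so the evaluation ideal is $R$ iff some $\zeta_k\in\ker\hat H_k$ has nonzero evaluation at $(v_k,a_k)$ in \emph{both} factors. Normalizing that evaluation to $1$ and setting $L_j^i=\x^i\bigl([v_j;a_j]\otimes\bigotimes_{k\neq j}\zeta_k\bigr)$ gives what you wanted. To complete this you would still have to:
\begin{enumerate}
\item prove those nonvanishings at a common $(v_k,a_k)$;
\item show $[v_j;a_j]$ is nontrivial in the cokernel in both factors, for inequivalence;
\item impose the rank conditions $m_j<n/2<m_k$ for $k\neq j$, which force a distinct sheaf per direction (this is the reason for the paper's Mantel/Tur\'an selection).
\end{enumerate}

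Two smaller errors. First, the codes live over $\mathbb{F}_q$ with $q=2^s$ large, so you need $q$ itself (hence $s$ odd), not $2$, to be a primitive root mod $l$. Second, the codes do not have exactly $l$ logical classes: they have $k=\Theta(N)$, and the theorem only singles out $l$ representatives per code.
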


We introduce the relevant technical details in Section~\ref{sec:techniques} and elaborate on them in Section~\ref{sec:proof_sheaf}. We emphasize that only the derivation of Eq.~\eqref{eq:sheaf_cup_thm_Intro} assumes GRH. To streamline the presentation, we package it together with the result of logical representatives. For now, we focus on the implication of Theorem~\ref{thm:cup_sheaf_1_Intro}, which opens the possibility of finding an asymptotic family of (almost) good qLDPC codes and qLTCs with parameters $[\![ N, k=\Theta(N), d=\Theta(N/\mathrm{polylog}\,N)  ]\!]$,
with a collection of well-behaved (polarized) $X$ logical representatives. The right-hand side of Eq.~\eqref{eq:sheaf_cup_thm_Intro} is either a unique $t$-dimensional cube in the cubical complex $X$, or else the cup product vanishes. This is the most favorable construction of cup products and directly helps our main purposes: looking again at \eqref{eq:invariant_form_Intro}, suppose we have found some $\xi$ that is nonzero at the $t$-cube, then the value of $T_\xi$ at $(L_{1}^{i}, \cdots, L_{t}^{i})$ must be nonzero, demonstrating the nontriviality of the logical operation. Moreover, we prove in Section~\ref{sec:inductive} that as long as $\xi$ has a nontrivial support on some $t$-cube, it is nowhere vanishing among all its lifts; that is,
\begin{align}\label{eq:sheaf_invariant_Intro}
	T_\xi(L_{1}^{i}, \cdots, L_{t}^{i}) = \langle L_{1}^{i_1} \smile_c \cdots \smile_c L_{t}^{i_t}, \ \xi \rangle = \delta_{i_1, \ldots,i_t}.
\end{align}
This indicates that the subrank of $T_\xi$ is $\Theta(N/\mathrm{polylog}\,N)$, 
or more concretely, the associated logical operation consists of $\Theta(N/\mathrm{polylog}\,N)$ logical $\mathrm{C}^{t-1}Z$, acting on disjoint tuples of logical representatives.

The generalized Riemann hypothesis is widely adopted as a working assumption in mathematics literature and Artin's conjecture provides the desirable primes $l$ to do lifts. They are only needed to rigorously prove the worst case when dealing with logical representatives and cup products, so Theorem~\ref{thm:cup_sheaf_1_Intro} is expected to hold unconditionally. Without using these primes $l$, the more intricate algebraic structure may reduce the total number of independent cup products in Eq.~\eqref{eq:sheaf_cup_thm_Intro} to a constant. More details are presented in Section~\ref{sec:proof_sheaf}. On the other hand, the existence of $\xi$ relies on the local codes. Technically, it requires the so-called tensor product sheaf $\F_1 \otimes \cdots \otimes \F_t$ generated by element-wise tensor products of the dual local codes to be nontrivial (see Section~\ref{sec:HGP_local} and~\ref{sec:cycles} for more details). This requirement can be explicitly demonstrated on Reed--Solomon codes, but due to the scope of this work we leave it as an open question for more general cases including product-expanding codes~\cite{kalachev2025}.


Driven by the goal of realizing higher-order cohomological invariants as implementable constant-depth circuits without extra assumptions, we present our second result using the above double sequence $\{X_i',X_i\}$. 

\begin{theorem}\label{thm:induct_cup_intro}
	If the initial constant-size codes from $C^\bullet(X_1', \F_1),C^\bullet(X_1', \F_2)$, \ldots,$C^\bullet(X_1', \F_\rho)$ support a nontrivial logical gate induced by a certain (co)homological invariant, then it will be inherited by all subsequent codes. In particular, if local codes coincide with those in~\cite{Dinur2024sheaf,kalachev2025}, the subsequent cochain complexes on high dimensional expanders $X_i$ yield a family of (nearly) good qLDPC codes with invariants preserved.
\end{theorem}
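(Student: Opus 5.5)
The plan is to read each step of the double sequence $X_1' \leadsto X_1 \leadsto X_2' \leadsto X_2 \leadsto \cdots$ as a finite covering of cubical complexes, with the sheaves pulled back along the covering maps. We then transport the invariant along these coverings using the two standard functorial maps: pullback of cochains and transfer (pushforward) of chains. Concretely, for a covering $\pi\colon Y \to Z$ obtained from an abelian $l$-lift of the underlying graphs, the local codes on a cube of $Y$ are by construction those on its image cube in $Z$. Hence $\pi^*\F$ is the sheaf on $Y$, and $\pi^*\colon C^\bullet(Z,\F)\to C^\bullet(Y,\pi^*\F)$ is a cochain map. Since $\pi$ is a local isomorphism of cubical complexes preserving the directions and orientations of cubes, $\pi^*$ commutes with the cubical cup product $\smile_c$. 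This holds because $\smile_c$ is defined cube-by-cube from front and back faces, which $\pi$ respects. It also commutes with the element-wise tensor product of coefficients, so $\pi^*(x_1\smile_c\cdots\smile_c x_\rho)=\pi^*x_1\smile_c\cdots\smile_c\pi^*x_\rho$.

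On chains we use the transfer $\pi^!\colon C_\bullet(Z,\F_1\otimes\cdots\otimes\F_\rho)\to C_\bullet(Y,\pi^*(\F_1\otimes\cdots\otimes\F_\rho))$, which sends a cube to the sum of its $l$ preimages. It is a chain map, so it sends the cycle $\xi$ to a cycle $\tilde\xi=\pi^!\xi$. The key computation is then
\begin{align*}
T_{\tilde\xi}(\pi^*x_1,\ldots,\pi^*x_\rho)=\langle \pi^*(x_1\smile_c\cdots\smile_c x_\rho),\pi^!\xi\rangle = l\,\langle x_1\smile_c\cdots\smile_c x_\rho,\xi\rangle.
\end{align*}
Over $\mathbb{F}_2$ the factor $l$ equals $1$ because the lift primes are odd. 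This is where the paper's choice of odd (Artin) primes enters. Alternatively, we replace $\pi^!\xi$ by a single lifted cube sheet, using the fact that the lifted supports of $\pi^*x_j$ are constant on fibers. Either way, the value of the invariant is reproduced exactly on the lifted code.

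It remains to show that the lifted logical operators are nontrivial and stay inequivalent, so that the invariant, and in particular its subrank, is inherited. The argument is a duality one. If $\pi^*x_j=\delta y$, then evaluating the lifted invariant against $\tilde\xi$ forces the cup product pairing to vanish, because $\tilde\xi$ is a cycle and a coboundary cupped with cocycles pairs to zero against cycles. This contradicts the value computed above. Hence $\pi^*$ is injective on the span of classes detected by $T_\xi$, and a nonzero tensor $T_\xi$ of subrank $r$ pulls back to a tensor of subrank at least $r$. Iterating the argument along the double sequence gives inheritance by all subsequent codes, whether they are HGP codes on the $X_i'$ or sheaf codes on the $X_i$. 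For the final clause, we choose the local codes as in \cite{Dinur2024sheaf,kalachev2025}. The expansion and distance analysis there then applies verbatim to the $X_i$, since they are the lifted expanders of Section~\ref{sec:induct_boundary}, and this yields the (nearly) good parameters.

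The step I expect to be the main obstacle is checking that $\pi^*$ genuinely commutes with $\smile_c$ when the coefficients are local codes. Cubical cup products require a consistent orientation, or ordering of generators, on every cube. In the sheaf setting, a lift twists edges by group elements $h$, so front and back faces could in principle be permuted. I would first verify that the abelian lift acts only on vertex labels and keeps each generator type in its own direction. This makes the covering map a morphism of directed cubical complexes with matched restriction maps. If that check fails, the fallback is to restrict to lifts whose voltage assignments preserve generator types, which appears to be how the double sequence is built.
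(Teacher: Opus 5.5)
Your core computation is the paper's proof of Theorem~\ref{thm:induct_cup}. Your $\pi^*$ is the canonical extension $E^\#$, and your $\pi^!\xi$ is the paper's $\xi=P_{p_1+\cdots+p_\rho}\xi'$, the sum over all lifts. Commutation of $\smile_c$ with $E^\#$ is Lemmas~\ref{lemma:cup_induct_2}--\ref{lemma:cup_induct_3}, and the fibre size drops out in characteristic $2$ by oddness. The orientation worry you flag is settled by construction. By \eqref{eq:generator_action}, a lift changes only the $\HH$-coordinate and keeps each generator in its own direction and each binary bit, and Lemma~\ref{lemma:sheaf} gives identical restriction maps. Hence the faces of a lifted cube map bijectively onto those of its image.

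There are two corrections to your setup.
\begin{itemize}
\item The hypothesis actually used is that $|\HH_i|$ and $|\HH_i'|$ are odd, giving fibre sizes $|\HH_i|$ and $|\HH_i'|^t$. Artin primes play no role here. The $\HH_i'$ may be non-abelian, and the $\HH_i$ need not be cyclic.
\item More importantly, $X_1'\to X_1\to X_2'\to X_2\to\cdots$ is not a chain of coverings. $X_2'$ is an $\HH_1'$-lift of $X_1'$ taken in each graph factor; the construction gives no covering $X_2'\to X_1$. Likewise $X_{i+1}$ is an $\HH_{i+1}$-lift of $X_{i+1}'$, with $\HH_{i+1}$ unrelated to $\HH_i$, and the dashed arrows in \eqref{eq:induct_lift} carry no maps. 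The double sequence is built this way precisely because composing abelian lifts generally yields non-abelian ones (Remark~\ref{remark:non-Abelian}).
\end{itemize}
So your induction must run along the top row and descend once, as in the triple $(X_i',X_i,X_{i+1}')$ of Lemma~\ref{lemma:cup_induct_3}. Equivalently, apply your computation once to the composite covering $X_i\to X_i'\to\cdots\to X_1'$, whose degree $|\HH_i|\prod_{j<i}|\HH_j'|^t$ is odd.

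Also drop the ``single lifted sheet'' alternative. Choosing one preimage of each cube in $\mathrm{supp}\,\xi$ does not in general give a cycle, since a closed loop in the base need not lift to a closed loop. Pairing against such a chain is then not an invariant, so the full transfer is needed.

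Where you genuinely differ is in showing that lifted classes stay nontrivial. The paper proves $E^\#$ is injective on all of $H^p$: $P^\#$ is a cochain map with $P^\#E^\#=|\HH|\cdot\mathrm{id}=\mathrm{id}$ (Lemmas~\ref{lemma:codeword_1}, \ref{lemma:codeword_2}). You instead use invariance of $T_{\tilde\xi}$, which shows only that $\ker\pi^*$ lies in the radical of $T_\xi$. This is weaker, but it suffices for the statement. Your identity $T_\xi=T_{\tilde\xi}\circ(\pi^*)^{\times\rho}$ also gives monotonicity of subrank, which the paper does not state. Your $T_{\tilde\xi}$ is automatically constant-depth, because it uses the sparse cup product on the lifted complex rather than $T'\circ P^\#$ (cf. Remark~\ref{remark:sparse}).

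Finally, the last clause does not follow ``verbatim''. It needs the hypotheses of Corollary~\ref{coro:induct_cup}: exponentially large, odd-order, almost-Ramanujan abelian lifts $\HH_i$, together with the product-expanding local codes.
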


We would like to highlight two notable points arising from Theorem~\ref{thm:induct_cup_intro}. First, it provides an efficient way to understand whether or not the qLDPC families support certain nontrivial logical operators by testing it on a constant size initial (HGP) codes. Second, the involved (co)homological invariants can be rather general, not just restricted to~\eqref{eq:invariant_form_Intro}. In an upcoming work~\cite{nontrivial}, we separately demonstrate the existence of nontrivial constant-depth non-Clifford gates on an almost-good qLTCs based on this method. Remarkably, Theorem~\ref{thm:induct_cup_intro} also provides a unified and powerful viewpoint for treating general sheaf codes, including existing good qLDPC codes, as covering spaces of more tractable HGP codes (see~\ref{sec:cubical},~\ref{sec:induct_boundary} and~\ref{sec:HGP_like} as well as~\cite{nontrivial} for more details). 

As a result of this insight, a thorough analysis of the homological invariants of HGP codes can significantly contribute to the study of sheaf codes. Many of the computations in Section~\ref{sec:proof_HGP} about HGP codes serve as a basis for Theorem~\ref{thm:cup_sheaf_1_Intro}. We now present one representative example.

\begin{theorem}\label{thm:cup_HGP_1_intro}
	With trivial scalar coefficient $\mathbb{F}_q$, let $C^\bullet(X',\mathbb{F}_q)$ be any cochain complex such that $X'$ is the $t$-fold Cartesian product of the double cover of any $n$-regular connected graph. Let
	\begin{align}
		C^0(X',\mathbb{F}_q) \rightarrow C^1(X',\mathbb{F}_q) \rightarrow C^2(X',\mathbb{F}_q)
	\end{align} 
	be the extracted CSS code with physical qudits in dimension-1. There are $X$ logical representatives that can be divided into $t$ groups. For any $t$-tuple, consisting of one logical from each group, there is an addressable cohomological invariant, giving rise to an addressable logical $\mathrm{C}^{t-1}Z$ gate. Moreover, there also exists an invariant form with subrank (number of nontrivial gate actions) equal to $k/t$ (where $k$ is the code dimension).
\end{theorem}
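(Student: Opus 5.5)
We will use the Künneth formula together with the fact that the cubical cup product on a Cartesian product splits factorwise.

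\emph{Step 1: cohomology of one factor.} Let $G$ be the double cover of the connected $n$-regular base graph, so $G$ is bipartite. Regard $G$ as a $1$-dimensional cell complex with trivial coefficients $\mathbb{F}_q$. If $G$ is connected, then $H^0(G)=\mathbb{F}_q$ is spanned by the constant cochain $\mathbf 1$. If $G$ is the trivial (disconnected) cover, $H^0(G)$ has dimension two and is spanned by the indicators of the two sheets. In either case, $H^1(G)$ has dimension $|E|-|V|+c$, where $c$ is the number of components.

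We choose explicit representatives of $H^1(G)$ by taking a spanning forest $T$. The fundamental cocycles $e^\ast$, for $e\notin T$, are indicators of single non-tree edges. They are cocycles because $G$ has no $2$-cells. They are linearly independent in cohomology because pairing with the fundamental cycles of $T$ gives the identity matrix.

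\emph{Step 2: Künneth for $X'=G^{\times t}$.} The cubical complex satisfies
\[
C^\bullet(X')\cong C^\bullet(G)^{\otimes t}
\]
as cochain complexes, with the cubical cup product given by
\[
(a_1\otimes\cdots\otimes a_t)\smile_c(b_1\otimes\cdots\otimes b_t)=\pm (a_1\smile b_1)\otimes\cdots\otimes(a_t\smile b_t).
\]
Hence
\[
H^1(X')=\bigoplus_{j=1}^t H^0(G)^{\otimes (j-1)}\otimes H^1(G)\otimes H^0(G)^{\otimes(t-j)}.
\]
This decomposition gives the $t$ groups: group $j$ consists of the classes
\[
L_j^{e}=\mathbf 1\otimes\cdots\otimes e^\ast\otimes\cdots\otimes \mathbf 1,
\]
with $e^\ast$ in slot $j$ and $\mathbf 1$ (or sheet indicators) elsewhere.

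\emph{Step 3: computing the cup products.} For a tuple $(L_1^{e_1},\ldots,L_t^{e_t})$, the factorwise formula gives
\[
L_1^{e_1}\smile_c\cdots\smile_c L_t^{e_t}=\pm\, e_1^\ast\otimes\cdots\otimes e_t^\ast .
\]
This is the indicator of the single $t$-cube $e_1\times\cdots\times e_t$. Choose the $t$-cycle
\[
\xi=z_1\otimes\cdots\otimes z_t,
\]
where each $z_j$ is the fundamental cycle of the non-tree edge $e_j$. Then $T_\xi$ evaluates to $\pm1$ on this tuple and to $0$ whenever any other non-tree edge is used in some slot. This gives an addressable $\mathrm{C}^{t-1}Z$ on the chosen tuple. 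It is implemented by a constant-depth circuit because $X'$ is sparse.

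\emph{Step 4: the subrank claim.} Fix a bijection between the $m=\dim H^1(G)$ non-tree edges and the indices $1,\ldots,m$, and take the diagonal cycle
\[
\xi=\sum_{i=1}^m z_i^{\otimes t}.
\]
Then $T_\xi(L_1^{i_1},\ldots,L_t^{i_t})=\pm\delta_{i_1\cdots i_t}$, so $T_\xi$ is a diagonal tensor with $m$ nonzero entries and its subrank is $m$. The sign can be absorbed into the basis over $\mathbb{F}_q$.

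The code dimension is $k=\dim H^1(X')$. This equals $tm$ when $G$ is connected. In the disconnected case the $H^0$ factors contribute extra multiplicity, and one should restrict to the connected double cover (or count per sheet) to obtain the claimed ratio $k/t$.

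\emph{Main obstacle.} The hardest part is ensuring that $\xi$ is a genuine cycle and that the cubical cup product is a cochain-level identity compatible with the chosen representatives. The factorwise formula holds for cubical complexes only with a consistent orientation and ordering of vertices. I would first verify it directly on a single square $e\times f$ (the front/back face convention) and then extend by induction on $t$.

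We also need the double-cover hypothesis. I expect it is used so that $G$ is bipartite, which gives the edges a canonical orientation and keeps the sign conventions of $\smile_c$ consistent across factors.
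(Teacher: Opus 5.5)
Your proposal is correct and is essentially the paper's own argument in graph-theoretic language. The correspondences are exact. The paper's free variables of $x^TH_i=0$, taken from the reduced column echelon form of $\delta$, are precisely the non-tree edges of a spanning tree, and its dual cycles $\eta_i$ (the rows of $(J_1\; I)$) are your fundamental cycles. Its canonical representatives $e^{i}\otimes\zeta\otimes\cdots\otimes\zeta$ are your Künneth basis. The cup product is evaluated by the same monotone-path rule on a single $t$-cube (Examples~\ref{example:2D} and \ref{example:3D}), which is exactly what your factorwise formula reduces to on $1$-cochains, so the "main obstacle" you flag is already covered by those computations.

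Your connectivity caveat is something the paper leaves implicit: it uses $\rk\delta=V-1$ and takes $\ker\delta$ to be spanned by the all-ones vector. The per-component count you suggest does work. If the double cover splits, $X'$ is a disjoint union of $2^t$ products of connected graphs. Cup products across different copies vanish, and summing the diagonal cycles over the copies still gives subrank $k/t$.
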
 

Theorem~\ref{thm:cup_HGP_1_intro} provides the most explicit logical operations, including both addressable gates and gates with high parallelizability as its subrank is $\Theta(k)$. When graph has node degree $n = 2$, Theorem~\ref{thm:cup_HGP_1_intro} reduces to the familiar case of $t$-dimensional toric codes, as the very first example of cup product gates~\cite{10.21468/SciPostPhys.14.4.065,Zhu2023,Chen_2023,RainbowCode,Wang_2024,BalancedProduct2020,Lin2024transversal,Golowich_Lin2024,Zhu2025A,Zhu2025B,Li2025Poincare}. 

At the end of this subsection, we provide a table that summarizes several necessary conditions for Theorem~\ref{thm:cup_sheaf_1_Intro} and~(\ref{eq:invariant_form_Intro}), which can be read as a guide to the key mathematical ingredients for building qLDPC codes with desired  properties. The basic requirements are quite general and amenable, while some of the advanced requirements could be modified and even replaced for different (co)homological invariants.

\begin{table}[H]
\centering
\small
\renewcommand{\arraystretch}{1.3}
\setlength{\tabcolsep}{7pt}
\setlength{\extrarowheight}{2pt}
\resizebox{\linewidth}{!}{%
\begin{tabular}{|>{\centering\arraybackslash}m{0.15\linewidth}|>{\centering\arraybackslash}m{0.39\linewidth}|>{\centering\arraybackslash}m{0.39\linewidth}|}
\hline
& \begin{tabular}[c]{@{}c@{}} \textbf{Graphs / Complexes} \\[-5pt] {\vspace{1pt}} (with fixed node degree $n$)\end{tabular} & \begin{tabular}[c]{@{}c@{}} \textbf{Local codes} \\[-5pt] \vspace{1pt} ($[n,k,d]$, with $m \times n$ parity-check matrices) \end{tabular} \\
\hline
\textbf{Basic requirements}
&
\begin{minipage}[c]{\linewidth}
\vspace{4pt}
\begin{itemize}[topsep=0pt,partopsep=0pt,parsep=0pt,itemsep=1pt,leftmargin=*]
    \item \textbf{Cayley graphs and their lifts:}
    for well-defined local orderings of edges compatible with double covers and local codes; see Section~\ref{sec:DLV_code}.
    
    \item \textbf{Expander family:}
    a large spectral gap \(n-\lambda\) is needed; lifts are easily expanders; see Theorem~\ref{thm:abelian_lift}.
\end{itemize}
\vspace{4pt}
\end{minipage}
&
\begin{minipage}[c]{\linewidth}
\vspace{4pt}
\begin{itemize}[topsep=0pt,partopsep=0pt,parsep=0pt,itemsep=1pt,leftmargin=*]
    \item \(\boldsymbol{d > \lambda}\): for the Tanner code to have linear distance; see Theorem~\ref{thm:cLDPC}.
    
    \item \textbf{Full-rank local parity-check matrix and dual distance \(\boldsymbol{d^\perp > \lambda}\):}
    for 1D Poincar\'e duality on linear code distances of (co)boundary operators; see Theorem~\ref{thm:cLDPC}.
    
    \item \(\boldsymbol{m < n/2}\): for the Tanner code to have linear dimension; see Sections~\ref{sec:HGP_scalar} and~\ref{sec:HGP_lift}.
\end{itemize}
\vspace{4pt}
\end{minipage}
\\
\hline

\textbf{Advanced requirements}
&
\begin{minipage}[c]{\linewidth}
\vspace{4pt}
\begin{itemize}[topsep=0pt,partopsep=0pt,parsep=0pt,itemsep=1pt,leftmargin=*]
    \item \textbf{Abelian lift:}
    for well-defined 3D and higher-dimensional cubical complexes; see Section~\ref{sec:cubical}.
    
    \item \textbf{Double cover:}
    for well-behaved complexes, codewords, and invariants; e.g., Theorems~\ref{thm:polarized} and~\ref{thm:cycle}.
    
    \item \textbf{Shift \(l\)-lift with \(l\) given by Artin's conjecture:}
    to build well-structured cup products via cyclic action on lifts; see Theorems~\ref{thm:basis} and~\ref{thm:cup_sheaf_1}.
\end{itemize}
\vspace{4pt}
\end{minipage}
&
\begin{minipage}[c]{\linewidth}
\vspace{4pt}
\begin{itemize}[topsep=0pt,partopsep=0pt,parsep=0pt,itemsep=1pt,leftmargin=*]
    \item \textbf{Product-expansion over \(\mathbb{F}_q\):}
    for the sheaf code to have (almost) linear distance.
    
    \item \(\boldsymbol{n/2 < m < n}\): for the sheaf code to have linear dimension; see Section~\ref{sec:HGP_like}.
    
    \item \textbf{Nontrivial tensor product sheaf:}
    for the existence of cycles defining the invariant form; see Sections~\ref{sec:cup} and~\ref{sec:cycles}.
\end{itemize}
\vspace{4pt}
\end{minipage}
\\
\hline
\end{tabular}%
}
\caption{Summary of important conditions for building qLDPC codes with desired properties. We divide them into requirements on the global graphs or complexes and on the local codes. The basic requirements are general principles that typically underlie the design of classical and quantum expander codes with favorable properties~\cite{SS1996,Zemor2014,Leverrier_2015,PK2021}. The advanced requirements arise from recent developments in asymptotically (almost) good qLDPC codes and qLTCs, e.g.,~\cite{PK2022Good,QuantumTanner2022,DHLV2022,Dinur2024sheaf,kalachev2025}, as well as from our results on (co)homological invariants and constant-depth multi-controlled-$Z$ gates. Depending on the setting, the relevant range of $m$ may vary.}
\end{table}


\subsection{Key techniques}\label{sec:techniques}

\paragraph{Inductive lifting sequence.}
The inductive lifting sequence in Theorem~\ref{thm:induct_cup_intro} comprises Cartesian products $X_i'$ and high dimensional expanders $X_i$, both of which stem from a fixed $n$-regular graph $\mathcal{G}_0$ with $n_0 = O(1)$ vertices. The graph $\mathcal{G}_0$ is only required to admit a well-defined ordering in the neighborhood of each of its vertices (see Section~\ref{sec:DLV_code}), so any Cayley or Schreier graphs are eligible candidates. Let $\HH_i'$ be any finite group with an odd number of elements. For simplicity, we set $\HH_i' \equiv C_3$ and apply the shift-$3$-lifts recursively on $\mathcal{G}_0$, where we retrieve an (almost) Ramanujan graph in each turn~\cite{Chandrasekaran2017,Hall_2018}. Then we apply exponential abelian lifts~\cite{Agarwal2016,Jeronimo2021} to these graphs to define the $t$-dimensional expanders $X_i$ as cubical complexes as in~\cite{Dinur2024sheaf}. During the process, the $t$-fold Cartesian products $X_i'$ of the (double cover of) lifted graphs are also preserved, in order to produce an auxiliary cochain complex $C^\bullet(X_i',\F_j)$ to transit the nontrivial cup product gates. 
\begin{equation}\label{eq:induct_lift_Intro}
	\begin{tikzcd}
		X_1' \arrow[r, "\HH_1' "] \arrow[d, "\HH_1"] &
		X_2' \arrow[r, "\HH_2' "] \arrow[d, "\HH_2"] &
		\cdots\cdots \arrow[r, "\HH_{i-1}'"] &
		X_i' \arrow[r, "\HH_i' "] \arrow[d, "\HH_i"] &
		\cdots\cdots \\
		X_1 \arrow[r, dashed] &
		X_2 \arrow[r, dashed] &
		\cdots\cdots \arrow[r, dashed] &
		X_i \arrow[r, dashed] &
		\cdots\cdots
	\end{tikzcd}
\end{equation} 
The abelian group $\HH_i$ are employed to conduct the exponential lift, they are also asked to bear an odd number of elements. The number field $\mathbb{F}_q$ used here is of characteristic $2$, then $\gcd(2, \vert \HH_i \vert) = \gcd(2, \vert \HH_i' \vert) = 1$, which ensures a simple but crucial fact that any nontrivial logical representatives from the initial cochain complex $C^\bullet(X_1',\F_j)$ can be lifted or extended to nontrivial codewords in the subsequent cochain complexes. This property remains true for cup products in \eqref{eq:invariant_form_Intro}. From a high level perspective, this is because each $X_i$ from the bottom line in \eqref{eq:induct_lift_Intro} is a covering space of $X_i'$, and thus any (co)homological invariant on $C^\bullet(X_i',\F)$ automatically lifts to  $C^\bullet(X_i,\F)$, which leads to Theorem~\ref{thm:induct_cup_intro}. 


\paragraph{Canonical logical representatives and cup products on combinatorial complex.}
Theorem~\ref{thm:cup_HGP_1_intro} comes from a direct utilization of the so-called canonical logical representatives on hypergraph products. Roughly speaking, due to the simple structure of HGP, its logical representatives can always be taken to be simple tensor products, consisting of vectors from the kernel of the underlying classical parity-check matrices $H_i$ and the free variables versus pivot variables in solving the linear equations $x^T H_i = 0$. It is surprisingly convenient to work with these logical representatives when studying HGP code parameters~\cite{Zemor2014,Zeng_2019}, Clifford hierarchy gates~\cite{Burton2022,Fu2025nogo} and self-correcting quantum memory~\cite{PhysRevLett.134.180601}. 

As an illustrative example, the $X$ logical representatives in the three spatial directions in any 3D toric code are canonical. Their cup product on the torus has a clear geometric meaning, namely as the volume form or orientation class of the torus. For generic HGP codes, a geometric orientation may no longer exist. Nevertheless, cup products remain well defined in the sense of combinatorics on simplicial, cubical and even more general CW complexes~\cite{Chen_2023,Wang_2024,Lin2024transversal,Zhu2025A,Li2025Poincare}. We show that the cup products on canonical logical representatives can always be computed algebraically. 
Accordingly, the only remaining task in constructing nontrivial cup product gates is to solve the cycle solving the cycle $\xi$ in \eqref{eq:invariant_form_Intro}. With scalar coefficients from $\mathbb{F}_q$, the graph structure and its high dimensional products $X'$ always guarantee the existence of $\xi$, which enables us to establish a comprehensive characterization of logical $\mathrm{C}^{\rho-1} Z$ gates set in Theorem~\ref{thm:cup_HGP_1_intro}.


\paragraph{Polarized logical representatives on sheaf codes.}
The breakthroughs in constructing good qLDPC codes and nearly optimal qLTCs rely on both nontrivial local codes~\cite{Tanner1981,SS1996} and high dimensional expanders~\cite{Kaufman2014,Evra2020}. We first upgrade the HGP codes by incorporating nontrivial local codes. This defines quantum expander codes~\cite{Zemor2014,Leverrier_2015}, which still lie within the HGP code framework. Then we renovate base structures $X'$ via lifts to obtain high dimensional expanders $X$. The resultant sheaf codes can then be treated as the covering spaces of the original HGP codes.

We are now confronted with the major challenge of constructing a collection of well-behaved logical representatives on the sheaf. With an exponential abelian lift via the group $\HH$ on the product of graphs, the coboundary operators $\delta^\#$ on cochain complex $C^\bullet(X,\F)$ can never be expressed as the direct sum of simple tensor products as in HGP. To resolve the dilemma, we consider both the combinatorial and algebraic definitions of high-dimensional expanders. The combinatorial definition is intuitive. It tells us that $\delta^\#$ are lifted by replacing their scalar entries by permutation matrices that account for the lift. As a result, many theoretical analyses, such as estimating the rank of $\delta^\#$ and the number of inequivalent polarized logical representatives, can be done in a straightforward manner over $\mathbb{F}_q$. On the other hand, let $\mathbb{F}_q[\HH]$ denote the group algebra of $\HH$. The algebraic perspective says that $\delta^\#$ can be seen as an operator on $\mathbb{F}_q[\HH]$-modules, instead of $\mathbb{F}_q$-vector spaces. This echoes the notion of lifted product in~\cite{PK2021,PK2022Good} and yields a more structured tensor product form over $\mathbb{F}_q[\HH]$ for the logical representatives. Specifically, these logical representatives are all tensor products of vertices with some fixed edge, so we say that they are ``polarized'' in the direction of the edge. They are indispensable for computing cup products and expected to be useful for other independent purposes related to logical operators.  


\paragraph{Maximal ideals, Artin's conjecture and generalized Riemann hypothesis.}
Although the polarized logical representatives admit a simple expression from the perspective of modules, each of them may carry a large number of lifts. To be precise, suppose the high dimensional expander $X$ is lifted from the Cartesian product $X'$ of graphs. For any 1-cube, or simply, an edge $\sigma' \in X'$, a logical representative $x$ could be nontrivial on several lifts of $\sigma'$. If this is always the case, their cup products would have support on several $t$-cubes, rather than on a single $t$-cube as in Eq.~\eqref{eq:sheaf_cup_thm_Intro}. We now sketch how to modify the logical representatives to achieve  Eq.~\eqref{eq:sheaf_cup_thm_Intro}. The idea is simple: we restrict $x$ to the lifts of $\sigma'$. Ignoring the local coefficients for simplicity, the restriction is an element in $\mathbb{F}_q[\HH]$ which we call an induced vector. If the induced vectors span $\mathbb{F}_q[\HH]$, then regardless of how complicated their form may be, we can transform them into the standard basis. Accordingly, the transformed logical representatives have no clustered supports on the lifts and can be used to form cup products as in Eq.~\eqref{eq:sheaf_cup_thm_Intro}. By basic results from commutative algebra, the above property of induced vectors holds when they are not contained in any maximal ideal of $\mathbb{F}_q[\HH]$.

By a simple counting argument, even in the worst case we can find $\Theta(l)$ independent induced vectors, provided that $\Theta(N)$ inequivalent logical representatives have been prepared. As long as the maximal ideals are sufficiently small, our objective can be achieved. To give a more straightforward answer, we set $\HH$ to be the cyclic group $C_l$ since it suffices to produce almost Ramanujan lifts~\cite{Agarwal2016}. Then the celebrated {Artin's primitive root conjecture} comes into play. It asserts that there are infinitely many primes $l$ such that the multiplicative order of $q$ modulo $l$ is $l-1$, as long as $q$ is neither $\pm 1$ nor a perfect square~\cite{Artin1966}. An immediate implication of this conjecture is that the possible maximal ideal is either $\langle 1 + \x \rangle$ or $\langle 1 + \x + \cdots + \x^{l-1} \rangle$, and we can easily prove that they are unable to cover the induced vectors. Notably, Artin's conjecture is verified under the {generalized Riemann hypothesis (GRH)}~\cite{Hooley1967}, which is used in Theorem~\ref{thm:cup_sheaf_1_Intro}. 


\paragraph{Flat modules, hypergraph Turán problem and more.}
The above discussion on the logical representatives has been intentionally simplified to highlight the connection with number theory. We still need to deal with the local coefficients associated with the vertices, edges and high dimensional analogues of $X$. To obtain induced vectors in $\mathbb{F}_q[C_l]$, we need to evaluate the local coefficients via the local codes, which is an unavoidable intermediate step in the computation of cup products. This evaluation is related to the choices of $a_i$ and $v_i$ in Eq.~\eqref{eq:sheaf_cup_thm_Intro} and involves substantially more technical details which we handle algebraically, for instance through linear algebra over flat modules. (see Section~\ref{sec:module}).

Moreover, in the formation of cup products, we prepare multiple sheaf codes on the same $X$, but with distinct sheaves. In conventional examples like toric codes, the multiple code blocks are always identical. Each of them offers one logical representative in ``one spatial direction'' to generate a nontrivial cup product. Since the local codes of each sheaf code are randomly sampled in~\cite{kalachev2025}, we have to find a way to ``symmetrize'' them among the code blocks in order to generalize the intuition to the polarized logical representatives. Interestingly, we can solve the problem by extremal graph theory. To be specific, we treat classical codes as points in the corresponding Grassmannian. As claimed in~\cite{kalachev2025}, for sufficiently large $q$, one can find two-way product-expanding local codes with high probability. A high probability of finding a desired tuple of points as indicating that many points are joined to form a certain hypergraph. Guided by the hypergraph Turán theory~\cite{Turan1941,Erdos1946,Razborov2010,Keevash2011}, this hypergraph must contain some favorable sub-hypergraph whose associated local codes will be used to generate the sheaves. 

Many further details of our theoretical analysis and of the construction of the codes and (co)homological invariants, are presented in the paper.


\subsection{Discussion and outlook}\label{sec:discussion}

In this work, we established a framework for analyzing and constructing cohomological invariants and fault-tolerant logical gates (in particular non-Clifford gates) for qLDPC codes. While our mathematical results have demonstrated the theoretical viability of this approach, many interesting directions remain open for future investigation, ranging from numerical and algorithmic considerations to the exploration of richer algebraic structures.

A central ingredient of our inductive construction is the availability of a suitable sequence of lifts. From a theoretical perspective, known results on expander graphs and lifted codes provide a blueprint for building such sequences with the necessary spectral properties. Efficient algorithms exist for sequential 2-lifts~\cite{Bilu2004,Mohanty2022}, and extensions to 3-lifts are plausible. Moreover, it is easy to see by Definition~\ref{def:induct_lift} and Theorem~\ref{thm:induct_cup} that the number of lifting steps grows only logarithmically with the final code length. As a result, even quasi-polynomial-time algorithms~\cite{Agarwal2016} would yield sub-linear overall complexity in the code size. While these algorithmic issues matter for explicit constructions, they are largely orthogonal to the main theoretical contributions of this work. A natural next step is to implement and benchmark the lifting procedures so as to demonstrate the feasibility of constructing explicit code families that admit our logical gates.

Next, according to~\cite{kalachev2025}, the two-way product-expanding codes can be found with probability $1 - n^t 2^{n^t - \log q - 1}$. This requires $q > n^t 2^{n^t-1}$. For $t = 3$ and a large constant $n$, it is extremely difficult to search for these codes when $q > n^3 2^{n^3-1}$ by brute force. Consequently, the conditions on the local codes and the associated tensor product sheaves may appear stringent at first glance. In practice, however, we expect that simpler or even well-known code constructions will already suffice in many regimes of interest. For instance, as we show in the example following Theorem~\ref{thm:cycle}, simple HGP codes equipped with Reed–Solomon local codes already support nontrivial cup product gates, although they may not lead to asymptotically good qLDPC codes. Moreover, in the finite regime where the code distance scaling may not be important, it is also worthwhile to consider plain HGP codes without any local codes; for such codes, addressable multi-controlled-$Z$ gates have been explicitly formulated in Theorem~\ref{thm:cup_HGP_1_intro} in Section~\ref{sec:HGP_scalar}. This suggests that our framework is flexible enough to accommodate a remarkably wide spectrum of code constructions, from elementary examples to asymptotically good families. More importantly, given the well-established methods for constructing expanders and HGP codes, we expect that numerical experiments are feasible and fruitful for exploring and testing these codes in the finite regime.

The cup product pairing in \eqref{eq:invariant_form_Intro} is just one instance of a rich family of (co)homological invariants, and there are many other potential candidates for generating non-Clifford logical operations. Simply by changing the ways and orders of taking cup and cap products, several distinct invariant forms have been obtained in~\cite{Li2025Poincare}. We believe that exploring these alternative invariants is a promising direction. Such generalizations may relax the constraints on the local codes, thereby leading to more streamlined constructions of non-Clifford gates. More broadly, the interplay between algebraic topology and quantum error correction that we have advanced in this work opens up diverse new possibilities for fault-tolerant logical operations. We anticipate that this will uncover a rich landscape of gates and codes and ultimately establish constant-depth non-Clifford gates on asymptotic good qLDPC families. These current and projected advances not only may pave compelling paths toward more efficient fault-tolerant quantum computation: for instance, using such codes we can achieve constant space and almost-logarithmic time overhead for fault tolerance~\cite{Nguyen2025FT}; they may also open new doors for the major theoretical open problem of quantum probabilistically checkable proofs (qPCP)~\cite{AharonovAradVidick2013_qPCPSurvey} through fault tolerance~\cite{anshu2024circuit}.


\paragraph{Acknowledgments.} This work is supported in part by NSFC under Grant No.~12475023, Dushi Program, and a startup funding from YMSC.


\section{Preliminaries}\label{sec:pre}

\subsection{CSS codes, invariant polynomials and diagonal logical gates}\label{sec:CSS}

We assume that the readers are familiar with basic notions of stabilizer codes and logical gates (see~\cite{NielsenChuang2010} for more details), so in this subsection, we only provide a brief introduction on related concepts that will be used later.

\begin{definition}\label{def:cochain}
	A \emph{cochain complex} is a sequence of vector spaces $C^i$ and \emph{coboundary operators} $\delta^i$ between them, arranged as
	\begin{equation}
	\begin{tikzcd}
		\cdots \arrow[r, "\delta^{i-1}"] &
		C^i \arrow[r, "\delta^{i}"] &
		C^{i+1} \arrow[r, "\delta^{i+1}"] &
		C^{i+2} \arrow[r, "\delta^{i+2}"] &
		\cdots
	\end{tikzcd}
	\end{equation}
	such that $\delta^{i+1} \delta^i = 0$ for any $i$, called the \emph{coboundary condition}. We always assume that $C^i$ is a finite-dimensional vector space over a finite field $\mathbb{F}_q$. Then merely by taking transpose,
	\begin{equation}
		\begin{tikzcd}
			\cdots &
			C_i \arrow[l, "\partial_{i} "'] &
			C_{i+1} \arrow[l, "\partial_{i+1} "'] &
			C_{i+2} \arrow[l, "\partial_{i+2} "'] &
			\cdots \arrow[l, "\partial_{i+3} "']
		\end{tikzcd}
	\end{equation} 
	we obtain a \emph{chain complex} with $C_i := (C^i)^\ast \cong C_i$ and $\partial_i := (\delta^{i-1})^\ast = (\delta^{i-1})^T$ as the \emph{boundary operator}. Elements in $\ker \delta^i$ are the $i$-th \emph{cocycles} and elements in $\Ima \delta^{i-1}$ are the $i$-th \emph{coboundaries}. Analogously, we can define \emph{cycles} and \emph{boundaries} on chain complexes. We would also abbreviate a (co)chain complex to $(C_\bullet,\partial_\bullet)$ ($(C^\bullet,\delta^\bullet)$). 
\end{definition}

\begin{definition}\label{def:CSS}
	A \emph{Calderbank--Shor--Steane (CSS) code} is a stabilizer code given by a 2-cochain complex over a finite field $\mathbb{F}_q$: 
	\begin{align}
		C^0 \xrightarrow{\delta^0} C^1 \xrightarrow{\delta^1} C^2
	\end{align}
	with $\delta^1 = H_Z$ and $\delta^0 = H_X^T$ defining the stabilizer generators. The $X$-type and $Z$-type logical operators are represented by elements from the cohomology $H^1(C) = \ker \delta^1 / \Ima \delta^0$ and the homology $H_1(C) = \ker \partial_1 / \Ima \partial_2$, respectively. Let $\dim C_1 = n$. The \emph{code dimension} $k$ which represents the number of encoded logical qudits, is given by
	\begin{align}
		k = \dim H^1(C) = n - \rk \delta^1 - \rk \delta^0 
		= n - \rk \partial_1 - \rk \partial_2 = \dim H_1(C).
	\end{align}
	The \emph{code distance} is given by $d = \min\{d_X, d_Z\}$, where
	\begin{align}
		d_X = \min_{x \in \ker \delta^1 \setminus \Ima \delta^0} \vert x \vert, \quad 
		d_Z = \min_{z \in \ker \partial_1 \setminus \Ima \partial_2} \vert z \vert.
	\end{align}
	Here $d_X$ ($d_Z$) corresponds to \emph{(co)systolic distance} in the (co)homology.
\end{definition}

Given a long cochain complex $C^\bullet$, one can always specify a CSS code from three consecutive terms in $C^\bullet$. The surrounding terms are also linked to certain crucial properties of the code, including soundness and non-Clifford logical gates~\cite{Bombin_2007,Bombin_2013,BK2013,Kubica2015,Kubica2015Unfolding,PK2022Good,DHLV2022,Dinur2024sheaf,First2022,First2024}. In particular, the construction of logical multi-controlled-$Z$ gates via cup product necessitates a long cochain complex in hand~\cite{10.21468/SciPostPhys.14.4.065,Chen_2023,Breuckmann2024Cups,Wang_2024,Lin2024transversal,Golowich_Lin2024}. We provide more details in Section~\ref{sec:cup} to ensure self-containedness.

We always take a finite field $\mathbb{F}_q$ with characteristic $2$. This is important for the sign choices in defining the (co)boundary operators and many other calculations that follow. Moreover, in this case $\mathbb{F}_q \cong \mathbb{F}_{2^s}$ (see Proposition~\ref{prop:finite_field} below), so the corresponding stabilizer code on qudits with local dimension $q = 2^s$ can be naturally treated as defined on groups of $s$ qubits. For theoretical purposes, especially to achieve favorable parameters, $s$ often needs to be set as some large constant in order to support certain local codes (see Section~\ref{sec:SS_code} and~\ref{sec:DLV_code}) with favorable properties~\cite{kalachev2023ProductExpansion,PK2023RobustlyTestable,kalachev2025}. However, one can normally set $s = 1$ and reduce to the simple qubit case. 

Let $\omega_\rho = e^{2\pi i / 2^\rho}$ and let $\ket{x} = \ket{x_1, \ldots,x_n}$ be the computational basis states of an $n$-qubit system. Given any polynomial $F \in \mathbb{F}_2[x_1, \ldots,x_n]$, we can define a diagonal gate on the system by
\begin{align}
	U_{F} = \sum_{x \in \mathbb{F}_2^n} \omega_l^{F(x)} \ket{x} \bra{x}.
\end{align} 
Explicitly, when $\rho = 1$, we define $F_1(x) = \sum_i a_i x_i$ with $a_i = \{0,1\}$. Then the diagonal gate $U_{F_1}$ corresponds to tensor powers of Pauli $Z$, which is in the first Clifford hierarchy. When $\rho = 2$, we define
\begin{align}
	F_2(x) = \sum_{i,j} a_i x_i + 2 \sum_{i < j} a_{ij} x_i x_j.
\end{align}
Suppose all coefficients are set to be $1$, then $U_{F_2}$ corresponds to the composition of $S$ and $\CZ$ gates, which are in the second Clifford hierarchy. When $\rho = 3$, we define
\begin{align}
	F_3(x) = \sum_i x_i + 2\sum_{i < j} a_{ij} x_i x_j + 4 \sum_{i < j < k} a_{ijk} x_i x_j x_k.
\end{align}
Again, suppose all coefficients are set to be $1$, then $U_{F_3}$ are defined by compositions of $T$, $\CS$ and $\CCZ$ gates from the third Clifford hierarchy. 

For general qudits with the local dimension equal to any prime, there is a complete description on the diagonal Clifford hierarchy using polynomials~\cite{PhysRevA.95.012329,Anderson2016}, and similar statements can extend to prime powers~\cite{He2025addressable} and even arbitrary dimensions~\cite{Feng2024} for specific gates. 
Our previous discussion is the simplest case on qubits, which is used to motivate the idea about multi-controlled-$Z$ gates on CSS code. 

To this end, we also review a few simple properties about finite fields. 

\begin{proposition}\label{prop:finite_field}
	Let $\mathbb{F}_q$ be any finite field. Then the following facts hold. 
	\begin{enumerate}
		\item For any $a \in \mathbb{F}_q$, $q \cdot a = 0$ and $a^q = a$.
		
		\item The characteristic of $\mathbb{F}_q$ a prime number $p$ and $q = p^s$ for some nonnegative integer $s$. Moreover, $\mathbb{F}_q$ is a vector space over $\mathbb{F}_p$ of dimension $s$. 
		
		\item Let $1 \in \mathbb{F}_q$ be the multiplicative identity. By adding $1$ to itself up to $p-1$ times, $\{0, 1, 1+1, \ldots,1+ \cdots +1\} \cong \mathbb{F}_p$ is a subfield in $\mathbb{F}_q$. 
		
		\item The \emph{trace function} $\tr_{\mathbb{F}_q/\mathbb{F}_p}: \mathbb{F}_q \to \mathbb{F}_p$ defined by 
		\begin{align}
			\tr_{\mathbb{F}_q/\mathbb{F}_p}(x) = x + x^p + \cdots + x^{p^{s-1}} \in \mathbb{F}_p \subset \mathbb{F}_q.
		\end{align}
		is $\mathbb{F}_p$-linear when we treat $\mathbb{F}_q$ as the vector space $\mathbb{F}_p^s$. It is the trace of the matrix representation of the translation map $y \mapsto x \cdot y$ and it is nondegenerate in the sense that if $\tr_{\mathbb{F}_q/\mathbb{F}_p}(x \cdot y) = 0$ for all $x \in \mathbb{F}_q$, then $y = 0$.
	\end{enumerate}
\end{proposition}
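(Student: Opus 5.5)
The four items are standard facts about finite fields, and I would prove them in the order 2, 3, 1, 4, since the later items use the prime subfield and the cardinality.

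\textbf{Items 2 and 3.} Consider the ring homomorphism $\mathbb{Z}\to\mathbb{F}_q$, $m\mapsto m\cdot 1$. Its image is a finite integral domain, hence a field, so the kernel is $p\mathbb{Z}$ for a prime $p$. This identifies $\{0,1,1+1,\ldots\}\cong\mathbb{F}_p$ as a subfield, which is item 3. Then $\mathbb{F}_q$ is a finite-dimensional vector space over $\mathbb{F}_p$, say of dimension $s$, so $q=p^s$, which is item 2.

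\textbf{Item 1.} Since $p\cdot 1=0$, we get $q\cdot a=p^{s}\cdot a=0$. The multiplicative group $\mathbb{F}_q^\times$ has order $q-1$, so Lagrange gives $a^{q-1}=1$ for $a\neq0$. Hence $a^q=a$ for all $a$, including $a=0$.

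\textbf{Item 4.} This is the only step with any content. The proof rests on three facts about the Frobenius map $\phi(x)=x^p$.
\begin{itemize}
\item \emph{$\phi$ is an automorphism.} It is additive by the binomial theorem, because $p\mid\binom{p}{k}$ for $0<k<p$. It is injective, hence bijective, and it fixes $\mathbb{F}_p$.
\item \emph{The trace lands in $\mathbb{F}_p$.} Applying $\phi$ to $T(x)=x+x^p+\cdots+x^{p^{s-1}}$ permutes the summands cyclically, using $x^{p^s}=x$ from item 1. So $T(x)$ is fixed by $\phi$. The fixed points of $\phi$ are the roots of $X^p-X$, of which there are at most $p$, so they are exactly $\mathbb{F}_p$.
\item \emph{$T$ is $\mathbb{F}_p$-linear.} This follows from additivity of $\phi$ and $c^{p}=c$ for $c\in\mathbb{F}_p$.
\end{itemize}

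For the matrix-trace interpretation, let $f$ be the minimal polynomial of $x$ over $\mathbb{F}_p$, of degree $r$, with $r\mid s$. On $\mathbb{F}_p(x)$ the multiplication map has characteristic polynomial $f$. Its roots are the Galois conjugates $x,x^p,\ldots,x^{p^{r-1}}$, which are distinct since $\phi$ generates the Galois group. Choosing a basis of $\mathbb{F}_q$ over $\mathbb{F}_p(x)$ makes the multiplication map on $\mathbb{F}_q$ block diagonal with $s/r$ copies of that block. Its trace is therefore $(s/r)\sum_{i<r}x^{p^i}$, and this equals $T(x)$ because the orbit of $x$ under $\phi$ is traversed $s/r$ times.

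For nondegeneracy, $T$ is a polynomial of degree $p^{s-1}<q$, so it has fewer than $q$ roots. Hence there is some $z$ with $T(z)\neq0$. If $y\neq0$, take $x=zy^{-1}$, which gives $T(xy)=T(z)\neq0$.

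The main obstacle is the identification with the matrix trace, specifically the claim that the conjugates are distinct and each appears with multiplicity $s/r$. I would settle this by citing the standard Galois theory of finite fields, namely that the Galois group is cyclic of order $s$ and generated by $\phi$.
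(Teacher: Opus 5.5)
Your proposal is correct and complete. Each of the steps you give is sound: the prime subfield via $\mathbb{Z}\to\mathbb{F}_q$, Lagrange for $a^q=a$, Frobenius-invariance and $\mathbb{F}_p$-linearity of the trace, the block-diagonal computation giving $(s/r)\sum_{i<r}x^{p^i}=T(x)$, and the root-count argument ($\deg T=p^{s-1}<q$) for nondegeneracy.

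The paper gives no proof of this proposition; it lists these facts as standard background, so there is no competing argument to compare against.

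If you want to avoid citing Galois theory for the one step you flagged, the distinctness of $x,x^p,\ldots,x^{p^{r-1}}$ can be shown directly. Suppose $x^{p^i}=x^{p^j}$ with $0\le i<j<r$. Since $\phi$ is bijective, applying the inverse of $\phi^{i}$ gives $x^{p^{j-i}}=x$. The fixed points of the automorphism $\phi^{j-i}$ form a subfield containing $\mathbb{F}_p$ and $x$, hence containing $\mathbb{F}_p(x)$, which has $p^r$ elements. But these fixed points are roots of $X^{p^{j-i}}-X$, so there are at most $p^{j-i}<p^r$ of them, a contradiction.
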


\begin{definition}\label{def:k_controlled_Z}
	Let $M$ be a monomial of degree $\rho$ from $\mathbb{F}_q[x_1, \ldots,x_n]$ with $q = p^s$ a prime power. Let $\tr_{\mathbb{F}_q/\mathbb{F}_p}$ be the trace function. For any $n$-qudit system with local dimension $q$, the monomial $M$ defines a \emph{$\rho$-controlled-$Z$ gate $\mathrm{C}^{\rho-1} Z$} by 
	\begin{align}
		U_{M} := 
		\sum_{x \in \mathbb{F}_q^n} e^{\frac{2\pi i}{p} \tr_{\mathbb{F}_q/\mathbb{F}_p}( M(x) )} \ket{x} \bra{x}.
	\end{align} 
	Obviously, any composition of $\mathrm{C}^{\rho-1} Z$ can be defined by homogeneous polynomials of degree $\rho$.
\end{definition}

Let $q = p = 2$, Definition~\ref{def:k_controlled_Z} reduces back to the ordinary $\mathrm{C}^{\rho-1} Z$ gates on qubits. To keep it concise, we do not formally define the Pauli group and Clifford hierarchy on general qudits, the comprehensive details can be found in, e.g.,~\cite{PhysRevA.95.012329,Feng2024}. Definition~\ref{def:k_controlled_Z} is sufficient for our purpose now.

\begin{definition}\label{def:invariant_poly}
	Given a cochain complex $C^\bullet$, a linear function $T: C^i \to \mathbb{F}_q$ is said to be \emph{invariant} with respect to the cohomology classes if
	\begin{align}\label{eq:invariant_1}
		T(x_i + y_i) = T(x_i)
	\end{align}  
	for any cocycle $x_i \in \ker \delta^i$ and any coboundary $y_i \in \Ima \delta^{i-1}$. In other words, $T$ induces a linear function on the cohomology $H^i(C)$.
\end{definition}

Given $\rho$ cochains, a $\rho$-linear function $T$ is invariant if
\begin{align}\label{eq:invariant_2}
	T(x_{p_1} + y_{p_1}, \ldots,x_{p_\rho} + y_{p_\rho}) = T(x_{p_1}, \ldots,x_{p_\rho})
\end{align}  
for cocycles and coboundaries. Expanding by matrix representations, $T$ is equivalent to a degree-$\rho$ homogeneous polynomial in the components of the vectors $x_{p_1}, \ldots,x_{p_\rho}$. Therefore, we will interchangeably use the name \emph{invariant form} or \emph{invariant polynomial}. 

By Proposition~\ref{prop:finite_field}, we can also show that Eq.~\eqref{eq:invariant_2} is equivalent to the following identity after taking the trace function:
\begin{align}\label{eq:invariant_2}
	\tr_{\mathbb{F}_q/\mathbb{F}_p} \Big(  T(x_{p_1} + y_{p_1}, \ldots,x_{p_\rho} + y_{p_\rho}) \Big) = \tr_{\mathbb{F}_q/\mathbb{F}_p} \Big(  T(x_{p_1}, \ldots,x_{p_\rho}) \Big).
\end{align}
Therefore, we take $\rho$ CSS codes defined by consecutive terms as 
\begin{align}
	(C^{p_1 - 1} \to C^{p_1} \to C^{p_1 + 1} ), \cdots\cdots, ( C^{p_\rho - 1} \to C^{p_\rho} \to C^{p_\rho + 1} ).
\end{align}
The invariant polynomial $T$ induces a (composition of) diagonal multi-controlled-$Z$ gate with respect to the codewords:
\begin{align}\label{eq:invariant_poly}
	\begin{aligned}
		U_T \ket{\bar{x}_{p_1}, \ldots,\bar{x}_{p_1}} 
		= & \frac{1}{\mathcal{N}} \sum_{y_{p_1}, \ldots,y_{p_\rho}} e^{\frac{2\pi i}{p} \tr_{\mathbb{F}_q/\mathbb{F}_p}(  T(x_{p_1} + y_{p_1}, \ldots,x_{p_\rho} + y_{p_\rho}) )} \ket{x_{p_1} + y_{p_1}, \ldots,x_{p_\rho} + y_{p_\rho}} \\
		= & \frac{1}{\mathcal{N}} e^{\frac{2\pi i}{p} \tr_{\mathbb{F}_q/\mathbb{F}_p}(  T(x_{p_1}, \ldots,x_{p_\rho} )} \sum_{y_{p_1}, \ldots,y_{p_\rho}} \ket{x_{p_1} + y_{p_1}, \ldots,x_{p_\rho} + y_{i_k}} \\
		= & e^{\frac{2\pi i}{p} \tr_{\mathbb{F}_q/\mathbb{F}_p}(  T(x_{p_1}, \ldots,x_{p_\rho} )} ) \ket{\bar{x}_{p_1}, \ldots,\bar{x}_{p_\rho}},
	\end{aligned}
\end{align}
where $\mathcal{N}$ is the normalization factor for the code states.

We also hope that the logical gate can be implemented via a \emph{constant-depth circuit} $U$ involving local gates in each layer. Intuitively, Let $M$ be any operator acting on an $n$-qudit system. Let $\vert M \vert$ be the size of sites in the system that $M$ has nontrivial actions. Then the size can only be enlarged by a small factor under the action of a constant-depth circuit $U$: $\vert U M U^\dagger \vert \leq c \vert M \vert$. This property is of central importance in building fault-tolerant logical gates~\cite{Gottesman1998,Gottesman2013,Fawzi2018PolyTime,Yamasaki2024QusiPloylog,Tamiya2024PolylogTime,Nguyen2025FT}. In the case of logical gates defined by invariant polynomials, that gate is constant-depth if each variable in the polynomial appears constant many times.

Moreover, we draw attention to two refined types of parameters, $n_{\mathrm{C}^{\rho-1} Z}$ and $k_{\mathrm{C}^{\rho-1} Z}$. The first one $n_{\mathrm{C}^{\rho-1} Z}$ measures the total number of physical gates that are used to construct $U_T$ in Eq.~\eqref{eq:invariant_poly}, that is, the number of monomials in the invariant polynomial $T$. Given the constant-depth assumption, by inspecting any variable from any code block, it is easy to infer $n_{\mathrm{C}^{\rho-1} Z} = \Theta(N)$ where $N$ is the number physical qudits in a single code block. On the other hand, since $U_T$ is generally a composition of logical $\mathrm{C}^{\rho-1} Z$ gates, $k_{\mathrm{C}^{\rho-1} Z}$ measures the number of \emph{disjoint} or \emph{parallel} logical $\mathrm{C}^{\rho-1} Z$s in the composition. Intuitively, $k_{\mathrm{C}^{\rho-1} Z}/n_{\mathrm{C}^{\rho-1} Z}$ characterizes the actual resource overhead in implementing the gate.
In particular, the magic state distillation overhead exponent is given by $\gamma=\frac{\log( n_{\mathrm{C}^{\rho-1} Z}/k_{\mathrm{C}^{\rho-1} Z} )}{\log(d)}$.

Mathematically, when $\rho = 3$, we assemble the values $T(x_{p_1},x_{p_2},x_{p_3})$ into a 3-tensor, still denoted by $T$,
\begin{align}
	T \in \mathbb{F}_q^{k_1} \otimes \mathbb{F}_q^{k_2} \otimes \mathbb{F}_q^{k_3},
\end{align}
where $k_i$ are the code dimensions. The number of nontrivial logical actions of $U_T$ is captured by the \emph{subrank} of $T$. Formally speaking, the \emph{rank} of $T$ is the smallest integer $r$ such that $T$ can be written as a sum of $r$ simple tensors:  
\begin{align}
	T = \sum_{i = 1}^r u_i \otimes v_i \otimes w_i.
\end{align}
While the subrank is the largest integer $r$ such that there are linear maps: $A_j: \mathbb{F}_q^{\dim H^{p_j}(X_i, \mathcal{F} ) } \rightarrow \mathbb{F}_q^r$ such that
\begin{align}
	A_1 \otimes A_2 \otimes A_3 (T) = \sum_{i = 1}^r e_i \otimes e_i \otimes e_i.
\end{align}
where $\{e_i\}$ is the standard basis of $\mathbb{F}_q^r$. In other words, we are looking for a change of basis to ``diagonalize'' $T$. For matrices as 2-tensors, rank and subrank are identical as the matrix rank. However, the subrank would be smaller than the rank for higher order tensors. For example, the rank of $e_1 \otimes e_1 \otimes e_2 + e_1 \otimes e_2 \otimes e_1 + e_2 \otimes e_1 \otimes e_1$ is $3$, but its subrank is $1$. More strikingly, calculating tensor ranks of order $\geq 3$ is known to be an $\mathsf{NP}$-complete problem over finite fields~\cite{Hastad1990,Hillar2013}. While subrank is likely a less stringent notion, except specific cases, except in certain specific cases we still lack powerful tools for dealing with it~\cite{Kopparty2023Geometric,Juvekar2023}. To our knowledge, previous work has successfully estimated the subrank of corresponding gate formation on certain hypergraph product codes~\cite{Golowich_Lin2024} and algebraic geometry codes~\cite{Wills2024magic,Golowich_Venkatesan2025,Nguyen2025CCZ,He2025addressable}. We make the first attempt to do so on (almost) good qLDPC codes.


\subsection{Expander graph families and Ramanujan graphs}\label{sec:expander}

Expander graph families constitute one of the central ingredients in the creation of cochain complexes that underpin (nearly) good qLDPC codes. We provide a brief introduction here and discuss more details in Section~\ref{sec:covering} and~\ref{sec:cubical}. To avoid notation conflict with code distance $d$, we would say an $n$-regular graph with $N$ or $V$ vertices. The node degree $n$ is always a constant. Let $\mathcal{G} = (V(\mathcal{G}), E(\mathcal{G}))$ be an undirected simple graph with adjacency matrix $A$.
Let $S \subset V(\mathcal{G})$ with $S^c = V(\mathcal{G}) \setminus S$, and let $\partial S = E(S,S^c)$ be the set of edges with one endpoint in $S$ and one endpoint outside of $S$. The \emph{edge expansion/bottleneck/conduction ratio} of $\mathcal{G}$ is defined by
\begin{align}
	h(\mathcal{G}) = \min_{\vert S \vert \leq \vert V(\mathcal{G}) \vert /2} \frac{\vert \partial S \vert}{ \vert S \vert }.
\end{align}

\begin{theorem}[Cheeger’s Inequality]
	Suppose $\mathcal{G}$ is a $n$-regular graph with $\vert V(\mathcal{G}) \vert = N$. Let $\lambda_1 \geq \lambda_2 \geq \cdots \geq \lambda_N$ be the spectrum of the adjacency matrix $A$. Then 
	\begin{align}
		\frac{n - \lambda_2}{2} \leq h(\mathcal{G}) \leq \sqrt{2n(n - \lambda_2)}.
	\end{align}
\end{theorem}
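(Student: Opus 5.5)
The statement is the standard Cheeger inequality for $n$-regular graphs, and we plan to prove both bounds via the Laplacian $L = nI - A$. Its spectrum is $0 = n-\lambda_1 \le n-\lambda_2 \le \cdots$, the constant vector $\mathbf{1}$ spans the bottom eigenspace, and $f^T L f = \sum_{\{u,v\}\in E(\mathcal{G})} (f_u - f_v)^2$. By the Courant--Fischer principle,
\begin{align}
n - \lambda_2 = \min_{f \perp \mathbf{1},\, f\neq 0} \frac{f^T L f}{\|f\|^2}.
\end{align}

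\emph{Lower bound.} Fix $S$ with $|S| \le N/2$ attaining $h(\mathcal{G})$, and take the test vector $f = \mathbbm{1}_S - \frac{|S|}{N}\mathbf{1}$, which is orthogonal to $\mathbf{1}$. Each edge of $\partial S$ contributes exactly $1$ to $f^T L f$, and every other edge contributes $0$, so $f^T L f = |\partial S|$. Also $\|f\|^2 = |S|(1 - |S|/N) \ge |S|/2$. Courant--Fischer then gives $n - \lambda_2 \le 2|\partial S|/|S| = 2h(\mathcal{G})$, which is the left inequality.

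\emph{Upper bound (sweep argument).}
\begin{enumerate}
\item Let $g \perp \mathbf{1}$ satisfy $Lg = (n-\lambda_2) g$. Let $c$ be a median of the entries of $g$. Replacing $g$ by $-g$ if necessary, we may assume $c \le 0$.
\item Set $f = (g - c)^+$. Then $f\neq 0$, since $g\perp\mathbf{1}$ is nonzero and non-constant, and $|\operatorname{supp} f| \le N/2$ because $c$ is a median.
\item For $v \in \operatorname{supp} f$ we have $f_v = g_v - c$, while $f_u \ge g_u - c$ for every $u$. Hence $(Lf)_v \le (L(g-c))_v = (n-\lambda_2) g_v$, using $L\mathbf{1} = 0$. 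Summing over the support,
\begin{align}
f^T L f \le (n-\lambda_2) \sum_v f_v g_v = (n-\lambda_2)\Big(\|f\|^2 + c \sum_v f_v\Big) \le (n-\lambda_2)\|f\|^2,
\end{align}
where the last step uses $c \le 0$ and $f \ge 0$.
\item For $t > 0$, let $S_t = \{v : f_v^2 > t\}$. Each $S_t$ has size at most $N/2$, so $|\partial S_t| \ge h(\mathcal{G})\,|S_t|$. Integrating over $t$ (the coarea formula) gives
\begin{align}
\sum_{\{u,v\}\in E(\mathcal{G})} |f_u^2 - f_v^2| = \int_0^\infty |\partial S_t|\,dt \ge h(\mathcal{G}) \int_0^\infty |S_t|\,dt = h(\mathcal{G})\,\|f\|^2.
\end{align}
\item Bound the left side of step 4 from above by Cauchy--Schwarz:
\begin{align}
\sum_{\{u,v\}\in E(\mathcal{G})} |f_u - f_v|\,|f_u + f_v| \le \sqrt{f^T L f}\,\sqrt{\textstyle\sum_{\{u,v\}\in E(\mathcal{G})} (f_u + f_v)^2} \le \sqrt{f^T L f}\,\sqrt{2n}\,\|f\|,
\end{align}
using $(f_u+f_v)^2 \le 2f_u^2 + 2f_v^2$ and $n$-regularity.
\item Combining steps 3--5 gives $h(\mathcal{G})\|f\|^2 \le \sqrt{(n-\lambda_2)\cdot 2n}\,\|f\|^2$, and dividing by $\|f\|^2 > 0$ yields the right inequality.
\end{enumerate}

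The main obstacle is step 3. We must pass from the eigenvector $g$ to a nonnegative vector supported on at most half the vertices without losing its Rayleigh quotient. Shifting by the median handles the support condition, but it adds the cross term $c\sum_v f_v$, which would break the bound if it were positive. Choosing the sign of $g$ so that $c \le 0$ makes this term nonpositive, and the entrywise truncation inequality $(Lf)_v \le (L(g-c))_v$ on the support is where that sign condition is used. The remaining steps are routine.
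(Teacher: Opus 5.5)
The paper gives no proof of this statement. It quotes Cheeger's inequality as a classical result, adding only a remark that a Markov-chain version exists, so there is no argument of the paper's to compare against.

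Your proof is the standard one and it is correct, with constants matching the statement exactly:
\begin{itemize}
\item For the left inequality, the test vector $\mathbf{1}_S-\frac{|S|}{N}\mathbf{1}$ satisfies $\|f\|^2=|S|(1-|S|/N)\ge |S|/2$, which gives $n-\lambda_2\le 2h(\mathcal{G})$.
\item For the right inequality, you combine the median-shifted truncation $f=(g-c)^+$, the pointwise bound $(Lf)_v\le (n-\lambda_2)g_v$ on $\operatorname{supp} f$, the coarea identity for the level sets of $f^2$, and Cauchy--Schwarz. This yields exactly $\sqrt{2n(n-\lambda_2)}$, and it also covers the degenerate disconnected case $\lambda_2=n$.
\end{itemize}

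Two small justifications in step 2 should be tightened.
\begin{itemize}
\item Define the median so that at most $N/2$ entries of $g$ lie strictly above $c$ and at most $N/2$ lie strictly below. This symmetric condition is what makes the sign flip $g\mapsto -g$ legitimate while keeping $|\operatorname{supp} f|\le N/2$.
\item The claim $f\neq 0$ does not follow from non-constancy alone. For $g=(1,1,1,-3)$ the median is $1$ and $(g-1)^+=0$. The real reason is $c\le 0$ together with $\sum_v g_v=0$: if $g_v\le c\le 0$ for every $v$, the zero sum would force $g=0$.
\end{itemize}
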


As a reminder, a similar form of Cheeger’s Inequality holds for general transition matrices of Markov chains~\cite{DiaconisCheeger1991}. For our purpose, we only focus on the special case when the transition matrix equals $\frac{1}{n} A$ induced from a $n$-regular graph.  

\begin{definition}\label{def:expander}
	A sequence of $n$-regular graphs $\{\mathcal{G}_i\}$ with vertexces $\to \infty$ is said to be a family of \emph{edge expander graphs} if there is a constant $c$ such that $c \leq h(\mathcal{G}_i)$ for all $i$.
\end{definition}

By using Cheeger’s Inequality, a graph family is expanding if the \emph{spectral gap} $n - \lambda_2$ is a constant. This is also one of the key conditions to bound the code distance in many proof strategies~\cite{SS1996,PK2021,PK2022Good,QuantumTanner2022,DHLV2022,Dinur2024sheaf} in establishing good (classical or quantum) LDPC codes. As one counterexample, the family of cycles with $N$ vertices does not satisfy the expansion property, since by taking $S$ to be any subset of consecutive $N/2$ vertices, $\vert \partial S \vert = 2$ and thus $h(\mathcal{G}) \leq 4/N$. Or by the fact that $\lambda_2 = 2 - \cos \frac{2\pi}{N}$, the RHS of Cheeger’s inequality indicates that $h(\mathcal{G})$ cannot be a constant. It is worth to mention that there are various possible definitions of expander graphs. For a more comprehensive exposition on the relationships, please refer to~\cite{Alon1986Expander}.

\begin{definition}\label{def:Ramanujan}
	Let $n \geq 2$. An $n$-regular connected graph $\mathcal{G}$ with $N$ vertices is called a \emph{Ramanujan graph} if all the eigenvalues $\lambda$ of its adjacency matrix $A$ satisfy either $\lambda = \pm n$ or $\vert \lambda \vert \leq 2 \sqrt{n - 1}$. Eigenvalues not equal to $\pm n$ are said to be \emph{nontrivial} and $[- 2 \sqrt{n - 1},  2 \sqrt{n - 1}]$ is the \emph{Ramanujan interval}.
\end{definition}

Note that when $n = 2$, the Ramanujan bound $2\sqrt{n-1} = 2$ trivially holds, so we always assume $n \geq 3$ in the following. A family of Ramanujan graphs with a fixed node degree is obviously an expander family, but expander family needs not always being Ramanujan. For expander graphs with nontrivial absolute eigenvalues scaling as $O(\sqrt{n})$ or $O(\sqrt{n}\text{ polylog }n )$, they are called \emph{almost Ramanujan graphs}.
 
There is a long history to construct such Ramanujan graphs, e.g.,~\cite{LPS1988,Margulis1988,Marcus2015I,Marcus2018IV,Chandrasekaran2017,Hall_2018,Huang2025Ramanujan} and expander families, e.g.,~\cite{ZigZzag2002,ZigZzag2008,Bilu2004,Agarwal2016,Mohanty2022,Alon2020,Jeronimo2021}. As one more example, the complete graph $K_n$ is an $(n-1)$-regular Ramanujan graphs because $\lambda_2 = \cdots = \lambda_N = -1$. The complete bipartite graph $K_{n,n}$ is also an $(n-1)$-regular Ramanujan graphs because $\lambda_2 = \cdots = \lambda_{N-1} = 0$ and $\lambda_N = -n$. However, none of them forms an expander family in the sense of Definition~\ref{def:expander} because their node degrees scale with the size of vertices. A scaling-up node degree will also hinder the construction of LDPC codes, as its boundary operator is not sparse. 


\subsection{Presheaf and sheaf codes}\label{sec:sheaf}

We now formally define the sheaf code. To make the definitions more transparent and accessible, we would mainly take perspectives and approaches from basic linear algebra and combinatorics. There are other excellent explanations on this topic, such as~\cite{First2022,First2024,Dinur2024sheaf,Panteleev2024}. We may interchangeably use the names presheaf, sheaf, or local coefficients in the following context.
For a rigorous definition of sheaf code with well-established topological and homological properties, we also refer the readers to~\cite{Li2025Poincare}. 
In Section~\ref{sec:covering},~\ref{sec:SS_code} and ~\ref{sec:DLV_code}, we will show how some familiar structures, such as lifted graphs, expander codes and some good qLDPC codes, fit into the paradigm of sheaves. 

\begin{definition}
	A \emph{poset} $X$ is a set equipped with a partial order $\preceq$. It is further called a \emph{graded poset} if there is a grading function $g: X \rightarrow \mathbb{Z}$ such that for $\sigma,\tau \in X$, $g(\sigma) \leq g(\tau)$ whenever $\sigma \preceq \tau$. A graded poset is called a \emph{graded incidence poset} if for every $\sigma \prec \pi$ such that $g(\pi) = g(\sigma) + 2$, there exists an even number of $\tau$ for which $\sigma \prec \tau \prec \pi$. The highest grade of $X$ will be called its dimension.
\end{definition}

\begin{example}
	Any set with the partial order defined by the inclusion relation of its subsets is a poset. Any CW complex is a graded poset. Any simplicial complex is a graded incidence poset.
\end{example}

As long as we are given a finite graded incidence poset $X$, we can establish a combinatorial cochain complex:
\begin{enumerate}
	\item Let $X(i)$ be the collection of elements in $X$ of grade $i$. Viewing these elements as a formal basis, we define the vector space $C^i : = C^i(X,\mathbb{F}_q) = \mathbb{F}_q^{\vert X(i) \vert}$.
	
	\item The coboundary operator $\delta^i: C^i(X,\mathbb{F}_q) \rightarrow C^{i+1}(X,\mathbb{F}_q)$ is simply defined by 
	\begin{align}\label{eq:poset_coboundary}
		\delta^i (\sigma) = \sum_{\sigma \prec \tau \in X(i+1)} \tau. 
	\end{align}
	The coboundary condition holds simply by the even incidence and the assumption that $\mathbb{F}_q$ is of characteristic $2$. 
\end{enumerate}

To further generalize the above construction, we introduce the notion of presheaf.

\begin{definition}\label{def:presheaf}
	Given a poset $X$, a \emph{system of local coefficients} $\mathcal{F}$ is defined by two parts. First, for each $\sigma \in X$ we associate a vector space $\mathcal{F}_\sigma = V_\sigma$. Second, for any $\sigma \preceq \tau$, there should be a homomorphism/linear map $\mathcal{F}_{\sigma,\tau}: V_\sigma \rightarrow V_\tau$ such that they are compatible with the partial order: 
	\begin{align}
		\sigma \preceq \tau \preceq \pi  \implies  \mathcal{F}_{\tau,\pi} \circ \mathcal{F}_{\sigma,\tau} = \mathcal{F}_{\sigma, \pi}. 
	\end{align}
	We also require $\mathcal{F}_{\sigma,\sigma}: V_\sigma \rightarrow V_\sigma$ to be the identity. In the language of category theory, $\mathcal{F}$ is called a \emph{covariant functor}. 
	Conversely, if we define $\mathcal{G}_{\tau, \sigma}: V_\tau \rightarrow V_\sigma$ for $\sigma \prec \tau$ with similar properties as the above, we get a \emph{contravariant functor} or a \emph{presheaf}. 
\end{definition}

For finite-dimensional vector spaces, we can take the transpose
\begin{align}
	\mathcal{G}_{\tau, \sigma}^T: V_\sigma^\ast \cong V_\sigma \rightarrow V_\tau^\ast \cong V_\tau
\end{align}
to convert a contravariant functor into a covariant functor and vice versa. Therefore, where there is no ambiguity, we will not distinguish these kinds of functors and presheaves in the following.

Given a graded poset $X$ and a presheaf $\mathcal{F}$, we generalize $C^\bullet(X,\mathbb{F}_q)$ to $C^\bullet(X, \mathcal{F})$ as follows:
\begin{enumerate}
	\item We define 
	\begin{align}
		C^i(X, \mathcal{F}) = \bigoplus_{\sigma \in X(i)} V_\sigma.
	\end{align}
	As a comparison, $C^i(X, \mathbb{F}_q) = \mathbb{F}_q^{\vert X(i) \vert} = \bigoplus_{\sigma \in X(i)} \mathbb{F}_q$. In this sense, one can say that $C^\bullet(X,\mathbb{F}_q)$ is defined by the scalar local coefficient.
	
	\item Now for any $x \in C^i(X, \mathcal{F})$, its components can be read-off by taking any $\sigma \in X(i)$ and then looking at the local vector $x(\sigma) \in V_\sigma$. The coboundary operator $\delta^i: C^i(X, \mathcal{F}) \rightarrow C^{i+1}(X, \mathcal{F})$ is defined by 
	\begin{align}
		\delta^i( x(\sigma) ) = \sum_{\tau \in X(i+1), \tau \succ \sigma} \mathcal{F}_{\sigma, \tau} ( x(\sigma) ).
	\end{align} 
	Equivalently, we can also write
	\begin{align}
		(\delta^i(x)) (\tau) = \sum_{\sigma \in X(i), \sigma \prec \tau} \mathcal{F}_{\sigma, \tau} ( x(\sigma) ).
	\end{align} 
\end{enumerate}
In a more concrete and straightforward sense, $\delta^i$ is a \emph{block matrix} with $\sigma \in X(i)$ as column indices and $\tau \in X(i+1)$ as row indices: we first take the matrix with only $0,1$ entries such that each of its entries is nonzero as long as $\sigma \prec \tau$. Apparently, this is just the coboundary operator of $C^\bullet(X, \mathbb{F}_q)$. Then we replace the nonzero entries by the small matrix representing $\mathcal{F}_{\sigma, \tau}$:
\begin{align}
	\delta^i = \Big( \mathcal{F}_{\sigma, \tau} \Big).
\end{align}

\begin{remark}\label{remark:presheaf}
	Intuitively, substituting subblocks is the most flexible way to generalize and manipulate a matrix with scalar entries (cf. Kronecker or tensor products), which is one of the most important insights to adjust code parameters. However, subblocks cannot be inserted unrestrictedly. Some constraints are necessary to produce a reasonable algebraic structure. First, the presheaf condition implies that
	\begin{align}
		\delta^{i+1} \circ \delta^i = \left( \mathcal{F}_{\tau, \pi} \right) \cdot \left( \mathcal{F}_{\sigma, \tau} \right) = \left( \sum_{\sigma \prec \tau \prec \pi} \mathcal{F}_{\tau, \pi} \circ \mathcal{F}_{\sigma, \tau} \right) = \left( \sum_{\sigma \prec \tau \prec \pi} \mathcal{F}_{\sigma, \pi} \right).
	\end{align}
	Second, if $X$ is a graded incidence poset, then we will have an even number of $\tau$ in the above summation, which equals zero as the characteristic of $\mathbb{F}_q$ is $2$. This verifies coboundary condition.
	
	In standard literature, the above computation of matrix product would be replaced by the following argument:
	\begin{align}
		\begin{aligned}
			\left[ \delta^{i+1} \circ \delta^i (x) \right] (\pi) 
			& = \sum_{\tau \in X(i+1), \tau \prec \pi} \mathcal{F}_{\tau, \pi} \left[ \left( \delta^i (x) \right) (\tau) \right] \\
			& = \sum_{\tau \in X(i+1), \tau \prec \pi} \mathcal{F}_{\tau, \pi} \left[ \sum_{\sigma \in X(i), \sigma \prec \tau} \mathcal{F}_{\sigma, \tau} \left( x(\sigma) \right) \right] \\
			& = \sum_{\tau \in X(i+1), \tau \prec \pi} \sum_{\sigma \in X(i), \sigma \prec \tau} \mathcal{F}_{\sigma, \pi} \left( x(\sigma) \right).
		\end{aligned}
	\end{align} 
	In the last step, we should first select $\tau$ and then take $\sigma$. In any case,  even incidence shows that the summation is zero. 
\end{remark}

By taking transpose, we can define the chain complex $C_\bullet(X, \mathcal{F})$ with
\begin{align}
	C_i(X, \mathcal{F}) = \bigoplus_{\sigma \in X(i)} V_\sigma
\end{align}
The boundary operator satisfies
\begin{align}
	\partial_{i+1} (x(\tau)) = \sum_{\sigma \in X(i), \sigma \prec \tau} \mathcal{F}_{\sigma, \tau}^T (x(\tau)),
\end{align} 
or equivalently,
\begin{align}
	\left(\partial_{i+1} (x)\right) (\sigma) = \sum_{\tau \in X(i+1), \tau \succ \sigma} \mathcal{F}_{\sigma, \tau}^T ( x(\tau) ).
\end{align} 

\begin{definition}\label{def:chain_map}
	Suppose $C_\bullet$ and $D_\bullet$ are chain complexes with boundary map $\partial_C$ and $\partial_D$, respectively. We say that $f_\# = (f_i: C_i \rightarrow D_i)$ is a \emph{chain map} if $f_\# \circ \partial_C = \partial_D \circ f_\#$ at any site of the complexes, i.e., the following diagram is commutative
	\begin{equation}
	\begin{tikzcd}[column sep=1.5cm, row sep=0.7cm]
		\cdots \arrow[r, "\partial_{C,i+1}"] & C_i \arrow[r, "\partial_{C,i}"] \arrow[d,"f_i"] & C_{i-1} \arrow[d, "f_{i-1}"] \arrow[r, "\partial_{C,i-1}"] & \cdots \\
		\cdots \arrow[r, "\partial_{D,i+1}"] & D_i \arrow[r, "\partial_{D,i}"] & D_{i-1} \arrow[r, "\partial_{D,i-1}"] & \cdots
 	\end{tikzcd}
	\end{equation}
	 Similarly, we can define \emph{cochain map} for cochains.
\end{definition}

\begin{definition}
	Let $X$ be a $t$-dimensional graded incidence poset with a presheaf $\mathcal{F}$. A \emph{sheaf code} is a CSS code defined by extracting any three consecutive terms from the following cochain complex:
	\begin{equation}
	\begin{tikzcd}
		\cdots \arrow[r] &
		C^i(X, \mathcal{F}) \arrow[r,"\delta^i"] & 
		C^{i+1}(X, \mathcal{F}) \arrow[r,"\delta^{i+1}"] &
		C^{i+2}(X, \mathcal{F}) \arrow[r] & 
		\cdots 
	\end{tikzcd}.
	\end{equation}
\end{definition}

As a reminder, without a rigorous definition on the topology of $X$, we do not have a rigorous notion of sheaf here. Nevertheless, the combinatorial structure is quite sufficient for the present work. A more detailed exposition on the sheaf structure can be found in~\cite{Li2025Poincare}.


\subsection{Lifted graphs and their expansions}\label{sec:covering}

The framework of sheaf code reveals the methodology for combining both global structures $X$ and local coefficients $\mathcal{F}$ to create good codes. In the 1D case for classical codes, it suffices to set $X$ as expander graphs. While for higher dimensional cases linked to the quantum realm, graph lifts are essential concepts to fabricate $X$. The notions of lifting and covering of graphs may be used interchangeably in literature, although there are several slightly different definitions. We summarize some of them, together with important properties for crafting expander families via lifts. Additionally, we also show in Remark~\ref{remark:lift} that a lifted graph is conceptually a 1-chain complex with a presheaf defined by the lift.

\begin{definition}\label{def:covering}
	Let $\mathcal{G}$ be a graph. Any graph $\hat{\mathcal{G}}$ is an \emph{$l$-covering} of $\mathcal{G}$ if it is equipped with a covering map $\pi: \hat{\mathcal{G}} \rightarrow \mathcal{G}$ such that
	\begin{enumerate}
		\item For any $v \in V(\mathcal{G})$, $\vert \pi^{-1}(v) \vert = l$. The set $\pi^{-1}(v)$ is called the \emph{fiber} of $v$.
		
		\item Any pair of vertices $u',v' \in V(\hat{\mathcal{G}})$ forms an edge in $\hat{\mathcal{G}}$ if and only if $(\pi(u'),\pi(v'))$ is an edge in $\mathcal{G}$.  
		
		\item We further require that for any $(u,v) \in E(\mathcal{G})$, the edges between $\pi^{-1}(u)$ and $ \pi^{-1}(v)$ are given by some one-to-one correspondence between the fibers.
	\end{enumerate}
\end{definition}

The last property says that there are exactly $l$ edges between two fibers. More precisely, any covering in this way also stems from a lift.

\begin{definition}\label{def:lift}
	Let $S_l$ be the symmetric group that permutes the $l$ sites. Given any graph $\mathcal{G}$, its \emph{$l$-lift} $\hat{\mathcal{G}}$ is defined by the following procedure:
	\begin{enumerate}
		\item We replicate each vertex $v$ by $(v,i)$ with $i = 1, \ldots,l$.
		
		\item For any edge $e$ connecting $u$ and $v$, we prescribe an \emph{orientation} for it, like $(u,v)$. Let $\gamma: E(\mathcal{G}) \rightarrow S_l$ be a function (called \emph{signing} or \emph{voltage}) that assigns to each oriented edge $e = (u,v)$ a group element $\gamma_{(u,v)} \in S_l$. We set $\gamma_{(v,u)} = \gamma_{(u,v)}^{-1}$.
		
		\item Drawing a new graph $\hat{\mathcal{G}}$ with vertices $\{(v,i)\}$ and connecting $(u,i), (v,j)$ if $(u,v)$ is an oriented edge in the base graph $\mathcal{G}$ and $i = \gamma_{(u,v)}(j)$.
	\end{enumerate}
\end{definition}

One can see that the orientation on each edge in $\mathcal{G}$ is indispensable: suppose we take $(v,j), (u,i)$ in the beginning. Since $(v,u)$ is not oriented, we know that $i$ should be connected to $j$ for which $j = \gamma_{(u,v)}^{-1} (i)$.

\begin{remark}
	Formally, let $\HH$ be a subgroup of $S_l$ generated by all the $\gamma_{(u,v)}$ and their inverse. Then we say that $\hat{G}$ is lifted by $\HH$ via the \emph{defining representation} of $S_l$, where the defining representation of any group element in $S_l$ is the associated permutation matrix on $l$ sites. A notable example used throughout this paper is when $\HH$ is simply the cyclic group $C_l$. It is abelian, which is one necessary condition to define the cubical complex in Section~\ref{sec:cubical}. Any $\hat{\mathcal{G}}$ defined by $C_l$ is called a \emph{shift lift}.
\end{remark}

There is still another related notion of lifted graph that is widely used in literature, and we also collect it here. Let $\HH$ be a finite (abelian or non-Abelian) group. We consider the \emph{left regular representation} $\mathcal{L}: \HH \rightarrow \text{GL}( \vert{\HH}\vert, \mathbb{F}_q)$ of $\HH$ on $\mathbb{F}_q^{\vert \HH \vert}$ by the \emph{left translation action}:
\begin{align}
	\mathcal{L}(h) \cdot h' = h \cdot h'.
\end{align}
By definition, the matrix representation of $\mathcal{L}(h)$ is also a permutation matrix on the $\vert{\HH}\vert = l$ elements. This is simply the \emph{Cayley theorem} which allows us to embed any finite group $\HH$ into the symmetric group $S_{l}$. Then we define the \emph{left lift} of $\mathcal{G}$ by Definition~\ref{def:lift}, but each $\gamma_{(u,v)}$ is now taken from $\HH$. Moreover, lifted vertices can be easily labeled by $(h,v)$ with $h \in \HH$. Similarly, we can define the \emph{right lifts} by using the right translation of $\HH$.

\begin{example}
	Given any graph $\mathcal{G}$, we take a 6-lift by some subgroup $\HH$ from $S_6$. Suppose $\HH = C_6$, then the notion of lift in Definition~\ref{def:lift} and the left-lift using $C_6$ are identical. However, if $\HH = C_2 \times C_2 \times C_2$ is generated by SWAPs $(1,2)$, $(3,4)$ and $(5,6) \in S_6$. We can either lift $\mathcal{G}$ by the defining representation of $S_6$ into $6$-folds, or by the left action of $\HH$, which turns out to be $8$-folds. Despite the superficial discrepancy, Definition~\ref{def:lift} is still general enough. The reason is, by Cayley theorem, $\HH \subset S_8$ and the left-lift can be interpreted by using the defining representation of $S_8$.    
\end{example}

Suppose $\mathcal{G}$ is $n$-regular, then any lift of $\mathcal{G}$ is still $n$-regular. If $\mathcal{G}$ happens to be a Cayley graph, then its vertices are elements of some group $\G$, which should not be confused with $\HH$.


\begin{remark}\label{remark:lift}
	Let $A$ be the adjacency matrix of $\mathcal{G}$. Then the entry $A(u,v) = 1$ corresponds to the edge $(u,v)$. For the adjacency of the lift $\hat{\mathcal{G}}$, we simply replace that entry by the $l \times l$ permutation matrix representing $\gamma_{(u,v)}$, still denoted by $\gamma_{(u,v)}$. Zero entries are replaced by the $l \times l$ zero matrix. By definition, $A(v,u)$ is replaced by $\gamma_{(v,u)} = \gamma_{(u,v)}^{-1} =  \gamma_{(u,v)}^T$ because the permutation matrix is orthogonal. This also guarantees that the lifted adjacency matrix is still symmetric. 
	
	Let $\partial$ be the boundary operator/incidence matrix of $\mathcal{G}$: a pair of nonzero entries $\partial(u,e), \partial(v,e)$ indicates that the vertices $u$ and $v$ are bounded through $e$. In defining the lift, we first replace $\partial_{ue}$ by the $l \times l$ identity matrix and think of this as a one-to-one correspondence between the lifted edges of $e$ by the lifted vertices of $u$. Then we need to replace $\partial(v,e)$ by the permutation matrix of $\gamma_{(u,v)}^{-1}$. We can also reverse the process by identifying $v$ and $e$ at the beginning and then defining $\partial(u,e)$ by $\gamma_{(u,v)}$. 
	
	In the language of Section~\ref{sec:sheaf}, lifting is a way to redefine the boundary operator of a graph into a block matrix. If one wishes, a lifted graph can be explained as a presheaf over the 1-chain complex $\partial: \mathbb{F}_q^{\vert E(\mathcal{G}) \vert} \rightarrow \mathbb{F}_q^{\vert V(\mathcal{G}) \vert}$. 
\end{remark}

\begin{example}\label{example:2-lift}
	Given any graph $\mathcal{G}$ with adjacency matrix $A$ and boundary operator $\partial$, taking 2-lifts by $\HH = S_2$ can yield
	\begin{enumerate}
		\item $\mathcal{G} \otimes \{0,1\}$, if $\gamma_{(u,v)}$ is always the identity element $\text{id} \in S_2$. The lift simply copies $\mathcal{G}$ twice:
		\begin{align}
			\hat{A} = A \otimes I_2 = \begin{pmatrix} A & 0 \\ 0 & A \end{pmatrix}, \quad
			\hat{\partial} = \begin{pmatrix} \partial & 0 \\ 0 & \partial \end{pmatrix}.
		\end{align}
		It is easy to verify that $\hat{A}$ and $\hat{\partial}$ are similar to the ones in Remark~\ref{remark:lift}, up to reordering the vertices and edges.
		
		\item $\mathcal{G} \otimes K_2$, where $K_2$ is the (trivial) complete graph on two vertices. This is the typical \emph{double cover} of $\mathcal{G}$, and $\gamma_{(u,v)}$ is always the transposition $(1,2) \in S_2$:
		\begin{align}
			\hat{A} = A \otimes \begin{pmatrix} 0 & 1 \\ 1 & 0 \end{pmatrix} = \begin{pmatrix} 0 & A \\ A & 0 \end{pmatrix}.
		\end{align}
		As a caveat, writing $\hat{\partial}$ as a skew block matrix like $\hat{A}$ is wrong in general and we should follow the method in Remark~\ref{remark:lift}. 
		An important example that we will encounter in Section~\ref{sec:cubical} is the double cover of the triangle $K_3$. It is simply the cycle with 6 vertices, still denoted by $C_6$ as the cyclic group:
		\begin{equation}\label{eq:K_3_double}
		K_3 \otimes \begin{pmatrix} 0 & 1 \\ 1 & 0 \end{pmatrix} = C_6 = 	
		\resizebox{0.2\linewidth}{!}{%
		\begin{tikzpicture}[x=0.75pt,y=0.75pt,yscale=-1,xscale=1, baseline=(current bounding box.center)]
			\draw  [line width=1.5]  (255.33,87.96) -- (136.33,18) -- (374.33,18) -- cycle ;
			\draw    (136.33,18) -- (136.33,203.04) ;
			\draw    (374.33,18) -- (374.33,203.04) ;
			\draw    (255.33,87.96) -- (255.33,273) ;
			\draw [line width=1.5]    (137.33,203.04) -- (256.33,273) ;
			\draw [line width=1.5]    (256.33,273) -- (375.33,203.04) ;
			\draw [line width=1.5]  [dash pattern={on 5.63pt off 4.5pt}]  (136.33,203.04) -- (374.33,203.04) ;
			\draw [color={rgb, 255:red, 208; green, 2; blue, 27 }  ,draw opacity=1 ]   (136.33,18) -- (255.33,273) ;
			\draw [color={rgb, 255:red, 208; green, 2; blue, 27 }  ,draw opacity=1 ]   (255.33,87.96) -- (136.33,203.04) ;
			\draw [color={rgb, 255:red, 208; green, 2; blue, 27 }  ,draw opacity=1 ]   (255.33,87.96) -- (374.33,203.04) ;
			\draw [color={rgb, 255:red, 208; green, 2; blue, 27 }  ,draw opacity=1 ]   (374.33,18) -- (255.33,273) ;
			\draw [color={rgb, 255:red, 208; green, 2; blue, 27 }  ,draw opacity=1 ] [dash pattern={on 4.5pt off 4.5pt}]  (374.33,18) -- (136.33,203.04) ;
			\draw [color={rgb, 255:red, 208; green, 2; blue, 27 }  ,draw opacity=1 ] [dash pattern={on 4.5pt off 4.5pt}]  (136.33,18) -- (374.33,203.04) ;
		\end{tikzpicture}
	    }
		\end{equation}
		\end{enumerate}	
\end{example}


More generally, $\gamma_{(u,v)}$ can be arbitrarily selected for different edges. Given any regular bipartite graph $\mathcal{G}_0$ with $N_0$ vertices, there is always some 2-lift of $\mathcal{G}_0$ that makes it Ramanujan. Repeatedly searching such 2-lift, we obtain a family of Ramanujan graphs. 

\begin{theorem}[\cite{Marcus2015I,Marcus2018IV}]
	There is an infinite sequence of $n$-regular bipartite Ramanujan graphs with $2^i N_0$ vertices for $i \to \infty$ and some constant $N_0$.
\end{theorem}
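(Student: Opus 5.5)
The plan is to build the family inductively by 2-lifts, starting from the base graph $\mathcal{G}_0 = K_{n,n}$. Take $N_0 = 2n$. As noted after Definition~\ref{def:Ramanujan}, its nontrivial eigenvalues are all $0$, so it is an $n$-regular bipartite Ramanujan graph.

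Any 2-lift of an $n$-regular bipartite graph is again $n$-regular and bipartite, since the fibers of the two colour classes give a bipartition, and it has twice as many vertices. So it suffices to prove the inductive step: every $n$-regular bipartite Ramanujan graph $\mathcal{G}$ has a 2-lift $\hat{\mathcal{G}}$ that is again Ramanujan.

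\textbf{Step 1 (reduction to a signed matrix).} Encode a 2-lift by a signing $s: E(\mathcal{G}) \to \{\pm 1\}$. Here $s_e = -1$ means $\gamma_e$ is the transposition, as in Example~\ref{example:2-lift}. Let $A_s$ be the signed adjacency matrix. Block-diagonalizing $\hat{A}$ by the vectors $(x,x)$ and $(x,-x)$ shows that the spectrum of $\hat A$ is the multiset union of the spectra of $A$ and $A_s$. This is the Bilu--Linial observation.

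The eigenvalues of $A$ are already fine by hypothesis. $A_s$ is still supported on a bipartite graph, so its spectrum is symmetric about $0$. Hence it suffices to find a signing $s$ with $\lambda_{\max}(A_s) \le 2\sqrt{n-1}$.

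\textbf{Step 2 (expected polynomial).} Average over uniformly random signings and prove $\mathbb{E}_s \det(xI - A_s) = \mu_{\mathcal{G}}(x)$, the matching polynomial. Expand the determinant as a sum over permutations. Every permutation containing a cycle of length $\ge 3$, or a fixed point on an edge used once, averages to zero because $\mathbb{E}[s_e]=0$. Only products of disjoint transpositions survive, and these are matchings; this is Godsil--Gutman.

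Then invoke Heilmann--Lieb: $\mu_{\mathcal{G}}$ is real-rooted and its roots lie in $[-2\sqrt{n-1}, 2\sqrt{n-1}]$. The cleanest route is via the path-tree, whose spectrum is controlled by the infinite $n$-regular tree.

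\textbf{Step 3 (interlacing family).} Order the edges as $e_1, \dots, e_m$. For each partial signing $(s_1,\dots,s_k)$, define $f_{s_1\dots s_k} = \mathbb{E}\det(xI-A_s)$, averaged over the remaining edges. The goal is to show these form an interlacing family in the sense of Marcus--Spielman--Srivastava. It suffices that for every $k$ and every $p \in [0,1]$, the convex combination
\begin{align}
p\, f_{s_1\dots s_{k},+1} + (1-p)\, f_{s_1\dots s_k,-1}
\end{align}
is real-rooted. Equivalently, all products of independent $p_e$-biased signings yield real-rooted expected characteristic polynomials.

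Given this, the standard lemma applies: descending the tree, at each node some child has largest root at most the parent's. We thereby reach a full signing $s$ with $\lambda_{\max}(A_s) \le \lambda_{\max}(\mu_{\mathcal{G}}) \le 2\sqrt{n-1}$, which closes the induction.

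\textbf{Main obstacle.} The hard step is the real-rootedness in Step 3. I would write $A_s + nI = \sum_e v_e v_e^T$ as a sum of independent random rank-one matrices, with $v_e = \delta_u \pm \delta_v$. Then the expected characteristic polynomial becomes a mixed characteristic polynomial,
\begin{align}
\prod_e \bigl(1 - \partial_{z_e}\bigr) \det\Bigl(xI + \sum_e z_e\, \mathbb{E}[v_e v_e^T]\Bigr)\Big|_{z=0}.
\end{align}
Real-rootedness would then follow from two facts: the determinant of a positive combination of PSD matrices is real stable, and real stability is preserved by operators $1-\partial_{z_e}$ and by specializing $z_e = 0$. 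Finally, the univariate restriction of a real stable polynomial is real-rooted.
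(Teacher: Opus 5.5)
Your proposal is correct and is essentially the Marcus--Spielman--Srivastava argument (Interlacing Families I) that the paper cites: repeatedly pick a Ramanujan 2-lift of a bipartite base graph, which the paper describes but does not prove itself. One cosmetic fix: since $A_s + nI = \sum_e v_e v_e^T$, your displayed polynomial with $xI$ equals $\mathbb{E}\det(xI - A_s)$ evaluated at $x-n$; you need $(x+n)I$ inside the determinant to get $\mathbb{E}\det(xI-A_s)$ itself, and the shift does not affect real-rootedness.
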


The above theorem demonstrates the existence of Ramanujan bipartite graphs for any degree. More recently,~\cite{Huang2025Ramanujan} affirms the existence for the non-bipartite case. This remarkable achievement focuses on the distribution of the spectrum of random graphs. Since our inductive construction on the codes and logical gates is based on lifted graphs, we would discuss more related results. 

\begin{theorem}[\cite{Chandrasekaran2017,Hall_2018}]\label{thm:3-lift} 
	Every bipartite graph $\mathcal{G}$ has a Ramanujan $3$- or $4$-covering, where the permutation above every edge is cyclic.
\end{theorem}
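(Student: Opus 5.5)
This statement is quoted from \cite{Chandrasekaran2017,Hall_2018}, and the plan below follows the method of interlacing families of \cite{Marcus2015I}, extended from signings in $S_2$ to cyclic voltages. Fix $l \in \{3,4\}$ and assign to every oriented edge $e=(u,v)$ of $\mathcal{G}$ an independent, uniformly random voltage $\gamma_e \in C_l$, with $\gamma_{(v,u)}=\gamma_e^{-1}$ as in Definition~\ref{def:lift}.

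\textbf{Step 1: spectral decomposition of the lift.} By Remark~\ref{remark:lift}, the lifted adjacency matrix $\hat A$ is obtained by replacing each nonzero entry of $A$ with the permutation matrix of $\gamma_e$. Since $C_l$ is abelian, all these permutation matrices are simultaneously diagonalized by the characters $\chi$ of $C_l$. Hence $\hat A \cong \bigoplus_{\chi} A_\chi$, where $A_\chi(u,v)=\chi(\gamma_{(u,v)})$ is Hermitian and the trivial character returns $A$ itself. The lift is therefore Ramanujan as soon as every nontrivial $A_\chi$ has spectrum in $[-2\sqrt{n-1},2\sqrt{n-1}]$. Because $\mathcal{G}$ is bipartite, every $A_\chi$ is conjugate to $-A_\chi$, so it suffices to bound the largest eigenvalue.

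Complex-conjugate characters give matrices with identical spectra. For $l=3$ this leaves a single polynomial, $p_\gamma(x)=\det(xI-A_\chi)$ with $\chi$ of order $3$. For $l=4$ it leaves two: the order-$4$ character, and the real order-$2$ character, which is an ordinary $\pm1$ signing. I would take $p_\gamma$ to be the product of these two characteristic polynomials, which is what \cite{Hall_2018} does.

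\textbf{Step 2: the expected polynomial and its roots.} Expand $\det(xI-A_\chi)$ as a sum over permutations and take expectations. A term survives only if the product of voltages around each of its cycles is trivial on average. Independence and $\mathbb{E}[\chi(\gamma_e)^k]=0$ for $l \nmid k$ kill every cycle of length at least $3$. Only $2$-cycles, i.e.\ matchings, survive, so $\mathbb{E}\,p_\gamma=\mu_{\mathcal{G}}$, the matching polynomial. For $l=4$ one must also handle the product of the two factors. Here I would expect the correlations between them to cancel, so that the expectation becomes $\mu_{\mathcal{G}}^2$ or the matching polynomial of a suitable cover; checking this exact identity is part of the work. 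By Heilmann--Lieb, $\mu_{\mathcal{G}}$ is real-rooted with all roots in $[-2\sqrt{n-1},2\sqrt{n-1}]$.

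\textbf{Step 3: interlacing family.} Index the polynomials $p_\gamma$ by a tree that fixes the edge voltages one at a time, and show that at every node the children's polynomials have a common interlacing. It then follows that some leaf $p_\gamma$ has largest root at most that of $\mathbb{E}\,p_\gamma$, which gives the desired $\gamma$. Common interlacing is equivalent to every convex combination being real-rooted. So the core claim is that $\mathbb{E}\,p_\gamma$ stays real-rooted when each edge voltage follows an \emph{arbitrary} independent distribution on $C_l$.

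To prove this, I would write $xI+dI-A_\chi$ for $d$-regular $\mathcal{G}$ as $xI+\sum_e v_e v_e^{*}$, where $v_e=\delta_u-\overline{\chi(\gamma_e)}\,\delta_v$ is rank one. For $l=3$, independence of the rank-one summands identifies $\mathbb{E}\,p_\gamma$ as a mixed characteristic polynomial, real-stable by \cite{Marcus2015I}. For non-regular graphs, $D$ replaces $dI$ and the argument goes through multivariate real stability.

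\textbf{Main obstacle.} The hard case is $l=4$. There two characters must be controlled simultaneously, and the product of two characteristic polynomials is not obviously a mixed characteristic polynomial. The remedy I would try first, following \cite{Hall_2018}, is to view both factors as blocks of one larger Hermitian matrix assembled from independent rank-one pieces per edge, namely the restriction of $\hat A$ to the orthogonal complement of the trivial representation. Real stability then follows from the same rank-one, independent-summand argument, and it would also yield the expectation identity for $l=4$ needed in Step~2.
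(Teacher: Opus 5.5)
Your argument for $l=3$ is sound. One nontrivial character suffices because $A_{\bar\chi}=\overline{A_\chi}$ has the same spectrum, the uniform expectation is $\mu_{\mathcal{G}}$, and each edge contributes an independent random rank-one term $v_ev_e^*$, so real stability gives real-rootedness under every product distribution.

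The gap is at $l=4$. The core claim of Step~3 is that $\mathbb{E}\bigl[\det(xI-A_\chi)\det(xI-A_{\chi^2})\bigr]$ stays real-rooted for arbitrary independent distributions of the voltages $\gamma_e\in C_4$. This claim is false, and your remedy cannot supply it. On $\chi\oplus\chi^2$ (resp.\ on the whole orthogonal complement of the trivial representation), a single edge contributes a rank-$2$ (resp.\ rank-$3$) block driven by one random voltage, not independent rank-one terms. So $\mathbb{E}\det$ is no longer a mixed characteristic polynomial.

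Already the $4$-cycle breaks the claim. Fix three voltages, so that the holonomy $S\in\mathbb{Z}_4$ is governed by the last one. With $y=x^4-4x^2+2$ one gets $\det(xI-A_\chi)=y-2\cos(\pi S/2)$ and $\det(xI-A_{\chi^2})=y-2(-1)^S$. The four children are therefore $(y-2)^2$, $y(y+2)$, $y^2-4$, $y(y+2)$. The average of the first two is $y^2-y+2$, which has no real root in $y$ and hence none in $x$. These siblings have no common interlacing, so edge-by-edge branching over $C_4$ does not produce an interlacing family. (This graph is only $2$-regular, but your claim is stated for all graphs, and nothing in the rank-$2$ setting suggests it holds for higher degree.)

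The missing idea is to decouple the two characters through the subgroup $\{0,2\}\subset C_4$. Identify $C_4$ with $\mathbb{Z}_4$ and write $\gamma_e=c_e+2b_e$ with $c_e,b_e\in\{0,1\}$.
\begin{itemize}
\item $A_{\chi^2}$ has entries $(-1)^{c_e}$. This is an ordinary signing depending only on $c$.
\item For fixed $c$, $A_\chi$ has entries $(-1)^{b_e}i^{c_e}$. This is a $\pm1$-signing of a fixed Hermitian matrix with unimodular entries.
\end{itemize}
First choose $c$ by the Marcus--Spielman--Srivastava argument so that $\lambda_{\max}(A_{\chi^2})\le 2\sqrt{n-1}$. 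Then freeze $c$ and run your rank-one argument over $b$ alone, with $v_e=\delta_u-(-1)^{b_e}\overline{i^{c_e}}\,\delta_v$. Every cycle of length at least $3$ carries a factor $\mathbb{E}(-1)^{b_e}=0$, so $\mathbb{E}_b\det(xI-A_\chi)=\mu_{\mathcal{G}}$ for every $c$. The interlacing family over $b$ then gives $\lambda_{\max}(A_\chi)\le 2\sqrt{n-1}$, and bipartiteness together with $A_{\chi^3}=\overline{A_\chi}$ finishes the proof.

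Equivalently, you may keep $p_{\chi^2}p_\chi$ as the leaf polynomial, provided the tree branches on all $c_e$ before any $b_e$. Because $\mathbb{E}_b\,p_\chi=\mu_{\mathcal{G}}$ regardless of $c$, every convex combination of siblings is a real-rooted polynomial times a fixed real-rooted factor. The root polynomial is then $\mu_{\mathcal{G}}^2$. This is also the correct value of your uniform expectation in Step~2: an edge's exponents $\alpha\in\{0,\pm1\}$ from $\chi$ and $\beta\in\{0,2\}$ from $\chi^2$ sum to $0\bmod 4$ only when both vanish.
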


\begin{remark}\label{remark:non-Abelian}
	Here is an important caveat for the shift $4$-lift: it cannot be treated as a 2-lift of some 2-lift of $\mathcal{G}$, even though they have the same numbers of vertices and edges, respectively. Actually, a 2-lift of 2-lift may stem from a non-Abelian subgroup of $S_4$, but a shift $4$-lift is conducted by the cyclic group $C_4$. The reason goes as follows: let us denote the 4-lifted vertices via $h \in C_4$ by $(h,v)$. A 4-lift creates an edge like
	\begin{align}
		[ (h,v), (\gamma_{(v,v')} \cdot h, v') ]
	\end{align} 
	for some $\gamma_{(v,v')} \in C_4$. On the other hand, let us denote by $(h_1, (h_2,v))$ the 2-lift of 2-lift. It creates an edge like
	\begin{align}\label{eq:sequential_2_lifts}
		[ (h_2, (h_1,v)),  (\gamma_{2,[(h_1,v), (\gamma_{1,(v,v')} \cdot h_1, v') ) ]} \cdot h_2, (\gamma_{1,(v,v')} \cdot h_1, v')) ],
	\end{align} 
	where the action of $\gamma_2$ is chosen based on the previous round of lifting. The edge does not look like
	\begin{align}\label{eq:independent_2_lifts}
		[ (h_2,h_1,v), (\gamma_{2,(v,v')} \cdot h_2, \gamma_{1,(v,v')} \cdot h_1,v') ].
	\end{align} 
	Therefore, Eq.~\eqref{eq:sequential_2_lifts} is defined sequentially, but in Eq.~\eqref{eq:independent_2_lifts}, the lifts can be taken in any order. In terms of group actions, Eq.~\eqref{eq:independent_2_lifts} uses $(\gamma_{2,(v,v')}, \gamma_{1,(v,v')} ) \in S_2 \times S_2$, which is not isomorphic to $C_4$ but still abelian. However, in the sequential lifting, suppose we are given $(v_0,v_0')$ and $(v_1,v_1')$ after the first round of lifting. Then we will have four possible ways to do the second round:
	\begin{align}
		(v_{00},v_{00}'), \ (v_{01},v_{01}'), \ (v_{10},v_{10}'), \ (v_{11},v_{11}'); \\
		(v_{00},v_{00}'), \ (v_{01},v_{01}'), \ (v_{10},v_{11}'), \ (v_{11},v_{10}'); \\
		(v_{00},v_{01}'), \ (v_{01},v_{00}'), \ (v_{10},v_{10}'), \ (v_{11},v_{11}'); \\
		(v_{00},v_{01}'), \ (v_{01},v_{00}'), \ (v_{10},v_{11}'), \ (v_{11},v_{10}').
	\end{align}
	Suppose $(v_0,v_1')$ and $(v_1,v_0')$ are given at first. Then we have four extra ways to do the second:
	\begin{align}
		(v_{00},v_{10}'), \ (v_{01},v_{11}'), \ (v_{10},v_{00}'), \ (v_{11},v_{01}'); \\
		(v_{00},v_{10}'), \ (v_{01},v_{11}'), \ (v_{10},v_{01}'), \ (v_{11},v_{00}'); \\
		(v_{00},v_{11}'), \ (v_{01},v_{10}'), \ (v_{10},v_{00}'), \ (v_{11},v_{01}'); \\
		(v_{00},v_{11}'), \ (v_{01},v_{10}'), \ (v_{10},v_{01}'), \ (v_{11},v_{00}').
	\end{align}
	One can check that some of the permutations in the first case do not commute with some in the second and they form a non-Abelian subgroup in $S_4$. Consequently, rewriting a 2-lift of 2-lift into one group action turns out to be a non-Abelian lift. A similar phenomenon also occurs for other sequential abelian lifts. Since abelian lifts are of crucial importance in the construction of high dimensional cubical complex family, sequential abelian lifts are generally invalid (see Section~\ref{sec:cubical}). 
\end{remark}


Lifts through abelian groups also have limitations. For any fixed base graph $\mathcal{G}$, it is impossible to build an arbitrarily large lift that is both abelian and Ramanujan~\cite{Hall_2018}. An exponential upper bound on the size of abelian lift is derived in~\cite{Agarwal2016}. If we are willing to use any non-Abelian subgroups from $S_l$, then the following theorem holds.

\begin{theorem}[\cite{Hall_2018}]\label{thm:l-covering} 
	Every connected bipartite graph has a Ramanujan $l$-covering for every $l$.
\end{theorem}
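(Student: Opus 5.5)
The plan is to follow the method of interlacing families of Marcus--Spielman--Srivastava, which Hall, Puder and Sawin extended from $2$-lifts to $l$-coverings. First I would separate the new spectrum from the old one. Let $\hat{\mathcal{G}}$ be an $l$-lift of $\mathcal{G}$ with voltages $\gamma_{(u,v)}\in S_l$, so that, as in Remark~\ref{remark:lift}, the adjacency matrix $\hat A$ is the block matrix with the permutation matrices $\gamma_{(u,v)}$ in place of the nonzero entries of $A$. The defining representation of $S_l$ splits as the trivial representation plus the $(l-1)$-dimensional standard representation $\mathrm{std}$. Hence $\hat A$ is similar to $A\oplus A_{\mathrm{std}}$, where $A_{\mathrm{std}}$ is obtained by replacing each $\gamma_{(u,v)}$ by $\mathrm{std}(\gamma_{(u,v)})$. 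The old eigenvalues are those of $A$, and the new eigenvalues are those of $A_{\mathrm{std}}$. Since the statement means that the new eigenvalues lie in the Ramanujan interval (the old ones come from the base graph), it suffices to find voltages for which every eigenvalue of $A_{\mathrm{std}}$ lies in $[-2\sqrt{n-1},2\sqrt{n-1}]$. Because $\mathcal{G}$ is bipartite, every lift is bipartite and its spectrum is symmetric about $0$. So it is enough to bound the \emph{largest} root of $p_\gamma(x)=\det(x-A_{\mathrm{std}})$.

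Second, I would compute the expected polynomial. Choose the voltages independently and uniformly from $S_l$, or from a suitable distribution built out of random transpositions. The key computation is that $\mathbb{E}\,p_\gamma(x)$ equals an explicit combinatorial polynomial attached to $\mathcal{G}$: a generalized ``$l$-matching polynomial'', obtained by expanding the determinant over cycle covers and using that $\mathbb{E}[\mathrm{std}(\sigma)^{\otimes k}]$ restricted to antisymmetric tensors is controlled by Weingarten-type identities. Its roots are real and bounded by the spectral radius of the universal cover of $\mathcal{G}$, which is at most $2\sqrt{n-1}$. I would prove this by showing that the expected polynomial is the average of characteristic polynomials of lifts of finite trees, or of path-tree-type graphs in the sense of Godsil, all of whose spectral radii are bounded by that of the $n$-regular tree.

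Third, I would build the interlacing family. Generate each uniform permutation as a product of random transpositions, $\sigma=\tau_1\cdots\tau_{l-1}$ with $\tau_i$ uniform in $\{(i,j):j\ge i\}$, and reveal the transpositions one at a time, edge by edge. I would then show that every partial conditional expectation is real-rooted and that siblings at each node have a common interlacer. The tool is real stability: each step is a rank-one, or low-rank, random update of a real symmetric matrix, and mixed-characteristic-polynomial and real-stable-polynomial arguments show that convex combinations of the children remain real-rooted. The interlacing-family lemma then gives a leaf $\gamma$ with $\lambda_{\max}(p_\gamma)\le\lambda_{\max}(\mathbb{E}\,p_\gamma)\le 2\sqrt{n-1}$, and symmetry handles the negative side.

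The main obstacle is this third step. For $l>2$ the matrices $\mathrm{std}(\gamma)$ are not rank-one perturbations of each other, so the simple $2$-lift argument, in which signs $\pm1$ affect each edge independently, does not transfer directly. The transposition decomposition is the device I would try first, but checking real-rootedness for all convex combinations along the tree requires a careful real-stability argument. A secondary difficulty is the root bound for the expected polynomial in the second step, which must be established for arbitrary connected bipartite $\mathcal{G}$, not only for regular ones.
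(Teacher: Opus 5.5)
The paper gives no proof of this statement; it imports it from Hall--Puder--Sawin. Your three steps reproduce the architecture of that argument: the new spectrum is the spectrum of $A_{\mathrm{std}}$, the expected polynomial is a generalized matching polynomial with roots in $[-\rho,\rho]$, and the interlacing family comes from factoring uniform permutations into transpositions. As a proof, however, it has a genuine gap at exactly the step you flag, and the tool you name would not close it.

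Revealing a factor $\tau$ of the label $\gamma_e=g\tau h$ on an edge $e=(u,v)$ gives $\mathrm{std}(g\tau h)=\mathrm{std}(g)(I-2ww^{T})\mathrm{std}(h)$. So $A_{\mathrm{std}}$ changes by $-2(ab^{T}+ba^{T})$, where $a,b$ are orthonormal vectors supported on the $u$- and $v$-blocks. This is a rank-two perturbation with eigenvalues $\pm 2$, i.e.\ indefinite. Moreover, write the Laplacian form $D\otimes I-A_{\mathrm{std}}=\sum_e W_eW_e^{T}$, with $D$ the degree matrix and $W_e=e_u\otimes I-e_v\otimes\mathrm{std}(\gamma_e)^{T}$. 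Each summand has rank $l-1$, and its randomness enters multiplicatively through several factors. So neither Cauchy interlacing for rank-one PSD updates nor the Marcus--Spielman--Srivastava mixed characteristic polynomial applies; the latter needs independent rank-one summands, which you have only when $l=2$.

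The missing ingredient is a real-rootedness lemma of the following form. Fix arbitrary labels elsewhere and draw the remaining factors from product distributions. Then choosing one factor of one edge label from the admissible transpositions, with \emph{arbitrary} weights, must keep the expected characteristic polynomial real-rooted. Only such a statement yields common interlacers for siblings via the convex-combination criterion, and it is the genuinely new part of the cited argument. Without it your third step is a plan, not a proof.

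Step two also lacks its key input. The formula for $\mathbb{E}\,p_\gamma$ is not a Weingarten computation. You expand the determinant over the vertex blocks, so that each edge contributes a $k\times k$ minor of $\mathrm{std}(\gamma_e)$ and a $k'\times k'$ minor of its transpose. You then apply Schur orthogonality edge by edge. This collapses to a clean combinatorial polynomial precisely because the exterior powers $\wedge^k\mathrm{std}$, $0\le k\le l-1$ (the hook representations), are irreducible and pairwise non-isomorphic. You need to state and use this fact.

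The result is the $(l-1)$-matching polynomial, which equals the average matching polynomial of a uniformly random $(l-1)$-lift of $\mathcal{G}$. Each such matching polynomial has all roots in $[-\rho,\rho]$ by Heilmann--Lieb and Godsil, since every lift has the same universal cover as $\mathcal{G}$. Positivity of the average for $x>\rho$ then excludes real roots above $\rho$. Averaging does not give real-rootedness, however, so your claim that step two shows the roots are real is unjustified. Real-rootedness of $\mathbb{E}\,p_\gamma$ has to come from the interlacing structure of step three, which is one more reason that step is indispensable.
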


Recall that a Ramanujan graph family with constant degree is the best possible expander family, but expander families need not always be Ramanujan. Actually, many efforts, e.g.,~\cite{ZigZzag2002,ZigZzag2008,Bilu2004,Agarwal2016,Mohanty2022,Alon2020,Mohanty2022,Jeronimo2021,Jeronimo2024}, have been made to establish almost Ramanujan families.

\begin{theorem}[\cite{Agarwal2016}]\label{thm:abelian_lift} 
	Let $\mathcal{G}$ be a $n$-regular graph with $N$ vertices. Suppose the absolute values of its nontrivial eigenvalues are bounded by $\lambda$. When $2 \leq n \leq \sqrt{N/3\ln N}$, let us randomly lift $\mathcal{G}$ via the cyclic group $C_l$. Let $\lambda_{\text{new}}$ be the largest absolute eigenvalue, then 
	\begin{align}
		\lambda_{\text{new}} \leq O(\lambda)
	\end{align}
	with probability $1 - l e^{-\Omega(N/n^2)}$.
\end{theorem}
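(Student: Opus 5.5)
The plan is to reduce the spectrum of the lift to a family of twisted adjacency operators, one per character of $C_l$, and then bound each twisted operator by a discretized bilinear-form argument combined with a union bound. This is essentially the strategy of~\cite{Agarwal2016}.

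\textbf{Step 1: reduction to characters.} Since $C_l$ is abelian, the lifted adjacency matrix $\hat A$ (defined as in Remark~\ref{remark:lift}) commutes with the $C_l$-action on fibers. Diagonalizing over $\mathbb{C}$ along the characters $\chi$ of $C_l$ splits it as
\begin{align}
	\hat A \cong \bigoplus_{\chi} A_\chi, \qquad A_\chi(u,v) = \chi(\gamma_{(u,v)}) A(u,v).
\end{align}
Each $A_\chi$ is Hermitian because $\gamma_{(v,u)}=\gamma_{(u,v)}^{-1}$. The trivial character gives back $A$, whose nontrivial eigenvalues are bounded by $\lambda$ by hypothesis. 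It therefore suffices to show that, for each of the $l-1$ nontrivial characters, $\|A_\chi\| \le C\lambda$ with probability at least $1-e^{-\Omega(N/n^2)}$. A union bound over the characters then produces the factor $l$ in the failure probability.

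\textbf{Step 2: entries of $A_\chi$ as mean-zero variables.} For a nontrivial $\chi$ and uniformly random $\gamma_e\in C_l$, the entries $\chi(\gamma_e)$ are independent unit complex numbers with mean zero, one per edge. To bound $\|A_\chi\|$ I would use a converse expander-mixing (Bilu--Linial type) argument.
\begin{enumerate}
	\item Round test vectors $x,y$ to dyadic level sets, so that it suffices to control $\bigl|\sum_{e\in E(S,T)}\chi(\gamma_e)\bigr|$ for all vertex sets $S,T$.
	\item Apply a Chernoff/Hoeffding bound: this sum has size $O\bigl(\sqrt{e(S,T)\cdot \log\binom{N}{|S|}\binom{N}{|T|}}\bigr)$ except with probability small enough to survive a union bound over all pairs $(S,T)$ of the given sizes.
	\item Use the expander mixing lemma for the base graph, $e(S,T)\le n|S||T|/N+\lambda\sqrt{|S||T|}$, so that these deviations are $O(\lambda\sqrt{|S||T|})$ for sets of moderate size.
	\item For small or very unbalanced pairs, use instead the trivial degree bound $e(S,T)\le n\min(|S|,|T|)$.
\end{enumerate}
The condition $n\le\sqrt{N/3\ln N}$ is what makes the entropy term $|S|\log N$ dominated by the available concentration. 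It also sets the final failure probability at $e^{-\Omega(N/n^2)}$, coming from the worst (smallest) scale.

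\textbf{Step 3 and main obstacle.} The final step reassembles the level-set bounds into an operator-norm bound. The main obstacle is doing this without losing the $\log(n/\lambda)$ factor that the naive Bilu--Linial lemma incurs. I would try to avoid that loss by handling separately the pairs of level sets whose sizes are far apart, using the degree bound. Summing these contributions as geometric series should then give $\|A_\chi\|=O(\lambda)$ rather than $O(\lambda\log n)$. Carefully balancing the union-bound entropy against the Chernoff exponent across all size scales is where the real work lies.
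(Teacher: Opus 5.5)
The paper does not prove this statement; it quotes it from \cite{Agarwal2016}, so your outline has to stand on its own.

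Step 1 is correct. Step 2 can be made to work, but not as you describe it. For pairs with $\sqrt{|S||T|}\le \lambda N/n$ you should not use Chernoff at all: the union bound over $\binom{N}{|S|}\binom{N}{|T|}$ pairs fails once $\log(N/|S|)\gg\lambda$. The degree bound $n\min(|S|,|T|)$ is also too weak for small balanced pairs. Use instead the deterministic estimate $|\mathbf{1}_S^{*}A_\chi\mathbf{1}_T|\le e(S,T)\le \tfrac{n}{N}|S||T|+\lambda\sqrt{|S||T|}\le 2\lambda\sqrt{|S||T|}$. Hoeffding is needed only when the density term dominates. There it gives discrepancy $O(\sqrt n)$ with failure probability $e^{-\Omega(N)}$ per character. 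Consequently, neither the restriction $n\le\sqrt{N/3\ln N}$ nor the rate $e^{-\Omega(N/n^2)}$ arises where you say it does.

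The genuine gap is Step 3. Your device there is to bound dyadic level sets of very different sizes by the degree bound and sum geometric series. That is exactly the Bilu--Linial proof of the converse expander mixing lemma, and it is what produces the logarithm. After the far-apart pairs are disposed of, $\Theta(\log(n/\alpha))$ diagonals $|i-j|\lesssim\log(n/\alpha)$ of comparable level sets remain. On these the only available estimate is $\alpha\sqrt{|S_i||S_j|}$, and each such diagonal can contribute $\alpha\|x\|\|y\|$.

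No reorganisation removes this, because the lemma is tight given only your two inputs, set-pair discrepancy and row sums. Take $B_{ij}=(ij)^{-1/2}$ for $i\neq j\le m$, with zero diagonal. It satisfies $|\mathbf{1}_S^{T}B\mathbf{1}_T|\le 4\sqrt{|S||T|}$ for disjoint $S,T$ and has row sums at most $2\sqrt m$, yet $\|B\|\ge\ln m-1$. Applying the base graph's mixing lemma level set by level set does not help either, since $\lambda\sum_{i,j}2^{-i-j}\sqrt{|S_i||S_j|}$ suffers the same loss. So your outline yields $\lambda_{\text{new}}=O(\lambda\log(n/\lambda))$, not $O(\lambda)$.

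Reaching $O(\lambda)$ requires controlling $x^{*}A_\chi y$ for general vectors rather than indicator vectors. One option is a Kahn--Szemer\'edi-type light/heavy split. The localized part is bounded deterministically via $|x^{*}A_\chi y|\le |x|^{T}A|y|$ and the mixing lemma applied to whole vectors, with no level-set summation. The spread part then needs a concentration bound strong enough to survive a net over the sphere. That step is the heart of the matter and is missing from your proposal.
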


Therefore, when $l \leq e^{-c N/n^2}$ for some constant $c$ and when the base graph $\mathcal{G}$ is almost Ramanujan (cf. Theorem~\ref{thm:l-covering}), there exists a shift $l$-lift of $\mathcal{G}$ that can be almost Ramanujan. 


\subsection{Classical and quantum expander codes}\label{sec:SS_code}

To highlight the significance of local coefficients, we now introduce the classical Tanner code and Sipser-Spielman expander code as presheaves on some 1-chain complexes, analogous to graph lifts. They are used as the backbone to define quantum expander codes via hypergraph product~\cite{Zemor2014,Leverrier_2015} and also inspire the recent discovery of (almost) good qLDPC code~\cite{PK2021,PK2022Good,QuantumTanner2022,DHLV2022}. 

\begin{definition}[\cite{Tanner1981}]\label{def:Tanner} 
	Let $\mathcal{G}$ be an $n$-regular graph with a \emph{local classical code} $\C$ defined by the parity-check matrix $h: \mathbb{F}_q^n \rightarrow \mathbb{F}_q^m$ for some constant integer $m$. Let $N = \vert E(\mathcal{G}) \vert = n \vert V(\mathcal{G}) \vert/2$ be the number of classical bits, then the \emph{classical Tanner code} $T(\mathcal{G},\C)$ defined by $\mathcal{G}$ and $\C$ is a subspace of $\mathbb{F}_q^N = \mathbb{F}_q^{\vert E(\mathcal{G}) \vert}$ constructed as follows.
	\begin{enumerate}
		\item For each $v \in V$, we prescribe an order on all $n$ edges $e$ incident with $v$.
		
		\item Given any $x \in \mathbb{F}_q^N$, let $x_v$ be the local vector with components $x(e)$ for all edges $e$ attaching to $v$. These components need to be arranged by the order of $e$. 
	\end{enumerate}
	Then
	\begin{align}
		T(\mathcal{G},\C) := \{x \in  \mathbb{F}_q^N: x_v \in \ker h \text{ for any } v \in V \}.
	\end{align}
\end{definition}

\begin{remark}\label{remark:SS_code}
	We now incorporate Definition~\ref{def:Tanner} into the scheme of sheaf code. As in Remark~\ref{remark:lift}, the parity-check matrix $H$ of the Tanner code $T(\mathcal{G},\C)$ is defined based on the boundary operator $\partial: \mathbb{F}_q^{\vert E(\mathcal{G}) \vert \times \vert V(\mathcal{G}) \vert}$ of the graph $\mathcal{G}$. We replace each nonzero entry $\partial_{ve}$ by some column from $h$. Recall that $h: \mathbb{F}_q^n \to \mathbb{F}_q^m$, so we have $n$ columns and each of them corresponds to an edge by the local order at $v$. For example, from a $3$-regular graph with a row in $\partial$, we would make the following substitution
	\begin{align}
		\begin{pmatrix}
			1 & 0 & 0 & 1 & 1
		\end{pmatrix}
		\to 
		\begin{pmatrix}
			h^{(2)} & 0 & 0 & h^{(1)} & h^{(3)}
		\end{pmatrix},
	\end{align}
	where $h = (h^{(1)}, h^{(2)}, h^{(3)})$ is the local parity-check matrix and the local order on edges is $\{2,1,3\}$. Note that for a graph family with fixed node degree, the matrices $H$ must be sparse and produce a classical LDPC code family. A straightforward example is that we simply take $h = (1 \ 1 \cdots \ 1)$, which is the parity-check matrix of the dual of the repetition code. Then each $h^{(i)}$ from the above is just $1$, and the boundary operator $\partial$ is recovered with no decorations.
	
	Let $X$ be a 1D graded incidence poset defined by $X(0) = V(\mathcal{G})$ and $X(1) = E(\mathcal{G})$. The original boundary operator defines the following 1-chain complex with scalar local coefficients:
	\begin{align}
		C_1(X, \mathbb{F}_q) = \mathbb{F}_q^{\vert E(\mathcal{G}) \vert} \xrightarrow{\partial} C_0(X, \mathbb{F}_q) = \mathbb{F}_q^{\vert V(\mathcal{G}) \vert}.
	\end{align} 
	Let
	\begin{align}
		\mathcal{F}(e) := \mathbb{F}_q, \text{ for any } e \in E(\mathcal{G}), \quad
		\mathcal{F}(v) := \mathbb{F}_q^m, \text{ for any } v \in V(\mathcal{G}).
	\end{align}
	For any $v \prec e$, $\mathcal{F}_{v,e}$ is set to be the corresponding row from $h^T$. This yields a nontrivial presheaf $\mathcal{F}$ (although the local coefficients of edges are still trivial) that corresponds to the parity-check matrix $H$ defined by the above block matrix. 
\end{remark}

The \emph{Sipser--Spielman expander codes} are constructed upon $T(\mathcal{G},\C)$ by utilizing both an expander family and a fixed good classical local code. Under mild conditions on the graph expansions and local code parameters, it turns out to be a good classical LDPC code family~\cite{SS1996} (see also Theorem~\ref{thm:cLDPC}). As a comparison, suppose we define a classical code simply by the boundary operator $\partial$. It is easy to check that any closed path in the graph forms a codeword. The length of the smallest closed path, called the \emph{girth} of $\mathcal{G}$, satisfies the following estimates.
\begin{enumerate}
	\item Suppose $\mathcal{G}$ is a cycle. Its girth equals $N$ and the associated code has parameters $[N,1,N]$.
	
	\item Suppose $\mathcal{G}$ has averaged node degree $\bar{n} \geq 3$, then girth $\leq 2\log_{\bar{n}-1} N$~\cite{Margulis1988,LPS1988,Alon2002MooreBound}. 
\end{enumerate} 
Namely, classical codes defined by graphs can never have a constant coding rate and a linear distance simultaneously. 

\begin{theorem}[\cite{SS1996,Meshulam2018,PK2021}]\label{thm:cLDPC} 
	Let $H$ be the parity-check matrix of $T(\mathcal{G},\C)$. Let $\lambda$ be the largest absolute eigenvalue of $\mathcal{G}$ other than $n$ (Definition~\ref{def:Ramanujan}). Suppose the local code $\C = \ker h$ has parameter $[n,k,d_0]$ with $d_0 > \lambda$, then the distance of $\ker H = T(\mathcal{G},\C)$ is $d \geq \frac{d_0}{2 n}( d_0 - \lambda) N$. Furthermore, suppose $h$ is of full rank and the dual code $\C^\perp = \Ima h^T$ has the distance $\geq d_0$, then the code distance of $\ker H^T$ satisfies $d^T \geq \frac{1}{2 n}( d_0 - \lambda) N$.
\end{theorem}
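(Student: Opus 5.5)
The plan is to reproduce the classical Sipser--Spielman / Tanner-code argument through the expander mixing lemma, and then treat the transpose code by a dual version of the same counting.

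\textbf{Distance of $\ker H = T(\mathcal{G},\C)$.}
Let $x \in T(\mathcal{G},\C)$ be a nonzero codeword with support $F \subset E(\mathcal{G})$. Let $S \subset V(\mathcal{G})$ be the set of vertices incident to at least one edge of $F$.
\begin{enumerate}
\item Since each local view satisfies $x_v \in \ker h$ and is nonzero for $v \in S$, every $v \in S$ meets at least $d_0$ edges of $F$.
\item Hence $|F| \geq d_0 |S|/2$.
\item All edges of $F$ lie inside $S$, so $|F| \leq |E(S,S)|$.
\item The expander mixing lemma gives
\begin{align}
|E(S,S)| \leq \frac{n|S|^2}{2|V|} + \frac{\lambda}{2}|S|.
\end{align}
\item Combining these, $d_0|S| \leq n|S|^2/|V| + \lambda|S|$, that is, $|S| \geq (d_0-\lambda)|V|/n$.
\item Therefore $|F| \geq \frac{d_0}{2}\cdot\frac{(d_0-\lambda)}{n}|V|$. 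Rewriting in terms of $N$ gives the claimed linear bound $\frac{d_0}{2n}(d_0-\lambda)N$, up to the convention relating $N$ and $|V|$. I would track this constant carefully and, if necessary, state it in terms of $|V|$.
\end{enumerate}

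\textbf{Distance of $\ker H^T$.}
Here a nonzero element is $y \in \bigoplus_{v} \mathbb{F}_q^m$ with $H^T y = 0$. Concretely, for each edge $e=(u,v)$,
\begin{align}
(h^Ty_u)_e + (h^Ty_v)_e = 0.
\end{align}
Let $S$ be the set of vertices with $y_v \neq 0$.
\begin{enumerate}
\item Because $h$ has full rank, $h^T y_v \neq 0$ for $v \in S$.
\item Since $h^Ty_v \in \C^\perp$, it has weight at least $d_0$, so $v$ has at least $d_0$ edges $e$ with $(h^Ty_v)_e \neq 0$.
\item The edge constraint forces $(h^Ty_u)_e \neq 0$ as well, so $u \in S$.
\item Thus every $v \in S$ has at least $d_0$ neighbours inside $S$, giving $|E(S,S)| \geq d_0|S|/2$.
\item The same mixing-lemma computation yields $|S| \geq (d_0-\lambda)|V|/n$.
\item The weight of $y$, counted in blocks or in symbols, is at least $|S|$, which gives $d^T \geq \frac{1}{2n}(d_0-\lambda)N$ after the same normalization.
\end{enumerate}

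\textbf{Main obstacle.}
The only delicate step is applying the mixing lemma in the form $e(S,S) \leq \frac{n}{|V|}|S|^2 + \lambda|S|$. This requires $\lambda$ to bound all nontrivial eigenvalues, including a possible $-n$ in the bipartite case. I would handle that case with the bipartite version of the lemma, which bounds edges between $S\cap L$ and $S\cap R$. Apart from this, the argument is routine counting.
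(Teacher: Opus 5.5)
Your proposal is correct and is the standard Sipser--Spielman expander-mixing argument that the paper relies on: it gives no proof beyond the citations and a pointer to Lemma~\ref{lemma:expander_mix}. Your ``main obstacle'' is not really one. For $x=y=\mathbf{1}_S$ only the second-largest eigenvalue enters, as the paper itself remarks after Lemma~\ref{lemma:expander_mix}, so the eigenvalue $-n$ of a bipartite graph can only lower $\langle \mathbf{1}_S, A\mathbf{1}_S\rangle$. Your hedge on normalization is necessary, not optional. Your argument yields $d\ge \frac{d_0(d_0-\lambda)}{2n}|V(\mathcal{G})|$ and $d^T\ge \frac{d_0-\lambda}{n}|V(\mathcal{G})|$, which is the statement with $N$ read as $|V(\mathcal{G})|$. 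With $N=|E(\mathcal{G})|$ as in Definition~\ref{def:Tanner}, the first bound is false: the $[n,1,n]$ repetition local code on a connected non-bipartite graph (e.g.\ $K_{n+1}$, $\lambda=1$) gives $d=|E(\mathcal{G})|<\frac{n-\lambda}{2}|E(\mathcal{G})|$ once $n-\lambda>2$.
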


The proof is based on the expander mixing lemma (Lemma~\ref{lemma:expander_mix}), which is also used in the case of quantum code. A linear lower bound on the distances of both $H$ and $H^T$ is also important to control the parameters of quantum code (see Eq.~\eqref{eq:HGP_parameters}). At a high-level, the linear distances of both $\ker H$ and $\ker H^T$ constitute the simplest 1D example of Poincaré duality for sheaf codes and their duals~\cite{Li2025Poincare}. 


\medskip
The \emph{quantum expander codes} are defined by taking hypergraph products (HGP) of classical expander codes. Hypergraph product codes have received intensive studies recently, e.g.,~\cite{BH2013homological,Zemor2014,Leverrier_2015,Zeng_2019,Zeng_2020,Chen_2023,Wang_2024,Golowich_Lin2024,} due to their relatively simple construction using classical codes. We only outline the 3D case to make a direct comparison with the construction of higher dimensional expanders in Section~\ref{sec:cubical}. Let $H_i^T: \mathbb{F}_q^{N_i} \to \mathbb{F}_q^{M_i}$ be any classical codes. The 3D hypergraph product is the following cochain complex:
\begin{align}\label{eq:3D_HGP}
\begin{aligned}
	\mathbb{F}_q^{N_1 \times N_2 \times N_3} \xleftarrow{\delta^2}
    & \mathbb{F}_q^{M_1 \times N_2 \times N_3} \oplus \mathbb{F}_q^{N_1 \times M_2 \times N_3} \oplus \mathbb{F}_q^{N_1 \times N_2 \times M_3}   \xleftarrow{\delta^1} \\
	& \mathbb{F}_q^{N_1 \times M_2 \times M_3} \oplus \mathbb{F}_q^{M_1 \times N_2 \times M_3} \oplus \mathbb{F}_q^{M_1 \times M_2 \times N_3} \xleftarrow{\delta^0} 
    \mathbb{F}_q^{M_1 \times M_2 \times M_3},
\end{aligned}
\end{align}
where, by looking at the column and row indices like in Remark~\ref{remark:lift} and~\ref{remark:SS_code}, it is easy to derive
\begin{align}\label{eq:HGP_coboundary}
	\delta^1 = \begin{pmatrix} 
		0 & I_{M_1} \otimes I_{N_2} \otimes H_3 & I_{M_1} \otimes H_2 \otimes I_{N_3} & \\ 
		I_{N_1} \otimes I_{M_2} \otimes H_3 & 0 & H_1 \otimes I_{M_2} \otimes I_{N_3} \\
		I_{N_1} \otimes H_2 \otimes I_{M_3} & H_1 \otimes I_{N_2} \otimes I_{M_3} & 0 
		\end{pmatrix},\quad
	\delta^0 = \begin{pmatrix} H_1 \otimes I_{M_2} \otimes I_{M_3} \\ I_{M_1} \otimes H_2 \otimes I_{M_3} \\ I_{M_1} \otimes I_{M_2} \otimes H_3	\end{pmatrix}
\end{align}
and $\delta^2 = (H_1 \otimes I_{N_2} \otimes I_{N_3} \quad I_{N_1} \otimes H_2 \otimes I_{N_3} \quad I_{N_1} \otimes I_{N_2} \otimes H_3 )$. We cut a consecutive 2-cochain complex from \eqref{eq:3D_HGP} as a CSS code. For example, let $H_Z = \delta^1$ and $H_X^T = \delta^0$. Its code parameters $[\![N, k, d]\!]$ are~\cite{Zemor2014,Zeng_2019}
\begin{align}\label{eq:HGP_parameters}
\begin{aligned}
	& N = N_1 M_2 M_3 + M_1 N_2 M_3 + M_1 M_2 N_3, 
	&& k = k_1 k_2 k_3^T + k_1 k_2^T k_3 + k_1^T k_2 k_3, \\
	& d_X = \min\{d_1 d_2, d_1 d_3, d_2 d_3\},
	&& d_Z = \min\{d_1^T, d_2^T, d_3^T \}.
\end{aligned}
\end{align}
where $k_i, d_i$ ($k_i^T, d_i^T$) are parameters of $H_i$ ($H_i^T$). We can substitute $H_i^T$ for the parity-check matrices of any expander codes. In a more simple way, let $H_i^T$ be the coboundary operator of a cycle such as Eq.~\eqref{eq:K_3_double}, their 3D HGP produces the 3D toric code. We elucidate how to upgrade an HGP code to a sheaf code in Section~\ref{sec:cubical} and explore the logical action of cup product gates on these codes in Section~\ref{sec:proof_HGP} and~\ref{sec:proof_sheaf}.


\section{High dimensional expanders and cup products}\label{sec:HDX_Cup}

From the perspective of Section~\ref{sec:sheaf}, lifted graphs and classical Tanner codes are defined by 1-(co)chain complexes with presheaves. Following this routine, previous researches leverage higher dimensional analogue of expander graphs, known as \emph{high dimensional expanders}~\cite{Kaufman2014,Evra2020,Gotlib_Kaufman2023} and local codes to establish (almost) good qLDPC codes. Several basic notions and notices are presented here for theoretical proofs in the following sections.

\subsection{Cubical complexes through products and lifts}\label{sec:cubical}

We adopt the following procedure in~\cite{Dinur2024sheaf} to define a \emph{cubical complex} $X$, as the foundation to realize $C^\bullet(X,\mathcal{F})$ in Section~\ref{sec:sheaf}. As a preview, this procedure is simply a lift of, not one graph, but the Cartesian product of several graphs. If the lift is trivial, i.e., by the trivial group, then it reduces to the hypergraph product discussed at the end of Section~\ref{sec:SS_code}. To be precise,
\begin{enumerate}
	\item Let $\mathcal{G}_0$ be an $n$-regular graph with $\vert V(\mathcal{G}_0) \vert = n'$. As before, $n$ is always a constant. Let $\HH$ be a finite abelian group with $\vert{\HH}\vert = l$ elements and we select a lift of $\mathcal{G}_0$ by $\HH \subset S_l$ as in Definition~\ref{def:lift}. As in the case of classical Tanner code (Definition~\ref{def:Tanner}), we prescribe an order on the edges incident with any vertex in $\mathcal{G}_0$. Let $\gamma: E(\mathcal{G}_0) \rightarrow \HH$ be an assignment to each edge $(u,v)$ a group element $s \in \HH$ and let $\mathcal{T} = (a^1, \ldots,a^n)$ be defined by
	\begin{align}
		a^{\mu} \cdot (h,v) = (\gamma_{(v, v')} \cdot h,v')
	\end{align} 
	where $(v,v')$ is the $\mu$-th edge attaching to $v$. There is one caveat in determining the order, which is elaborated on in Section~\ref{sec:DLV_code}.
	
	\item Now we consider the $t$-dimensional Cartesian product: 
	\begin{align}
		(h,v_1, \ldots,v_t; b_1, \ldots,b_t) \in \HH \times V(\mathcal{G}_0) \times \cdots \times V(\mathcal{G}_0) \times \{0,1\} \times \cdots \times \{0,1\}.  
	\end{align}
	where $(b_1, \ldots,b_t)$ is a binary string. We upgrade $\mathcal{T}$ to $t$ multiples $\mathcal{T}_j = (a_j^1, \ldots,a_j^n)$ with $j = 1, \ldots,t$. They act on $h$ and each component of Cartesian product as:
	\begin{align}\label{eq:generator_action}
		a_j^{\mu} \cdot (h,v_1, \ldots,v_j, \ldots,v_t; b_1, \ldots,b_j, \ldots,b_t) 
		= (\gamma_{(v_j, v_j')} \cdot h,v_1, \ldots,v_j', \ldots,v_t; b_1, \ldots,1-b_j, \ldots,b_t).
	\end{align} 
	where $(v_j,v_j')$ is the $\mu$-th edge attaching to $v_j$. If $t = 1$, this simply defines the double cover of the lift of $\mathcal{G}_0$, which should not be confused with the lift of double cover.
	
	\item Let us abbreviate $(h,v_1, \ldots,v_j, \ldots,v_t)$ by $g$ and define
	\begin{align}
		X(0) : = \{ [g; (b_j)_{j \in [t]} ] = (h,v_1, \ldots,v_t;b_1, \ldots,b_t) \} = \HH \times V(\mathcal{G}_0)^t \times \{0,1\}^t = \G \times \{0,1\}^t
	\end{align}
	as the collection of $0$-cubes/vertices. Naturally, $1$-cubes are edges that bound $0$-cubes:
	\begin{align}
	\begin{aligned}
		X(1) := & \Big\{ \Big( [g; (b_j)_{j \in [t]} ], \ [a_i \cdot g; \{1-b_i\} \cup (b_j)_{j \neq i} ]  \Big): [g; (b_j)_{j \in [t]} ] \in X(0), a_i \in \mathcal{T}_i \Big\} \\
		= & \Big\{ [g; a_i, (b_j)_{j \neq i} ] : [g; (b_j)_{j \in [t]} ] \in X(0), a_i \in \mathcal{T}_i \Big\}, 
	\end{aligned}
	\end{align}
	where we apply $a_i$ to $g$ via Eq.~\eqref{eq:generator_action}. Its action of flipping the binary bit $b_i$ is written explicitly for convenience and $[g; a_i, (b_j)_{j \neq i} ]$ is a compact expression for the $1$-cube. 
	
	\item In the same spirit, we also have
	\begin{align}
		X(2) := \Big\{ [g; a_{i_1}, a_{i_2}, (b_j)_{j \neq i_1 \neq i_2} ] : [g; (b_j)_{j \in [t]} ] \in X(0), a_{i_1} \in \mathcal{T}_{i_1}, a_{i_2} \in \mathcal{T}_{i_2}  \Big\}. 
	\end{align}
	Generally, let $S \subset [t]$ be with $\vert S \vert = p$. A $p$-cube $[g; (a_i)_{i \in S}, (b_j)_{j \in S^c} ]$ in $X(p)$ is defined as the product of edges $\prod_{i \in S} [g; a_i, (b_j)_{j \neq i} ]$. By definition, 
	\begin{align}\label{eq:system_size}
		 \vert \G \vert = \vert \HH \times V(\mathcal{G}_0) \times \cdots \times V(\mathcal{G}_0) \vert = \vert \HH \vert \cdot \vert V(\mathcal{G}_0) \vert^t,
	\end{align}
    Since for any $i \in [t]$, $\vert \mathcal{T}_i \vert = n$, $\vert X(p) \vert = \binom{t}{p} 2^{t-p} n^p \vert \G \vert$. If we put physical qudits on $1$-cubes, the system size $N = t 2^{t-1} n \vert \G \vert$.
\end{enumerate}
It is easy to check that a cubical complex $X$ defined in this way is a graded incidence poset. Figure~\ref{fig:3-cube} articulates the inclusion relation of a 3-cube in some 3D cubical complex. As mentioned in the beginning of this subsection, let $\HH$ be the trivial group, $X$ is merely the $t$-fold Cartesian product of the double cover of $\mathcal{G}_0$. The cochain complex $C^\bullet(X,\mathbb{F}_q)$ is merely the $t$-fold HGP of the double cover. Particularly, let $\mathcal{G}_0 = K_3$ the triangle, its double cover is $C_6$ in Eq.~\eqref{eq:K_3_double}. It is easy to confirm that $C^\bullet(X,\mathbb{F}_q)$ produces the toric code by comparing Eq.~\eqref{eq:poset_coboundary} and \eqref{eq:HGP_coboundary}.

One can also insert local codes, then $C^\bullet(X,\F)$ is captured by the formalism of quantum expander codes, but it is still an HGP. To go beyond the territory of \emph{ordinary} HGP and reach (almost) good codes, a nontrivial lift $\HH$ is unavoidable. Recall that in Section~\ref{sec:covering}, the (co)boundary operator of any lifted graph is reformulated from the original one by substituting permutation matrices into its nonzero entries. From a more algebraic perspective, this reformulation simply takes entries taken from $\HH$, or more rigorously, from its group algebra $\mathbb{F}_q[\HH]$. The above construction of high dimensional expanders can be translated as a hypergraph product over $\mathbb{F}_q[\HH]$, instead of $\mathbb{F}_q$, which is known as \emph{lifted product}~\cite{PK2021,PK2022Good}.  We elaborate on this point in Section~\ref{sec:HGP_like}.
 
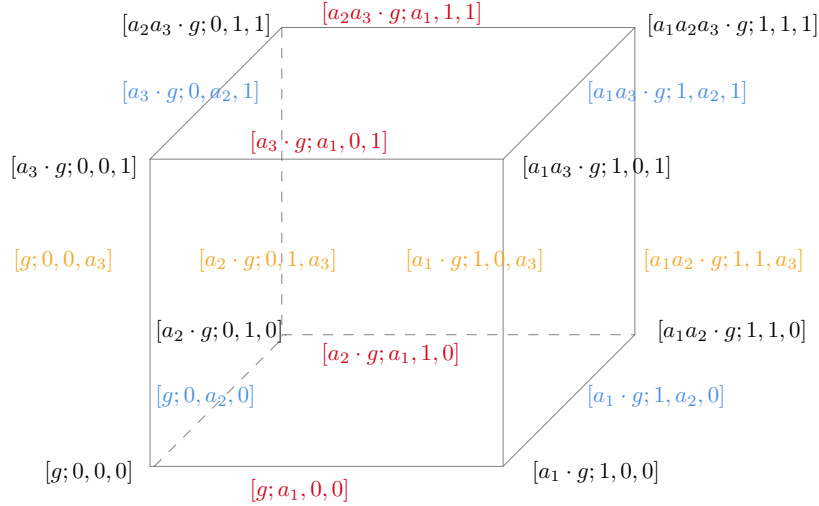
\begin{figure}[H]
	\centering
	\resizebox{0.7\linewidth}{!}{%
	\begin{tikzpicture}[x=0.75pt,y=0.75pt,yscale=-1,xscale=1]
		\draw  [color={rgb, 255:red, 0; green, 0; blue, 0 }  ,draw opacity=0.5 ] (178,88) -- (255,11) -- (461.33,11) -- (461.33,190.67) -- (384.33,267.67) -- (178,267.67) -- cycle ; \draw  [color={rgb, 255:red, 0; green, 0; blue, 0 }  ,draw opacity=0.5 ] (461.33,11) -- (384.33,88) -- (178,88) ; \draw  [color={rgb, 255:red, 0; green, 0; blue, 0 }  ,draw opacity=0.5 ] (384.33,88) -- (384.33,267.67) ;
		\draw [color={rgb, 255:red, 0; green, 0; blue, 0 }  ,draw opacity=0.5 ] [dash pattern={on 4.5pt off 4.5pt}]  (255,11) -- (255,118.17) -- (255,191.67) ;
		\draw [color={rgb, 255:red, 0; green, 0; blue, 0 }  ,draw opacity=0.5 ] [dash pattern={on 4.5pt off 4.5pt}]  (255,190.67) -- (461.33,190.67) ;
		\draw [color={rgb, 255:red, 0; green, 0; blue, 0 }  ,draw opacity=0.5 ] [dash pattern={on 4.5pt off 4.5pt}]  (180.46,267.67) -- (225.21,222.92) -- (257.46,190.67) ;
		
		\draw (115,263) node [anchor=north west][inner sep=0.75pt]   [align=left] {$\displaystyle [ g;0,0,0]$};
		\draw (400,263) node [anchor=north west][inner sep=0.75pt]   [align=left] {$\displaystyle [ a_{1} \cdot g;1,0,0]$};
		\draw (394,85) node [anchor=north west][inner sep=0.75pt]   [align=left] {$\displaystyle [ a_{1} a_{3} \cdot g;1,0,1]$};
		\draw (473,181) node [anchor=north west][inner sep=0.75pt]   [align=left] {$\displaystyle [ a_{1} a_{2} \cdot g;1,1,0]$};
		\draw (467,2) node [anchor=north west][inner sep=0.75pt]   [align=left] {$\displaystyle [ a_{1} a_{2} a_{3} \cdot g;1,1,1]$};
		\draw (95,85) node [anchor=north west][inner sep=0.75pt]   [align=left] {$\displaystyle [ a_{3} \cdot g;0,0,1]$};
		\draw (180,181) node [anchor=north west][inner sep=0.75pt]   [align=left] {$\displaystyle [ a_{2} \cdot g;0,1,0]$};
		\draw (160,2) node [anchor=north west][inner sep=0.75pt]   [align=left] {$\displaystyle [ a_{2} a_{3} \cdot g;0,1,1]$};
		\draw (235,274) node [anchor=north west][inner sep=0.75pt]   [align=left] {$\displaystyle \textcolor[rgb]{0.82,0.01,0.11}{[ g;a_{1} ,0,0]}$};
		\draw (277,194) node [anchor=north west][inner sep=0.75pt]   [align=left] {$\displaystyle \textcolor[rgb]{0.82,0.01,0.11}{[a_{2} \cdot g;a_{1},1,0]}$};
		\draw (235,70) node [anchor=north west][inner sep=0.75pt]   [align=left] {$\displaystyle \textcolor[rgb]{0.82,0.01,0.11}{ [a_{3} \cdot g;a_{1},0,1]}$};
		\draw (277,-5) node [anchor=north west][inner sep=0.75pt]   [align=left] {$\displaystyle \textcolor[rgb]{0.82,0.01,0.11}{ [a_{2} a_{3} \cdot g;a_{1},1,1]}$};
		\draw (97,139) node [anchor=north west][inner sep=0.75pt]   [align=left] {$\displaystyle \textcolor[rgb]{0.96,0.65,0.14}{[ g;0,0,a_{3}]}$};
		\draw (326,139) node [anchor=north west][inner sep=0.75pt]   [align=left] {$\displaystyle \textcolor[rgb]{0.96,0.65,0.14}{[ a_{1} \cdot g;1,0,a_{3}]}$};
		\draw (205,139) node [anchor=north west][inner sep=0.75pt]   [align=left] {$\displaystyle \textcolor[rgb]{0.96,0.65,0.14}{[ a_{2} \cdot g;0,1,a_{3}]}$};
		\draw (465,139) node [anchor=north west][inner sep=0.75pt]   [align=left] {$\displaystyle \textcolor[rgb]{0.96,0.65,0.14}{[ a_{1} a_{2} \cdot g;1,1,a_{3}]}$};
		\draw (180,219) node [anchor=north west][inner sep=0.75pt]   [align=left] {$\displaystyle \textcolor[rgb]{0.29,0.56,0.89}{[ g;0,a_{2} ,0]}$};
		\draw (432,219) node [anchor=north west][inner sep=0.75pt]   [align=left] {$\displaystyle \textcolor[rgb]{0.29,0.56,0.89}{[ a_{1} \cdot g;1,a_{2} ,0]}$};
		\draw (160,41) node [anchor=north west][inner sep=0.75pt]   [align=left] {$\displaystyle \textcolor[rgb]{0.29,0.56,0.89}{[ a_{3} \cdot g;0,a_{2} ,1]}$};
		\draw (432,41) node [anchor=north west][inner sep=0.75pt]   [align=left] {$\displaystyle \textcolor[rgb]{0.29,0.56,0.89}{[ a_{1} a_{3} \cdot g;1,a_{2} ,1]}$};
	\end{tikzpicture}	
    }
    \caption{A 3-cube $[g;a_1,a_2,a_3]$ with all its vertices (0-cubes), edges (1-cubes). For clarity, faces (2-cubes) are not given explicitly.}
    \label{fig:3-cube}
\end{figure}

To formally define the inclusion relation, which is crucial to establish high dimensional (co)boundary operators, we can think of a $p$-cube $[g; (a_i)_{i \in S}, (b_j)_{j \in S^c} ]$ as a function $f$ with 
\begin{align}
	f_0 = g, \quad  f_i = \begin{cases} a_i, & i \in S, \\ b_i, & i \in S^c. \end{cases}
\end{align}
Suppose $f'$ is a $(p-1)$-cube, it is contained in $f$ if the following conditions hold.
\begin{enumerate}
	\item[(a)] There exists some $j_0$ such that $f_{j_0}' \in \{0,1\}$ while $f_{j_0} \in \mathcal{T}_{j_0}$,
	
	\item[(b)] $f_i' = f_i$ for all $i \neq 0, j_0$.
	
	\item[(c)] Finally,
	\begin{align}
		f_0' = f_{j_0}^{f_{j_0}'} \cdot f_0 = \begin{cases}	f_0, & f_{j_0}' = 0, \\ f_{j_0} \cdot f_0, & f_{j_0}' = 1. \end{cases}
	\end{align}
\end{enumerate} 
Then high order inclusion relations can be retrieved recursively by this principle. 

There are still some subtle points about the cubical complex that we want to clarify.

\paragraph{$\bullet$ The group $\HH$ has to be abelian if $t \geq 3$.} For example, a 2-cube $[g; a_1,a_2,0]$ in the bottom of the 3-cube in Figure~\ref{fig:3-cube} has four vertices:
\begin{align}
	[g; 0,0,0], \quad [a_1 \cdot g; 1,0,0 ], \quad [a_2 \cdot g; 0,1,0 ], \quad [a_1 a_2 \cdot g; 1,1,0 ] = [a_2 a_1 \cdot g; 1,1,0 ]
\end{align}
where the last equality holds if $a_1, a_2$ commute. By Eq.~\eqref{eq:generator_action}, it holds when $\HH$ is abelian. One special case for $t = 2$ is discussed in Remark~\ref{remark:disconnect}.

\paragraph{$\bullet$ The binary string $(b_j)$ is necessary.} 
Even without $(b_j)$, we can still define the product as a general HGP. However, this would cause several troubles. In computations, this leads to ambiguity in recording cubes: a 1-cube $[g;a_1,0,0]$ is well-defined. However, without the binary strings, both $[g; a_1]$ and $[a_1 \cdot g; a_1^{-1}]$, where $a_1^{-1}$ reverses the direction of the edge and acts as the inverse of $a_1$ in $\HH$, represent the same edge. Theoretically, for $g \neq g'$ with the same binary strings $(b_j)_{j \in [t]}$, it is impossible to find a $t$-cube including both of them (cf. Lemma~\ref{lemma:smallest_cube}). This property is also used to show the local minimality of certain codewords in order to bound the code distance ~\cite{QuantumTanner2022,DHLV2022}. 

\paragraph{$\bullet$ The 1D structures of $X$ underline the expansion property.}
For a fixed $j$, by solely applying $a_j \in \mathcal{T}_j$, Eq.~\eqref{eq:generator_action} defines the double cover of the $\HH$-lift of a single $\mathcal{G}_0$. Counting the number of $v_i$ and $b_i$ with $i \neq j$, we find $1/r := (2 \vert \mathcal{G}_0 \vert)^{t-1}$ copies of the double cover. Let $\hat{A}_j$ be the associated adjacency matrix, it can be block diagonalized into $1/r$ subblocks. Applying the expander mixing lemma to these matrices is an intermediate step in evaluating the code distance in~\cite{QuantumTanner2022,DHLV2022,Dinur2024sheaf}. The number $1/r$ of subblocks in $\hat{A}_j$ turns out to be a prefactor in the distance bound.

\begin{lemma}\label{lemma:expander_mix}
	For an $n$-regular graph with $N$ vertices and $1/r$ connected components, let $\lambda = \max_{\lambda_i \neq n}\{ \vert \lambda_i \vert \}$ with respect to the spectrum of its adjacency matrix $A$. Then for any real-valued vectors $x,y$,
	\begin{align}
		\langle y, Ax \rangle \leq \lambda \Vert y \Vert_2 \Vert x \Vert_2 + \frac{n \Vert y \Vert_1 \Vert x \Vert_1}{r \vert V \vert}.
	\end{align}   
	Suppose $x, y$ are $\mathbb{F}_2$-valued, then they can be interpreted as collections $S,T$ of vertices of the graph. Then
	\begin{align}
		\vert E(S,T) \vert = \langle y, Ax \rangle \leq \lambda \sqrt{\Vert S \Vert \Vert T \Vert} + \frac{n \Vert S \Vert \Vert T \Vert}{r \vert V \vert}.  
	\end{align}
	Let $E(S)$ denote the set of internal edges for $S$, then
	\begin{align}
		2\vert E(S) \vert = \vert E(S,S) \vert \leq \lambda \Vert S \Vert + \frac{n \Vert S \Vert^2}{r \vert V \vert}.  
	\end{align} 
\end{lemma}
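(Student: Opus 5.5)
The plan is to reduce to the connected case and then run the standard spectral decomposition argument. First I will write the graph as a disjoint union of its $1/r$ connected components $\mathcal{G}_1,\ldots,\mathcal{G}_{1/r}$. For the intended application (the $1/r$ copies of the double cover of the lifted $\mathcal{G}_0$ inside $\hat{A}_j$), each component has the same number $r\vert V\vert$ of vertices, and I will assume this equal-size structure throughout. The adjacency matrix $A$ is block diagonal, $A=\bigoplus_c A_c$. Each $A_c$ is $n$-regular and connected, so its eigenvalue $n$ is simple, with eigenvector $\mathbf{1}_c$, the indicator of $V(\mathcal{G}_c)$.

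Hence the $n$-eigenspace of $A$ is exactly $\mathrm{span}\{\mathbf{1}_c\}$, and all other eigenvalues have absolute value at most $\lambda$ by the definition of $\lambda$. An eigenvalue $-n$, which occurs for bipartite components such as the double covers here, is included in this bound because $\lambda$ is taken over all $\lambda_i\neq n$.

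The main step is to decompose $x=\sum_c \frac{\langle x,\mathbf{1}_c\rangle}{r\vert V\vert}\mathbf{1}_c + x^\perp$, and likewise $y$. Here $x^\perp$ is orthogonal to the $n$-eigenspace and therefore lies in the span of the eigenvectors with $\vert\lambda_i\vert\le\lambda$. Since $A$ is symmetric, the cross terms vanish, which gives
\begin{align}
\langle y,Ax\rangle=\sum_c \frac{n\langle y,\mathbf{1}_c\rangle\langle x,\mathbf{1}_c\rangle}{r\vert V\vert}+\langle y^\perp,Ax^\perp\rangle .
\end{align}
I will bound the second term by $\lambda\Vert y^\perp\Vert_2\Vert x^\perp\Vert_2\le\lambda\Vert y\Vert_2\Vert x\Vert_2$, using the spectral theorem together with Cauchy--Schwarz. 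For the first term I will use $\vert\langle x,\mathbf{1}_c\rangle\rangle\vert\le\Vert x\vert_{\mathcal{G}_c}\Vert_1$, so that
\begin{align}
\sum_c \vert\langle y,\mathbf{1}_c\rangle\rangle\vert\,\vert\langle x,\mathbf{1}_c\rangle\rangle\vert\le\Big(\sum_c\Vert y\vert_{\mathcal{G}_c}\Vert_1\Big)\Big(\sum_c\Vert x\vert_{\mathcal{G}_c}\Vert_1\Big)=\Vert y\Vert_1\Vert x\Vert_1 .
\end{align}
This yields the stated inequality.

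For the set versions, I will take $x=\mathbf{1}_S$ and $y=\mathbf{1}_T$ as real $0/1$ vectors. Then $\langle y,Ax\rangle$ counts the ordered pairs $(s,t)\in S\times T$ that are adjacent, which is $\vert E(S,T)\vert$ with the usual convention. Moreover, $\Vert x\Vert_2=\sqrt{\vert S\vert}$ and $\Vert x\Vert_1=\vert S\vert$, and similarly for $y$. Setting $T=S$ counts each internal edge twice, which gives $2\vert E(S)\vert=\vert E(S,S)\vert$ and the last inequality.

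The only delicate point I expect is the normalization of the main term: $n/(r\vert V\vert)$ relies on all components having size $r\vert V\vert$. If the component sizes were unequal, the coefficient would instead be $n/\min_c\vert V(\mathcal{G}_c)\vert$. I will therefore state this equal-size assumption explicitly, since it holds for the block-diagonal matrices $\hat{A}_j$.
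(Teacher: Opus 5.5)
Your proof is correct. The paper does not prove this lemma; it states it as the standard tool imported from the cited quantum Tanner and sheaf-code works. Your argument is the standard one: project onto the component indicators $\mathbf{1}_c$, kill the cross terms using the symmetry of $A$, bound the orthogonal part spectrally, and bound the main part by $\sum_c\Vert y\vert_{\mathcal{G}_c}\Vert_1\Vert x\vert_{\mathcal{G}_c}\Vert_1\le\Vert y\Vert_1\Vert x\Vert_1$. So there is no competing route to compare.

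Two remarks are worth keeping.

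First, making the equal-size hypothesis explicit is not optional: with components of unequal sizes, the inequality as written is false. Take $K_{n+1}$ together with a large non-bipartite $n$-regular expander $R$ on $M$ vertices. Then $r\vert V\vert=(n+1+M)/2$, and $\lambda=\max\{1,\lambda(R)\}$ stays bounded away from $n$. For $x=y=\mathbf{1}_{V(K_{n+1})}$, the left side is $n(n+1)$, while the right side is $(n+1)\bigl(\lambda+2n(n+1)/(n+1+M)\bigr)$, which is smaller once $M$ is large. In the application, the $1/r$ blocks of $\hat{A}_j$ are isomorphic copies of one double cover, so your assumption is exactly what is needed there.

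Second, those copies are bipartite, so $-n$ is an eigenvalue and the lemma's $\lambda$ equals $n$. The first two inequalities are therefore vacuous in the application, and what is really used is the paper's follow-up remark for $S=T$ with the second largest eigenvalue. Your decomposition gives that as well: $\langle x^\perp,Ax^\perp\rangle\le\mu\Vert x^\perp\Vert_2^2\le\max\{\mu,0\}\Vert x\Vert_2^2$, where $\mu$ is the largest eigenvalue of $A$ other than $n$. Only the top of the spectrum enters.
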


In the last inequality, we can even substitute the second largest eigenvalue of $A$ for $\lambda$ because $x = y$ there and $\vert E(S,S) \vert = \langle x, Ax \rangle$. This is quite important to acquire a nontrivial expander mixing upper bound for bipartite graphs with $\lambda_N = -n$.

\begin{remark}\label{remark:disconnect}
	In the previous definition of cubical complex, $t \geq 3$ and $n$ are set as constants while $n',l,N$ and $1/r$ scale up. In~\cite{Dinur2024sheaf}, the code distance is proved to be $\Omega( N/\mathrm{polylog}\,N)$, which brings about nearly good qLDPC codes. This is achieved by taking an exponential large abelian group $\HH$ with $l = \vert{\HH}\vert = \exp(\Theta(n'))$~\cite{Agarwal2016,Jeronimo2021}. As a result, $N = \exp(\Theta(n')) \cdot n'^t$ and $r = \Omega( (\log N)^{-(t-1)} )$. Interestingly, when $t = 2$, there is the so-called \emph{left-right-Cayley complex} defined by the left and right action of $\HH$:
	\begin{align}
		a_1 \cdot (h,v_1,v_2) := (\gamma_{(v_1,v_1')} \cdot h, v_1',v_2), \quad
		a_2 \cdot (h,v_1,v_2) := (h \cdot \gamma_{(v_2,v_2')}, v_1,v_2').
	\end{align}
	Then 2D expanders can be easily built by using any non-Abelian lift in Theorem~\ref{thm:l-covering}. This settles the issue of decreasing $1/r$ and gives rise to good qLDPC codes~\cite{PK2022Good,QuantumTanner2022,DHLV2022}. 
\end{remark}

However, to build non-Clifford gates through cup products, we should work with at least a 3D structure (see Theorem~\ref{thm:cup_simplical} and~\ref{thm:cup_cubical}) where only the almost good qLDPC codes are established, as far as we are aware. 


\subsection{Local codes on cubical complexes}\label{sec:DLV_code}

The Dinur--Lin--Vidick code proposed in~\cite{Dinur2024sheaf} generalizes the framework of Sipser--Spielman code as a 1-cochain complex with one local code and the recent good qLDPC codes as 2-cochain complexes with two local codes~\cite{PK2022Good,QuantumTanner2022,DHLV2022} to higher dimensions, which yields almost good qLDPC codes for $t \geq 3$ and almost good qLTC codes for $t \geq 4$. It is simply defined by adding $t$ local codes to a $t$-dimensional cubical complex $X$. We outline the constructions, mainly focusing on how to read-off its (co)boundary operators for the purpose of establishing logical gates and guiding numerical computations.  

We initiate with any $t$-dimensional cubical complex $X$ defined in Section~\ref{sec:cubical} and list its (co)boundary operators by the inclusion relation of cubes as in Figure~\ref{fig:3-cube}. This gives rise to the following cochain complex with scalar coefficient:
\begin{equation}
	\begin{tikzcd}
		C^0(X,\mathbb{F}_q) \arrow[r, "\delta^{0}"] &
		C^1(X,\mathbb{F}_q) \arrow[r, "\delta^{1}"] &
		\cdots\cdots \arrow[r, "\delta^{t-2}"] &
		C^{t-1}(X,\mathbb{F}_q) \arrow[r, "\delta^{t-1}"] &
		C^{t}(X,\mathbb{F}_q).
	\end{tikzcd}
\end{equation}
Utilizing the recipe in Section~\ref{sec:sheaf}, we substitute $\mathcal{F}_{\sigma,\tau}$ for scalar entries in the coboundary operators. To define $\mathcal{F}$, we need to prepare $t$ local codes $h_1 \in \mathbb{F}_q^{m_1 \times n}, \ldots,h_t \in \mathbb{F}_q^{m_t \times n}$. In retrospect, $n$ is the node degree of the base graph $\mathcal{G}_0$ in $X$ and $m_1, \ldots,m_t \leq n$. It is also important to set $h_i$ as full-rank matrices. With some topological properties of $X$, this requirement will make the presheaf $\mathcal{F}$ truly a sheaf and sustain a kind of Poincaré duality theorem~\cite{Li2025Poincare} which is the key to bounding the code distance~\cite{Dinur2024sheaf}. 

Returning to the definition of $\mathcal{F}$, we assign each $p$-cube $\sigma = [g; (a_i)_{i \in I}, (b_j)_{j \in J}]$ a vector space $\mathcal{F}(\sigma) = \bigotimes_{j \in J} \mathbb{F}_q^{m_j}$. Since $\sigma$ possess $\vert J \vert = t - p$ binary bits, we can think of the $(t-p)$-fold tensor product space attaching to each of these bits. Note that $\mathcal{F}(\pi) = \mathbb{F}_q$ is a scalar field when $\pi$ is a $t$-cube. To define the map $\mathcal{F}_{\sigma,\tau}$, we illustrate on the 3D case in Figure~\ref{fig:3-cube}: 
\begin{enumerate}
	\item For $\theta = [a_3 \cdot g;0,0,1]$ and $\sigma = [g; 0,0,a_3]$, 
	\begin{align}
		\mathcal{F}_{\theta,\sigma} = I_{m_1} \otimes I_{m_2} \otimes h_3^T(a_3, \text{-}) \in \mathbb{F}_q^{m_1 \times m_1} \otimes \mathbb{F}_q^{m_2 \times m_2} \otimes \mathbb{F}_q^{1 \times m_3},
	\end{align} 
	where we use $n$ elements $a_3 \in \mathcal{T}_3$ to label the columns of $h_3$ and $h_3^T(a_3, \text{-})$ is simply the transpose of one of them. This is adapted from the 1D classical Tanner code in Remark~\ref{remark:SS_code}.
	
	\item For $\sigma = [g;0,0,a_3]$ and $\tau = [g; a_1,0,a_3]$, 
	\begin{align}
		\mathcal{F}_{\sigma,\tau} =  h_1^T(a_1, \text{-}) \otimes I_{m_2} \in \mathbb{F}_q^{1 \times m_1} \otimes \mathbb{F}_q^{m_2 \times m_2}.
	\end{align} 
	It is significant to note that, analogously to the previous case, the local morphism from $[a_1 \cdot g;1,0,a_3]$ to $\tau$ is identical to $\mathcal{F}_{\sigma,\tau}$, instead of using $h_1^T(a_1^{-1}, \text{-})$ even $a_1^{-1}$ is well-defined on Cayley graphs. This is convenient to read-off higher dimensional cubes as $a_1^{-1}$ does not even appear in $\tau$. Moreover, to make this definition reasonable, we have to ensure that there are no vertices $u,v \in \mathcal{G}_0$ with edges $(u,w)$ and $(v,w)$ for which the edges admit the same label $\mu \in [n]$. If this happened, we could let $g = (h,u,v_2,v_3)$, $g' = (h,v,v_2,v_3)$ and then 
	\begin{align}
	& [a_1^\mu \cdot g;1,0,a_3] = [(\gamma_{(u,w)} \cdot h, w,v_2,v_3); 1,0,a_3], \\
	& [a_1^\mu \cdot g';1,0,a_3] = [(\gamma_{(v,w)} \cdot h, w,v_2,v_3); 1,0,a_3].
    \end{align}
    If $\gamma_{(u,w)} \cdot h$ coincides with $\gamma_{(v,w)} \cdot h$, e.g., when $\HH$ is trivial, the local morphisms from $[(\gamma_{(u,w)} \cdot h, u,v_2,v_3); 1,0,a_3]$ to high dimensional cubes would be defined by repeated rows of $h_1^T$. This problem is totally ruled out when $\mathcal{G}_0$ is a Cayley graph, or more generally Schreier graph. 
	
	\item For $\tau = [g;a_1,0,a_3]$ and $\pi = [g; a_1,a_2,a_3]$, 
	\begin{align}
		\mathcal{F}_{\tau,\pi} = h_2^T(a_2, \text{-}) \in \mathbb{F}_q^{1 \times m_2}.
	\end{align} 
\end{enumerate}
By Definition~\ref{def:presheaf}, $\mathcal{F}_{\sigma,\pi} = \mathcal{F}_{\tau,\pi} \circ \mathcal{F}_{\sigma,\tau}$. In the above example, $\sigma = [g;0,0,a_3]$ and $\tau = [g;a_1,0,a_3]$. Then
\begin{align}
	\F_{\sigma,\pi} = \big( h_2^T(a_2, \text{-} ) \big) \circ \big( h_1^T(a_1, \text{-} ) \otimes I_{m_2} \big) = h_1^T(a_1, \text{-} ) \otimes h_2^T(a_2, \text{-} ).
\end{align}
One can take $\tau' = [g;0,a_2,a_3] \succ \sigma$. Then $\mathcal{F}_{\tau',\pi} \circ \mathcal{F}_{\sigma,\tau'}$ still equals $\mathcal{F}_{\sigma,\pi}$. Higher dimensional cases are defined accordingly (see also Section~\ref{sec:example}).


\subsection{Cup products and cap products}\label{sec:cup}

In algebraic topology, cup product and cap product are defined on simplicial complexes~\cite{Hatcher2015AT,Gallier2022}. We summarize some of their basic properties in the following and then elucidate how to extend these notions to the cubical complex based on previous work such as~\cite{Chen_2023,Lin2024transversal,Li2025Poincare}.

\begin{definition}\label{def:cup}
	Let $X$ be a simplicial complex, for any cochain complexes $C^\bullet(X, \mathcal{F})$ and $C^\bullet(X,\mathcal{G})$, the \emph{cup product}
	\begin{align}
		\smile: C^{p_1}(X,\mathcal{F}) \times C^{p_2}(X,\mathcal{G}) \rightarrow C^{p_1 + p_2}(X,\mathcal{F} \otimes \mathcal{G})
	\end{align}
	is defined as follows. Let $\sigma$ be a $(p_1 + p_2)$-simplex with vertices ordered as $v_0, v_1, \ldots,v_{p_1 + p_2}$ and let ${}_{p_1} \sigma, \sigma_{p_2}$ be the $p_1$- and $p_2$-simplices defined by the first $p_1 + 1$ vertices $v_0,v_1, \ldots,v_{p_1}$ and the last $p_2 + 1$ vertices $v_{p_1},v_{p_1 + 1}, \ldots,v_{p_1 + p_2}$, respectively. Then given any $x_1 \in C^{p_1}(X,\mathcal{F})$ and $x_2 \in C^{p_2}(X,\mathcal{G})$,
	\begin{align}
		(x_1 \smile x_2)(\sigma) = \mathcal{F}_{{}_{p_1}\sigma, \sigma} ( x_1( {}_{p_1}\sigma ) ) \otimes \mathcal{G}_{\sigma_{p_2}, \sigma}( x_2( \sigma_{p_2} ) ).   
	\end{align}
\end{definition}

\begin{proposition}\label{prop:cup}
	The following properties hold for cup product:
	\begin{enumerate}
		\item It satisfies the Leibniz rule:
		\begin{align}
			\delta(x_1 \smile x_2) = (\delta x_1) \smile x_2 + x_1 \smile (\delta x_2). 
		\end{align}
		
		\item It is associative:
		\begin{align}
			(x_1 \smile x_2) \smile x_3 = x_1 \smile (x_2 \smile x_3)   
		\end{align}
		for arbitrary $x_1 \in C^{p_1}(X, \mathcal{F})$, $x_2 \in C^{p_2}(X, \mathcal{G})$ and $x_3 \in C^{p_3}(X, \mathcal{H})$.
	\end{enumerate}
\end{proposition}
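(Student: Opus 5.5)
We will verify both properties directly from Definition~\ref{def:cup} by evaluating each side on an arbitrary simplex and comparing the results term by term. Two earlier facts do all the work: the presheaf compatibility $\mathcal{F}_{\tau,\pi}\circ\mathcal{F}_{\sigma,\tau}=\mathcal{F}_{\sigma,\pi}$ from Definition~\ref{def:presheaf}, and the fact that $\mathbb{F}_q$ has characteristic $2$, so no signs arise.

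\textbf{Associativity.} Fix a $(p_1+p_2+p_3)$-simplex $\sigma$ with ordered vertices $v_0,\ldots,v_{p_1+p_2+p_3}$. Its front $p_1$-face, middle $p_2$-face and back $p_3$-face are spanned by $v_0,\ldots,v_{p_1}$, by $v_{p_1},\ldots,v_{p_1+p_2}$ and by $v_{p_1+p_2},\ldots$, respectively.

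Evaluate $(x_1\smile x_2)\smile x_3$ at $\sigma$. The front $(p_1+p_2)$-face $\tau$ contributes $\mathcal{F}\otimes\mathcal{G}$ pushed from $\tau$ to $\sigma$ applied to $(x_1\smile x_2)(\tau)$. Inside $\tau$, the front $p_1$-face and middle $p_2$-face of $\tau$ are exactly those of $\sigma$. The restriction maps of $\mathcal{F}\otimes\mathcal{G}$ are tensor products of the factor maps, so compatibility collapses the compositions to $\mathcal{F}_{\mathrm{front},\sigma}(x_1)\otimes\mathcal{G}_{\mathrm{mid},\sigma}(x_2)\otimes\mathcal{H}_{\mathrm{back},\sigma}(x_3)$. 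The same computation for $x_1\smile(x_2\smile x_3)$, using the back $(p_2+p_3)$-face, gives the identical expression. This settles associativity.

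\textbf{Leibniz rule.} Let $\pi$ be a $(p_1+p_2+1)$-simplex with ordered vertices $v_0,\ldots,v_{p_1+p_2+1}$. Then $\delta(x_1\smile x_2)(\pi)$ is the sum over $j$ of $(x_1\smile x_2)(\pi^{(j)})$ pushed to $\pi$, where $\pi^{(j)}$ is the face obtained by deleting $v_j$. By compatibility, each term equals $\mathcal{F}(x_1(\text{front face of }\pi^{(j)}))\otimes\mathcal{G}(x_2(\text{back face of }\pi^{(j)}))$, with all restriction maps taken directly into $\pi$.

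We split the sum into three ranges:
\begin{enumerate}
\item For $j\le p_1$, the back face of $\pi^{(j)}$ is always the back $p_2$-face $v_{p_1+1},\ldots,v_{p_1+p_2+1}$ of $\pi$. The front faces range over the $p_1$-faces of the front $(p_1+1)$-face of $\pi$, except the one missing $v_{p_1+1}$.
\item For $j\ge p_1+1$, the front face of $\pi^{(j)}$ is always the front $p_1$-face $v_0,\ldots,v_{p_1}$ of $\pi$. The back faces range over the $p_2$-faces of the back $(p_2+1)$-face $v_{p_1},\ldots$ of $\pi$, except the one missing $v_{p_1}$.
\item Expanding $((\delta x_1)\smile x_2)(\pi)$ supplies exactly the range-1 terms plus the extra term with front face $v_0,\ldots,v_{p_1}$ and back face $v_{p_1+1},\ldots,v_{p_1+p_2+1}$. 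Expanding $(x_1\smile\delta x_2)(\pi)$ supplies exactly the range-2 terms plus that same extra term, since its omitted face also has front $v_0..v_{p_1}$ and back $v_{p_1+1}..$.
\end{enumerate}
The two extra terms coincide, so their sum vanishes in characteristic $2$. Hence the right side of the Leibniz rule equals $\delta(x_1\smile x_2)(\pi)$.

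\textbf{Main obstacle.} The index bookkeeping in the Leibniz computation is the delicate step. We must check that in the expansion of $\delta x_1$ on the front face, every restriction map composes to the one attached directly to $\pi$. This again reduces to functoriality of $\mathcal{F}$ and $\mathcal{G}$ and of their tensor product. If desired, one can first reduce to elementary cochains supported on single simplices, using bilinearity, and check there.
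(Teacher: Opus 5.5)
Your argument is correct. The paper itself gives no proof of this proposition. It is stated right after Definition~\ref{def:cup} as a standard fact, the local-coefficient version of the textbook cup-product identities (cf.\ \cite{Hatcher2015AT}). It is then used, for instance, to obtain the cubical Leibniz rule through the cochain maps $S^\#$ and $A^\#$. So there is no route in the paper to compare against.

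Your simplex-by-simplex verification is the standard proof. Functoriality of $\mathcal{F}$ and $\mathcal{G}$, and of the tensor-product sheaf whose maps are tensor products of the factor maps, handles the local-coefficient bookkeeping. Characteristic $2$ removes the usual sign $(-1)^{p_1}$ in the Leibniz rule. It is worth stating two tacit assumptions explicitly:
\begin{enumerate}
\item The vertex ordering on each face $\pi^{(j)}$ must be the one induced from $\pi$, i.e.\ an ordered simplicial complex, for example one with a global vertex order. This is what makes your identification of the front and back faces of $\pi^{(j)}$ valid. Definition~\ref{def:cup} silently requires it as well.
\item Associativity holds under the canonical identification $(\mathcal{F}\otimes\mathcal{G})\otimes\mathcal{H}\cong\mathcal{F}\otimes(\mathcal{G}\otimes\mathcal{H})$.
\end{enumerate}
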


\begin{definition}\label{def:cap}
	Let $X$ be a simplicial complex. The \emph{cap product}
	\begin{align}
		\frown: C^{p_1}(X,\mathcal{F}) \times C_{p_1 + p_2}(X,\mathcal{F}) \rightarrow C_{p_2}(X,\mathbb{F}_q)
	\end{align}
	is defined as follows. For any $x \in C^{p_1}(X,\mathcal{F})$ and $\xi \in C_{p_1 + p_2}(X,\mathcal{F})$, we set
	\begin{align}
		x \frown \xi = \sum_{\sigma \in X(p_1 + p_2)} \Big\langle \mathcal{F}_{{}_{p_1}\sigma, \sigma} (x({}_{p_1}\sigma)),  \xi(\sigma) \Big\rangle \sigma_{p_2}.
	\end{align}
\end{definition}

\begin{proposition}\label{prop:cap_Leibniz}
	Cap product satisfies the Leibniz rule:
	\begin{align}
		\partial(x \frown \xi) = (\delta x) \frown \xi + x \frown (\partial \xi),
	\end{align}
	where $\xi \in C_{p_1 + p_2 + 1}(X, \mathcal{F})$.
\end{proposition}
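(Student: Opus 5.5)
This is a local identity, so I will verify it cube by cube. By linearity in $\xi$ it suffices to take $\xi$ supported on a single $(p_1+p_2+1)$-simplex $\pi$ with ordered vertices $v_0,\ldots,v_{p_1+p_2+1}$. I will then compare the coefficient of each $p_2$-simplex on both sides. Throughout I take $\mathbb{F}_q$ of characteristic $2$, so all signs disappear and every face $\pi^{(k)}$ (delete $v_k$) enters with coefficient one, as in \eqref{eq:poset_coboundary}. I also treat the target of $\frown$ as carrying the local coefficient contracted against $\xi$ through the pairing, exactly as in Definition~\ref{def:cap}.

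First I expand $\partial(x\frown\xi)$. By Definition~\ref{def:cap},
\begin{align}
x\frown\xi=\big\langle \F_{{}_{p_1}\pi,\pi}(x({}_{p_1}\pi)),\xi(\pi)\big\rangle\,\pi_{p_2+1},
\end{align}
where $\pi_{p_2+1}=[v_{p_1},\ldots,v_{p_1+p_2+1}]$. Applying the scalar boundary produces the faces of $\pi_{p_2+1}$ obtained by deleting $v_k$ for $p_1\le k\le p_1+p_2+1$.

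Next I expand the right-hand side. The term $x\frown\partial\xi$ sums over all faces $\pi^{(k)}$ with the transported coefficient $\F^T_{\pi^{(k)},\pi}\xi(\pi)$. For each face I apply the cap product, moving $\F^T$ across the pairing by adjointness. Then I use the presheaf composition $\F_{{}_{p_1}\pi^{(k)},\pi}=\F_{\pi^{(k)},\pi}\circ\F_{{}_{p_1}\pi^{(k)},\pi^{(k)}}$ from Definition~\ref{def:presheaf} to rewrite every coefficient as a pairing on $\pi$ itself. The resulting terms split into two groups.
\begin{itemize}
\item For $k>p_1$, the front face is ${}_{p_1}\pi$ and the back face is $\pi_{p_2+1}$ with $v_k$ deleted. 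These terms reproduce $\partial(x\frown\xi)$, except for the face that deletes $v_{p_1}$.
\item For $k\le p_1$, the front face is $[v_0,\ldots,\hat v_k,\ldots,v_{p_1+1}]$ and the back face is always $[v_{p_1+1},\ldots,v_{p_1+p_2+1}]$.
\end{itemize}

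For $(\delta x)\frown\xi$, I evaluate $\delta x$ on the front $(p_1+1)$-face $[v_0,\ldots,v_{p_1+1}]$. By the coboundary formula of Section~\ref{sec:sheaf}, this is the sum over its $p_1+1$ faces of $\F(x(\text{face}))$. Again using composition of restriction maps, these become pairings on $\pi$ with back face $[v_{p_1+1},\ldots]$. This group therefore contains exactly the $k\le p_1$ terms, together with the term that deletes $v_{p_1+1}$ from the front face. That extra term has front face ${}_{p_1}\pi$, and its back face is $\pi_{p_2+1}$ with $v_{p_1}$ deleted. It is precisely the face missing from the first group above.

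Matching the two groups term by term proves the identity, since in characteristic $2$ each common term appears once on each side. I expect the only delicate step to be the bookkeeping at the junction vertex $v_{p_1}$/$v_{p_1+1}$. There, one must check that the restriction maps really compose to $\F_{{}_{p_1}\pi,\pi}$ along both routes, and that the transpose in $\partial\xi$ moves correctly through the pairing; I would verify this by writing both sides explicitly for that single term.
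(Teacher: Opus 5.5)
Your proposal is correct. Reducing to $\xi$ supported on one simplex $\pi$ and moving $\F^T_{\pi^{(k)},\pi}$ across the pairing is the right first step. Composing the restriction maps then rewrites every term as a pairing on $\pi$ itself. After that, the faces with $k\le p_1$ occur once in $(\delta x)\frown\xi$ and once in $x\frown\partial\xi$, so they cancel in characteristic $2$. The faces with $k>p_1$, together with the extra term of $(\delta x)\frown\xi$ (front face ${}_{p_1}\pi$, back face $[v_{p_1+1},\ldots,v_{p_1+p_2+1}]$), give exactly $\partial(x\frown\xi)$. The paper gives no proof of this proposition. It records it as the standard simplicial identity and uses it only to carry the Leibniz rule over to cubical complexes through $A_\#$, $A^\#$, $S_\#$. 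Your computation is therefore the verification the paper takes for granted, and presheaf composition plus adjointness of the pairing are indeed the only ingredients beyond the scalar case.

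Three small points to tidy up. First, the $k\le p_1$ terms do not appear ``once on each side'' of the identity. They appear twice on the right-hand side, once in each summand, and that is why they cancel. Second, you silently assume that the vertex ordering of $\pi$ restricts to the orderings used on its faces, so that the front and back faces of $\pi^{(k)}$ are what you claim. State this assumption explicitly. Third, say ``simplex by simplex'' rather than ``cube by cube'', since Definition~\ref{def:cap} is formulated on simplicial complexes.
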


We now consider a special case of cap product when $p_2 = 0$, then $\mathcal{F}_{{}_{p_1}\sigma, \sigma}$ in Definition~\ref{def:cap} is simply the identity matrix and $x \frown \xi \in C_0(X,\mathbb{F}_q)$. We can further sum over the components of $x \frown \xi$ to obtain a scalar in $\mathbb{F}_q$. Formally, it is called a \emph{pairing} and is defined by  
\begin{align}\label{eq:paring}
	\langle x, \ \xi \rangle := \sum_{\sigma \in X(p_1)} \left\langle x(\sigma), \ \xi(\sigma) \right\rangle.
\end{align}
Suppose $\xi \in C_{p_1}(X,\mathcal{F})$ is a cycle. Given any $y \in C^{p_1 - 1}(X,\mathcal{F})$, then $\langle \delta y, \xi \rangle = \langle y, \partial \xi \rangle = 0$, which validates the following theorem.

\begin{theorem}[\cite{Li2025Poincare}]\label{thm:cup_simplical}
	Suppose $p_1 + \cdots + p_\rho \leq t$ with $t = \dim X$ and suppose $\xi \in C_{p_1 + \cdots + p_\rho}(X,\mathcal{F}^{\otimes \rho})$ is a cycle, then
	\begin{align}\label{eq:circuit1}
		T_\xi(x_{p_1}, \ldots,x_{p_\rho}) := \langle x_{p_1} \smile \cdots \smile x_{p_\rho}, \ \xi \rangle
	\end{align}
	defines a cohomological invariant from $H^{p_1}(X,\mathcal{F}) \times \cdots \times H^{p_\rho}(X,\mathcal{F})$ to $\mathbb{F}_q$. Particularly, if the coboundary operators of $X$ are sparse, that is, each simplex is contained in a constant number of higher dimensional simplices, with $t$ being a constant, the corresponding invariant polynomial defines a constant-depth multi-controlled-$Z$ gate in the sense of Definition~\ref{def:invariant_poly}.
\end{theorem}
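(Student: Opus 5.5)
The argument splits into two parts: (i) $T_\xi$ is well defined on cohomology, and (ii) the associated phase polynomial is realized by a constant-depth circuit. Both rest on the Leibniz rule and associativity of the cup product (Proposition~\ref{prop:cup}) and on the adjointness of $\delta$ and $\partial$ under the pairing~\eqref{eq:paring}.

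\textbf{Multilinearity.} By Definition~\ref{def:cup}, each value $(x_{p_1}\smile\cdots\smile x_{p_\rho})(\sigma)$ is a tensor product of images of the components of the $x_{p_j}$ under the linear maps $\mathcal{F}_{\cdot,\sigma}$. Hence the cup product is $\rho$-linear. Associativity from Proposition~\ref{prop:cup} makes the iterated product unambiguous. Since the pairing with a fixed $\xi$ is linear, $T_\xi$ is a $\rho$-linear form on $C^{p_1}\times\cdots\times C^{p_\rho}$.

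\textbf{Invariance.} Take cocycles $x_{p_j}$ and coboundaries $y_{p_j}=\delta z_j$. By multilinearity it suffices to show the following: whenever at least one argument is a coboundary and all others are cocycles, the value vanishes. Suppose the $j$-th slot is $\delta z_j$. Applying the Leibniz rule repeatedly, with the remaining factors being cocycles so that their $\delta$ terms drop out and characteristic $2$ removing signs, we get
\begin{align}
x_{p_1}\smile\cdots\smile\delta z_j\smile\cdots\smile x_{p_\rho}=\delta\bigl(x_{p_1}\smile\cdots\smile z_j\smile\cdots\smile x_{p_\rho}\bigr).
\end{align}
A coboundary $y$ also cup-multiplies to a coboundary, because $y$ is itself closed. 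Therefore the whole product is $\delta w$ for some $w$ of degree $p_1+\cdots+p_\rho-1$. Since $\partial=\delta^T$ and $\xi$ is a cycle, $\langle\delta w,\xi\rangle=\langle w,\partial\xi\rangle=0$. This gives Eq.~\eqref{eq:invariant_2}, so $T_\xi$ descends to $H^{p_1}\times\cdots\times H^{p_\rho}$. The one care point is that $\mathcal{F}^{\otimes\rho}$ must be the target sheaf of the cup product, so that the Leibniz rule holds with the local maps $\mathcal{F}_{\cdot,\sigma}$. This is exactly the compatibility $\mathcal{F}_{\tau,\pi}\circ\mathcal{F}_{\sigma,\tau}=\mathcal{F}_{\sigma,\pi}$ from Definition~\ref{def:presheaf}, which I would verify on faces of one simplex in the usual Alexander--Whitney way.

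\textbf{Constant depth.} Expanding in the standard basis gives
\begin{align}
T_\xi=\sum_{\sigma}\bigl\langle \mathcal{F}(x_{p_1}({}_{p_1}\sigma))\otimes\cdots,\ \xi(\sigma)\bigr\rangle,
\end{align}
a homogeneous degree-$\rho$ polynomial. Each monomial involves one coordinate from each code block, supported on faces of a single $(p_1+\cdots+p_\rho)$-simplex $\sigma$. By sparsity, any given $p_j$-simplex lies in $O(1)$ top simplices, and the local dimensions are constant. So each variable appears in $O(1)$ monomials. The interaction hypergraph therefore has bounded degree and admits an $O(1)$ edge colouring, and we schedule the $\mathrm{C}^{\rho-1}Z$ gates of Definition~\ref{def:k_controlled_Z} colour class by colour class. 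Finally, Eq.~\eqref{eq:invariant_poly} shows that this circuit acts on code states as the logical $U_T$.

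I expect the main obstacle to be stating the Leibniz rule for the twisted coefficients $\mathcal{F}\otimes\mathcal{G}$ correctly. Everything else is routine.
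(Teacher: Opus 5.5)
Your proposal is correct and follows essentially the same route as the paper. The paper justifies the theorem by combining the Leibniz rule of Proposition~\ref{prop:cup} with the identity $\langle \delta y, \xi\rangle = \langle y, \partial \xi\rangle = 0$ for a cycle $\xi$, and gets constant depth from each variable appearing in only constantly many monomials. Your explicit check of the twisted Leibniz rule via $\mathcal{F}_{\tau,\pi}\circ\mathcal{F}_{\sigma,\tau}=\mathcal{F}_{\sigma,\pi}$, and your bounded-degree edge colouring for scheduling the gates, fill in details the paper leaves implicit.
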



The fundamental idea behind extending the notions of cup and cap products to a cubical complex $X$ is to subdivide the cubical complex into a simplicial complex $\tilde{X}$. Then we take the products there and pull back. Intuitively, drawing a diagonal between two opposite vertices of a square divides it into two triangles (Figure~\ref{fig:2-cube}). We sketch the idea to make this description rigorous in any dimension using the methods in~\cite{FreedmanHastings2021,Portnoy2023,Lin2024transversal,Li2025Poincare}. Concrete examples in 2D and 3D are also given in Section~\ref{sec:example}.  

\begin{definition}
	Let $f: X \rightarrow Y$ be a continuous map between two CW complexes $X,Y$. It is called a \emph{cellular map} if it maps the $p$-skeleton $X^p$ of $X$ to that of $Y$ for any $p \geq 0$, i.e., $f(X^p) \subset Y^p$. 
\end{definition}

The notion of CW complex is fundamental in the research of algebraic topology~\cite{Hatcher2015AT,Gallier2022}. In our case, one of $X$ and $Y$ will be a cubical complex and the other one will be its subdivision, and hence we do not go into the formal definition. A $p$-skeleton is just the union of all cubes or simplices with dimension $\leq p$. Additionally, the notion of continuous function could be vague as we never define a topology. Nevertheless, the explicit examples in Section~\ref{sec:example} will convince that the cellular maps in Eq.~\eqref{eq:S_ast} and Eq.~\eqref{eq:A_ast} are well-defined. 

Let $X$ be a $t$-dimensional cubical complex. By some subdivision, we convert it into a simplicial complex, denoted by $\tilde{X}$. Let $S: X \rightarrow \tilde{X}$ be the identity map on the underlying topological space. By definition, $S$ is a cellular map. Let $S(\sigma)$ be the collection of simplices in $\tilde{X}$ that is contained in the image of $\sigma$. Then we can define $S_p: C_p(X,\mathbb{F}_q) \rightarrow C_p(\tilde{X},\mathbb{F}_q)$ by
\begin{align}
	S_p( \sigma ) = \sum_{\tilde{\sigma} \in S(\sigma) \cap \tilde{X}(p) } \tilde{\sigma}.
\end{align}
It is straightforward to check that $S_p$ commutes with the boundary operator as in Section~\ref{sec:example}. Its transpose $S^p = (S_p)^T: C_p(\tilde{X},\mathbb{F}_q) \to C^p(X,\mathbb{F}_q)$ defines a cochain map
\begin{align}\label{eq:S_ast_F}
	S^p( \tilde{\sigma} ) = \sum_{\sigma \in X(p), \tilde{\sigma} \in S(\sigma)} \sigma.
\end{align}
When each cube is subdivided into constant many simplices, the rows of matrix representation of $S^p$ are of constant weights. As for the columns, each $p$-simplex is placed in a unique cube of dimension $\geq p$ (see Figure~\ref{fig:2-cube}). Therefore, when $\dim X = t$ is also a constant, $S^p$ is sparse and this is important to demonstrate the cup product gate is of constant-depth in Theorem~\ref{thm:cup_cubical}. In a more formal sense, we abbreviate all these (co)chain maps to $S_\#$ ($S^\#$). When the dimension is not specified, the conditions to be in $\tilde{X}(p)$ or $X(p)$ in the above equations are removed.

Given any $\tilde{\sigma} \in \tilde{X}(i)$, let $\tau_{\tilde{\sigma}}$ be the minimal cube in $X$ that contains $\tilde{\sigma}$. Note that $\dim \tau_{\tilde{\sigma}} \geq i$ because $\tilde{\sigma}$ may be obtained from the subdivision on some $i$-cube, or it may stay in a higher dimensional simplex from the subdivision. It is easy to see that cubes of the same dimension can be either identical or their intersection is a lower dimensional cube, so the minimal cube $\tau_{\tilde{\sigma}}$ is unique. More formal arguments are given after Lemma~\ref{lemma:smallest_cube}. 

In addition to $S$, we define $A: \tilde{X} \rightarrow X$ as another cellular map such that for any $\tilde{\sigma}$ and any $\sigma \in A(\tilde{\sigma})$, $\sigma \subset \tau_{\tilde{\sigma}}$. Analogously to $S(\sigma)$, $A(\tilde{\sigma})$ is the collection of all cubes in the image of $\tilde{\sigma}$. Intuitively, $\tilde{\sigma}$ is mapped to the cubes that still remain close to $\tilde{\sigma}$. We call $A$ a \emph{cellular approximation map}. We also have $A_\#$ and $A^\#$ with
\begin{align}\label{eq:A_ast_F}
	A_p( \tilde{\sigma} ) = \sum_{\sigma \in A(\tilde{\sigma}) \cap X(p) } \sigma, \quad
	A^p( \sigma ) = \sum_{\tilde{\sigma} \in \tilde{X}(p), \sigma \in A(\tilde{\sigma}) } \tilde{\sigma}.
\end{align}
For the sparsity of $A^p$, by definition, a cube $\sigma$ could be mapped to some $\tilde{\sigma}$ only if $\sigma \in \tau_{\tilde{\sigma}}$. When the coboundary operator of $X$ is sparse and when $\dim X = t$ is a constant, columns of $A^p$ must have constant weights. Conversely, for fixed $\tilde{\sigma}$, $\tau_{\tilde{\sigma}}$ includes constant many cubes, which implies that $A^p$ also have sparse rows.

Assume $A$ is also the identity map. With respect to the above example of dividing a 2D square into two triangles, $A$ is not a cellular approximation map. Actually, it is not even a cellular map because it maps the diagonal edge (a $1$-simplex) into the interior of the square, which is not in $X^1$. We will exemplify how to define $A$ properly as in Figure~\ref{fig:2-cube} and~\ref{fig:3-cube_cup}. 


We now outline the method for equipping $A^\#$ and $S^\#$ with local coefficients $\F$, in order to define cup products on $C^\bullet(X,\F)$. We will cook up another sheaf $\tilde{\mathcal{F}}$ with the following cochain maps, still denoted by $S^\#$ and $A^\#$:
\begin{align}
	S^\#: C^\bullet(\tilde{X},\tilde{\mathcal{F}}) \rightarrow C^\bullet(X,\mathcal{F}), \quad
	A^\#: C^\bullet(X,\mathcal{F}) \rightarrow C^\bullet(\tilde{X},\tilde{\mathcal{F}}). 
\end{align}
Simply by taking tensor products between two local coefficients, we can also define (see Eq.~\eqref{eq:S_ast_tensor})
\begin{align}
	S^\#: C^\bullet(\tilde{X},\tilde{\mathcal{F}} \otimes \tilde{\mathcal{G}} ) \rightarrow C^\bullet(X,\mathcal{F} \otimes \mathcal{G} ).
\end{align}
Then we obtain the cup product on cubical complex:
\begin{align}\label{eq:cup_cell}
	\begin{aligned}
		\smile_c: C^{p_1}(X,\mathcal{F}) \times C^{p_2}(X,\mathcal{G}) \xrightarrow{A^\# \times A^\#} 
		& C^{p_1}(\tilde{X}, \tilde{\mathcal{F}}) \times C^{p_2}(\tilde{X}, \tilde{\mathcal{G}} ) \\
		& \xrightarrow{\smile}   
		C^{p_1 + p_2}(\tilde{X}, \tilde{\mathcal{F}} \otimes \tilde{\mathcal{G}}) \xrightarrow{S^\#}
		C^{p_1 + p_2}(X,\mathcal{F} \otimes \mathcal{G}).
	\end{aligned}
\end{align}
The Leibniz rule holds by commutativity of the cochain maps:
\begin{align}
	\begin{aligned}
		\delta(\alpha \smile_c \beta) 
		= & \delta S^\# ( (A^\# \alpha) \smile (A^\# \beta) )
		= S^\# \delta ( (A^\# \alpha) \smile (A^\# \beta) ) \\
		= & S^\# ( \delta (A^\# \alpha) \smile (A^\# \beta) +  (A^\# \alpha) \smile \delta (A^\# \beta) ) 
		= \delta \alpha \smile_c \beta + \alpha \smile_c \delta \beta.
	\end{aligned} 
\end{align}
The cap product is defined by:
\begin{align}
	\begin{aligned}
		\frown_c: C^{p_1}(X,\mathcal{F}) \times C_{p_1 + p_2}(X,\mathcal{F}) \xrightarrow{A^\# \times S_\#} 
		& C^{p_1}(\tilde{X}, \tilde{\mathcal{F}} ) \times C_{p_1 + p_2}(\tilde{X}, \tilde{\mathcal{F}} ) \\
		& \xrightarrow{\frown}   
		C_{p_2}(\tilde{X}, \mathbb{F}_q) \xrightarrow{A_\#}
		C_{p_2}(X,\mathbb{F}_q).
	\end{aligned}
\end{align}
and
\begin{align}
	\begin{aligned}
		\partial(\alpha \frown_c x) 
		= & \partial A_\#( (A^\# \alpha) \frown (S_\# x) ) 
		= A_\# \partial( (A^\# \alpha) \frown (S_\# x) ) \\ 
		= & A_\# ( \delta (A^\# \alpha) \frown (S_\# x) +  (A^\# \alpha) \frown \partial (S_\# x)) 
		= \delta \alpha \frown_c x + \alpha \frown_c \partial x.
	\end{aligned}
\end{align}

It should be conceivable that as long as the cubical complex is sparse, i.e., the coboundary operators are sparse, together with a proper subdivision, then $S_\#,S^\#,A_\#,A^\#$ as well as the coboundary operators of $\tilde{X}$ are sparse and the following theorem holds as a counterpart to Theorem~\ref{thm:cup_simplical}.

\begin{theorem}[\cite{Li2025Poincare}]\label{thm:cup_cubical}
	Let $X$ be a cubical complex. Suppose $p_1 + \cdots + p_\rho \leq t$ with $t = \dim X$ and suppose $\xi \in C_{p_1 + \cdots + p_\rho}(X,\mathcal{F}^{\otimes \rho})$ is a cycle, then
	\begin{align}\label{eq:circuit2}
		T_\xi (x_{p_1}, \ldots,x_{p_\rho}) : = \langle x_{p_1} \smile_c \cdots \smile_c x_{p_\rho}, \ \xi \rangle
	\end{align}
	defines a cohomological invariant from $H^{p_1}(X,\mathcal{F}) \times \cdots \times H^{p_\rho}(X,\mathcal{F})$ to $\mathbb{F}_q$. With the above assumptions, the corresponding invariant polynomial defines a constant-depth multi-controlled-$Z$ gate in the sense of Definition~\ref{def:invariant_poly}.
\end{theorem}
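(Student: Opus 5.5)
The argument reduces the cubical statement to the simplicial one through the cochain maps $S^\#$ and $A^\#$. The composite cup product $\smile_c$ in \eqref{eq:cup_cell} is defined as $S^\#$ applied to a simplicial cup product of $A^\#$-images. Iterating, I would write
\begin{align}
	x_{p_1} \smile_c \cdots \smile_c x_{p_\rho} = S^\#\big( A^\# x_{p_1} \smile \cdots \smile A^\# x_{p_\rho} \big).
\end{align}
This uses that $A^\# S^\#$ need not be the identity, so the iterated $\smile_c$ should be taken in this form, or else checked to agree up to coboundaries. Since we only evaluate on cocycles against a cycle, either reading suffices once invariance is shown.

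\textbf{Step 1: well-definedness.} Let $\xi$ be a cycle and let each $x_{p_j}$ be a cocycle. By adjunction,
\begin{align}
	T_\xi(x_{p_1}, \ldots, x_{p_\rho}) = \langle A^\# x_{p_1} \smile \cdots \smile A^\# x_{p_\rho}, \ S_\# \xi \rangle .
\end{align}
Because $S_\#$ is a chain map, $S_\#\xi$ is a cycle in $C_{p_1+\cdots+p_\rho}(\tilde X, \tilde{\F}^{\otimes\rho})$. Because $A^\#$ is a cochain map, each $A^\# x_{p_j}$ is a cocycle, and $A^\#(\delta y) = \delta(A^\# y)$. Replacing $x_{p_j}$ by $x_{p_j} + \delta y$ therefore replaces $A^\# x_{p_j}$ by $A^\# x_{p_j} + \delta(A^\# y)$. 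Theorem~\ref{thm:cup_simplical} then gives invariance on $\tilde X$.

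If one prefers to argue directly, the computation is as follows. By the Leibniz rule (Proposition~\ref{prop:cup}) and the fact that the other factors are cocycles,
\begin{align}
	\cdots \smile \delta y \smile \cdots = \delta(\cdots \smile y \smile \cdots).
\end{align}
Pairing against a cycle kills this term, since $\langle \delta z, \eta\rangle = \langle z, \partial \eta\rangle = 0$. In characteristic $2$ there are no signs to track. Multilinearity then gives invariance in each slot separately.

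\textbf{Step 2: constant depth.} Expanding $T_\xi$ in coordinates gives a homogeneous polynomial of degree $\rho$. Each monomial arises from one simplex $\tilde\sigma$ of dimension $p_1+\cdots+p_\rho$ in the support of $S_\#\xi$. Its factors are components of $x_{p_j}$ on cubes in $A$-images of faces of $\tilde\sigma$, and all of these lie in $\tau_{\tilde\sigma}$.

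I would bound how many monomials a fixed variable, a component of $x_{p_j}(\sigma)$ on a cube $\sigma$, can appear in. Sparsity of $A^p$ (columns of constant weight) gives constantly many simplices receiving $\sigma$. Each such simplex lies in constantly many higher simplices, because $X$ has sparse coboundaries, $t$ is constant, and each cube is subdivided into constantly many simplices. The local coefficient spaces have constant dimension. Hence each variable occurs in $O(1)$ monomials.

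A graph whose vertices are variables, with an edge when two variables share a monomial, then has bounded degree. Greedy colouring yields $O(1)$ layers of disjoint $\mathrm{C}^{\rho-1}Z$-type gates, in the sense of Definition~\ref{def:k_controlled_Z}. By \eqref{eq:invariant_poly} this circuit implements $U_T$, so the gate is constant depth.

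\textbf{Main obstacle.} I expect the hardest part to be setting up the sheaf-valued maps $A^\#, S^\#$ with $\tilde{\F}$ so that they are genuine cochain maps, and checking that $S^\#$ is compatible with tensor products, $S^\#: C^\bullet(\tilde X, \tilde\F\otimes\tilde\G) \to C^\bullet(X, \F\otimes\G)$. I would take this as given from the construction referenced above (Eq.~\eqref{eq:S_ast_tensor}). Granting it, the rest is the formal adjunction argument in Step 1 together with the counting in Step 2.
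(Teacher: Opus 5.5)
Your proposal is correct and is essentially the paper's argument. Because $S^\#$ and $A^\#$ are cochain maps, the Leibniz rule transfers to $\smile_c$, pairing with the cycle $\xi$ kills every coboundary term, and sparsity of $X$, $S^\#$ and $A^\#$ gives constant depth; your adjunction to $S_\#\xi$ on $\tilde X$ followed by Theorem~\ref{thm:cup_simplical} is just a repackaging of this.

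Two small fixes. First, the paper's $\smile_c$ is iterated (see the 3D computation in Section~\ref{sec:example}), so rather than arguing that the flattened and iterated readings agree, apply the Leibniz rule for $\smile_c$ itself; this gives invariance for any bracketing directly. Second, for scheduling you should greedily colour the conflict graph on monomials, with two monomials adjacent when they share a variable. If each variable lies in at most $D$ monomials, this graph has maximum degree at most $\rho D$, so $O(1)$ colours give $O(1)$ layers of disjoint gates; colouring the variables alone does not yield such layers.
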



We now present some details to define $\tilde{\F}$ and the (co)chain maps, which are necessary to legitimate the following theoretical analyses and computations. Since any $\tilde{\sigma}$ corresponds to a unique $\tau_{\tilde{\tau}}$, we simply assign the local coefficient space $\F(\tau_{\tilde{\sigma}})$ to $\tilde{\sigma}$. Moreover, $\tilde{\sigma} \subset \tilde{\tau} \implies \tau_{\tilde{\sigma}} \subset \tau_{\tilde{\tau}}$ and thus we define
\begin{align}
	\tilde{\F}_{\tilde{\sigma}, \tilde{\tau}} := \F_{\tau_{\tilde{\sigma}}, \tau_{\tilde{\tau}}}: \F(\tau_{\tilde{\sigma}}) \rightarrow  \F(\tau_{\tilde{\tau}}).
\end{align}
This defines $\tilde{\F}$ for $\tilde{X}$ and its presheaf condition in Definition~\ref{def:presheaf} is inherited from $\F$. Formally, this defines a natural transformation $\mathcal{M}: \tilde{X} \to X$ such that
\begin{align}\label{eq:natural}
	\mathcal{M}(\tilde{\sigma}) = \tau_{\tilde{\sigma}}, \quad \mathcal{M}_{\tilde{\sigma}, \tau_{\tilde{\sigma}} } = \text{id}: \tilde{\F}(\tilde{\sigma}) \to \F(\tau_{\tilde{\sigma}}). 
\end{align} 
Different simplices can be contained in one minimal cube, but its action on local vectors is always the identity map. Then one can easily construct $C^\bullet(\tilde{X}, \tilde{\F})$ and define $S^\#: C^\bullet(\tilde{X},\tilde{\F}) \rightarrow C^\bullet(X,\mathcal{F})$ by
\begin{align}\label{eq:S_ast}
	S^\# ( \tilde{x}(\tilde{\sigma}) \tilde{\sigma}) := \sum_{\sigma, \tilde{\sigma} \in S(\sigma)} \Big[ \mathcal{F}_{\tau_{\tilde{\sigma}},\sigma} \mathcal{M}_{\tilde{\sigma}, \tau_{\tilde{\sigma}} } \tilde{x}(\tilde{\sigma}) \Big] \sigma, 
\end{align}
where $\tilde{x}(\tilde{\sigma}) \tilde{\sigma}$ is the vector defined at $\tilde{\sigma}$ with the local vector $\tilde{x}(\tilde{\sigma})$. On the other hand,
\begin{align}\label{eq:A_ast}
	A^\# (x(\sigma) \sigma) := \sum_{\tilde{\sigma},\sigma \in A(\tilde{\sigma})} \Big[ \mathcal{M}_{\tilde{\sigma}, \tau_{\tilde{\sigma}} }^{-1}  \mathcal{F}_{\sigma,\tau_{\tilde{\sigma}}} x(\sigma) \Big] \tilde{\sigma}.
\end{align}
For a different sheaf $\mathcal{G}$, let $\mathcal{N}_{\tilde{\sigma}, \tau_{\tilde{\sigma}} }$ be defined by the identity matrix on the corresponding local coefficient spaces as in \eqref{eq:natural}. Then
\begin{align}\label{eq:S_ast_tensor}
	S^\# ( \tilde{x}(\tilde{\sigma}) \tilde{\sigma}) := \sum_{\sigma, \tilde{\sigma} \in S(\sigma)} \Big[ \Big( \mathcal{F}_{\tau_{\tilde{\sigma}},\sigma} \mathcal{M}_{\tilde{\sigma}, \tau_{\tilde{\sigma}} }  \otimes  \mathcal{G}_{\tau_{\tilde{\sigma}},\sigma} \mathcal{N}_{\tilde{\sigma}, \tau_{\tilde{\sigma}} }  \Big) \tilde{x}(\tilde{\sigma}) \Big] \sigma.
\end{align}
For a more comprehensive definition of $S^\#$ and $A^\#$ via pullback sheaves, we refer the interested readers to~\cite{Li2025Poincare}. For real computations, given $x_1(\sigma) \sigma$ and $x_2(\tau)\tau$, their cup product is obtained by the following steps:
\begin{enumerate}
	\item We find $\sigma \smile_c \tau = \omega$ by Eq.~\eqref{eq:S_ast_F} and \eqref{eq:A_ast_F} without local coefficients. We illustrate 2D and 3D examples in Section~\ref{sec:example} (see also Lemma~\ref{lemma:cup_induct_1}).
	
	\item To process local vectors, by Eq.~\eqref{eq:cup_cell}, \eqref{eq:A_ast} and \eqref{eq:S_ast_tensor}, we have
	\begin{align}\label{eq:cup_local_vector}
	\begin{aligned}
		& \Big( \mathcal{F}_{\tau_{\tilde{\omega}},\omega} \mathcal{M}_{\tilde{\omega}, \tau_{\tilde{\omega}} } \Big) \otimes \Big( \mathcal{G}_{\tau_{\tilde{\omega}},\omega}  \mathcal{N}_{\tilde{\omega}, \tau_{\tilde{\omega}} } \Big)
		\tilde{\mathcal{F}}_{\tilde{\sigma},\tilde{\omega}} \otimes \tilde{\mathcal{G}}_{\tilde{\tau},\tilde{\omega}}
		\Big( \mathcal{M}_{\tilde{\sigma}, \tau_{\tilde{\sigma}} }^{-1} \mathcal{F}_{\sigma,\tau_{\tilde{\sigma}}} x_1(\sigma) \Big) \otimes \Big( \mathcal{N}_{\tilde{\tau}, \tau_{\tilde{\tau}} }^{-1} \mathcal{G}_{\tau,\tau_{\tilde{\tau}}} x_2(\tau) \Big) \\
		= & \Big( \mathcal{F}_{\sigma, \omega} x_1(\sigma) \Big) \otimes \Big( \mathcal{G}_{\tau,\omega} x_2(\tau) \Big).
	\end{aligned}
	\end{align}
\end{enumerate}
The calculation of consecutive cup products obeys the same principle. 


\subsection{Explicit computations in 2D and 3D}\label{sec:example}

We provide explicit formulas to calculate cup products on 2D and 3D cubical complexes.

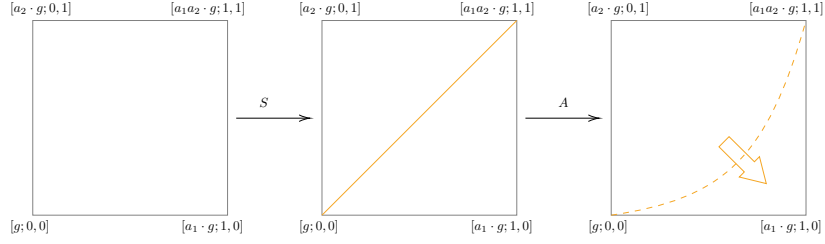
\begin{figure}[H]
	\centering
	\resizebox{0.7\linewidth}{!}{%
		\begin{tikzpicture}[x=0.75pt,y=0.75pt,yscale=-1,xscale=1]
			\draw  [color={rgb, 255:red, 0; green, 0; blue, 0 }  ,draw opacity=0.5 ] (49,37) -- (244.33,37) -- (244.33,232.33) -- (49,232.33) -- cycle ;
			\draw  [color={rgb, 255:red, 0; green, 0; blue, 0 }  ,draw opacity=0.5 ] (339,37) -- (534.33,37) -- (534.33,232.33) -- (339,232.33) -- cycle ;
			\draw  [color={rgb, 255:red, 0; green, 0; blue, 0 }  ,draw opacity=0.5 ] (629,37) -- (824.33,37) -- (824.33,232.33) -- (629,232.33) -- cycle ;
			\draw    (253,135) -- (325,135) ;
			\draw [shift={(327,135)}, rotate = 180] [color={rgb, 255:red, 0; green, 0; blue, 0 }  ][line width=0.75]    (10.93,-3.29) .. controls (6.95,-1.4) and (3.31,-0.3) .. (0,0) .. controls (3.31,0.3) and (6.95,1.4) .. (10.93,3.29)   ;
			\draw    (543,135) -- (615,135) ;
			\draw [shift={(617,135)}, rotate = 180] [color={rgb, 255:red, 0; green, 0; blue, 0 }  ][line width=0.75]    (10.93,-3.29) .. controls (6.95,-1.4) and (3.31,-0.3) .. (0,0) .. controls (3.31,0.3) and (6.95,1.4) .. (10.93,3.29)   ;
			\draw [color={rgb, 255:red, 245; green, 166; blue, 35 }  ,draw opacity=1 ]   (339,232.33) -- (534.33,37) ;
			\draw [color={rgb, 255:red, 245; green, 166; blue, 35 }  ,draw opacity=1 ] [dash pattern={on 4.5pt off 4.5pt}]  (629,232.33) .. controls (740,215.83) and (785,177.83) .. (824.33,37) ;
			\draw  [color={rgb, 255:red, 245; green, 166; blue, 35 }  ,draw opacity=1 ] (747.17,153.53) -- (772.6,178.73) -- (777.65,173.63) -- (784.49,200.64) -- (757.43,194.04) -- (762.48,188.94) -- (737.06,163.73) -- cycle ;
			
			\draw (25,236) node [anchor=north west][inner sep=0.75pt]   [align=left] {$\displaystyle [ g;0,0]$};
			\draw (197,236) node [anchor=north west][inner sep=0.75pt]   [align=left] {$\displaystyle [ a_{1} \cdot g;1,0]$};
			\draw (186,17) node [anchor=north west][inner sep=0.75pt]   [align=left] {$\displaystyle [ a_{1} a_{2} \cdot g;1,1]$};
			\draw (25,17) node [anchor=north west][inner sep=0.75pt]   [align=left] {$\displaystyle [ a_{2} \cdot g;0,1]$};
			\draw (315,236) node [anchor=north west][inner sep=0.75pt]   [align=left] {$\displaystyle [ g;0,0]$};
			\draw (487,236) node [anchor=north west][inner sep=0.75pt]   [align=left] {$\displaystyle [ a_{1} \cdot g;1,0]$};
			\draw (476,17) node [anchor=north west][inner sep=0.75pt]   [align=left] {$\displaystyle [ a_{1} a_{2} \cdot g;1,1]$};
			\draw (315,17) node [anchor=north west][inner sep=0.75pt]   [align=left] {$\displaystyle [ a_{2} \cdot g;0,1]$};
			\draw (605,236) node [anchor=north west][inner sep=0.75pt]   [align=left] {$\displaystyle [ g;0,0]$};
			\draw (777,236) node [anchor=north west][inner sep=0.75pt]   [align=left] {$\displaystyle [ a_{1} \cdot g;1,0]$};
			\draw (766,17) node [anchor=north west][inner sep=0.75pt]   [align=left] {$\displaystyle [ a_{1} a_{2} \cdot g;1,1]$};
			\draw (604,17) node [anchor=north west][inner sep=0.75pt]   [align=left] {$\displaystyle [ a_{2} \cdot g;0,1]$};
			\draw (275,113) node [anchor=north west][inner sep=0.75pt]   [align=left] {$\displaystyle S$};
			\draw (575,113) node [anchor=north west][inner sep=0.75pt]   [align=left] {$\displaystyle A$};		
		\end{tikzpicture}
	}
	\caption{The function $S$ and $A$ on the subdivision of a 2-cube $[g;a_1,a_2]$.}
	\label{fig:2-cube}
\end{figure}

\begin{example}\label{example:2D}
	Suppose $X$ is a 2D cubical complex generated by $\{\G, \mathcal{T}_1, \mathcal{T}_2\}$ as in Section~\ref{sec:cubical}. We triangulate any 2-cube $[g;a_1,a_2]$ by adding a segment from $[g;0,0]$ to $[a_1 a_2 \cdot g; 1,1]$. This is different from the barycentric subdivision, but it is simple enough to generalize to higher dimensions. Two subdivided 2D simplices are marked by their vertices:
	\begin{align}\label{eq:subdivision_2D}
		\Big[ [g; 0,0], [a_1 \cdot g; 1,0], [a_1 a_2 \cdot g; 1,1] \Big] , \quad 
		\Big[ [g; 0,0], [a_2 \cdot g; 0,1], [a_1 a_2 \cdot g; 1,1] \Big] 
	\end{align}
	In the present case, except for the edge attaching to $[g;0,0]$ and $[g \cdot a_1 a_2; 1,1]$, all lower dimensional simplices have already been defined in the original 2D cubical complex, and we denote them by the same notations as cubes.
	
	The cellular approximation $A$ is defined by collapsing the 2D simplex on LHS in \eqref{eq:subdivision_2D} to the 1-cubes $[g; a_1,0]$ and $[a_1 \cdot g; 1,a_2]$. To guarantee the continuity of $A$, another 2-simplex should be stretched to cover the whole 2-cube (see Figure~\ref{fig:2-cube}). It is also important to verify that $A$ is well-defined over the entire subdivision $\tilde{X}$, but this is obvious because the collapsing and stretching are defined locally within each 2-cube. Higher dimensional cases need a few more words, and we show in the following. By definition,
	\begin{align}
		& A_\# [g; a_1,0] = [g; a_1,0], \\
		& A_\# [a_1 \cdot g; 1, a_2] = [a_1 \cdot g; 1, a_2], \\
		& A_\# \Big[ [g;0,0], [a_1 a_2 \cdot g; 1,1] \Big] = [g; a_1,0] + [a_1 \cdot g; 1, a_2] +  [g; a_1, a_2].
	\end{align}
	In the last line, if we solely use $A_1$, then the 2-cube $[g; a_1, a_2]$ will be discarded. The actions on other simplices are defined similarly and it is easy to check that $A_\#$ commutes with the boundary map. 
	
	We now exemplify how to calculate the cup products: given $[g; a_1,0]$ and $[a_1 \cdot g; 1, a_2]$. Then by Eq.~\eqref{eq:A_ast_F},
	\begin{align}\label{eq:A_ast_example}
		\begin{aligned}
			A^\# [g; a_1,0] = A_\#^T [g; a_1,0] 
			= [g; a_1,0] + \Big[ [g;0,0], [a_1 a_2 \cdot g; 1,1] \Big] + \cdots.
		\end{aligned}
	\end{align}
	Here we note that $[g; a_1,0]$ can be contained in different 2-cubes for different choices of $a_2$, but they are insignificant now for we only take cup products of 1-simplices. We have a similar expansion for $A^\# [a_1 \cdot g; 1, a_2]$ and thus
	\begin{align}
		( A^1 [g; a_1,0] ) \smile ( A^1 [a_1 \cdot g; 1,a_2] ) = \Big[ [g; 0,0], [a_1 \cdot g; 1,0], [a_1 a_2 \cdot g; 1,1] \Big].
	\end{align} 
	As a reminder, by Definition~\ref{def:cup}, we have to order the vertices of the 2-simplex before cup product and a natural ordering is just reading them from left to right. As another useful example,
	\begin{align}
		( A^1 [g; a_1,0] ) \smile ( A^1 [g; a_1,1] ) 
		= ( [g; a_1,0] +  \Big[ [g;0,0], [a_1 a_2 \cdot g; 1,1] \Big] ) \smile [g; a_1,1] = 0
	\end{align}
	because the cup product between the diagonal segment and $[g; a_1,1]$ is zero as they violate the ordering.
	
	By Eq.~\eqref{eq:S_ast_F}, 
	\begin{align}
		& S^2 \Big[ [g; 0,0], [a_1 \cdot g; 1,0], [a_1 a_2 \cdot g; 1,1] \Big] = S_2^T \Big[ [g; 0,0], [a_1 \cdot g; 1,0], [a_1 a_2 \cdot g; 1,1] \Big] = [g;a_1,a_2] \\
		& \implies [g; a_1,0] \smile_c [a_1 \cdot g; 1,a_2] = [g;a_1,a_2].
	\end{align}
	By the same method, we have
	\begin{align}
		[g; 0,a_2] \smile_c [a_2 \cdot g; a_1,1] = [g;a_1,a_2].
	\end{align} 
	These results are also consistent with Lemma~\ref{lemma:cup_induct_1}. We show the computation with local coefficients in the next example.
\end{example}

\begin{figure}[H]
	\centering
	\resizebox{0.35\linewidth}{!}{%
		\begin{tikzpicture}[x=0.75pt,y=0.75pt,yscale=-1,xscale=1]
			\draw  [color={rgb, 255:red, 0; green, 0; blue, 0 }  ,draw opacity=1 ] (178,88) -- (255,11) -- (461.33,11) -- (461.33,190.67) -- (384.33,267.67) -- (178,267.67) -- cycle ; \draw  [color={rgb, 255:red, 0; green, 0; blue, 0 }  ,draw opacity=1 ] (461.33,11) -- (384.33,88) -- (178,88) ; \draw  [color={rgb, 255:red, 0; green, 0; blue, 0 }  ,draw opacity=1 ] (384.33,88) -- (384.33,267.67) ;
			\draw [color={rgb, 255:red, 0; green, 0; blue, 0 }  ,draw opacity=0.5 ] [dash pattern={on 4.5pt off 4.5pt}]  (255,11) -- (255,118.17) -- (255,191.67) ;
			\draw [color={rgb, 255:red, 0; green, 0; blue, 0 }  ,draw opacity=0.5 ] [dash pattern={on 4.5pt off 4.5pt}]  (255,190.67) -- (461.33,190.67) ;
			\draw [color={rgb, 255:red, 0; green, 0; blue, 0 }  ,draw opacity=0.5 ] [dash pattern={on 4.5pt off 4.5pt}]  (180.46,267.67) -- (225.21,222.92) -- (257.46,190.67) ;
			\draw [color={rgb, 255:red, 0; green, 0; blue, 0 }  ,draw opacity=0.6 ] [dash pattern={on 4.5pt off 4.5pt}]  (178,267.67) .. controls (179.72,218.86) and (212.09,123.05) .. (264,77.17) .. controls (315.91,31.29) and (428.65,11.67) .. (461.33,11) ;
			\draw [color={rgb, 255:red, 0; green, 0; blue, 0 }  ,draw opacity=0.6 ] [dash pattern={on 4.5pt off 4.5pt}]  (384.33,267.67) .. controls (388,184.17) and (421,58.17) .. (461.33,11) ;
			\draw [color={rgb, 255:red, 0; green, 0; blue, 0 }  ,draw opacity=0.6 ] [dash pattern={on 4.5pt off 4.5pt}]  (180.46,267.67) .. controls (219,230.17) and (396,190.17) .. (461.33,190.67) ;
			\draw [color={rgb, 255:red, 208; green, 2; blue, 27 }  ,draw opacity=1 ]   (384.33,267.67) -- (178,267.67) ;
			\draw [color={rgb, 255:red, 74; green, 144; blue, 226 }  ,draw opacity=1 ]   (384.33,267.67) -- (461.33,190.67) ;
			\draw [color={rgb, 255:red, 245; green, 166; blue, 35 }  ,draw opacity=1 ]   (461.33,190.67) -- (461.33,11) ;
			\draw  [color={rgb, 255:red, 0; green, 0; blue, 0 }  ,draw opacity=0.3 ][line width=1.5]  (274.99,231.03) -- (275.38,205.99) -- (300.32,208.36) -- (293.99,214.03) -- (312.4,234.6) -- (299.73,245.93) -- (281.32,225.37) -- cycle ;
			\draw  [color={rgb, 255:red, 0; green, 0; blue, 0 }  ,draw opacity=0.3 ][line width=1.5]  (399.99,118.03) -- (400.38,92.99) -- (425.32,95.36) -- (418.99,101.03) -- (437.4,121.6) -- (424.73,132.93) -- (406.32,112.37) -- cycle ;
			\draw  [color={rgb, 255:red, 0; green, 0; blue, 0 }  ,draw opacity=0.3 ][line width=1.5]  (278.99,102.03) -- (279.38,76.99) -- (304.32,79.36) -- (297.99,85.03) -- (316.4,105.6) -- (303.73,116.93) -- (285.32,96.37) -- cycle ;
			
			\draw (235,274) node [anchor=north west][inner sep=0.75pt]   [align=left] {$\displaystyle \textcolor[rgb]{0.82,0.01,0.11}{ [g;a_{1},0,0] }$};
			\draw (465,139) node [anchor=north west][inner sep=0.75pt]   [align=left] {$\displaystyle \textcolor[rgb]{0.96,0.65,0.14}{ [a_1 a_2 \cdot g;1,1,a_3] }$};
			\draw (432,219) node [anchor=north west][inner sep=0.75pt]   [align=left] {$\displaystyle \textcolor[rgb]{0.29,0.56,0.89}{ [a_{1} \cdot g;1,a_2,0] }$};
		\end{tikzpicture}
	}
	\caption{The cup product in a 3-cube along the path of $[g;a_{1},0,0]$, $[a_{1} \cdot g;1,a_2,0]$ and $[a_1 a_2 \cdot g;1,1,a_3]$.}
	\label{fig:3-cube_cup}
\end{figure}
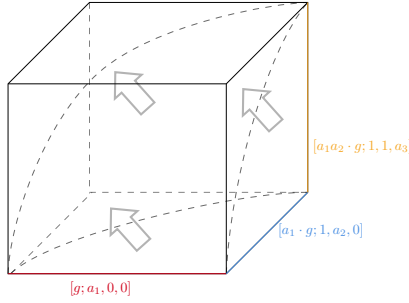

\begin{example}\label{example:3D}
	We now analyze the 3D case. A 3-cube $\pi = [g;a_1,a_2,a_3]$ is subdivided into $3! = 6$ 3-simplices in the following way: we consider all length-$3$ paths defined by consecutive edges of the 3-cube that connect $[g;0,0,0]$ and $[a_1 a_2 a_3 \cdot g; 1,1,1]$. There are six paths in total and one of them passes through the following vertices, which spans a 3-simplex:
	\begin{align}
		\tilde{\pi}_{\text{id}} = \Big[ [g;0,0,0],  
		[a_1 \cdot g; 1,0,0], 
		[a_1 a_2 \cdot g; 1,1,0], 
		[a_1 a_2 a_3 \cdot g; 1,1,1] \Big].
	\end{align}
	As in the 2D case, the order on these vertices is canonically defined by the way that we list them. Put in another way, $\tilde{\pi}_{\text{id}}$ is determined by acting $a_1,a_2$ and $a_3$ on $[g;0,0,0]$ in order. We can permute the actions, which yields the left five 3-simplices $\tilde{\pi}_{\alpha}$, where $\alpha \in S_3$ is a permutation on $3$ indices. Geometrically, when we take three 1-cubes (edges) from $\pi$, their cup product is nontrivial and equals $\pi$ if and only if they are ordered to form one of the six paths. 

	To verify this simple and intuitive phenomenon, we define $A$ by stretching $\tilde{\pi}_{\text{id}}$ to cover the 3-cube $\sigma$ while collapsing all other simplices to guarantee the continuity. Moreover, to ensure the global continuity of $A$ on $\tilde{X}$, we set the rule that for each individual 3-cube, the stretching and collapsing are defined in the same way here, i.e., we stretch each $\tilde{\pi}_{\text{id}}$ and collapse $\tilde{\pi}_{\alpha}$ in a consistent way among the 3-cubes. During the process of subdivision, any 2-cube, such as $[a_1 \cdot g; 1,a_2,a_3] \prec [g;a_1,a_2,a_3]$, is subdivided automatically as in Figure~\ref{fig:3-cube_cup}. Since that 2-cube may be contained in more than one 3-cube, we need to check the consistency of subdivisions as well as the stretching and collapsing. Fortunately, all these are guaranteed by the rule and the fact that the 2-cube is uniquely denoted by $[g;a_1,a_2,a_3]$ even in different 3-cubes.
	
	By definition, $A_3 \tilde{\pi}_{\text{id}} = [g;a_1,a_2,a_3]$. All other 3-simplices has to be collapsed to the boundary of the 3-cube. They are mapped to linear combinations of 1-cubes and 2-cubes. We also have
	\begin{align}
		A_1 & [g;a_1,0,0] = [g;a_1,0,0]
	\end{align}
	as well as for any other edges which are both 1-simplices and 1-cubes. Notably,
	\begin{align}
		\begin{aligned}
			A_1 \Big[ [g;0,0,0], [g \cdot a_1 a_2 a_3; 1,1,1] \Big] 
			= & \Big[ [g;0,0,0], [a_3 \cdot g; 0,0,1] \Big] 
			+ \Big[ [a_3 \cdot g; 0,0,1], [a_2 a_3 \cdot g; 0,1,1] \Big] \\
			& + \Big[ [a_2 a_3 \cdot g; 0,1,1], [a_1 a_2 a_3 \cdot g; 1,1,1] \Big] \\
			= & [g; 0,0,a_3] + [a_3 \cdot g; 0,a_2,1] + [a_2 a_3 \cdot g; a_1,1,1],
		\end{aligned}
	\end{align}
	which reflects the fact that $\tilde{\pi}$ is stretched.

	We now elucidate how to calculate the cup product on the path of three consecutive edges connecting $[g;0,0,0]$ and $[a_1 a_2 a_3 \cdot g; 1,1,1]$ (see Figure~\ref{fig:3-cube_cup}): 
	\begin{align}
		A^1 & [g;a_1,0,0] = [g;a_1,0,0], \label{eq:3D_1} \\
		A^1 & [a_1 \cdot g;1,a_2,0] = [a_1 \cdot g;1,a_2,0],  \\
		A^1 & [a_1 a_2 \cdot g;1,1,a_3] = [a_1 a_2 \cdot g;1,1,a_3].      
	\end{align}
	Note that $A^1 [g;a_1,0,0]$ does not contain the diagonal edge $\Big[ [g;0,0,0], [a_1 a_2 \cdot g; 1,1,0] \Big]$ because it collapses in the opposite direction, which is also the reason for the other two equations. As in the 2D case in Example~\ref{example:2D}
	\begin{align}
		 [a_1 \cdot g;1,a_2,0] \smile_c [a_1 a_2 \cdot g;1,1,a_3] = [a_1 \cdot g;1,a_2,a_3].
	\end{align}
	Moreover,
	\begin{align}
		& A^2 [a_1 \cdot g;1,a_2,a_3] = \Big[ [a_1 \cdot g;1,0,0], [a_1 a_2 \cdot g;1,1,0], [a_1 a_2 a_3 \cdot g; 1,1,1] \Big], \label{eq:3D_2} \\
		\implies & 
		S^3( (A^1 [g;a_1,0,0]) \smile (A^2 [a_1 \cdot g;1,a_2,a_3] )) = [g;a_1,a_2,a_3].  \label{eq:3D_3}
	\end{align}
	
	As another typical example, we compute $[g;a_1,0,0] \smile_c [a_1 \cdot g;1,0,a_3] \smile_c [a_1 a_3 \cdot g;1,a_2,1]$. The expansion of $(A^1 [a_1 a_3 \cdot g; 1,a_2,1]$ includes $[a_1 a_3 \cdot g; 1,a_2,1]$ and the diagonal of $[a_1 \cdot g;1,a_2,a_3]$, but as before
	\begin{align}
		[a_1 \cdot g;1,0,a_3] \smile_c [a_1 a_3 \cdot g;1,a_2,1] = [a_1 \cdot g;1,a_2,a_3].
	\end{align}
	Repeating Eq.~\eqref{eq:3D_2} and \eqref{eq:3D_3}, we find that the cup product is just $[g;a_1,a_2,a_3]$ and this is always the case when we compute the remaining four paths.
	
	To compute the local coefficients, let us list all ingredients: we begin with three logical representatives $x_i \in C^1(X, \mathcal{F})$ for which $x_i(\sigma_j) \in \mathcal{F}(\sigma_j)$. The cup products are conducted by the above procedure by considering each of six paths within any 3-cube $\pi$. The local vectors are processed as in Eq.~\eqref{eq:cup_local_vector}:
	\begin{align}
		( \F_{\sigma_1,\pi} x_1(\sigma_1) ) \otimes ( \F_{\sigma_2,\pi} x_2(\sigma_2) ) \otimes ( \F_{\sigma_3,\pi} x_3(\sigma_3) ),  
	\end{align}
	where $\F_{\sigma,\pi}$ is the local map from the 1-cube $\sigma$ to the 3-cube $\pi$. In retrospect, suppose $\sigma = [g; 0,a_2,0]$. Then in Section~\ref{sec:DLV_code},
	\begin{align}
		\F_{\sigma,\pi} = h_1^T(a_1, \text{-} ) \otimes h_3^T(a_3, \text{-} ): \mathbb{F}_q^{m_1} \otimes \mathbb{F}_q^{m_3} \to \mathbb{F}_q.
	\end{align}  
	
	On the other hand, we need to find $\xi \in C_3(X,\F^{\otimes 3})$ such that $\xi \in \ker \partial_3$. Here, $\F^{\otimes 3}$ is simply defined by taking 3-fold tensor product over the local coefficient spaces and morphisms. Therefore,
	\begin{align}\label{eq:partial_3}
		\partial_3: C_3(X,\F^{\otimes 3}) \rightarrow C_2(X,\F^{\otimes 3}) 
	\end{align}
	is defined by replacing the entries of the boundary operator of $X$ by $(\F_{\tau,\pi}^T)^{\otimes 3}$. Depending on the form of $\tau$, it is one of the
	\begin{align}
		\Big( h_1(\text{-}, a_1) \Big)^{\otimes 3} \in \mathbb{F}_q^{1 \times m_1^3}, \quad \Big( h_2(\text{-}, a_2) \Big)^{\otimes 3} \in \mathbb{F}_q^{1 \times m_2^3}, \quad \Big( h_3(\text{-}, a_3) \Big)^{\otimes 3} \in  \mathbb{F}_q^{1 \times m_3^3}.
	\end{align}
	Namely, $\dim C_3(X,\F^{\otimes 3}) = \dim  C_3(X,\F)$, but $\dim C_2(X,\F^{\otimes 3})$ is enlarged by a factor $m_i^3$. The general notion of cup products over different sheaves is also used in Section~\ref{sec:proof_sheaf}, but the principle of computation is the same. 
\end{example}
	

\section{Induction scheme for qLDPC codes}\label{sec:inductive}

To obtain a cubical complex family $\{X_i\}$ with both increasing size and satisfactory expansion property, the previous work utilizes exponential abelian lifting~\cite{Agarwal2016,Jeronimo2021,PK2021,Dinur2024sheaf}. However, for two consecutive cubical complexes $X_i,X_{i+1}$ in the family, there may not be any manifest relationship or pattern between their structures. This is because the base graphs and lifting groups could be totally different. This would prevent any argument by induction from proving certain code properties. As the first main result in this paper, we propose a simple inductive construction of the code family and prove that as long as the initial code admits a nontrivial cup product gate, it will be inherited by all subsequent members in the family. The code family always begins with a constant-size code, so the precondition can be tested by numerical computations.

\subsection{Inductive lifting sequences}\label{sec:induct_boundary}

As a first attempt to define the code family, we might try to apply \emph{sequential lifting} on a fixed base graph $\mathcal{G}_0$ as in Theorem~\ref{thm:3-lift}: let $\HH = C_3$ for example. We define an expander family by lifting $\mathcal{G}_0$ sequentially:
\begin{align}
	\left( C_3 \times \cdots \times \left( C_3 \times ( V(\mathcal{G}_0) \times \{0,1\})^t \right) \right).
\end{align}
However, as discussed in Remark~\ref{remark:non-Abelian}, this is always tantamount to $\HH' \times ( V(\mathcal{G}_0) \times \{0,1\})^t$ for some non-Abelian group, which is illegal in defining high dimensional expanders when $t \geq 3$ in Section~\ref{sec:cubical}.

Alternatively, we define graphs and cubical complexes by the following steps.
\begin{enumerate}
	\item We initialize the procedure with a base graph $\mathcal{G}_0$ with well-defined local orderings on its vertices (see Section~\ref{sec:DLV_code}). If it is not (almost) Ramanujan, we apply any 2-lift to improve its expansion property. Then we define $X_1'(0) = V(\mathcal{G}_0)^t \times \{0,1\}^t$ and build the cubical complex $X_1'$.
	
	\item We take an exponential abelian lift on $\mathcal{G}_0$ by $\HH_1$ and define another cubical complex $X_1$ with $X_1(0) = \HH_1 \times V(\mathcal{G}_0)^t \times \{0,1\}^t$.
	
	\item We take another lift on $\mathcal{G}_0$ by $\HH_1'$. It is neither required to be abelian nor exponentially large. Then we define the cubical complex $X_2'$ with $X_2'(0) = (\HH_1' \times V(\mathcal{G}_0))^t \times \{0,1\}^t$. 
	
	\item We repeat the above procedure with $\HH_1' \times \mathcal{G}_0$. We define $X_2$ by some abelian group $\HH_2$ of exponential size relative to $\HH_1' \times \mathcal{G}_0$ and define $X_3'$ by another $\HH_2'$. 
\end{enumerate}

\begin{definition}\label{def:induct_lift}
	The following (double) sequence $\{X_i', X_i\}$ obtained from the whole process is called an \emph{inductive lifting sequence} of $t$-dimensional cubical complexes:
	\begin{equation}\label{eq:induct_lift}
		\begin{tikzcd}
			X_1' \arrow[r, "\HH_1' "] \arrow[d, "\HH_1"] &
			X_2' \arrow[r, "\HH_2' "] \arrow[d, "\HH_2"] &
			\cdots\cdots \arrow[r, "\HH_{i-1}'"] &
			X_i' \arrow[r, "\HH_i' "] \arrow[d, "\HH_i"] &
			\cdots\cdots \\
			X_1 \arrow[r, dashed] &
			X_2 \arrow[r, dashed] &
			\cdots\cdots \arrow[r, dashed] &
			X_i \arrow[r, dashed] &
			\cdots\cdots
		\end{tikzcd}
	\end{equation} 
	The second line in \eqref{eq:induct_lift} is called an \emph{exponential lifting sequence} because each $X_i$ right below $X_i'$ is formulated by an exponential lift. The first line is actually a family of ordinary HGPs. If $\HH_i' \equiv \HH'$ for some group of constant size, then we will call it a \emph{sequential lifting sequence}.  
\end{definition}

Note that all base graphs in $\{X_i',X_i\}$ have a fixed node degree $n$. As a result, with any collection of local codes: $h_1 \in \mathbb{F}_q^{m_1 \times n}, \ldots,h_t \in \mathbb{F}_q^{m_t \times n}$, we can define a cochain complex on every $X_i'$ and $X_i$. In summary, given a base graph $\mathcal{G}_0$, groups $\HH_i, \HH_i'$ and local codes $h_1, \ldots,h_t$, we can craft an inductive lifting sequence of $t$-dimensional cochain complexes $\{ C^\bullet(X_i',\mathcal{F}_i'), C^\bullet(X_i,\mathcal{F}_i) \}$.

Our purpose is to figure out the relationship among the (co)boundary operators of these cubical complexes. This will finally facilitate our inductive construction of nontrivial cup product gates. For the first step, we compare $X_1'$ and $X_1$:
\begin{itemize}
	\item For $X_1$, we denote any cube by $[g; (a_i), (b_j)]$ with $g = (h, v_1, \ldots,v_t)$.
	
	\item While for $X_1'$, we have $[g'; (a_i'), (b_j)]$ with $g' = (v_1, \ldots,v_t)$. As a reminder, $a_i'$ also have a different meaning with $a_i$ in $X_1$, but they are obviously in one-to-one correspondence.
\end{itemize}
Let $\delta'^{p}: X_1'(p) \to X_1'(p+1)$ be the coboundary operator. For some $\sigma' \in X_1'(p)$ and $\tau \in X_1'(p+1)$, if $\sigma' \prec \tau'$, then the entry $\delta'^{p}(\tau',\sigma') = 1$. By the above discussion, each $\sigma'$ has $\vert{\HH_1}\vert$ many lifts $\sigma_r \in X_1(p)$, only by inserting $h \in \HH_1$ into $g'$. Now $\sigma' \prec \tau'$ indicates that $\sigma_r \prec \tau_s$ for some $r,s$. The exact inclusion is revealed by the action $a_i$ on $h$. In any case, the corresponding $\delta^{p}$ for $X$ should be defined by replacing each nonzero scalar entry of $\delta'^p(\tau',\sigma')$ by some permutation matrix of size $\vert{\HH_1}\vert$. More explicitly, let us consider a 3D case for example: if $[a_3 \cdot g';a_1',a_2',1] \prec [g';a_1',a_2',a_3']$, then
\begin{align}
	[a_3 \cdot g; a_1,a_2,1] \prec [g; a_1,a_2,a_3] \text{ with } g = (h,v_1,v_2,v_3), \ a_3 \cdot g = (\gamma_{(v_3,v_3')} \cdot h, v_1,v_2,v_3'). 
\end{align}
Then the scalar entry is substituted by $\gamma_{(v_3,v_3')}$, which is a permutation matrix representing the action of $\gamma_{(v_3,v_3')} \in \HH_1$. This simply generalizes graph lifts in Remark~\ref{remark:lift} to high dimensions. From a topological perspective, the coboundary operators of $X_1'$ lift to the coboundary operators of its covering space $X_1$.

\begin{lemma}\label{lemma:sheaf}
	Let $\sigma' \prec \tau'$ in $X_1'$. Suppose $\sigma_r \prec \tau_s$ in $X_1$. Given the local codes $h_1, \ldots,h_t$,
	\begin{align}
		\mathcal{F}(\sigma') \cong \mathcal{F}(\sigma_r), \quad \mathcal{F}(\tau') \cong \mathcal{F}(\tau_s)
	\end{align}
	and $\mathcal{F}_{\sigma',\tau'} = \mathcal{F}_{\sigma_r,\tau_s}$ in the sense of their the matrix representations.
\end{lemma}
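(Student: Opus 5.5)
The plan is to read everything off the explicit recipe of Section~\ref{sec:DLV_code} and observe that neither the local spaces nor the local maps ever depend on the group coordinate $h$. First I will fix notation for the covering map. Write $\sigma' = [g'; (a_i')_{i \in I}, (b_j)_{j \in J}]$ with $g' = (v_1, \ldots, v_t)$. By construction its lifts are $\sigma_r = [g; (a_i)_{i\in I}, (b_j)_{j\in J}]$ with $g = (h_r, v_1, \ldots, v_t)$. Here the forgetful map $h \mapsto$ (drop $h$) sends the generator $a_i^\mu$ to $a_i'^\mu$, since both are labeled by the same local index $\mu$ of the edge at $v_i$ in $\mathcal{G}_0$. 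Thus a lift has the same index set $I$ of generator slots and the same binary string $(b_j)_{j\in J}$ as $\sigma'$. Since $\mathcal{F}(\sigma) = \bigotimes_{j\in J}\mathbb{F}_q^{m_j}$ depends only on $J$, the first two isomorphisms are immediate (indeed equalities of coordinate spaces). The same applies to $\tau'$ and $\tau_s$.

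Next I will check that the inclusion $\sigma_r \prec \tau_s$ lies over $\sigma' \prec \tau'$ with the same ``direction'' data. By the inclusion rules (a)--(c) of Section~\ref{sec:cubical}, $\sigma_r \prec \tau_s$ means there is a unique slot $j_0$ where $\tau_s$ carries a generator $a_{j_0}^{\mu}$ and $\sigma_r$ carries a bit $b \in\{0,1\}$. All other slots agree, and the base points satisfy $f_0(\sigma_r) = (a_{j_0}^{\mu})^{b}\cdot f_0(\tau_s)$. Projecting away $h$ preserves all of this, because the action \eqref{eq:generator_action} acts on the vertex coordinates exactly as in $X_1'$. So the projected data $(j_0, \mu, b)$ is precisely the data witnessing $\sigma' \prec \tau'$. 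Here I should verify that the witness in $X_1'$ is unique, so that this data is determined by the pair $(\sigma',\tau')$ alone. The slot $j_0$ is forced by comparing which slots are bits. The label $\mu$ is read off from $\tau'$. The bit $b$ is determined because $g'$ and $a^\mu_{j_0}\cdot g'$ differ in the $v_{j_0}$-coordinate, using that $\mathcal{G}_0$ has no loops.

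Finally I will compare local maps. By the recipe, for a codimension-one inclusion in slot $j_0$ with label $\mu$, the map $\mathcal{F}_{\sigma,\tau}$ is $h_{j_0}^T(a_{j_0}^\mu,\text{-})$ tensored with identities on the remaining $\mathbb{F}_q^{m_j}$, $j\in J\setminus\{j_0\}$. Crucially, as emphasized in Section~\ref{sec:DLV_code}, the same column of $h_{j_0}$ is used whether $b=0$ or $b=1$. This expression depends only on $(J, j_0, \mu)$, so it coincides for $(\sigma_r,\tau_s)$ and $(\sigma',\tau')$. For higher-codimension inclusions, both sides are compositions along chains of codimension-one inclusions. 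I will lift a chain in $X_1'$ to a chain in $X_1$ starting at $\sigma_r$ by applying the same generators, and use the presheaf condition together with the codimension-one case.

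The main obstacle is the well-definedness point: making sure the column label is unambiguous on both levels. On $X_1'$, with $\HH$ trivial, two different edges into the same vertex could carry the same label $\mu$ and produce repeated rows. I will invoke the standing assumption from Section~\ref{sec:DLV_code} that $\mathcal{G}_0$ is a Cayley/Schreier graph with consistent local orderings, which excludes this and makes $\mu$ a function of the inclusion itself.
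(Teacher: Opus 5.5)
Your proposal is correct and is essentially the paper's argument: the paper calls the lemma immediate because the recipe of Section~\ref{sec:DLV_code} defines $\mathcal{F}(\sigma)$ and $\mathcal{F}_{\sigma,\tau}$ purely from the signature data (the bit slots $J$ and the generator labels $\mu$), never from the group coordinate $h$, so every lift carries the same local spaces and the same matrices. Your extra checks (uniqueness of the witness, the no-loop argument for $b$, the Cayley/Schreier assumption) are harmless but go beyond what the lemma needs, since $b$ is read directly off $\sigma'$ and the local map does not depend on $b$ anyway.
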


The proof of Lemma~\ref{lemma:sheaf} is immediate by the definition of $\mathcal{F}$ in Section~\ref{sec:DLV_code}. The lifts of any $\sigma' \in X_1'(p)$ must be assigned with the same local coefficients. For each subblock in $\delta^{p}$ defined by a permutation matrix, each nonzero entry is replaced by $\mathcal{F}_{\sigma_r,\tau_s} \equiv \mathcal{F}_{\sigma',\tau'}$. This is a simple but important observation to prove the following results. Moreover, when the local codes are fixed, we will only use the notation $\mathcal{F}$ to denote the sheaves for all complexes from an inductive lifting sequence. 

\begin{lemma}\label{lemma:cochain_1}
	The correspondence between $\sigma'$ and $\{\sigma_r\}$ defines an embedding $E: \mathbb{F}_q \to \mathbb{F}_q^{\vert{\HH_1}\vert}$ and a projection (covering map) $P = E^T: \mathbb{F}_q^{\vert{\HH_1}\vert} \to \mathbb{F}_q$ represented by the matrix
	\begin{align}\label{eq:projector}
		P = \begin{pmatrix} 1 & 1 & \cdots & 1	\end{pmatrix}. 
	\end{align}
	They further induce the following cochain maps (see Definition~\ref{def:chain_map}):
	\begin{equation}
		\begin{tikzcd}[column sep=1.5cm, row sep=0.7cm]
			\cdots \arrow[r, "\delta'^{p-1}"] & 
			C^p(X_1',\F) \arrow[r, "\delta'^p"] \arrow[d,"E^p"] & 
			C^{p+1}(X_1',\F) \arrow[d, "E^{p+1}"] \arrow[r, "\delta'^{p+1}"] 
			& \cdots \\
			\cdots \arrow[r, "\delta^{p-1}"] & 
			C^p(X_1,\F) \arrow[r, "\delta^{p}"] & 
			C^{p+1}(X_1,\F) \arrow[r, "\delta^{p+1}"] & \cdots
		\end{tikzcd}
	\end{equation}
	and
	\begin{equation}
		\begin{tikzcd}[column sep=1.5cm, row sep=0.7cm]
			\cdots \arrow[r, "\delta'^{p-1}"] & 
			C^p(X_1',\F) \arrow[r, "\delta'^{p}"] & 
			C^{p+1}(X_1',\F) \arrow[r, "\delta'^{p+1}"] 
			& \cdots \\
			\cdots \arrow[r, "\delta^{p-1}"] & 
			C^p(X_1,\F) \arrow[r, "\delta^{p}"] \arrow[u,"P^{p}"] & 
			C^{p+1}(X_1,\F) \arrow[r, "\delta^{p+1}"] \arrow[u, "P^{p+1}"] & \cdots
		\end{tikzcd}
	\end{equation}
\end{lemma}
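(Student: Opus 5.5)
The plan is to write both maps explicitly and check the two commutation relations blockwise. Under the identification $C^p(X_1,\F) \cong C^p(X_1',\F)\otimes \mathbb{F}_q^{\vert\HH_1\vert}$, a cochain on $X_1$ is indexed by pairs $(\sigma',h)$ with $h\in\HH_1$ labelling the lift $\sigma_h$. Define $E^p = \mathrm{id}\otimes E$ by
\begin{align}
	(E^p x')(\sigma_h) = x'(\sigma'), \qquad h \in \HH_1,
\end{align}
so a local vector is copied onto every lift. Define $P^p = \mathrm{id}\otimes P$ by
\begin{align}
	(P^p x)(\sigma') = \sum_{h\in\HH_1} x(\sigma_h).
\end{align}
Both maps make sense because of Lemma~\ref{lemma:sheaf}: all lifts of $\sigma'$ carry the same space $\F(\sigma')$. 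In matrix form, $P^p$ is the transpose of $E^p$, which gives $P = E^T$ once we identify $C^p$ with its dual via the standard basis.

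The key structural input is the description of $\delta^p$ given just before Lemma~\ref{lemma:sheaf}. It is obtained from $\delta'^p$ by replacing each nonzero block $\F_{\sigma',\tau'}$ with $\F_{\sigma',\tau'}\otimes \Pi_{\tau',\sigma'}$, where $\Pi_{\tau',\sigma'}$ is the $\vert\HH_1\vert\times\vert\HH_1\vert$ permutation matrix of the group element describing which lift of $\sigma'$ sits in which lift of $\tau'$. This element is the identity when $\sigma'$ is the face at $b_{j_0}=0$, and equals $\gamma_{(v_{j_0},v_{j_0}')}$ when the face is at $b_{j_0}=1$, by rule (c) of the inclusion relation. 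The equality of the local maps for all lifts is exactly Lemma~\ref{lemma:sheaf}. Zero blocks of $\delta'^p$ stay zero blocks.

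With this block form, both commutation relations reduce to two facts about any permutation matrix $\Pi$ and the all-ones vector:
\begin{align}
	\Pi\, E = E, \qquad P\,\Pi = P,
\end{align}
that is, $\Pi\mathbf{1}=\mathbf{1}$ and $\mathbf{1}^T\Pi=\mathbf{1}^T$. Then blockwise
\begin{align}
	(\F_{\sigma',\tau'}\otimes\Pi)(\mathrm{id}\otimes E) = \F_{\sigma',\tau'}\otimes E = (\mathrm{id}\otimes E)\,\F_{\sigma',\tau'},
\end{align}
and summing over $\sigma'\prec\tau'$ gives $\delta^p E^p = E^{p+1}\delta'^p$. The identity $P^{p+1}\delta^p = \delta'^p P^p$ follows in the same way, or by transposing the previous one and using the transposed (chain-level) relation.

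In the intrinsic formula this is equally direct. For $(\delta^p E^p x')(\tau_s)$ we sum $\F_{\sigma_r,\tau_s}(x'(\sigma'))$ over $\sigma_r\prec\tau_s$. Each face $\sigma'\prec\tau'$ contributes exactly one lift $\sigma_r\prec\tau_s$, because the lift is a covering: the faces of $\tau_s$ are in bijection with the faces of $\tau'$.

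The main thing to verify carefully is this last bijection: that $\tau_s$ has exactly one lift of each face of $\tau'$, and hence that the substituted blocks really are permutation matrices rather than sums or partial matrices. I will check it from the definition of $p$-cubes as functions $f$ together with rules (a)–(c). Given $\tau_s$ and a face index $j_0$ with bit value $b$, the group coordinate $f_0'$ is uniquely determined as $f_0$ or $f_{j_0}\cdot f_0$, and $h\mapsto\gamma\cdot h$ is a bijection of $\HH_1$.

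For $p\ge 2$ one also needs that the face obtained does not depend on the order in which faces are taken. This is where the abelian assumption on $\HH_1$ enters, as discussed in Section~\ref{sec:cubical}, and it guarantees that the graded poset of $X_1$ covers that of $X_1'$. Once this is settled, the rest is the routine blockwise computation above.
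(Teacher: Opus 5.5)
Your proposal is correct and follows the paper's route. The paper defines $E^p$ blockwise by placing the identity on $\mathcal{F}(\sigma')$ at every lift (via Lemma~\ref{lemma:sheaf}), defines $P^p$ analogously, and simply asserts that commutativity is easy to check; your blockwise computation with $\Pi\mathbf{1}=\mathbf{1}$ and $\mathbf{1}^T\Pi=\mathbf{1}^T$, together with your check via rule (c) that each substituted block is a genuine permutation matrix, is exactly the verification the paper leaves implicit. For the $P$-square, rely on the direct blockwise argument: transposing the $E$-square only yields the chain-level relation $P^p\partial_{p+1}=\partial'_{p+1}P^{p+1}$.
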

\begin{proof}
	When the local coefficients are trivial, $E^p$ is simply defined by rewriting the identity matrix on $C^p(X_1',\mathbb{F}_q)$ as a block matrix with each subblock equal to $E$. With nontrivial local coefficients, by Lemma~\ref{lemma:sheaf}, each nonzero entry in $E^p$ shall be substituted by a small identity matrix acting on $\mathcal{F}(\sigma')$ for each $\sigma' \in X_1'(p)$. The map $P^\#$ is defined analogously and it is easy to check the commutativity of the diagrams. 
\end{proof}

\begin{remark}\label{remark:sparse}
	Obviously, $E^p$ is injective on each $C^p(X_1',\F)$. We prove in Lemma~\ref{lemma:codeword_1} and~\ref{lemma:codeword_2} that it will induce an injective map from $H^p(X_1',\F)$ to $H^p(X_1,\F)$ when the group $\HH_1$ has an odd number of elements. Apparently, it can never be surjective because, from an elementary point of view, both the columns and rows of $\delta^p, \delta^{p-1}$ are enlarged. Moreover, for large enough lifts, $C^\bullet(X_1,\F)$ could admit rather large cohomologies. On the other hand, $P^p$ is surjective and induces a surjective homomorphism on cohomologies under the same condition. 
	
	More importantly, both $E^p$ and $P^p$ are not sparse because they contain subblocks with $\vert{\HH_1}\vert$ nonzero elements in columns or rows, and as explained in Remark~\ref{remark:disconnect}, the lifts increase with the system size. Additionally, assume that we find an invariant polynomial $T'(x_{p_1}', \ldots,x_{p_\rho}')$ that represents a constant-depth multi-controlled-$Z$ circuit like Eq.~\eqref{eq:invariant_poly} in $C^\bullet(X_1',\F)$. By using the cochain map, $T'(P^{p_1} (x_{p_1} ), \ldots,P^{p_\rho} (x_{p_\rho}) )$ is also invariant in $C^\bullet(X_1,\F)$. However, since the cochain maps are not sparse, their composition with $T'$ cannot produce a constant-depth circuit as mentioned at the end of Section~\ref{sec:CSS}. 
\end{remark}

Lemma~\ref{lemma:cochain_1} is not sufficient to fulfill the task to build constant-depth circuits, but it motivates the following lemmas which are indispensable to demonstrate the results in Section~\ref{sec:induct_invariant}. 

\begin{lemma}\label{lemma:codeword_1}
	With trivial coefficients, let $x' \in \ker \delta'^{p}$ and let $x = E^p x' \in \ker \delta^{p}$. It is defined by rewriting each nonzero component at $\sigma'$ as an all-ones vector on the lifts $\sigma_r$. If $\HH_1$ has an odd number of elements, then $x' \in \ker \delta'^{p} \setminus \Ima \delta'^{p-1}$ indicates that $x \in \ker \delta^{p} \setminus \Ima \delta^{p-1}$. In this case, $P^\# \cdot E^\#$ are the identities on the cohomologies.
\end{lemma}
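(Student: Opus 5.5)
The plan is to use the two cochain maps of Lemma~\ref{lemma:cochain_1} and compute their composite $P^\# E^\#$ at the cochain level. With trivial coefficients, $E^p$ sends the basis vector $\sigma'$ to $\sum_r \sigma_r$, the all-ones vector on the $\vert\HH_1\vert$ lifts of $\sigma'$. The projection $P^p$ sends each lift $\sigma_r$ back to $\sigma'$.

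First, $E^p$ is a cochain map, so $x = E^p x'$ lies in $\ker\delta^p$ whenever $x'\in\ker\delta'^p$. This is just the commutativity $\delta^p E^p = E^{p+1}\delta'^p$ from Lemma~\ref{lemma:cochain_1}.

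The central computation is
\begin{align}
P^p E^p x' = \sum_{\sigma'} \Big(\sum_{r=1}^{\vert\HH_1\vert} 1\Big) x'(\sigma')\,\sigma' = \vert\HH_1\vert \cdot x'.
\end{align}
The field $\mathbb{F}_q$ has characteristic $2$. If $\vert\HH_1\vert$ is odd, then $\vert\HH_1\vert \cdot 1 = 1$ in $\mathbb{F}_q$, so $P^p E^p = \mathrm{id}$ on $C^p(X_1',\mathbb{F}_q)$ already at the cochain level.

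The injectivity on cohomology then follows by contraposition. Suppose $x = E^p x' = \delta^{p-1} y$ for some $y \in C^{p-1}(X_1,\mathbb{F}_q)$. Apply $P^p$ and use the second commuting diagram, $P^p\delta^{p-1} = \delta'^{p-1}P^{p-1}$. This gives
\begin{align}
x' = P^p E^p x' = P^p \delta^{p-1} y = \delta'^{p-1}(P^{p-1} y) \in \Ima \delta'^{p-1},
\end{align}
which contradicts $x'\notin \Ima\delta'^{p-1}$. Hence $x \in \ker\delta^p\setminus\Ima\delta^{p-1}$.

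Because $E^\#$ and $P^\#$ are cochain maps, they descend to maps on cohomology. The cochain identity $P^\#E^\# = \mathrm{id}$ then descends to the identity on $H^p(X_1',\mathbb{F}_q)$. This is the last claim of the statement.

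The only delicate point is to confirm that $P^p\circ E^p$ is literally $\vert\HH_1\vert$ times the identity. This requires each lift $\sigma_r$ to project only to its own base cube $\sigma'$, with no cross terms between different base cubes. That holds because the lifts of distinct cubes of $X_1'$ are disjoint sets, $X_1(p)$ being the disjoint union of the fibers $\{\sigma_r\}$. Equivalently, $P = E^T$ in \eqref{eq:projector} gives $PE = (1,\ldots,1)(1,\ldots,1)^T = \vert\HH_1\vert$ blockwise.

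Checking the commutativity of the $P$-diagram also needs care. It amounts to the statement that each permutation block $\gamma$ in $\delta^{p-1}$ has column sums all equal to one, so that $P\gamma = P$ and $\gamma E = E$. This is exactly the step I would verify explicitly.
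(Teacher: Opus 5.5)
Your proposal is correct and follows essentially the same route as the paper: assume $E^p x' = \delta^{p-1}y$, apply $P^p$, and combine $P^pE^p = \vert\HH_1\vert\cdot\mathrm{id} = \mathrm{id}$ (odd order, characteristic $2$) with the fact that $P^\#$ sends coboundaries to coboundaries, which forces $x' \in \Ima\delta'^{p-1}$. The paper argues the last step column by column: each lifted column $x_{i_s}$ of $\delta^{p-1}$ projects to the base column $x'_{i_s}$ of $\delta'^{p-1}$. This is exactly your identity $P^p\delta^{p-1} = \delta'^{p-1}P^{p-1}$ evaluated on basis vectors.
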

\begin{proof}
	By Lemma~\ref{lemma:cochain_1}, $x$ is always in $\ker \delta^{p}$. Suppose $x'$ is a cocycle but not a coboundary. Assume $x$ is a coboundary, i.e., $x = \sum_{i_s} x_{i_s} \in \Ima \delta^{p-1}$, where $i$ is the column index for $\delta'^{p-1}$ and $i_s$ further tells us after choosing some $i$, the columns of $\delta^{p-1}$ indexed by the group elements from $\HH_1$. For any $i_s$, let $x_{i_s}'$ be defined by mapping any nonzero component of $x_{i_s}$ back to the $i$-th cube in $X'$. Obviously, $x_{i_s}' = P^p x_{i_s}$. Since $\mathbb{F}_q$ is of characteristic $2$, when $\vert{\HH_1}\vert$ is odd, we also have $x' = P^p x$. Consequently,
	\begin{align}
		x' = P^p x = \sum_{i_s} P^p x_{i_s} = \sum_{i_s} x_{i_s}' \in \Ima \delta'^{p-1}
	\end{align}
	which contradicts the assumption. 
\end{proof}

When $\vert{\HH_1}\vert$ is even, $P^p E^p x' = 0$ and Lemma~\ref{lemma:codeword_1} is false. Explicitly, we also have the following counterexample of $[\![4,1,2]\!]$ code:
\begin{align}   
	H_Z = \begin{pmatrix} 1 & 1 & 1 & 1	\end{pmatrix}, \quad
	H_X = \begin{pmatrix} 1 & 0 & 1 & 0 \\ 0 & 1 & 0 & 1 \\ 1 & 1 & 1 & 1	\end{pmatrix}.
\end{align}
The third row in $H_X$ is used to make the rows linearly dependent. An $X$-codeword can be $x' = (1,0,0,1)$. Suppose we take a 2-lift with $\HH = S_2$:
\begin{align}
	H_Z = \left(\begin{array}{cc|cc|cc|cc}
		1 & 0 & 1 & 0 & 1 & 0 & 1 & 0 \\
		0 & 1 & 0 & 1 & 0 & 1 & 0 & 1 
	\end{array} \right), \quad
	H_X = \left(\begin{array}{cc|cc|cc|cc}
		1 & 0 & 0 & 0 & 1 & 0 & 0 & 0 \\
		0 & 1 & 0 & 0 & 0 & 1 & 0 & 0 \\
		\hline
		0 & 0 & 1 & 0 & 0 & 0 & 1 & 0 \\
		0 & 0 & 0 & 1 & 0 & 0 & 0 & 1 \\
		\hline
		1 & 0 & 0 & 1 & 0 & 1 & 1 & 0 \\
		0 & 1 & 1 & 0 & 1 & 0 & 0 & 1 
	\end{array} \right).
\end{align}
The extended vector $x = (1,1,0,0,0,0,1,1)$ can be spanned by the $1$-th, $3$-th and $6$-th rows from $H_X$. 

\begin{lemma}\label{lemma:codeword_2}
	Lemma~\ref{lemma:codeword_1} holds when the trivial coefficient is replaced by a sheaf $\mathcal{F}$ generated by local codes on the (co)chain complexes.
\end{lemma}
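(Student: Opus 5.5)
The plan is to repeat the argument of Lemma~\ref{lemma:codeword_1} verbatim, but with $E^\#$ and $P^\#$ now the block maps with local coefficients constructed in Lemma~\ref{lemma:cochain_1}. The key observation is Lemma~\ref{lemma:sheaf}: every lift $\sigma_r$ of $\sigma'$ carries the same local space $\F(\sigma_r)\cong\F(\sigma')$. Every local map $\F_{\sigma_r,\tau_s}$ coincides with $\F_{\sigma',\tau'}$ as a matrix. Hence $\delta^p$ on $C^p(X_1,\F)$ is obtained from $\delta'^p$ by replacing each block $\F_{\sigma',\tau'}$ by $\gamma\otimes \F_{\sigma',\tau'}$, with $\gamma$ a $\vert\HH_1\vert\times\vert\HH_1\vert$ permutation matrix. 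Likewise $E^p=\bigoplus_{\sigma'} E\otimes I_{\F(\sigma')}$ and $P^p=\bigoplus_{\sigma'} P\otimes I_{\F(\sigma')}$.

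First I will check the two cochain-map identities $\delta^pE^p=E^{p+1}\delta'^p$ and $P^{p+1}\delta^p=\delta'^pP^p$ at the block level. This reduces to $\gamma E=E$ and $P\gamma=P$ for any permutation matrix $\gamma$, tensored with $\F_{\sigma',\tau'}$; this is where Lemma~\ref{lemma:sheaf} is used. Consequently $E^p$ sends cocycles to cocycles and coboundaries to coboundaries, and similarly for $P^p$, so both induce maps on cohomology.

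Next I compute $P^pE^p=\bigoplus_{\sigma'}(PE)\otimes I_{\F(\sigma')}=\vert\HH_1\vert\cdot I$. In characteristic $2$ with $\vert\HH_1\vert$ odd this is the identity on $C^p(X_1',\F)$, hence also on $H^p(X_1',\F)$. Injectivity then follows as before. Suppose $x'\in\ker\delta'^p\setminus\Ima\delta'^{p-1}$ but $E^px'=\delta^{p-1}y$. Then
\begin{align}
x'=P^pE^px'=P^p\delta^{p-1}y=\delta'^{p-1}P^{p-1}y\in\Ima\delta'^{p-1},
\end{align}
which is a contradiction. This cleaner formulation replaces the column-by-column bookkeeping $x_{i_s}'=P^px_{i_s}$ used in the scalar proof, which with local coefficients would have to track local vectors inside each $\F(\sigma')$.

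The main obstacle I expect is verifying that the local ordering of edges is lift-invariant. The point is that the block at $(\tau_s,\sigma_r)$ really is $\F_{\sigma',\tau'}$ independent of $r,s$, in particular for inclusions of type $[a_j\cdot g;\ldots,1,\ldots]\prec[g;\ldots,a_j,\ldots]$, where the label $a_j$ rather than $a_j^{-1}$ is used. This is exactly the caveat in Section~\ref{sec:DLV_code}: with $\mathcal{G}_0$ a Cayley or Schreier graph the labels are well defined and unchanged by inserting $h\in\HH_1$. Granting this, the permutation structure factors out as a tensor product and the argument goes through.
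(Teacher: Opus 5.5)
Your proposal is correct and follows essentially the paper's route. Both arguments take the canonical extension $x=E^px'$, use Lemma~\ref{lemma:sheaf} to see that the lifted blocks are permutation matrices tensored with $\F_{\sigma',\tau'}$, and use $P^pE^p=\vert\HH_1\vert\cdot I=I$ to push a putative coboundary $E^px'$ down to a coboundary $x'$; your identity $P^p\delta^{p-1}=\delta'^{p-1}P^{p-1}$ is just the compact form of the paper's column-wise step $P^px_{i_s}=x'_{i_s}$.

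The chain-complex half of the statement, which the paper handles by transposing the permutation blocks and the $\F_{\sigma',\tau'}$, is not addressed in your proposal. It follows by transposing your two cochain-map identities together with $P^pE^p=I$.
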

\begin{proof}
	The proof follows by Lemma~\ref{lemma:sheaf}. Let $x' \in \ker \delta'^{p}$, we define $x$ by extending any $x'(\sigma') \in \mathcal{F}_{\sigma'}$ with $\vert{\HH_1}\vert$ times, i.e., $x = E^p x'$. When $\vert{\HH_1}\vert$ is odd, we still have $P^p x = x'$ and the proof follows. For the chain complexes, the proof follows immediately by noting that the transpose of a permutation matrix is still a permutation matrix. We also need to transpose $\mathcal{F}_{\sigma',\tau'}$, but they do not influence the argument.
\end{proof}

\begin{definition}
	For convenience, $x = E^p x'$ will be referred to as the \emph{canonical extension} of $x'$, relative to other elements of the preimage $(P^p)^{-1}(x')$. 
\end{definition}


We now compare $X_1'$ and $X_2'$. Similarly, $X_2'$ is a covering space of $X_1'$ but in a more simple sense: by definition, $C^\bullet(X_1',\F)$ is an HGP, and $C^\bullet(X_2',\F)$ is the HGP of the double cover of some $\HH_1'$-lift on $\mathcal{G}_0$. When it comes to notations:
\begin{itemize}
	\item For $X_1'$, we have $[g'; (a_i'), (b_j)]$ with $g' = (v_1, \ldots,v_t)$ as before.
	
	\item For $X_2'$, we use the notation $[g''; (a_i''), (b_j)]$ with $g'' = ( (h_{r_1}, v_1), \ldots,( h_{r_t}, v_t) )$ and $r_k \in [ \vert \HH_1' \vert ]$.
\end{itemize}
Again, let $\delta'^{p}: X_1'(p) \to X_1'(p+1)$ be the coboundary operator and suppose $\delta'^{p}(\tau',\sigma') = 1$. We now associate each $\sigma'$ wit $\vert \HH_1' \vert^t$ cubes $\sigma_{\boldsymbol{r}}'' \in X_2'(p)$. As before, $\sigma' \prec \tau'$ indicates that $\sigma_{\boldsymbol{r}}'' \prec \tau_{\boldsymbol{s}}''$ for some multi-indices $\boldsymbol{r},\boldsymbol{s}$. The coboundary operator $\delta''^{(p)}$ for $X_2'$ is defined by substituting a $\vert \HH_1' \vert^t \times \vert \HH_1' \vert^t$ permutation matrix into each nonzero scalar entry of $\delta'^{p}$. Each permutation matrix also has its own substructure: it is of the form
\begin{align}
	I_{\vert \HH_1' \vert} \otimes \cdots \otimes I_{\vert \HH_1' \vert} \otimes P_j \otimes I_{\vert \HH_1' \vert} \otimes \cdots \otimes I_{\vert \HH_1' \vert},
\end{align}
where, depending on $b_j = 0$ or $1$, $P_j$ is either the identity matrix or some permutation matrix.

Obviously, we have results that are similar to Lemma~\ref{lemma:cochain_1}, Lemma~\ref{lemma:codeword_1} and Lemma~\ref{lemma:codeword_2}.

\begin{lemma}\label{lemma:cochain_2}
	The correspondence between $\sigma'$ and $\{\sigma_{\boldsymbol{r}}\}$ defines an embedding $E': \mathbb{F}_q \to \mathbb{F}_q^{\vert \HH_1' \vert^t}$ and a projection (covering map) $P' = E'^T: \mathbb{F}_q^{\vert \HH_1' \vert^t} \to \mathbb{F}_q$ such as Eq.~\eqref{eq:projector}, but now with $\vert \HH_1' \vert^t$ entries. They further induce the following cochain maps:
	\begin{equation}
		\begin{tikzcd}[column sep=1.5cm, row sep=0.7cm]
			\cdots \arrow[r, "\delta'^{p-1}"] & 
			C^p(X_1',\F) \arrow[r, "\delta'^{p}"] \arrow[d,"E'^{p}"] & 
			C^{p+1}(X_1',\F) \arrow[d, "E'^{p+1}"] \arrow[r, "\delta'^{p+1}"] 
			& \cdots \\
			\cdots \arrow[r, "\delta''^{p-1}"] & 
			C^p(X_2',\F) \arrow[r, "\delta''^{p}"] & 
			C^{p+1}(X_2',\F) \arrow[r, "\delta''^{p+1}"] & \cdots
		\end{tikzcd}
	\end{equation}
	and
	\begin{equation}
		\begin{tikzcd}[column sep=1.5cm, row sep=0.7cm]
			\cdots \arrow[r, "\delta'^{p-1}"] & 
			C^p(X_1',\F) \arrow[r, "\delta'^{p}"] & 
			C^{p+1}(X_1',\F) \arrow[r, "\delta'^{p+1}"] 
			& \cdots \\
			\cdots \arrow[r, "\delta''^{p-1}"] & 
			C^p(X_2',\F) \arrow[r, "\delta''^{p}"] \arrow[u,"P'^{p}"] & 
			C^{p+1}(X_2',\F) \arrow[r, "\delta''^{p+1}"] \arrow[u, "P'^{p+1}"] & \cdots
		\end{tikzcd}
	\end{equation}
\end{lemma}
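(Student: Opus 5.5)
The plan is to follow the proof of Lemma~\ref{lemma:cochain_1}, adapted to the product structure of the lift from $X_1'$ to $X_2'$. First I make the correspondence $\sigma' \mapsto \{\sigma''_{\boldsymbol{r}}\}$ explicit. A $p$-cube $\sigma' = [g'; (a_i')_{i\in S}, (b_j)_{j\in S^c}]$ with $g' = (v_1,\ldots,v_t)$ has the $\vert \HH_1'\vert^t$ lifts $[g''; (a_i'')_{i\in S}, (b_j)_{j \in S^c}]$, where $g'' = ((h_{r_1},v_1),\ldots,(h_{r_t},v_t))$ and the labels $a_i''$ match $a_i'$ through the common local ordering. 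I then set $E'$ to be the all-ones column of length $\vert\HH_1'\vert^t$ and $P' = E'^T$, as in Eq.~\eqref{eq:projector}. The cochain maps $E'^p$ and $P'^p$ are the block matrices with one block for each pair $(\sigma', \sigma')$. Each nonzero entry of such a block is replaced by the identity on $\F(\sigma')$; this is legitimate because, by the analogue of Lemma~\ref{lemma:sheaf} for $X_2'$, every lift carries the same local coefficient space $\bigotimes_{j\in S^c}\mathbb{F}_q^{m_j}$ and the same restriction maps. That analogue follows directly from the definition in Section~\ref{sec:DLV_code}, since $\F_{\sigma,\tau}$ depends only on the labels $a_i$ and the bits $b_j$, never on $h_{r_k}$.

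The key step is commutativity, $\delta''^p E'^p = E'^{p+1}\delta'^p$ and $P'^{p+1}\delta''^p = \delta'^p P'^p$. As described just before the lemma, $\delta''^p$ is obtained from $\delta'^p$ by replacing each nonzero entry $\F_{\sigma',\tau'}$ with $\Pi_{\sigma',\tau'}\otimes \F_{\sigma',\tau'}$. Here $\Pi_{\sigma',\tau'} = I\otimes\cdots\otimes P_j\otimes\cdots\otimes I$ is a $\vert\HH_1'\vert^t\times\vert\HH_1'\vert^t$ permutation matrix. So I only need two facts about any permutation matrix $\Pi$: $\Pi E' = E'$, because every row contains exactly one $1$, and $P'\Pi = P'$, because every column contains exactly one $1$. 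Block multiplication then gives
\begin{align}
(\delta''^p E'^p)(\tau',\sigma') = (\Pi_{\sigma',\tau'}E')\otimes \F_{\sigma',\tau'} = E'\otimes\F_{\sigma',\tau'} = (E'^{p+1}\delta'^p)(\tau',\sigma'),
\end{align}
and the identity for $P'$ follows in the same way, or by transposing the corresponding statement for boundary maps.

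The main obstacle is bookkeeping rather than any real difficulty. I have to check that the incidence $\sigma''_{\boldsymbol{r}}\prec\tau''_{\boldsymbol{s}}$ is really given by a single permutation of the multi-index $\boldsymbol{r}$; that is, for each $\boldsymbol{r}$ there is exactly one $\boldsymbol{s}$ with $\sigma''_{\boldsymbol r}\prec \tau''_{\boldsymbol s}$, and conversely. I would verify this with the inclusion rules (a)--(c) of Section~\ref{sec:cubical}. If $\tau'$ is obtained by replacing the bit $b_{j_0}$ of $\sigma'$ by $a_{j_0}'$, then either $b_{j_0}=0$, so $g''$ is unchanged and $\Pi$ is the identity, or $b_{j_0}=1$, so only the $j_0$-th factor $(h_{r_{j_0}},v_{j_0})$ moves, by the bijection of $\HH_1'\times V(\mathcal{G}_0)$ induced by the lifted edge. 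Rule (c) is what makes this bijective. Because the lift acts on each coordinate separately, no commutativity of $\HH_1'$ is needed, unlike the situation in Remark~\ref{remark:non-Abelian}.
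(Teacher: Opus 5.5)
Your proposal is correct and is essentially the paper's argument: the paper gives no separate proof and refers back to Lemma~\ref{lemma:cochain_1}. There it defines $E'^p$ and $P'^p$ blockwise, with identities on $\F(\sigma')$ justified by the analogue of Lemma~\ref{lemma:sheaf}, and says commutativity is easy to check; your observation that each incidence block of $\delta''^p$ is $\Pi\otimes\F_{\sigma',\tau'}$ with $\Pi$ a permutation, so that $\Pi E' = E'$ and $P'\Pi = P'$, is exactly the check the paper leaves implicit.
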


If $\HH_1'$ has a constant number of elements, then $P'^\#$ is sparse and could possibly induce a constant-depth circuit (cf. Remark~\ref{remark:sparse}). However, without exponential lifting, the ordinary HGP cannot produce good code parameters (see \eqref{eq:HGP_parameters}).

\begin{lemma}\label{lemma:codeword_3}
	Let $x' \in \ker \delta'^{p}$ with the canonical extension $x'' := E'^p x' \in \ker \delta''^{p}$. If $\HH_1'$ has an odd number of elements, $x' \in \ker \delta'^{p} \setminus \Ima \delta'^{p-1}$ indicates that $x'' \in \ker \delta''^{p} \setminus \Ima \delta''^{p-1}$. Analogous results hold for the boundary operators.
\end{lemma}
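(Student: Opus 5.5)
The plan is to reproduce the argument of Lemma~\ref{lemma:codeword_1} and Lemma~\ref{lemma:codeword_2}, with the cochain maps $E'^\#$ and $P'^\#$ of Lemma~\ref{lemma:cochain_2} in place of $E^\#$ and $P^\#$. First, $x'' = E'^p x'$ is a cocycle because $E'^\#$ is a cochain map: $\delta''^p E'^p x' = E'^{p+1}\delta'^p x' = 0$. By Lemma~\ref{lemma:sheaf} (which applies verbatim here, since $X_2'$ is built from the same local codes and all lifts of a cube $\sigma'$ carry the same local coefficient space and local maps), $E'^p$ copies the local vector $x'(\sigma')\in\F(\sigma')$ onto each of the $\vert \HH_1'\vert^t$ lifts $\sigma''_{\boldsymbol r}$. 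Likewise, $P'^p$ sums the local vectors over the fiber.

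The key identity is $P'^p E'^p = \vert \HH_1'\vert^t \cdot \mathrm{id}$ on $C^p(X_1',\F)$. Since $\mathbb{F}_q$ has characteristic $2$ and $\vert\HH_1'\vert$ is odd, $\vert\HH_1'\vert^t$ is odd and hence equals $1$ in $\mathbb{F}_q$. So $P'^pE'^p=\mathrm{id}$.

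Now suppose $x'' = \delta''^{p-1} y''$ for some $y''\in C^{p-1}(X_2',\F)$. Applying $P'^p$ and using the second commutative diagram of Lemma~\ref{lemma:cochain_2} gives
\begin{align}
x' = P'^pE'^p x' = P'^p\delta''^{p-1}y'' = \delta'^{p-1}P'^{p-1}y'' \in \Ima\delta'^{p-1},
\end{align}
which contradicts the assumption on $x'$. This also shows that $P'^\#E'^\#$ is the identity on $H^p(X_1',\F)$, so $E'^\#$ is injective on cohomology.

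The one point I would verify carefully is that $P'^\#$ really is a cochain map, i.e.\ that $P'^{p+1}\delta''^p = \delta'^p P'^p$. Each block of $\delta''^p$ is a permutation matrix of the form $I\otimes\cdots\otimes P_j\otimes\cdots\otimes I$ tensored with the fixed local map $\F_{\sigma',\tau'}$. Summing over a fiber is invariant under permutations, so the all-ones row vector satisfies $P'\,\Pi = P'$ for every permutation matrix $\Pi$, and the block-sum reduces to $\F_{\sigma',\tau'}$. This is the same check as in Lemma~\ref{lemma:cochain_1}, and I expect it to be routine.

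For the boundary operators, transpose everything. The boundary maps $\partial''$ are obtained from $\partial'$ by substituting transposed permutation matrices, which are again permutation matrices, tensored with $\F_{\sigma',\tau'}^T$. The pair $(E'_\#,P'_\#)=(P'^{\#T},E'^{\#T})$ then gives chain maps with $P'_\#E'_\#=\vert\HH_1'\vert^t\,\mathrm{id}=\mathrm{id}$. The same contradiction argument shows that a nontrivial cycle $z'$ lifts to the nontrivial cycle $E'_\# z'$.
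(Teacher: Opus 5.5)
Your proposal is correct and is essentially the paper's argument. The paper gives no separate proof of this lemma and defers to the proofs of Lemma~\ref{lemma:codeword_1} and Lemma~\ref{lemma:codeword_2}: project a putative coboundary with $P'^\#$, use $P'^pE'^p=\vert\HH_1'\vert^t\,\mathrm{id}=\mathrm{id}$ in characteristic $2$, and transpose for the boundary case, since permutation matrices remain permutation matrices. Your explicit check that $P'^\#$ commutes with the coboundaries via $P'\Pi=P'$ fills in a step the paper leaves implicit.
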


Assembling these results together and applying them to an inductive lifting sequence $\{ C^\bullet(X_i',\mathcal{F})$, $C^\bullet(X_i,\mathcal{F}) \}$, we can push forward any codeword from the initial code to all subsequent cochain complexes. 


\subsection{Preservation of (co)homological invariants}\label{sec:induct_invariant}

We are going to prove that if there are nontrivial cup and cap products on the initial complex, then they will be preserved throughout the inductive lifting sequence. In particular, any cup product gates will be inherited. Given any cube $\sigma \in X$, let $X_{\geq \sigma}(p)$ be the collection of all $p$-cubes in $X$ that contain $\sigma$. Then the following useful lemma holds for cubical complexes defined by lifts and products of graphs.

\begin{lemma}\label{lemma:smallest_cube}
	Let $\sigma, \tau \in X$ such that $X_{\geq \sigma}(t) \cap X_{\geq \tau}(t) \neq \emptyset$, then there is a unique smallest cube $\omega$ containing both of them.
\end{lemma}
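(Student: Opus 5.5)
Fix a $t$-cube $\pi = [g; a_1, \ldots, a_t]$ in $X_{\geq \sigma}(t) \cap X_{\geq \tau}(t)$. The plan is to work entirely inside $\pi$ first and then show that the answer does not depend on $\pi$.

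\textbf{Step 1: faces of $\pi$ as partial assignments.} The faces of $\pi$ correspond bijectively to pairs $(S, (b_j)_{j \in S^c})$ with $S \subset [t]$ and $b_j \in \{0,1\}$. The face attached to such a pair is
\begin{align}
[\, (\textstyle\prod_{j \in S^c} a_j^{b_j}) \cdot g;\ (a_i)_{i \in S},\ (b_j)_{j \in S^c} \,].
\end{align}
This follows by iterating rules (a)--(c) of the inclusion relation. The ordering of the product is irrelevant because $\HH$ is abelian, so each face is well defined. Under this bijection, containment of faces of $\pi$ is the usual order on partial assignments: $\sigma \preceq \sigma'$ exactly when $S_\sigma \subset S_{\sigma'}$ and the two bit strings agree on $S_{\sigma'}^c$.

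\textbf{Step 2: the candidate $\omega$.} Write $\sigma \leftrightarrow (S_\sigma, b^\sigma)$ and $\tau \leftrightarrow (S_\tau, b^\tau)$. Let $D$ be the set of indices $j \in S_\sigma^c \cap S_\tau^c$ with $b^\sigma_j \neq b^\tau_j$. Set
\begin{align}
S_\omega = S_\sigma \cup S_\tau \cup D,
\end{align}
and on $S_\omega^c$ let $b^\omega$ be the common value of $b^\sigma$ and $b^\tau$. Then $\omega$ is the join of $\sigma$ and $\tau$ in the face lattice of $\pi$, which is the Boolean cube poset $\{0,1,*\}^t$. So among faces of $\pi$, $\omega$ is the unique smallest face containing both $\sigma$ and $\tau$.

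\textbf{Step 3: independence of $\pi$ (the main obstacle).} I must show that any cube $\omega'$ anywhere in $X$ containing both $\sigma$ and $\tau$ also contains $\omega$. Every cube sits inside some $t$-cube, so we may take $\omega' \preceq \pi'$ for some $t$-cube $\pi'$. It then suffices to show that the join computed inside $\pi'$ is literally the same cube as the join computed inside $\pi$.

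To do this I will read the data of $\omega$ off from $\sigma$ and $\tau$ intrinsically.
\begin{itemize}
\item The generator coordinates $a_i$ for $i \in S_\sigma \cup S_\tau$ are written explicitly in the cube symbols $\sigma$ and $\tau$.
\item The bits on $S_\omega^c$ are likewise written in those symbols.
\item The remaining generators, indexed by $D$, are forced by the base points. The base point of $\tau$ equals the base point of $\sigma$ multiplied by $\prod_{j \in D} a_j^{\pm 1}$, together with steps along coordinates in $S_\sigma \cup S_\tau$ whose generators are already known. This is a product over distinct coordinate directions $j$. The $v_j$-coordinate of $g$ changes only under $a_j$, and that change is determined by the edge with the given local label. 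Hence each $a_j$, $j \in D$, is recovered as the label of the edge joining the $v_j$-coordinates of the two base points.
\end{itemize}

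I expect the delicate point to be the edge-labelling requirement of Section~\ref{sec:DLV_code}. Two cubes could in principle differ by distinct labels that lead to the same vertex, because of multi-edges or lifts; this is where the Cayley/Schreier hypothesis on $\mathcal{G}_0$ is needed. Granting that the labels are determined, the base point of $\omega$ is forced: it is $\prod_{j \in S_\omega^c} a_j^{b_j} \cdot g$ for the common bits, and it is computable from $\sigma$ alone. Therefore $\omega$ coincides with the join inside $\pi'$, and so $\omega \preceq \omega'$.

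Uniqueness follows directly: any two smallest cubes containing $\sigma$ and $\tau$ would each contain the other, so they are equal.
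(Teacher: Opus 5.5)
Your proposal is correct and follows essentially the paper's route: fix a $t$-cube $\pi$ containing both, build $\omega$ inside it (generators on $I_\sigma\cup I_\tau$, common bits, $\pi$'s generators where the bits differ), then show coordinatewise that the result does not depend on $\pi$, recovering the generators in the differing-bit directions as the unique edge label joining $v_i$ and $u_i$ (exactly the paper's case (4)). Your explicit minimality step, placing an arbitrary cube $\omega'$ containing $\sigma$ and $\tau$ inside some $t$-cube $\pi'$ and comparing joins, makes precise what the paper leaves implicit; note that, as in the paper, label uniqueness needs a simple base graph (it holds for Cayley graphs, but a Schreier graph with two differently labelled edges between the same pair of vertices yields two incomparable minimal cubes).
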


As a straightforward example, we consider
\begin{align}
	[g; a_1,0,0,1], \ [a_2 a_3 \cdot g; a_1,1,1,1] \prec [g; a_1,a_2',a_3',1].  
\end{align}
The only requirement is that $a_2' a_3' \cdot g = a_2 a_3 \cdot g$. Let $g = (h,v_1,v_2,v_3,v_4)$. With respect to the product structure, we must have $a_2' = a_2$ and $a_3' = a_3$.

\begin{proof}[Proof of Lemma~\ref{lemma:smallest_cube}]
	Let $J_\sigma = [t] \setminus I_\sigma$ and $J_\tau = [t] \setminus I_\tau$, then we set
	\begin{align}
		\sigma = [g_\sigma; (a_{i_\sigma})_{I_\sigma}, (b_{j_\sigma})_{J_\sigma}];  \ g_\sigma = (h_\sigma, v_1, \ldots,v_t), \\
		\tau = [g_\tau; (a_{i_\tau})_{{I_\tau}}, (b_{j_\tau})_{{J_\tau}} ]; \ g_\tau = (h_\tau, u_1, \ldots,u_t). 
	\end{align}
	Let $\pi = [g_\pi; (a_i)_{[t]}]$ be any $t$-cube from $X_{\geq \sigma}(t) \cap X_{\geq \tau}(t)$. Then we must have
	\begin{align}
		(a_{i_\sigma})_{ I_\sigma} = (a_i) \vert_{I_\sigma}, \quad  (a_{i_\tau})_{I_\tau} =  (a_i) \vert_{ I_\tau }
	\end{align}
	Particularly, in the intersection,
	\begin{align}
		(a_{i_\sigma}) \vert_{ I_\sigma \cap I_\tau } = (a_i) \vert_{ I_\sigma \cap I_\tau } = (a_{i_\tau}) \vert_{ I_\sigma \cap I_\tau }.
	\end{align}
	As a result, there is no ambiguity to define a cube $\omega$ with $(a_i)$ for $i \in I_\sigma \cup I_\tau$. For $j \in [t] \setminus ( I_\sigma \cup I_\tau )$, if the $j$-th binary bits of $\sigma$ and $\tau$ are identical, we assign the bit to $\omega$. Otherwise, we assign $a_j$ from $\pi$ to $\omega$. Finally, $g_\omega$ is defined by $g_\pi$ possibly acted on by some $a_i$ depending on the binary string of $\omega$. 
	
	It seems that the choices of $g_\omega$ and part of $(a_{i_\omega})$ still depend on the choice of $\pi$. The product structure of $X$ will totally fix the choice. By the fact that $\sigma, \tau$ can be embedded into one $t$-cube, for each $i \in [t]$, only one of the following situations occurs:
	\begin{enumerate}
		\item Suppose the $i$-th components of both $\sigma$ and $\tau$ are actions, then they must be identical. for simplicity, we say $\sigma_i = \tau_i = a_i$. This also implies that $v_i = u_i$. 
		
		\item Suppose $\sigma_i = a_i$ but $\tau_i = 0$. Then $v_i = u_i$. If $\tau_i = 1$, then $a_i \cdot v_i = u_i$.
		
		\item Suppose $\sigma_i$ is a bit while $\tau_i = a_i$. This is the converse to the above case.
		
		\item Suppose both $\sigma_i$ and $\tau_i$ are bits. If they are identical, then $v_i = u_i$. If not, there must be a unique $a_i$ such that either $a_i \cdot v_i = u_i$ or $a_i \cdot u_i = v_i$, depending on bits.
	\end{enumerate}  
	We define $g_\omega$ with all $v_i$ when $\sigma_i = a_i$ or $0$. If $\sigma_i = 1$, then we take $u_i$. This includes all possible cases from the above. To find the group element $h_\omega$ from $\HH$, let 
	\begin{align}
		j_1 \in J_1 \subset J_\sigma, \quad
		j_2 \in J_2 \subset J_\tau
	\end{align}
	be the indices for which $\sigma_{j_1} = 1$ and $\tau_{j_2} = 1$. Then $J_{\sigma,\tau} := J_1 \cap J_2$ encompasses all indices that $\sigma_j = \tau_j = 1$. The $h_\omega$ is the unique group element such that 
	\begin{align}
		\prod_{j \in J_1 \setminus J_{\sigma,\tau}} a_j \cdot h_{\omega} = h_\sigma.
	\end{align}
	Here $a_j$ is uniquely defined according to the last case from above. Moreover,
	\begin{align}
		\prod_{j \in J_2 \setminus J_{\sigma,\tau}} a_j \cdot h_{\omega} = h_\tau
	\end{align}
	must hold automatically by the assumption $X_{\geq \sigma}(t) \cap X_{\geq \tau}(t) \neq \emptyset$.
\end{proof}

We have another simple but important result when calculating cup products. 

\begin{lemma}\label{lemma:cup_induct_1}
	Given $\sigma \in X(p_1)$ and $\tau \in X(p_2)$ from a $t$-dimensional cubical complex $X$, let $\omega$ be the smallest cube containing both $\sigma$ and $\tau$ in Lemma~\ref{lemma:smallest_cube}. If $\omega$ does not exist, then $\sigma \smile_c \tau = 0$. Otherwise, $\sigma \smile_c \tau$ is either zero or $\omega$, depending on the subdivisions.  
\end{lemma}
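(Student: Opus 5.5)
The plan is to unwind the definition \eqref{eq:cup_cell}: $\sigma \smile_c \tau = S^{p_1+p_2}\big( (A^{p_1}\sigma) \smile (A^{p_2}\tau)\big)$, and to track where each intermediate term can be supported. I will first work without local coefficients, since by \eqref{eq:cup_local_vector} the local vectors only contribute the factor $(\F_{\sigma,\omega}x_1(\sigma))\otimes(\mathcal{G}_{\tau,\omega}x_2(\tau))$ once the combinatorial support is known.

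\textbf{Support of $A^{p_1}\sigma$ and $A^{p_2}\tau$.} By \eqref{eq:A_ast_F}, $A^{p_1}\sigma$ is a sum of simplices $\tilde{\sigma}$ with $\sigma \in A(\tilde\sigma)$. By the defining property of the cellular approximation, this forces $\sigma \subset \tau_{\tilde\sigma}$, so every simplex in $A^{p_1}\sigma$ has minimal cube containing $\sigma$; the same holds for $\tau$. A nonzero term of the simplicial cup product (Definition~\ref{def:cup}) requires a $(p_1+p_2)$-simplex $\tilde\omega$ whose front face is $\tilde\sigma$ and whose back face is $\tilde\tau$. Since $\tilde\omega$ lies in its minimal cube $\tau_{\tilde\omega}$, we get $\tau_{\tilde\sigma},\tau_{\tilde\tau}\subset \tau_{\tilde\omega}$. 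Hence $\sigma,\tau \subset \tau_{\tilde\omega}$. Any cube of $X$ is a face of some $t$-cube, so $X_{\geq\sigma}(t)\cap X_{\geq\tau}(t)\neq\emptyset$. Contrapositively, if the common cube $\omega$ of Lemma~\ref{lemma:smallest_cube} does not exist, there is no such $\tilde\omega$ and $\sigma\smile_c\tau=0$.

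\textbf{Case where $\omega$ exists.} If $\omega$ exists, Lemma~\ref{lemma:smallest_cube} gives $\omega\subset\tau_{\tilde\omega}$ for every contributing $\tilde\omega$. Next I apply $S^{p_1+p_2}$. By \eqref{eq:S_ast_F}, a $(p_1+p_2)$-simplex $\tilde\omega$ maps to the $(p_1+p_2)$-cubes whose subdivision contains it. Since each simplex lies in the subdivision of a unique smallest cube, this is the single cube $\tau_{\tilde\omega}$ when $\dim\tau_{\tilde\omega}=p_1+p_2$, and zero otherwise.

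So any surviving term is a $(p_1+p_2)$-cube containing both $\sigma$ and $\tau$. I then need that such a cube is exactly $\omega$. The containment $\omega\subset\tau_{\tilde\omega}$ gives $\dim\omega\le p_1+p_2$. If $\dim\omega<p_1+p_2$, I must argue the contribution vanishes. The idea is that $\tilde\omega$ is spanned by the front vertices of $\tilde\sigma\subset$ some cube containing $\sigma$, glued to the back of $\tilde\tau$. The shared vertex and the ordering of the subdivision (paths of edges from the minimal vertex, as in Examples~\ref{example:2D} and \ref{example:3D}) force the directions used by $\tilde\sigma$ and $\tilde\tau$ to be the active directions of $\omega$. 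This pins $\tau_{\tilde\omega}=\omega$ whenever the term is nonzero.

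\textbf{Coefficient in $\mathbb{F}_q$.} Finally, collecting all contributing $\tilde\omega\subset\omega$, the total coefficient on $\omega$ is some element of $\mathbb{F}_q$. With the consistent stretch/collapse rule (only $\tilde\pi_{\mathrm{id}}$-type simplices are stretched), I expect it to be $0$ or $1$, which gives ``either zero or $\omega$''.

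\textbf{Main obstacle.} The hard step is showing that a surviving simplex cannot live in a cube strictly larger than $\omega$, i.e.\ that $\tau_{\tilde\omega}=\omega$ rather than merely $\omega\subseteq\tau_{\tilde\omega}$. I would attack it by using the product structure of $X$ coordinatewise, in the spirit of the case analysis in the proof of Lemma~\ref{lemma:smallest_cube}. Every vertex of $\tilde\omega$ is obtained from the front vertex of $\tilde\sigma$ by applying generators. A direction $j$ that is a fixed bit in both $\sigma$ and $\tau$ with equal value is never traversed by the path through $\tilde\sigma$ then $\tilde\tau$, because $A$ collapses simplices onto faces of the cubes containing $\sigma$ and $\tau$. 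Hence $\tilde\omega$ has constant $j$-th coordinate, contradicting $\tau_{\tilde\omega}$ having $a_j$ there.
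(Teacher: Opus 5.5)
Your route is the paper's: unwind $\sigma\smile_c\tau=S^{p_1+p_2}\big((A^{p_1}\sigma)\smile(A^{p_2}\tau)\big)$, observe that $S^{p_1+p_2}$ kills $\tilde\omega$ unless $\dim\tau_{\tilde\omega}=p_1+p_2$, and then identify the surviving cube with $\omega$. The non-existence case and the $S$-step are fine.

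The gap is the step you flag yourself. Your justification is that $A$ collapses onto faces of cubes containing $\sigma$ and $\tau$, so a direction fixed in both is never traversed. This is false as stated. The condition $\sigma\in A(\tilde\sigma)$ only gives $\sigma\subseteq\tau_{\tilde\sigma}$, and $\tau_{\tilde\sigma}$ can be strictly larger than $\sigma$. In the 2D example of Section~\ref{sec:example}, $A^1[g;a_1,0]$ contains the diagonal $\big[[g;0,0],[a_1a_2\cdot g;1,1]\big]$. That diagonal moves in direction $2$, where $[g;a_1,0]$ has a fixed bit. Likewise, in 3D the long diagonal of $[g;a_1,a_2,a_3]$ lies in $A^1[g;0,0,a_3]$. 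So nothing about the support of $A^\#$ alone prevents $\tilde\sigma$ from moving in the fixed directions of $\sigma$.

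What closes the step is to use survival under $S$ first, and only then look at $\tilde\sigma,\tilde\tau$.
\begin{enumerate}
\item If $\dim\tau_{\tilde\omega}=p_1+p_2$, then $\tilde\omega$ is a top simplex of the path subdivision of $\pi=\tau_{\tilde\omega}$. That is, it is a monotone path of $p_1+p_2$ cube edges in distinct directions.
\item Its front $p_1$-face $\tilde\sigma$ is then a path in $p_1$ of these directions, so $\tau_{\tilde\sigma}$ is a $p_1$-face of $\pi$. Since it contains the $p_1$-cube $\sigma$, we get $\tau_{\tilde\sigma}=\sigma$. Likewise $\tau_{\tilde\tau}=\tau$, and it carries the complementary $p_2$ directions.
\item Now $\omega\subseteq\pi$ is a face containing $\sigma$ and $\tau$, hence containing all $p_1+p_2$ directions of $\pi$. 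So $\omega=\pi$, and in particular nothing survives when $\dim\omega<p_1+p_2$.
\end{enumerate}
This is what the paper's proof compresses into ``constraint on the dimensionality''. Your coordinatewise claim is a consequence of this argument, not of the collapsing.

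Finally, the coefficient needs no appeal to the stretch/collapse rule. Over scalar coefficients, $A^\#$, $S^\#$ and the simplicial cup are all $0/1$ maps. So the coefficient of $\omega$ is the number of contributing pairs $(\tilde\sigma_i,\tilde\tau_j)$ read in characteristic $2$, hence $0$ or $1$, as the paper argues.
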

\begin{proof}
	We first deal with the case over $\mathbb{F}_q$. The approximation map $A$ sends any simplex $\tilde{\sigma}$ to cubes in $\tau_{\tilde{\sigma}}$, where $\tau_{\tilde{\sigma}}$ is the smallest cube that contains $\tilde{\sigma}$. By the previous discussion, $\tau_{\tilde{\sigma}}$ is unique and any cube containing the simplex $\tilde{\sigma}$ must also contain the cube $\tau_{\tilde{\sigma}}$. Then
	\begin{align}\label{eq:cup_induct_1}
		A^{p_1} (\sigma) = \sum_{\sigma \subset \tau_{\tilde{\sigma}_i} } \tilde{\sigma}_i, \quad
		A^{p_2} (\tau) = \sum_{\tau \subset \tau_{\tilde{\tau}_j} } \tilde{\tau}_j,
	\end{align}
	where $\tilde{\sigma}_i$, $\tilde{\tau}_j$ are selected of dimensions $p_1$ and $p_2$, respectively. Now, we cup product the simplices $\tilde{\sigma}_i$ and $\tilde{\tau}_j$ together. Suppose $\tilde{\sigma}_i \smile \tilde{\tau}_j$ is nontrivial, we can find the smallest cube, denoted by $\pi$ for conciseness, such that 
	\begin{align}
		\pi \supset \tilde{\sigma}_i \smile \tilde{\tau}_j \supset \tilde{\sigma}_i, \tilde{\tau}_j.
	\end{align}
	If $\dim \pi > p_1 + p_2$, since $S: X \rightarrow \tilde{X}$ is just the identity map, $S^{p_1 + p_2} (\tilde{\sigma}_i \smile \tilde{\tau}_j) = 0$. In other words, only when $\pi$ is a $(p_1 + p_2)$-cube, the cup product has the possibility to be nontrivial. Since
	\begin{align}
		\tau_{\tilde{\sigma}_i} \supset \sigma, \ \tau_{\tilde{\tau}_j} \supset \tau \text{ and } \pi \supset \tau_{\tilde{\sigma}_i}, \tau_{\tilde{\tau}_j},
	\end{align} 
	the $(p_1 + p_2)$-cube $\pi$ contains both $\sigma$ and $\tau$. Constraint on the dimensionality indicates that 
	\begin{align}
		\pi = \omega, \ \tau_{\tilde{\sigma}_i} = \sigma, \ \tau_{\tilde{\tau}_j} = \tau \text{ and } S^{p_1 + p_2} (\tilde{\sigma}_i \smile \tilde{\tau}_j) = \omega.
	\end{align} 
	Then we consider all possible cup products from Eq.~\eqref{eq:cup_induct_1}. Since $\mathbb{F}_q$ is of characteristic $2$, after taking sum, the result is either $0$ or $\omega$. Even when there are nontrivial local coefficients from $\mathcal{F}$, Eq.~\eqref{eq:cup_local_vector} and the examples in Section~\ref{sec:example} say that we can calculate $\sigma \smile_c \tau$ over $\mathbb{F}_q$ as the above and then deal with the actions on local vectors.
\end{proof}

As a reminder, in proving Lemma~\ref{lemma:cup_induct_1}, we implicitly assume that the intersection of $\sigma$ and $\tau$ is either empty or some $0$-cube (actually, it should be a $0$-cube as shown in Section~\ref{sec:example}), otherwise the cup product must be trivial. This is consistent with the case of simplicial complex, where a nontrivial cup product of a $p_1$-simplex and a $p_2$-simplex is of dimension $p_1 + p_2$. If they intersect at $\geq 1$-dimensional simplex, we cannot have a $(p_1 + p_2)$ dimensional cup product by Definition~\ref{def:cup}.

Let $\{C^\bullet(X_i', \F), C^\bullet(X_i, \F)\}$ be any inductive lifting sequence with cup product defined by Eq.~\eqref{eq:cup_cell} on each of them. We first study the situation when $\mathcal{F} = \mathbb{F}_q$, then for general coefficients. We consider any triple $(X_i', X_i, X_{i+1}')$. Let 
\begin{align}
	\sigma' = [g_\sigma'; (a_{i_\sigma}'), (b_{j_\sigma})] \in X_i'(p_1); \ g_\sigma' = (v_1, \ldots,v_t), \\
	\tau' = [g_\tau'; (a_{i_\tau}'), (b_{j_\tau})] \in X_i'(p_2); \ g_\tau' = (u_1, \ldots,u_t). 
\end{align}
such that $p_1 + p_2 \leq t$. As in Section~\ref{sec:induct_boundary}, let $\{\sigma_r\}_{r \leq \vert \HH_i \vert}$, $\{\tau_s\}_{s \leq \vert \HH_i \vert}$ be the corresponding cubes in $X_i$. They have the following full expressions:
\begin{align}
	\sigma_r & = [g_{\sigma,r}; (a_{i_\sigma}), (b_{j_\sigma})] \in X_i(p_1); \ g_{\sigma,r} = (h_r, v_1, \ldots,v_t), \\
	\tau_s & = [g_{\tau,s}; (a_{i_\tau}), (b_{j_\tau})] \in X_i(p_2); \ g_{\tau,s} = (h_s, u_1, \ldots,u_t). 
\end{align}
Similarly, we have $\{\sigma_{\boldsymbol{r}}'' \}_{\boldsymbol{r} \leq \vert \HH_i' \vert^t}$, $\{\tau_{\boldsymbol{s}}'' \}_{\boldsymbol{s} \leq \vert \HH_i' \vert^t}$ such that
\begin{align}
	\sigma_{\boldsymbol{r}}'' & = [g_{\sigma,\boldsymbol{r}}; (a_{i_\sigma}''), (b_{j_\sigma})] \in X_{i+1}'(p_1); \ g_{\sigma, \boldsymbol{r}} = ((h_{r_1}, v_1), \ldots,(h_{r_t}, v_t)), \\
	\tau_{\boldsymbol{s}}'' & = [g_{\tau,\boldsymbol{s}}; (a_{i_\tau}''), (b_{j_\tau})] \in X_{i+1}'(p_2); \ g_{\tau,\boldsymbol{s}} = ( (h_{s_1}, u_1), \ldots,(h_{s_t}, u_t) ). 
\end{align} 

Suppose the cup product $\sigma' \smile_c \tau'$ is nontrivial. By Lemma~\ref{lemma:cup_induct_1}, it equals the minimal cube $\pi'$ encompassing both $\sigma'$ and $\tau'$. The cube $\pi'$ is lifted into $\vert \HH_i \vert$ cubes in $X_i$. Each of them can be uniquely decomposed as one $\sigma_r$ and one $\tau_s$. Since $A^\#, S^\#$ and the cup product are uniquely determined by the local geometry, i.e., the signatures: $g$, $(a_i)$ and $(b_j)$ of the involved cubes (see Section~\ref{sec:example}), $\sigma_r \smile_c \tau_s$ must equal the cube from which they are decomposed. Therefore, there are exactly $\vert \HH_i \vert$ pairs of $(r,s) \in [\vert \HH_i \vert] \times [\vert \HH_i \vert]$ such that $\sigma_r \smile_c \tau_s$ is nontrivial. These facts verify the following lemma.

\begin{lemma}\label{lemma:cup_induct_2}
	The cup product is compatible with the cochain map $E^\#$ defined in Lemma~\ref{lemma:cochain_1} in the following sense:
	\begin{equation}
		\begin{tikzcd}[column sep=1.5cm, row sep=0.7cm]
			C^{p_1}(X_i',\mathbb{F}_q) \times C^{p_2}(X_i',\mathbb{F}_q) \arrow[r, "\smile_c"] \arrow[d,"E^{p_1} \times E^{p_2}"] & 
			C^{p_1 + p_2}(X_i',\mathbb{F}_q) \arrow[d, "E^{p_1 + p_2}"]  \\
			C^{p_1}(X_i,\mathbb{F}_q) \times C^{p_2}(X_i,\mathbb{F}_q) \arrow[r, "\smile_c"] & 
			C^{p_1 + p_2}(X_i,\mathbb{F}_q)
		\end{tikzcd}
	\end{equation}
	Specifically, suppose $\sigma' \smile_c \tau'$ is nontrivial, then there are exactly $\vert \HH_i \vert$ pairs of $(r,s) \in [\vert \HH_i \vert] \times [\vert \HH_i \vert]$ such that $\sigma_r \smile_c \tau_s$ is nontrivial. The situation is similar if we replace $X_i$ by $X_{i+1}'$, then we have exactly $\vert \HH_i' \vert^t$ nontrivial pairs. 
\end{lemma}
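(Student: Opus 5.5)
By bilinearity of $\smile_c$ and linearity of $E^\#$, it suffices to check commutativity of the diagram on basis elements, namely on single cubes $\sigma' \in X_i'(p_1)$ and $\tau' \in X_i'(p_2)$. Since $E^{p}\sigma' = \sum_r \sigma_r$ and $E^{p}\tau' = \sum_s \tau_s$, the goal becomes
\begin{align}
	\sum_{r,s} \sigma_r \smile_c \tau_s = E^{p_1+p_2}(\sigma' \smile_c \tau').
\end{align}

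First, I will use Lemma~\ref{lemma:cup_induct_1} on both sides. On $X_i'$, $\sigma'\smile_c\tau'$ is either $0$ or the minimal cube $\pi'$ containing both. On $X_i$, each $\sigma_r\smile_c\tau_s$ is either $0$ or the minimal cube $\omega_{r,s}$ containing $\sigma_r$ and $\tau_s$.

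Second, I will relate these via the covering map. Forgetting the $\HH_i$-coordinate $h$ sends cubes of $X_i$ to cubes of $X_i'$ and preserves inclusions, as in Section~\ref{sec:induct_boundary}. Hence $\omega_{r,s}$ projects to a cube containing $\sigma'$ and $\tau'$, and by dimension and Lemma~\ref{lemma:smallest_cube} this projection is $\pi'$. Conversely, each of the $|\HH_i|$ lifts $\pi_k$ of $\pi'$ contains exactly one lift $\sigma_{r(k)}$ and exactly one lift $\tau_{s(k)}$. This holds because the lift of a $t$-cube is determined by its base vertex $g=(h,v_1,\dots,v_t)$, and the faces are obtained by the fixed actions $a_j$ on $h$. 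Distinct $k$ give distinct lifts, so $k\mapsto(r(k),s(k))$ is injective. Moreover, $(r,s)$ contributes only if $\sigma_r$ and $\tau_s$ share a containing cube, which is then a lift of $\pi'$. So exactly $|\HH_i|$ pairs can be nonzero.

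Third, and this is the step I expect to be the main obstacle, I must show that the zero-or-nonzero outcome agrees for $(\sigma',\tau',\pi')$ and for $(\sigma_{r(k)},\tau_{s(k)},\pi_k)$. My approach is to argue that the subdivision $\tilde X$, the approximation $A$, and $S$ are defined cube-by-cube using only the signature $(g;(a_i),(b_j))$. The vertex orderings along the paths in Example~\ref{example:3D} depend only on the order of the generator actions and the bits, not on $h$. It follows that the local simplicial computation inside $\pi_k$ is isomorphic to the one inside $\pi'$ under the covering map. I also need to check that no contributions come from simplices in other cubes, and Lemma~\ref{lemma:cup_induct_1}'s proof already confines everything to $\pi_k$. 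The step is delicate because $A$ collapses some simplices onto neighboring cubes; I will verify that these collapses are likewise specified by the signature alone, so they commute with the projection.

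Combining the three steps, $\sum_{r,s}\sigma_r\smile_c\tau_s=\sum_k\pi_k=E^{p_1+p_2}\pi'$ when $\sigma'\smile_c\tau'=\pi'$, and both sides vanish otherwise. For $X_{i+1}'$, the same argument applies with the product lift $\HH_i'^{\,t}$ acting coordinatewise, giving $|\HH_i'|^t$ lifts of $\pi'$ and hence $|\HH_i'|^t$ nontrivial pairs.
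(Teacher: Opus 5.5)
Your proposal is correct and follows the paper's argument. You identify $\sigma'\smile_c\tau'$ with the minimal cube $\pi'$ via Lemma~\ref{lemma:cup_induct_1}, observe that each of the $\vert \HH_i \vert$ lifts of $\pi'$ contains exactly one $\sigma_r$ and one $\tau_s$, and use that the subdivision, $A^\#$ and $S^\#$ depend only on the signatures $(g;(a_i),(b_j))$, so the computation in each lift reproduces the one in $\pi'$. Your bilinearity reduction and the projection argument excluding all other pairs $(r,s)$ make explicit what the paper's short paragraph leaves implicit.
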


However, the following diagram is not commutative,
\begin{equation}
	\begin{tikzcd}[column sep=1.5cm, row sep=0.7cm]
		C^{p_1}(X_i',\mathbb{F}_q) \times C^{p_2}(X_i',\mathbb{F}_q) \arrow[r, "\smile_c"] & 
		C^{p_1 + p_2}(X_i',\F)\\
		C^{p_1}(X_i,\mathbb{F}_q) \times C^{p_2}(X_i,\mathbb{F}_q) \arrow[r, "\smile_c"] \arrow[u,"P^{p_1} \times P^{p_2}"] & 
		C^{p_1 + p_2}(X_i,\mathbb{F}_q) \arrow[u, "P^{p_1 + p_2}"]  
	\end{tikzcd}
\end{equation}
This is because for any nontrivial cup product $\sigma' \smile_c \tau'$ in $X'$, we can consider their lifts $\sigma_{r}$ and $\tau_{s}$ at different levels such that $\sigma_{r} \smile_c \tau_{s} = 0$. Then
\begin{align}
	P^{p_1+p_2}( \sigma_r \smile_c \tau_s ) = 0 \neq P^{p_1}(\sigma_r) \smile_c P^{p_2}(\tau_s) = \sigma' \smile_c \tau'.
\end{align}
To summarize, the cup product is compatible with the lift $E$ but not with the covering $P$. 

\medskip
Suppose there are $\rho$ cochain complexes $C^\bullet(X,\F_1), \ldots,C^\bullet(X,\F_\rho)$ based on the same cubical complex $X$ but could be equipped with different local coefficients. Let $x_i \in C^{p_i}(X, \F_i)$ be cocycles such that $p_1 + \cdots + p_\rho \leq t$ and let $\xi \in C_{p_1 + \cdots + p_\rho}(X, \F_1 \otimes \cdots \F_\rho)$ be a cycle. In retrospect, $\F_1 \otimes \cdots \F_\rho$ is merely defined by taking tensor products over the local coefficient spaces and morphisms:
\begin{align}
	\F_{1,\sigma,\tau} \otimes \cdots \otimes \F_{\rho, \sigma,\tau}:
	\F_1(\sigma) \otimes \cdots \otimes \F_\rho(\sigma) 
	\rightarrow \F_1(\tau) \otimes \cdots \otimes\F_\rho(\tau)
\end{align}
In many cases, $\F_i \equiv \F$, but we do employ different sheaves in Section~\ref{sec:proof_sheaf}. We adopt the most general expressions here.

\begin{lemma}\label{lemma:cup_induct_3}
	Lemma~\ref{lemma:cup_induct_2} holds for multiple cup products with the presence of sheaves $\F_1, \ldots,\F_\rho$ generated by classical codes:
	\begin{equation}
		\begin{tikzcd}[column sep=1.5cm, row sep=0.7cm]
			C^{p_1}(X_i,\F_1) \times \cdots \times C^{p_\rho}(X_i,\F_\rho) \arrow[r, "\smile_c"] & 
			C^{p_1 + \cdots + p_\rho}(X_i,\F_1 \otimes \cdots \otimes \F_\rho) \\
			C^{p_1}(X_i',\F_1) \times \cdots \times C^{p_\rho}(X_i',\F_\rho) \arrow[r, "\smile_c"] \arrow[d,"E'^{p_1} \times \cdots \times E'^{p_\rho}"] \arrow[u,"E^{p_1} \times \cdots \times E^{p_\rho}"]  & 
			C^{p_1 + \cdots + p_\rho}(X_i',\F_1 \otimes \cdots \otimes \F_\rho) \arrow[d, "E'^{p_1 + \cdots + p_\rho}"] \arrow[u, "E^{p_1 + \cdots + p_\rho}"]  \\
			C^{p_1}(X_{i+1}',\F_1) \times \cdots \times C^{p_\rho}(X_{i+1}',\F_\rho) \arrow[r, "\smile_c"] & 
			C^{p_1 + \cdots + p_\rho}(X_{i+1}',\F_1 \otimes \cdots \otimes \F_\rho) 
		\end{tikzcd}
	\end{equation}
\end{lemma}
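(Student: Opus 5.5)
The plan is to reduce the multi-cup, sheaf-valued statement to two ingredients: the scalar case of Lemma~\ref{lemma:cup_induct_2}, and the fact that local coefficients are lift-invariant (Lemma~\ref{lemma:sheaf}). First I will use associativity of $\smile_c$, inherited from Proposition~\ref{prop:cup} on $\tilde X$ via $A^\#$ and $S^\#$. Together with $E^\#$ being a cochain map, this should let me argue by induction on $\rho$ once the two-factor case with sheaves is settled. Both the lift $X_i' \to X_i$ and the product-of-lifts $X_i' \to X_{i+1}'$ have the same combinatorial form: every cube of $X_i'$ has a fiber of lifts, and inclusions are governed by permutation matrices. So I treat $E^\#$ and $E'^\#$ uniformly and write the argument only for $E^\#$.

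For two factors, I fix basis cochains $x_1 = u\,\sigma'$ and $x_2 = w\,\tau'$ with local vectors $u \in \F_1(\sigma')$ and $w \in \F_2(\tau')$, and I extend bilinearly. On the bottom row, Lemma~\ref{lemma:cup_induct_1} gives $\sigma' \smile_c \tau' \in \{0,\pi'\}$, with local vector $(\F_{1,\sigma',\pi'}u) \otimes (\F_{2,\tau',\pi'}w)$ by Eq.~\eqref{eq:cup_local_vector}. Going up, $E^{p_1}x_1 = \sum_r u\,\sigma_r$ and $E^{p_2}x_2 = \sum_s w\,\tau_s$. The scalar part of Lemma~\ref{lemma:cup_induct_2} says $\sigma_r \smile_c \tau_s$ is nonzero exactly for the $|\HH_i|$ pairs decomposing a lift $\pi_k$ of $\pi'$, and in that case it equals $\pi_k$. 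It is zero whenever $\sigma' \smile_c \tau' = 0$, because the subdivision, $A$ and $S$ depend only on the local signature $(g,(a_i),(b_j))$, which is preserved under lifting. The local vector at $\pi_k$ is then $(\F_{1,\sigma_r,\pi_k}u) \otimes (\F_{2,\tau_s,\pi_k}w)$. By Lemma~\ref{lemma:sheaf} these maps coincide with $\F_{1,\sigma',\pi'}$ and $\F_{2,\tau',\pi'}$ as matrices, so the sum equals $\sum_k (\text{same vector})\,\pi_k = E^{p_1+p_2}(x_1 \smile_c x_2)$. The tensor sheaf $\F_1 \otimes \F_2$ is built from the same local codes on each lift, so Lemma~\ref{lemma:sheaf} applies to it as well, and $E^{p_1+p_2}$ is well defined on it.

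For the induction, I write $x_1 \smile_c \cdots \smile_c x_\rho = (x_1 \smile_c \cdots \smile_c x_{\rho-1}) \smile_c x_\rho$, with the first factor valued in $\F_1 \otimes \cdots \otimes \F_{\rho-1}$. I then apply the two-factor case with this sheaf in place of $\F_1$; this requires only $p_1+\cdots+p_\rho \le t$, which holds by hypothesis. The lower square, for $E'^\#$, follows from the same computation using the last clause of Lemma~\ref{lemma:cup_induct_2}. There the fibers have size $|\HH_i'|^t$, and the permutation substructure $I\otimes\cdots\otimes P_j\otimes\cdots\otimes I$ again preserves local signatures.

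The main obstacle is justifying that the cubical cup product on $X_i$, or on $X_{i+1}'$, pairs a lifted $\sigma_r$ only with the correct $\tau_s$, and produces the correct lift $\pi_k$ with no extra contributions. Cross terms with $\sigma_r \smile_c \tau_s = 0$ should drop out, but this has to be checked. I would argue it through Lemma~\ref{lemma:smallest_cube}: a smallest common cube in $X_i$ projects to a common cube in $X_i'$, and it is unique. In addition, the subdivision and approximation maps commute with the covering projection cube by cube, which is exactly the locality guaranteed by the consistent stretching and collapsing rule of Section~\ref{sec:example}.
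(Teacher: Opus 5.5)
Your proposal follows the paper's proof: reduce to the scalar computation of Lemma~\ref{lemma:cup_induct_2}, then use Eq.~\eqref{eq:cup_local_vector} together with Lemma~\ref{lemma:sheaf} and the fact that $E^\#$ duplicates local vectors, so that every lift of $\pi'$ carries the same local vector as $\pi'$. One small correction: you do not need associativity, and it does not follow directly from Proposition~\ref{prop:cup}, because $A^\# S^\# \neq \mathrm{id}$; it suffices to induct along whatever bracketing defines the iterated $\smile_c$.
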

\begin{proof}
	Computing cup products involving the nontrivial coefficients are merely computing $\sigma' \smile_c \tau'$ as before and then dealing with the local vectors. The formula is given in Eq.~\eqref{eq:cup_local_vector}. By Lemma~\ref{lemma:sheaf}, the sheaves $\F_1, \ldots,\F_p$ defined on $X_i',X_i,X_{i+1}'$ are given by the same submatrices with respect to any $\sigma' \prec \tau'$ and their lifts. Since the canonical extensions defined by $E^\#$ duplicate the local vectors, the statement holds immediately.
\end{proof}

When computing general cup products of cocycles $x_1' \smile_c x_2'$, 
\begin{align}
	(x_1' \smile_c x_2')(\pi') = \sum_{\sigma' \smile_c \tau' = \pi'} x_1'(\sigma') \smile_c x_2'(\tau').
\end{align} 
Lemma~\ref{lemma:cup_induct_3} holds automatically when we take the canonical extensions of $x_1'$ and $x_2'$, as well as for multiple cup products. This leads to Theorem~\ref{thm:induct_cup}. However, it may not be true for other $x_i$ only with $P^{p_j} x_j = x_j'$.

We now assemble all assumptions before presenting our first main result: let $\mathcal{G}_0$ be an $n$-regular graph with a constant number of vertices $n'$. Let $\HH_i$ be finite abelian groups and $\HH_i'$ be any finite groups such that both $\vert \HH_i \vert$ and $\vert \HH_i' \vert$ are odds. Let $\F_1, \ldots,\F_\rho$ be generated by classical local codes. We define inductive lifting sequences $\{C^\bullet(X_i',\F_1), C^\bullet(X_i,\F_1)\}$, \ldots,$\{C^\bullet(X_i',\F_\rho), C^\bullet(X_i,\F_\rho)\}$ by these structures. 

\begin{theorem}\label{thm:induct_cup}
	If the initial cochain complexes $C^\bullet(X_1', \F_1), \ldots,C^\bullet(X_1', \F_\rho)$ support a nontrivial cup product gate of degree $\rho$, then it will be inherited by all the subsequent cochain complexes. Specifically, let $x_1' \in C^{p_1}(X_1',\F_1)$, \ldots, $x_\rho' \in C^{p_\rho}(X_1', \F_\rho)$ be the cocycles such that $p_1 + \cdots + p_\rho \leq t$ and 
	\begin{align}
		\langle x_1' \smile_c \cdots \smile_c x_\rho', \ \xi' \rangle \neq 0 \in \mathbb{F}_q
	\end{align}
	for some cycle $\xi \in C_{p_1 + \cdots + p_\rho}(X_1', \F_1 \otimes \cdots \otimes \F_\rho )$. Then by defining the canonical extensions on $X_1$ and $X_2'$, respectively:
	\begin{align}
		x_1 = E^{p_1} x_1', & \cdots\cdots ,x_\rho = E^{p_\rho} x_\rho' \\
		x_1'' = E'^{p_1} x_1', & \cdots\cdots ,x_\rho'' = E'^{p_\rho} x_\rho'
	\end{align}
	and $\xi = P_{p_1 + \cdots + p_\rho} \xi'$ and $\xi'' = P_{p_1 + \cdots + p_\rho}' \xi'$,
	\begin{align}
		\langle x_1 \smile_c \cdots \smile_c x_\rho, \ \xi \rangle 
		= \langle x_1' \smile_c \cdots \smile_c x_\rho', \ \xi' \rangle 
		= \langle x_1'' \smile_c \cdots \smile_c x_\rho'', \ \xi'' \rangle \neq 0 \in \mathbb{F}_q.
	\end{align}
	By induction, the nontrivial invariant form can be found all over the sequences.
\end{theorem}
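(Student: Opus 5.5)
The plan is to reduce the statement to three facts already available: the cochain maps of Lemma~\ref{lemma:cochain_1} and Lemma~\ref{lemma:cochain_2}, the compatibility of cup products with canonical extensions in Lemma~\ref{lemma:cup_induct_3}, and the elementary duality between $E$ and $P = E^T$. I treat the pair $(X_1', X_1)$ in detail; the pair $(X_1', X_2')$ is identical with $E', P'$ in place of $E, P$.

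\textbf{Step 1: the extended data are admissible.} By Lemma~\ref{lemma:cochain_1}, each $x_j = E^{p_j} x_j'$ is a cocycle in $C^{p_j}(X_1,\F_j)$. The boundary version of the same lemma (transposes of the diagrams, with $P_\bullet = (E^\bullet)^T$ a chain map) shows that $\xi = P_{p_1+\cdots+p_\rho}\xi'$ is a cycle in $C_\bullet(X_1,\F_1\otimes\cdots\otimes\F_\rho)$. Here $P_\bullet$ must be read as the map sending $\xi'(\pi')$ to the \emph{same} local vector on each lift $\pi_r$, which is the transpose of the covering map. With this convention the equation $\partial\xi = P\,\partial'\xi' = 0$ follows from the transposed commutative square. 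By Lemma~\ref{lemma:codeword_2} and Lemma~\ref{lemma:codeword_3}, the extended cocycles also remain nontrivial classes, since $|\HH_1|$ and $|\HH_1'|$ are odd.

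\textbf{Step 2: the cup product commutes with extension.} By Lemma~\ref{lemma:cup_induct_3} (iterated, using associativity from Proposition~\ref{prop:cup}, transported through $A^\#,S^\#$),
\begin{align}
x_1\smile_c\cdots\smile_c x_\rho = E^{p}(x_1'\smile_c\cdots\smile_c x_\rho'), \qquad p=p_1+\cdots+p_\rho.
\end{align}
I will verify this componentwise. For a lifted top cube $\pi_r$ of $\pi'$, Lemma~\ref{lemma:cup_induct_1} and the local-geometry argument preceding Lemma~\ref{lemma:cup_induct_2} show that the nonzero contributions are exactly the lifts $\sigma_{r_1},\dots$ assembling $\pi_r$. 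These are in bijection with the contributing tuples in $X_1'$ for $\pi'$. Each such contribution carries, by Lemma~\ref{lemma:sheaf} and Eq.~\eqref{eq:cup_local_vector}, the same local vector $\bigotimes_j \F_{j,\sigma_j,\pi}x_j'(\sigma_j')$. Hence the value at every lift equals the value at $\pi'$.

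\textbf{Step 3: the pairing is preserved.} With $y' = x_1'\smile_c\cdots\smile_c x_\rho'$,
\begin{align}
\langle E^p y',\, \xi\rangle = \sum_{\pi'}\sum_{r=1}^{|\HH_1|}\langle y'(\pi'),\xi'(\pi')\rangle = |\HH_1|\,\langle y',\xi'\rangle = \langle y',\xi'\rangle,
\end{align}
since $|\HH_1|$ is odd and $\mathrm{char}\,\mathbb{F}_q=2$. The same computation with $|\HH_1'|^t$ copies handles $X_2'$. Invariance under cohomology (Theorem~\ref{thm:cup_cubical}) shows that the value depends only on the classes.

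\textbf{Step 4: induction.} Since $X_2'$ again serves as the initial HGP for the next column of the diagram \eqref{eq:induct_lift}, and the data $(x_j'',\xi'')$ satisfy the same hypotheses, induction on $i$ propagates the nonzero value along both rows.

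The main obstacle is Step 2. One must confirm that no spurious contributions arise from lifts at mismatched levels; this is exactly the failure of commutativity with $P$ noted after Lemma~\ref{lemma:cup_induct_2}. One must also confirm that the bijection of contributing tuples respects the chosen subdivision and approximation $A$ uniformly over all cubes. I would settle this by appealing to the fact that $A^\#$, $S^\#$ and $\F$ are defined purely from the signatures $(g,(a_i),(b_j))$, which are preserved by lifting.
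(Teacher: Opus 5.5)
Your proposal is correct and follows the paper's route. The key step is that $\smile_c$ commutes with the canonical extensions (Lemma~\ref{lemma:cup_induct_3}); pairing with the duplicated cycle then multiplies the value by the odd number $|\HH_1|$ (resp.\ $|\HH_1'|^t$), which is $1$ in characteristic $2$ --- a computation the paper leaves implicit and you make explicit. One small remark: your appeal to associativity in Step~2 is unnecessary, since Lemma~\ref{lemma:cup_induct_3} is already stated for $\rho$-fold products and binary compatibility with $E^\#$ iterates along any fixed bracketing; moreover, associativity could not simply be transported from Proposition~\ref{prop:cup}, because $A^\# S^\# \neq \mathrm{id}$ on $C^\bullet(\tilde X)$.
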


\begin{corollary}\label{coro:induct_cup}
	With the same assumption in Theorem~\ref{thm:induct_cup}, we further assume that
	\begin{align}
		\vert \HH_i \vert = \exp\left( \Theta( \prod_{j \leq i - 1} \vert \HH_j' \vert ) \right)
	\end{align}
	and all lifted graphs are nearly Ramanujan. Let local codes be two-way product expanding local codes with linear distance as in~\cite{kalachev2025}. If $C^\bullet(X_1', \F_1), \ldots,C^\bullet(X_1', \F_\rho)$ admit a nontrivial cup product gate of degree $\rho$, then the exponential lifting sequences produce families of almost good qLDPC codes with nontrivial cup product gates of the same degree. 
\end{corollary}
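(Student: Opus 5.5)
The plan is to deduce the corollary from Theorem~\ref{thm:induct_cup} together with the known distance and dimension analysis of Dinur--Lin--Vidick type sheaf codes, treating the two requirements separately: the gate is inherited, and the code parameters are almost good.

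\textbf{Inheritance of the gate.} Theorem~\ref{thm:induct_cup} already gives, for every $i$, canonical extensions $x_j^{(i)} = E^{p_j} x_j^{(i)\prime}$ on $X_i$ and a cycle $\xi^{(i)} = P_{p_1+\cdots+p_\rho}\xi^{(i)\prime}$ with the same nonzero pairing. Here $x_j^{(i)\prime}$ are obtained from $x_j'$ by iterating $E'^\#$ along the top row of \eqref{eq:induct_lift}. By Lemma~\ref{lemma:codeword_2} and Lemma~\ref{lemma:codeword_3}, which apply since all $\vert\HH_i\vert,\vert\HH_i'\vert$ are odd, the extended cocycles remain nontrivial classes. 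The same conclusion also follows from nonvanishing of $T_\xi$ on them, since $T_\xi$ is invariant by Theorem~\ref{thm:cup_cubical}. That theorem also shows $T_{\xi^{(i)}}$ is a constant-depth circuit, because each $X_i$ has constant degree $n$ and constant dimension $t$, so $A^\#$ and $S^\#$ are sparse. I will stress that sparsity is needed only for the cubical complex itself, not for $E^\#,P^\#$ (cf. Remark~\ref{remark:sparse}), because the gate is the invariant polynomial on $X_i$ and not a composition with the cochain maps. Hence a nontrivial degree-$\rho$ cup product gate exists on every $C^\bullet(X_i,\F_j)$.

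\textbf{Parameters.} The base graph of $X_i$ is $\mathcal{G}_{i}$, the $\prod_{j<i}\HH_j'$-lift of $\mathcal{G}_0$, with $n_i' = n'\prod_{j\le i-1}\vert\HH_j'\vert$ vertices, and it is nearly Ramanujan by assumption. Its abelian lift by $\HH_i$ of size $\exp(\Theta(n_i'))$ is again nearly Ramanujan by Theorem~\ref{thm:abelian_lift}; this is the admissible regime $l \le e^{cN/n^2}$. So $X_i$ is exactly a cubical complex of the type in~\cite{Dinur2024sheaf}. With two-way product-expanding local codes~\cite{kalachev2025} one invokes their distance theorem to get $d = \Omega(N/\mathrm{polylog}\,N)$, using $r = \Omega((\log N)^{-(t-1)})$ from Remark~\ref{remark:disconnect}, together with linear dimension for the appropriate range of $m_j$. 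Since $N_i\to\infty$, this yields infinite families.

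\textbf{Main obstacle.} The delicate step is checking that the graphs in the inductive sequence satisfy the hypotheses of the external theorems uniformly in $i$. These hypotheses are a uniform spectral bound $\lambda$ relative to the local code distances ($d_0,d_0^\perp>\lambda$ as in Theorem~\ref{thm:cLDPC}) and well-defined local orderings that persist under lifts (Section~\ref{sec:DLV_code}). I would handle the orderings by transporting the ordering of $\mathcal{G}_0$ along the covering maps, since a lift preserves the labels $\mu\in[n]$ and the no-repeated-label condition. I would handle the spectral bound by taking it as the ``nearly Ramanujan'' hypothesis, with $\lambda = O(\sqrt{n}\,\mathrm{polylog}\,n)$ independent of $i$, and choosing $n$ large so that the product-expansion constants dominate. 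Once this is in place, the rest is a direct citation.
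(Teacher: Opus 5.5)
Your proposal is correct and follows the route the paper implicitly takes, since the paper gives no separate proof of this corollary. The gate is inherited by iterating Theorem~\ref{thm:induct_cup} along the sequence, using that $\vert\HH_i\vert,\vert\HH_i'\vert$ are odd and that sparsity is needed only for each $X_i$ itself, not for $E^\#,P^\#$; the almost-good parameters come from recognizing each $X_i$ as a Dinur--Lin--Vidick complex over a nearly Ramanujan base graph with $\vert\HH_i\vert$ exponential in its vertex count and citing~\cite{Dinur2024sheaf,kalachev2025}, while your extra hypothesis checks (transport of local orderings through lifts, a uniform spectral bound, and $r=\Omega((\log N)^{-(t-1)})$) go beyond what the paper spells out but are consistent with it.
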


The same result can be extended to qLTCs where the cochain complexes are length $\geq 5$, or equivalently, $\dim X_i = \dim X_i' \geq 4$ and the physical qudits are placed at $\geq 2$-th cochains.

\begin{remark}
	We further remark that Theorem~\ref{thm:induct_cup} and~\ref{coro:induct_cup} hold for any general invariant forms, as long as they satisfy the compatibility as in Lemma~\ref{lemma:cup_induct_3}, The invariant form in Theorem~\ref{thm:cup_cubical} defined by pairing cycles and cocycles is just one example. It is of great interest to explore other general invariants.
\end{remark}


\section{Logical action on hypergraph product codes}\label{sec:proof_HGP}

With a nontrivial invariant form, the next step is to explore its logical action. This task turns out to be far more complicated, and we divide it into two parts on HGP codes and sheaf codes. As highlighted in Section~\ref{sec:cubical} and~\ref{sec:induct_boundary}, given an inductive lifting sequence $\{C^\bullet(X_i', \F), C^\bullet(X_i, \F) \}$, the sequential lifting sequence is actually a family of HGP codes while the exponential lifting sequence give sheaf codes on high dimensional expanders. We study HGP codes in this section and postpone sheaf codes to the next.

\subsection{Fundamental difficulties in computing the cup products}\label{sec:difficulties}

By Theorem~\ref{thm:cup_cubical}, 
\begin{align}
	T_\xi (x_1, \ldots,x_\rho) = \langle x_1 \smile_c \cdots \smile_c x_\rho, \xi \rangle
\end{align}
is invariant under the deformation of coboundaries. The following simple condition determines the existence of $\xi$ for which $T_\xi$ is nontrivial.

\begin{proposition}\label{prop:cup_nontrivial}
	There exists a cycle $\xi$ such that $T_\xi$ is nontrivial if and only if the cup product induces a nontrivial multi-linear map on the cohomologies:
	\begin{align}\label{eq:cup_cohomology}
		\text{--}\smile_c \text{--} \cdots \text{--} \smile_c \text{--} : H^{p_1}(X, \F_1) \times \cdots \times H^{p_\rho}(X, \F_\rho) \rightarrow H^{p_1 + \cdots + p_\rho}(X, \F_1 \otimes \cdots \otimes\F_\rho ).
	\end{align}
\end{proposition}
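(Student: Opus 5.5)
The plan is to prove Proposition~\ref{prop:cup_nontrivial} by reducing it to nondegeneracy of the pairing between cohomology and homology over a field. First I check that the multilinear map \eqref{eq:cup_cohomology} is well defined. By the Leibniz rule for $\smile_c$ (shown after Eq.~\eqref{eq:cup_cell}) and induction on $\rho$, a cup product of cocycles is a cocycle. Replacing one $x_j$ by $x_j+\delta y_j$ changes the product by $\pm\,\delta(x_1\smile_c\cdots\smile_c y_j\smile_c\cdots\smile_c x_\rho)$, which is a coboundary; in characteristic $2$ there are no signs. Set $p=p_1+\cdots+p_\rho$ and $\mathcal{H}=\F_1\otimes\cdots\otimes\F_\rho$. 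Then $T_\xi(x_1,\ldots,x_\rho)=\langle [x_1\smile_c\cdots\smile_c x_\rho],[\xi]\rangle$, where $\langle\cdot,\cdot\rangle: H^p(X,\mathcal{H})\times H_p(X,\mathcal{H})\to\mathbb{F}_q$ is the pairing~\eqref{eq:paring}. This pairing is well defined on classes, because $\langle \delta y,\xi\rangle=\langle y,\partial\xi\rangle=0$ and $\langle z,\partial\eta\rangle=\langle\delta z,\eta\rangle=0$ for a cocycle $z$.

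The ``only if'' direction is then immediate. If the cup product map on cohomology is identically zero, every $x_1\smile_c\cdots\smile_c x_\rho$ with cocycle inputs is a coboundary $\delta y$, and so $T_\xi=\langle y,\partial\xi\rangle=0$ for every cycle $\xi$.

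For the ``if'' direction, I take cocycles with $z:=x_1\smile_c\cdots\smile_c x_\rho\notin\Ima\delta^{p-1}$ and look for a cycle $\xi$ with $\langle z,\xi\rangle\neq0$. This is the only step needing an argument, namely nondegeneracy of the pairing, i.e.\ $H_p\cong (H^p)^\ast$ over a field. The complexes are finite dimensional and $\partial_p=(\delta^{p-1})^T$ with respect to the standard inner product on $C^p=C_p=\bigoplus_\sigma \mathcal{H}(\sigma)$. Hence $\ker\partial_p=(\Ima\delta^{p-1})^\perp$, and therefore $(\ker\partial_p)^\perp=\Ima\delta^{p-1}$ by double orthogonal complement. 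This identity uses only nondegeneracy of the standard bilinear form, so it holds over $\mathbb{F}_q$ even though the form is not positive definite. Since $z\notin\Ima\delta^{p-1}=(\ker\partial_p)^\perp$, some cycle $\xi\in\ker\partial_p$ satisfies $\langle z,\xi\rangle\neq0$, so $T_\xi\neq0$.

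If a nonzero value of the trace (rather than of $T_\xi$ itself) is desired for the gate~\eqref{eq:invariant_poly}, I would rescale $\xi$ by a scalar $c\in\mathbb{F}_q$ and use nondegeneracy of $\tr_{\mathbb{F}_q/\mathbb{F}_p}$ from Proposition~\ref{prop:finite_field}. The main, though mild, obstacle I expect is bookkeeping. One must confirm that $\delta$ and $\partial$ on $C^\bullet(X,\mathcal{H})$ are exactly transposes with the tensor-product local maps, which follows from the definitions in Section~\ref{sec:sheaf}. One must also confirm that the well-definedness of $\smile_c$ on classes uses only the Leibniz rule already established.
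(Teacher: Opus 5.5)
Your proposal is correct and follows essentially the same route as the paper. The Leibniz rule makes the cup product of cocycles a cocycle, a coboundary lies in the row span of $\partial_{p}$ and so pairs to zero with every cycle, and a non-coboundary admits a cycle with nonzero pairing; you simply make explicit what the paper leaves implicit, namely well-definedness on cohomology classes and the identity $(\ker\partial_p)^\perp=\Ima\delta^{p-1}$ over $\mathbb{F}_q$ via the double orthogonal complement.
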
 
\begin{proof}
	The proof is straightforward by basic linear algebra. Given any cocycles $x_{p_1}, \ldots,x_{p_\rho}$ that represent nontrivial cohomology classes, the Leibniz rule indicates that $x_{p_1} \smile_c \cdots \smile_c x_{p_\rho}$ is a cocycle. It could thus be either a coboundary or not. If it is a coboundary, it is spanned by columns of the corresponding coboundary operator. By taking transpose, it turns out to be in the row span of the boundary operator $\partial_{p_1 + \cdots + p_\rho}$. Consequently, any cycle $\xi$ must be perpendicular to that cup product. If it is not in the row span, we must be able to find some cycle $\xi$ such that the invariant form is nontrivial.
\end{proof}

From this perspective, what is proved in Theorem~\ref{thm:induct_cup} is that any nontrivial cup product in $H^\bullet(X_1', \F_1) \times \cdots \times H^\bullet(X_1', \F_\rho)$ transits to the remaining terms in the sequence. Moreover, by basic linear algebra, the multi-linear form in \eqref{eq:cup_cohomology} uniquely corresponds to linear map on the tensor product:
\begin{align}
	F_{\text{cup}}: H^{p_1}(X, \F_1) \otimes \cdots \otimes H^{p_\rho}(X, \F_\rho) \rightarrow H^{p_1 + \cdots + p_\rho}(X, \F_1 \otimes \cdots \otimes\F_\rho ).
\end{align}
The rank of $F_{\text{cup}}$, denoted by $\rk F_{\text{cup}}$, provides partial information that can reveal the possible number of nontrivial invariant forms. The reason is simple: we denote by $M$ the matrix whose rows are independent cocycles that represent cohomology classes that span $\Ima F_{\text{cup}}$. By assumption, $M$ has $\rk F_{\text{cup}}$ rows.

For simplicity, we denote by $\partial$ the boundary operator on the $(p_1 + \cdots + p_\rho)$-th chain. Then we merely stack $M$ over $\partial$ to build 
\begin{align}\label{eq:M}
	\tilde{M} = \begin{pmatrix} M \\ \partial \end{pmatrix}.
\end{align}
By definition, $\ker \tilde{M} \subset \ker \partial$ and $\xi \in  \ker \partial \setminus \ker \tilde{M}$. We also have $\dim \ker \partial - \dim \ker \tilde{M} = \rk F_{\text{cup}}$ and thus by extending any basis of $\ker \tilde{M}$ to that of $\ker \partial$, we obtain $\{\xi_i\}_{i = 1}^{\rk F_{\text{cup}}}$. Up to a proper change of basis, we have
\begin{align}\label{eq:rank}
	M \begin{pmatrix} \xi_1 & \cdots & \xi_{\rk F_{\text{cup}}} \end{pmatrix} = I_{\rk F_{\text{cup}}}.
\end{align} 

It seems that we could find $\rk F_{\text{cup}}$ cycles that constitute independent invariant forms. However, this description is not yet complete with at least three severe drawbacks:
\begin{itemize}
	\item It is conceivably difficult to calculate $\rk F_{\text{cup}}$ for a rather complicated combinatorial object like $X$ with local coefficients.
	
	\item Since $F_{\text{cup}}$ is defined on general tensor products of cohomologies, a single row of $M$ may be generated by linear combinations of cup products rather than a single cup product.
	
	\item Even if each row is formed by a single cup product, there is no guarantee that they are cup products of inequivalent cocycles, i.e., disjoint logical representatives. This would pose a major difficulty to count the subrank in Section~\ref{sec:CSS}.  
\end{itemize} 
We resolve these problems by using explicit logical representatives in HGP codes and generalize the method to sheaf codes in Section~\ref{sec:proof_sheaf}.


\subsection{Hypergraph product with scalar coefficients}\label{sec:HGP_scalar}

To simplify the discussion and alleviate the notation, we only exemplify the case of the $\CCZ$ gate on 3D HGP codes. We first study the scalar field $\mathbb{F}_q$ and then consider $\mathcal{F}$. 

Let $C^\bullet(X',\mathbb{F}_q)$ be a cochain complex such that $X' = (\mathcal{G} \times K_2)^3$ is a 3D cubical complex based on an $n$-regular graph $\mathcal{G}$ with trivial lift $\HH = \{e\}$. The cochain complex is therefore the hypergraph product of 1-cochain complexes. Each cube, for example, $[g = (v_1,v_2,v_3);a_1,a_2,a_3]$ is denoted with the primes being omitted for conciseness (cf. Definition~\ref{def:induct_lift}). The physical qudits are placed on 1-cubes (see Figure~\ref{fig:3-cube}). We copy \eqref{eq:HGP_coboundary} here for convenience:
\begin{align}
	\delta^1 = \begin{pmatrix} 
		0 & I_{M_1} \otimes I_{N_2} \otimes H_3 & I_{M_1} \otimes H_2 \otimes I_{N_3} & \\ 
		I_{N_1} \otimes I_{M_2} \otimes H_3 & 0 & H_1 \otimes I_{M_2} \otimes I_{N_3} \\
		I_{N_1} \otimes H_2 \otimes I_{M_3} & H_1 \otimes I_{N_2} \otimes I_{M_3} & 0 
	\end{pmatrix},\quad
	\delta^0 = \begin{pmatrix} H_1 \otimes I_{M_2} \otimes I_{M_3} \\ I_{M_1} \otimes H_2 \otimes I_{M_3} \\ I_{M_1} \otimes I_{M_2} \otimes H_3	\end{pmatrix}
\end{align}
In this setting, $H_i$ from the above are identical to the coboundary operator $\delta: \mathbb{F}_q^V \to \mathbb{F}_q^E$ of the double cover $\mathcal{G} \times K_2$ where
\begin{align}
	V = 2\vert V(\mathcal{G}) \vert = M_i, \quad 
	E = 2\vert E(\mathcal{G}) \vert = \frac{1}{2}n V = N_i.
\end{align}

We now define a few more concepts to formulate the logical representatives. Let $\{e_1^{(i_1)}\}$ be one-hot vectors defined by \emph{free variables} to solve the linear equations $x^T H_1 = 0$. Applying elementary column operations on $H_1$, we obtain its reduced column echelon form:
\begin{align}\label{eq:free_variable}
	\begin{pmatrix} I_{N_1 - k_1^T} & 0 \\ J_1 & 0	\end{pmatrix} \text{ with } J_1 \in \mathbb{F}_q^{k_1^T \times N_1 - k_1^T}.
\end{align}
The free variables are defined by the row indices of $J_1$ versus the \emph{pivot variables} of $I_{N_1 - k_1^T}$. As a result, $e_1^{(i_1)} \in \mathbb{F}_q^{N_1} = \mathbb{F}_q^{E}$ and $i_1 = 1, \ldots,k_1^T$. They span a subspace that is a complement of $\Ima H_1$. As a reminder, free variables are also referred to as \emph{information sets} or \emph{punctures} in the study of classical codes and HGP codes~\cite{PK2023RobustlyTestable,Burton2022,Fu2025nogo}. We define $\{e_2^{(i_2)}\}$ and $\{e_3^{(i_3)}\}$ similarly. A graph coboundary operator always has its rank equal to $V-1$ and thus the number of free variables or the standard basis vectors is
\begin{align}\label{eq:kappa^T_1}
	k_i^T = \dim \ker H_i^T = \dim \ker \delta^T = E - \rk H_i = E - V + 1 = \Big( \frac{1}{2} n - 1 \Big) V + 1.
\end{align}
As a reminder, since there are different ways to carry out the column reduction, free variables can be taken in different ways, but the total number is invariant as $k_i^T$.

On the other hand, let $\{\zeta_i^{j_i}\}$ be any basis of $\ker H_i$. Then a canonical choice of the $X$ logical representatives of the 3D HGP code can be
\begin{align}\label{eq:HGP_basis_1}
	e_1^{i_1} \otimes \zeta_2^{j_2} \otimes \zeta_3^{j_3}, \quad
	\zeta_1^{j_1} \otimes e_2^{i_2} \otimes \zeta_3^{j_3}, \quad
	\zeta_1^{j_1} \otimes \zeta_2^{j_2} \otimes e_3^{i_3}. 
\end{align}
By definition, $k_i  = \dim \ker H_i = \dim \ker \delta = V - \rk \delta = 1$ and hence $\zeta_i^{j_i}$ is unique (up to scalars), as the all-ones vector in $\mathbb{F}_q^{M_i} = \mathbb{F}_q^{V}$ involving all vertices in the graph. There are $k_1^T + k_2^T + k_3^T$ independent logical basis elements, which is consistent with the code dimension in \eqref{eq:HGP_parameters}. Eq.~\eqref{eq:HGP_basis_1} is simply the case for HGP on graphs, more advanced forms of the logical representatives will be defined gradually.

These canonical logical representatives can help to determine the logical action of the invariant forms. To apply the cup product formulas in Section~\ref{sec:cup} and~\ref{sec:example}, we first translate the expression of logical representatives from \eqref{eq:HGP_basis_1} into vectors on the cubical complex. A standard basis vector $e_1^{i_1} \in \mathbb{F}_q^{E}$ is merely represented by one edge $[v_1;a_1]$ in the first component of $X'$. On the other hand, $\zeta_2^{j_2}$ is represented by several vertices $[v_2, b_2]$ with $b_2 = 0,1$. Namely, we can write, for instance,
\begin{align}\label{eq:HGP_basis_2}
	e_1^{i_1} \otimes \zeta_2^{j_2} \otimes \zeta_3^{j_3} = \sum_{v_2,v_3,b_2,b_3} [v_1,a_1] \otimes [v_2,b_2] \otimes [v_3, b_3] = \sum_{v_2,v_3,b_2,b_3} [g = (v_1,v_2,v_3); a_1, b_2, b_3] 
\end{align}
with $(v_1, a_1)$ being fixed while $v_2,v_3, b_2, b_3$ vary depending on $\zeta_2^{j_2} \otimes \zeta_3^{j_3}$. The other two types of $X$ logical representatives can be depicted in the same way. Since $\zeta_i^{j_i}$ is just the all-ones vector at present, we drop the scripts and denote them by $\zeta$. 

Let
\begin{align}\label{eq:HGP_basis_3}
	L^{i_1} : = e_1^{i_1} \otimes \zeta \otimes \zeta, \quad
	L^{i_2} : = \zeta \otimes e_2^{i_2} \otimes \zeta, \quad
	L^{i_3} : = \zeta \otimes \zeta \otimes e_3^{i_3}.
\end{align}
We now make a crucial observation for our purposes: with respect to the rule of taking cup products in Section~\ref{sec:example} and the fact that $e_1^{i_1},e_2^{i_2},e_3^{i_3}$ correspond to three fixed edges, respectively, 
\begin{align}\label{eq:HGP_cup}
\begin{aligned}
	L^{i_1} \smile_c L^{i_2} \smile_c L^{i_3}
	=
	\Big( & L^{i_1} ([ (v_{i_1}, v_{i_2}, v_{i_3}); a_{i_1},0,0])
	\cdot L^{i_2} ([ (a_{i_1} \cdot v_{i_1}, v_{i_2}, v_{i_3}); 1, a_{i_2},0 ]) \\
	\cdot & L^{i_3} ([ a_{i_1} \cdot v_{i_1}, a_{i_2} \cdot v_{i_2}, v_{i_3}); 1, 1, 0 ] ) \Big) [ (v_{i_1}, v_{i_2}, v_{i_3}); a_{i_1}, a_{i_2}, a_{i_3} ].
\end{aligned}
\end{align}
Since $\zeta$ is the all-ones vector, the above coefficient is a unit. Therefore, the cup product is nontrivial and supports exclusively on the 3-cube $[ (v_{i_1}, v_{i_2}, v_{i_3}); a_{i_1}, a_{i_2}, a_{i_3} ]$. By the same reasoning,
\begin{align}
	\begin{aligned}
		& L^{i_1} \smile_c L^{i_3} \smile_c L^{i_2}, \quad
		L^{i_2} \smile_c L^{i_3} \smile_c L^{i_1}, \quad
		L^{i_2} \smile_c L^{i_1} \smile_c L^{i_3}, \\
		& L^{i_3} \smile_c L^{i_1} \smile_c L^{i_2}, \quad
		L^{i_3} \smile_c L^{i_2} \smile_c L^{i_1}
	\end{aligned}
\end{align}
are all nontrivial.

\begin{example}\label{example:toric}
	 One of the most enlightening example is the 3D toric code where $\delta$ is the coboundary operator of any cycle with $k_i = k_i^T = 1$. Let $e = [v;a]$ be the free variable of $\delta^T$. Then $L^{1}, L^{2}, L^{3}$ are totally fixed. By the previous result,
	 \begin{align}
	 	L^{i} \smile_c L^{j} \smile_c L^{k} = \delta_{i,j,k} [(v,v,v); a,a,a]
	 \end{align}
	 where $\delta_{i,j,k} = 0$ unless $i = j = k$.
\end{example}

To complete the construction of the invariant form, we turn to find $\xi$. By \eqref{eq:HGP_coboundary}, 
\begin{align}\label{eq:xi_1}
	\xi \in \ker \partial_3 = \ker \begin{pmatrix} \partial \otimes I_E \otimes I_E \\ I_E \otimes \partial \otimes I_E \\ I_E \otimes I_E \otimes \partial \end{pmatrix} = \ker \partial \otimes \ker \partial \otimes \ker \partial \subset C_3(X',\mathbb{F}_q),
\end{align}
where $\partial = H_i^T = \delta^T$. 
In the reduced column echelon form of $H_1$, we notice that
\begin{align}
	\begin{pmatrix} J_1 & I_{k_1^T} \end{pmatrix} \cdot \begin{pmatrix} I_{N_1 - k_1^T} & 0 \\ J_1 & 0	\end{pmatrix} = 0.
\end{align}
Namely, rows of $( J_1 \quad I_{k_1^T} )$ span $\ker \partial$ and we denote them by  $\eta_{i_1}$. Let $\eta_{i_2}, \eta_{i_3}$ be defined similarly,  we have
\begin{align}\label{eq:eta}
	\eta_{i_1} \otimes \eta_{i_2} \otimes \eta_{i_3} = [v_{i_1}, a_{i_1}] \otimes [v_{i_2}, a_{i_2}] \otimes [v_{i_3}, a_{i_3}] + \cdots \in \ker \partial_3,
\end{align}
where all other terms in the expansion involve at least one pivot variable. Since the cup products $L^{i_1} \smile_c L^{i_2} \smile_c L^{i_3}$ support exclusively on cubes formed by free variables, 
\begin{align}
	\langle L^{i_1} \smile_c L^{i_2} \smile_c L^{i_3}, \ \eta_{i_1'} \otimes \eta_{i_2'} \otimes \eta_{i_3'} \rangle = \delta_{i_1,i_1'} \delta_{i_2,i_2'} \delta_{i_3,i_3'}.
\end{align}
The associated invariant form brings about the so-called \emph{addressable gate}, which acts solely on one triple of the logical states.

Alternatively, let $\xi := \sum_{i} \eta_{i} \otimes \eta_{i} \otimes \eta_{i}$ be summed over all free variables. Then the invariant form
\begin{align}
	T_\xi(  L^{i_1}, L^{i_2}, L^{i_3}) = (\delta_{i_1,i_2,i_3})
\end{align}
is a strictly ``diagonalized" 3-tensor with subrank equal to $k_i^T \equiv \left( \frac{1}{2} n - 1 \right) V + 1$. We summarize the results in the following theorem.

\begin{theorem}\label{thm:cup_HGP_1}
	Let $C^\bullet(X',\mathbb{F}_q)$ be any cochain complex such that $X'$ is the $t$-fold Cartesian product of the double cover of any $n$-regular connected graph. Let
	\begin{align}
		C^0(X',\mathbb{F}_q) \rightarrow C^1(X',\mathbb{F}_q) \rightarrow C^2(X',\mathbb{F}_q)
	\end{align} 
	be the CSS code with physical qudits on 1-cubes and let $N = \dim C^1(X',\mathbb{F}_q) = \frac{t}{2}n V^t$ with $k = t \left[\left( \frac{1}{2} n - 1 \right) V + 1 \right]$. We divide the canonical $X$ logical representatives into $t$ groups as in \eqref{eq:HGP_basis_3}. Then for any permutation $\alpha$ on $t$ indices and any $t$-tuple $L^{i_{\alpha(1)}}, \ldots, L^{i_{\alpha(t)}}$ of logical representatives from each group, there is an addressable invariant form such that
	\begin{align}
		T_{\alpha,i_1', \ldots,i_t'} (L^{i_{\alpha(1)}}, \ldots, L^{i_{\alpha(t)}}) = \delta_{i_1,i_1'} \cdots \delta_{i_t,i_t'}.
	\end{align}
	Moreover, there exists an invariant form $T_\xi$ such that
	\begin{align}
		T_\xi (L^{i_1}, \ldots, L^{i_t}) = \delta_{i_1, \ldots, i_t}
	\end{align}
	with subrank equal to $k/t = \Theta(N^{1/t})$.
\end{theorem}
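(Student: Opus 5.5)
The argument generalizes the 3D computation preceding the statement to $t$ factors.

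\textbf{Setup.} Write $H=\delta$ for the coboundary of the double cover $\mathcal{G}\times K_2$. Since $\mathcal{G}$ is connected, this double cover is connected and bipartite, so $\ker H$ is spanned by the all-ones vector $\zeta$, and $k^T=E-V+1$. The Künneth formula for the $t$-fold hypergraph product then gives $H^1(X')\cong\bigoplus_{j=1}^t \zeta^{\otimes(j-1)}\otimes \mathrm{coker}\,H\otimes\zeta^{\otimes(t-j)}$. Choosing free variables $e^{(i)}$ from the reduced column echelon form \eqref{eq:free_variable} yields the canonical representatives
\[
L_j^{i}=\zeta^{\otimes (j-1)}\otimes e^{(i)}\otimes \zeta^{\otimes(t-j)},
\]
which form a basis of $H^1$. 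This gives $k=t\,k^T=t[(\tfrac12 n-1)V+1]$ and the stated $N$.

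\textbf{Cup products.} I would next compute $L_{1}^{i_1}\smile_c\cdots\smile_c L_{t}^{i_t}$ by induction on the number of factors, using Lemma~\ref{lemma:cup_induct_1} together with the path description of Example~\ref{example:3D}: each $t$-cube is subdivided into $t!$ simplices, one for each ordering of the directions. A product of $t$ one-cubes is nonzero exactly when the edges form a monotone path from $[g;0,\dots,0]$ to the opposite corner. In that case it equals the whole cube, and each path contributes once.

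For a tuple taking one factor from each group, the edge in direction $j$ is forced to be the free edge $[v_{i_j};a_{i_j}]$. Every other coordinate is a vertex lying in the support of $\zeta$, which is everything. So for each ordering $\alpha$ there is exactly one contributing path inside the cube $\omega_{\boldsymbol i}=[(v_{i_1},\dots,v_{i_t});a_{i_1},\dots,a_{i_t}]$, with coefficient $1$, as in \eqref{eq:HGP_cup}. The result is
\[
L^{i_{\alpha(1)}}\smile_c\cdots\smile_c L^{i_{\alpha(t)}}=\omega_{\boldsymbol i}.
\]
I also need that cup products of tuples repeating a direction vanish. This holds because two edges in the same direction cannot lie on a monotone path, which is the vanishing computation in Example~\ref{example:2D}.

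\textbf{Cycles.} Next I construct $\xi$. The top boundary $\partial_t$ of the product is the stack of the maps $I\otimes\cdots\otimes\partial\otimes\cdots\otimes I$, so $\ker\partial_t=(\ker\partial)^{\otimes t}$. Here $\ker\partial=\ker H^T$ is spanned by the rows $\eta_i=(J\;\; I_{k^T})$, and $\eta_i$ is supported on the single free edge $i$ together with pivot edges. Consequently $\eta_{i_1'}\otimes\cdots\otimes\eta_{i_t'}$ meets the cubes whose edges are all free only at $\omega_{\boldsymbol i'}$, with coefficient $1$. Since every nonzero cup product above is supported on such an all-free cube, the pairing equals $\delta_{i_1,i_1'}\cdots\delta_{i_t,i_t'}$. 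By Theorem~\ref{thm:cup_cubical} this gives the addressable invariant forms $T_{\alpha,i'}$.

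Setting $\xi=\sum_i\eta_i^{\otimes t}$ gives $T_\xi(L^{i_1},\dots,L^{i_t})=\delta_{i_1,\dots,i_t}$. Restricted to the $t$ groups, this is the diagonal tensor $\sum_i e_i^{\otimes t}$ of size $k^T$. Its subrank is therefore at least $k^T$, and since the subrank is bounded by the dimension of each factor, it equals $k^T=k/t$. For the growth rate, $N=\tfrac t2 nV^t$ while $k/t=\Theta(V)$, which gives $\Theta(N^{1/t})$.

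\textbf{Main obstacle.} The hardest step is making the $t$-dimensional cup-product combinatorics rigorous. Specifically, I must show that the approximation map $A^\#$ sends a one-cube to itself plus diagonal terms that never survive against the other factors, so that exactly the monotone paths contribute. I would first try to define $A$ uniformly by the ``stretch $\tilde\pi_{\mathrm{id}}$'' rule and prove the path criterion by induction on $t$, using associativity (Proposition~\ref{prop:cup}) and the 2D case.
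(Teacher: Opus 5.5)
Your argument follows the paper's own route: canonical representatives built from free variables and the all-ones $\zeta$; for each ordering, exactly one path inside the single all-free cube $\omega_{\boldsymbol i}$, with coefficient $1$; cycles $\eta_{i_1'}\otimes\cdots\otimes\eta_{i_t'}$ from the rows of $(J\ \ I_{k^T})$; and $\xi=\sum_i\eta_i^{\otimes t}$. One step, however, is false as written. The double cover $\mathcal{G}\times K_2$ of a connected graph is connected only when $\mathcal{G}$ is \emph{non-bipartite}. If $\mathcal{G}$ is bipartite, $\mathcal{G}\times K_2$ is two disjoint copies of $\mathcal{G}$. Then $\dim\ker H=2$, $\dim\mathrm{coker}\,H=E-V+2$, and K\"unneth gives $k=t\,2^{t-1}(E-V+2)$. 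In that case your $L_j^i$ do not span $H^1$, and the stated values of $k$ and $k/t$ fail. The paper relies on the same tacit assumption (``a graph coboundary operator always has its rank equal to $V-1$''), so you should add the hypothesis that $\mathcal{G}$ is non-bipartite. Your cup-product and cycle computations are unaffected, since the all-ones vector still lies in $\ker H$.

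Two smaller points. First, your upper bound on the subrank holds only for the restriction of $T_\xi$ to group $j$ in slot $j$, and the paper's ``$k/t$'' must be read in that sense. By your own computation every ordering $\alpha$ of the groups lands on the same cube $\omega_{\boldsymbol i}$, so $T_\xi$ is also nonzero on $(L^{i}_{\alpha(1)},\ldots,L^{i}_{\alpha(t)})$ for every $\alpha$. For instance, when $t=2$ the Gram matrix of $T_\xi$ in the basis $(L_1^\bullet,L_2^\bullet)$ is $\begin{pmatrix}0&I\\I&0\end{pmatrix}$, of rank $2k^T=k$, not $k/2$.

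Second, the obstacle you anticipate is easier than you fear. Define the $t$-fold product in one step as $S^\#(A^\#x_1\smile\cdots\smile A^\#x_t)$; it is invariant by the same Leibniz argument used for Theorem~\ref{thm:cup_cubical}. The simplicial cup product of $1$-cochains on a path simplex $[w_0,\ldots,w_t]$ reads only the edges $[w_{j-1},w_j]$. These are cube edges, which $A$ fixes, so the diagonal terms of $A^\#x_j$ never enter. Hence $(x_1\smile_c\cdots\smile_c x_t)(\pi)=\sum_{\alpha}\prod_j x_j(e^{\alpha}_j)$ follows directly, where $e^{\alpha}_j$ is the $j$-th edge of the path for ordering $\alpha$. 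You need neither induction on $t$ nor associativity of $\smile_c$ itself, which Proposition~\ref{prop:cup} (a statement about the simplicial product) does not supply.
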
 

\begin{remark}\label{remark:HGP_1}
	As in Example~\ref{example:toric}, when the node degree $n = 2$, Theorem~\ref{thm:cup_HGP_1} reduces to the case of $t$-dimensional toric code with parameters
	\begin{align}
		[\![ N, k = t, d = \frac{1}{t} N^{1/t} ]\!]	\end{align} 
	and the subrank of $T_\xi$ decreases to $1$.
	
	For $n \geq 3$, suppose the double cover of $\mathcal{G}$ is an expander graph, then its girth (see Section~\ref{sec:SS_code}) is $\Theta(\log V) = \Theta(\log N^{1/t})$, which also equals the distance of the classical code $\ker \delta^T$. On the other hand, $\ker \delta$ is always 1-dimensional and colinear to the all-ones vector with distance equal to $V = \Theta(N^{1/t})$. Then by \eqref{eq:HGP_parameters}, the HGP code has parameters
	\begin{align}
		[\![ N, k = \Theta(N^{1/t}), d = \Theta(\log N^{1/t}) ]\!]
	\end{align} 
	and the subrank of $T_\xi$ is $k/t$. Comparing with toric code, it has a larger code dimension $k$, but the distance is not satisfactory. In order to further increase $k$, one can dispose physical qudits on high dimensional cubes with 
	\begin{align}
		C^{p-1}(X',\mathbb{F}_q) \rightarrow C^p(X',\mathbb{F}_q) \rightarrow C^{p+1}(X',\mathbb{F}_q).
	\end{align} 
	Then 
	\begin{align}
		k = \left[\left( \frac{1}{2} n - 1 \right) V + 1 \right]^p.
	\end{align}
	However, the degree of the invariant polynomial defined by cup products can be at most $\lfloor t/p \rfloor$, and in either case the polylogarithmic distance cannot be optimized. 
\end{remark} 

To solve the dilemma, we fold in the local codes. We also introduce more previous strategies to improve the code parameters in Remark~\ref{remark:HGP_2}. 


\subsection{Hypergraph product with local codes}\label{sec:HGP_local}

We now consider $C^\bullet(X', \F)$. We will also encounter tensor product of sheaves in this case, but it suffices to keep all of them the same as $\F$ in this subsection.  With local coefficients, $H_i^T \in \mathbb{F}_q^{m_i V \times E}$. They are no longer simple graph boundary operators, but in the form of Tanner--Sipser--Spielman codes in Section~\ref{sec:SS_code}. Then
\begin{align}\label{eq:kappa^T_2}
	k_i^T = E - \rk H_i \geq E - m_i V = \left( \frac{1}{2}n - m_i \right) V,
\end{align} 
where we upload the local code $h_i \in \mathbb{F}_q^{m_i \times n}$ with $m_i < \frac{1}{2} n$ for now. Different choices will be made in later sections. The canonical logical representatives are still of the form in \eqref{eq:HGP_basis_1}, but $\zeta_i^{j_i}$ need to be redefined with local coefficients. By Remark~\ref{remark:SS_code} or in Section~\ref{sec:DLV_code}, the local coefficients pasted on edges are always trivial, so for instance, $e_1^{i_1}$ still stands for some $[v_1,a_1]$ as one of the $k_1^T$ free variables.

On the other hand, each row of $H_1$ indexed by an edge $[v_1; a_1]$ is of the form 
\begin{align}\label{eq:H_1}
	\begin{pmatrix}
		\cdots & h_1^T(a_1, \text{-}) & \cdots &  h_1^T(a_1, \text{-}) & \cdots
	\end{pmatrix}
\end{align}
where $h_1^T(a_1, \text{-})$ acts on the local coefficient spaces of both $(v_1,0)$ and $(a_1 \cdot v_1, 1)$ (see Section~\ref{sec:DLV_code}). For every $[v_1, b_1]$, we define
\begin{align}
	\zeta_1^{j_1} [v_1, b_1] \equiv z \in \mathbb{F}_q^{m_1}
\end{align}
by some fixed $z$. It is easy to check $\zeta_1^{j_1} \in \ker H_1$. We define $\zeta_2^{j_2}$ and $\zeta_3^{j_3}$ in the same way. 

We carefully choose $z$ to ensure that the cup product is nontrivial. To be specific, given three free variables $e_1^{i_1} = [v_{i_1};a_{i_1}]$, $e_2^{i_2} = [v_{i_2};a_{i_2}]$ and $e_3^{i_3} = [v_{i_3};a_{i_3}]$, we take any $z_j \in \mathbb{F}_q^{m_j}$ such that
\begin{align}\label{eq:z_i}
	 h_1^T(a_{i_1}, \text{-}) z_1 = h_2^T(a_{i_2}, \text{-}) z_2 = h_3^T(a_{i_3}, \text{-}) z_3 = 1 \in \mathbb{F}_q.
\end{align}
Unlike \eqref{eq:HGP_basis_3} where all $\zeta$ are the same, we define
\begin{align}\label{eq:HGP_basis_F}
	L_{1}^{i_1,i_2,i_3} : = e_1^{i_1} \otimes \zeta_2^{i_2} \otimes \zeta_3^{i_3}, \quad
	L_{2}^{i_1,i_2,i_3} : = \zeta_1^{i_1} \otimes e_2^{i_2} \otimes \zeta_3^{i_3}, \quad
	L_{3}^{i_1,i_2,i_3} : = \zeta_1^{i_1} \otimes \zeta_2^{i_2} \otimes e_3^{i_3}
\end{align}
such that
\begin{align}\label{eq:zeta_i}
	\zeta_1^{i_1} [v_1, b_1] \equiv z_1, \quad
	\zeta_2^{i_2} [v_2, b_2] \equiv z_2, \quad
	\zeta_3^{i_3} [v_3, b_3] \equiv z_3.
\end{align}
Then it is immediate to see that
\begin{align}\label{eq:HGP_cup_F}
	L_{1}^{i_1,i_2,i_3} \smile_c L_{2}^{i_1,i_2,i_3} \smile_c L_{3}^{i_1,i_2,i_3}
\end{align}
yields the unique 3-cube $[ (v_{i_1}, v_{i_2}, v_{i_3}); a_{i_1}, a_{i_2}, a_{i_3} ]$ with local coefficient:
\begin{align}
\begin{aligned}
	& h_2^T(a_{i_2},\text{-}) \otimes h_3^T(a_{i_3},\text{-}) \Big( \zeta_2^{i_2} [v_{i_2}, 0] \otimes \zeta_3^{i_3} [v_{i_3}, 0] \Big) \\
	\cdot & h_1^T(a_{i_1},\text{-}) \otimes h_3^T(a_{i_3},\text{-}) \Big( \zeta_1^{i_1} [ a_{i_1} \cdot v_{i_1}, 1] \otimes \zeta_3^{i_3} [ v_{i_3}, 0] \Big) \\
	\cdot & h_1^T(a_{i_1},\text{-}) \otimes h_2^T(a_{i_2},\text{-}) \Big( \zeta_1^{i_1} [ a_{i_1} \cdot v_{i_1}, 1] \otimes \zeta_2^{i_2} [ a_{i_2} \cdot v_{i_2}, 1] \Big) = 1
\end{aligned}
\end{align}
by \eqref{eq:H_1}, \eqref{eq:z_i} and \eqref{eq:zeta_i}.


The most tricky point when dealing with $\F$ is in the definition of $\xi$. Almost like \eqref{eq:xi_1}, the boundary operator is 
\begin{align}\label{eq:xi_2}
	\xi \in \ker \begin{pmatrix} \mathring{H}_1^T \otimes I_E \otimes I_E \\ I_E \otimes \mathring{H}_2^T \otimes I_E \\ I_E \otimes I_E \otimes \mathring{H}_3^T \end{pmatrix} = \ker \mathring{H}_1^T \otimes \ker \mathring{H}_2^T \otimes \ker \mathring{H}_3^T \subset C_3(X',\F^{\otimes 3}),
\end{align}
where $\mathring{H}_i$ is defined on the graph boundary operator $\partial$ but the inside subblock matrices are $(h_i(a_i, \text{-}))^{\otimes 3}$ from $\F^{\otimes 3}$ (see also Example~\ref{example:3D}). Intuitively, $H_i^T \in \mathbb{F}_q^{m_i V \times E}$ enlarges each row in $\partial$ by local codes, and they are further amplified in $\mathring{H}_i^T \in \mathbb{F}_q^{m_i^3 V \times E}$. Consequently, the free variables of $H_i^T$ may no longer be free for $\mathring{H}_i^T$. Looking at the size of $\mathring{H}_i^T$, we notice that there might be no solution to $\xi$ if $m_i/n$ is too large. One may require $\frac{1}{2}n \geq m_i^3$, which implies
\begin{align}
	m_i \leq \Big( \frac{1}{2}n \Big)^{1/3} \implies \text{ the dimension of the local code } \geq n - \Big( \frac{1}{2}n \Big)^{1/3}. 
\end{align} 
However, by the Singleton bound on classical codes, this drastically diminishes the code distance of $h_i$. Then the condition in Theorem~\ref{thm:cLDPC} can never be fulfilled, so $H_i^T$ may not be a good classical code and the code parameters in Remark~\ref{remark:HGP_1} may not be improved. 


\medskip
We address this problem in Section~\ref{sec:cycles} with explicit examples and numerical computations. For now, we only study the properties of invariant forms when $\xi$ exists. Assume $\ker \mathring{H}_i^T$ is nontrivial, then it also has free variables in the sense of \eqref{eq:free_variable}. Let us define
\begin{align}
	\eta_{i_1} \otimes \eta_{i_2} \otimes \eta_{i_3} = [v_{i_1}, a_{i_1}] \otimes [v_{i_2}, a_{i_2}] \otimes [v_{i_3}, a_{i_3}] + \cdots \in \ker \mathring{H}_1^T \otimes \ker \mathring{H}_2^T \otimes \ker \mathring{H}_3^T
\end{align}
by free variables of $\mathring{H}_i^T$, analogous to the case in \eqref{eq:eta} over $\mathbb{F}_q$. We define $X$ logical representatives by these free variables as in \eqref{eq:HGP_basis_F} and obviously 
\begin{align}
	\langle L_{1}^{i_1,i_2,i_3} \smile_c L_{2}^{i_1,i_2,i_3} \smile_c L_{3}^{i_1,i_2,i_3}, \ \eta_{i_1} \otimes \eta_{i_2} \otimes \eta_{i_3} \rangle \neq 0.
\end{align}
Although $[v_{i_1}, a_{i_1}]$, $[v_{i_2}, a_{i_2}]$ and $[v_{i_3}, a_{i_3}]$ may not be free variables of $H_1^T$, $H_2^T$ and $H_3^T$, it is easy to check that each $L_{j}^{i_1,i_2,i_3} \in \ker \delta^1$, and they are indeed inequivalent logical representatives. To be precise, let $\{[v_{i_1'}, a_{i_1'}]\}$ be any collection of free variables of $\mathring{H}_1^T$. We define 
\begin{align}
	 L_{1}^{i_1',i_2',i_3'} = [v_{i_1'}, a_{i_1'}] \otimes \zeta_2^{i_2'} \otimes \zeta_3^{i_3'}
\end{align}
with arbitrary $\zeta_2^{i_2'} \in \ker H_2$ and $\zeta_3^{i_3'} \in \ker H_3$. Regardless of $L_{1}^{i_1',i_2',i_3'} \smile_c L_{2}^{i_1,i_2,i_3} \smile_c L_{3}^{i_1,i_2,i_3}$ is zero or not,
\begin{align}
	\langle L_{1}^{i_1',i_2',i_3'} \smile_c L_{2}^{i_1,i_2,i_3} \smile_c L_{3}^{i_1,i_2,i_3}, \ \eta_{i_1} \otimes \eta_{i_2} \otimes \eta_{i_3} \rangle = 0.
\end{align}
Assume their linear combination is equivalent to $L_{1}^{i_1,i_2,i_3}$, then
\begin{align}
	\Big\langle \Big( L_{1}^{i_1,i_2,i_3} + \sum_{i_1'} L_{1}^{i_1',i_2',i_3'} \Big) \smile_c L_{2}^{i_1,i_2,i_3} \smile_c L_{3}^{i_1,i_2,i_3}, \ \eta_{i_1} \otimes \eta_{i_2} \otimes \eta_{i_3} \Big\rangle = 1
\end{align}
contradicts the fact that 
\begin{align}
	& \Big \langle \delta^0 y \smile_c L_{2}^{i_1,i_2,i_3} \smile_c L_{3}^{i_1,i_2,i_3}, \ \eta_{i_1} \otimes \eta_{i_2} \otimes \eta_{i_3} \Big\rangle
	= \Big\langle \delta^2 \Big(y \smile_c L_{2}^{i_1,i_2,i_3} \smile_c L_{3}^{i_1,i_2,i_3} \Big), \ \eta_{i_1} \otimes \eta_{i_2} \otimes \eta_{i_3} \Big\rangle \notag \\
	= & \Big\langle y \smile_c L_{2}^{i_1,i_2,i_3} \smile_c L_{3}^{i_1,i_2,i_3} , \ \partial_3 \eta_{i_1} \otimes \eta_{i_2} \otimes \eta_{i_3} \Big\rangle = 0.
\end{align}

Conclusively, let $\mathcal{G}$ be an $n$-regular expander graph and let $h_i \in \mathbb{F}_q^{m_i \times n}$ local codes such that $m_i < \frac{1}{2}n$ and Theorem~\ref{thm:cLDPC} hold. 
Then let $C^\bullet(X',\F)$ be the cochain complex such that $X'$ is defined by the $t$-fold Cartesian product of the double cover of $\mathcal{G}$ and $\F$ is generated by $h_i$. We extract the HGP code 
\begin{align}\label{eq:HGP_t}
	C^0(X',\F) \rightarrow C^1(X',\F) \rightarrow C^2(X',\F),
\end{align} 
which has parameters
\begin{align}
	[\![ N, k = \Omega(N^{1/t}), d = \Theta(N^{1/t}) ]\!].
\end{align} 
Let $\kappa_i^T = \dim \ker \mathring{H}_i^T$, which is also the number of free variables, and let the $X$ logical representatives be defined by these free variables. We divide them into $t$ groups as \eqref{eq:HGP_basis_F}. Then the following theorem holds. 

\begin{theorem}\label{thm:cup_HGP_2}
	For any permutation $\alpha$ on $t$ indices and any $t$-tuple $L_{ \alpha(1)}^{i_1, \ldots,i_t}, \ldots,L_{ \alpha(t)}^{i_{\alpha(1)}, \ldots,i_{\alpha(t)}}$ of logical representatives from each group, there is an addressable invariant form such that
	\begin{align}
		T_{\alpha,i_1', \ldots,i_t'} (L_{ \alpha(1)}^{i_1, \ldots,i_t}, \ldots,L_{ \alpha(t)}^{i_1, \ldots,i_t} ) = \delta_{i_1,i_1'} \cdots \delta_{i_t,i_t'}.
	\end{align}
	Moreover, there exists one invariant form $T_\xi$ such that
	\begin{align}
		T_\xi ( L_{1}^{i_1, \ldots,i_t}, \ldots,L_{t}^{i_1, \ldots,i_t} ) = \delta_{i_1, \ldots, i_t}
	\end{align}
	with subrank equal to $\min\{ \kappa_1^T, \ldots,\kappa_t^T \}$.
\end{theorem}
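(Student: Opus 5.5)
The plan follows the scalar case, Theorem~\ref{thm:cup_HGP_1}, and adapts it to the local codes. I take free variables of $\mathring{H}_j^T$ rather than of $H_j^T$, and I pair the cup products against tensor products of the corresponding kernel vectors $\eta$. There are four steps.

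\emph{Step 1: the representatives are cocycles.} For each $j\in[t]$ and each free variable $[v_{i_j};a_{i_j}]$ of $\mathring{H}_j^T$, I choose $z_j\in\mathbb{F}_q^{m_j}$ with $h_j^T(a_{i_j},\text{-})z_j=1$. This is possible because $h_j$ has full rank, so no column of $h_j$ vanishes; one also checks that the free variables can be taken with $h_j(\text{-},a_{i_j})\neq0$. I then build $\zeta_j^{i_j}$ as the constant vector $z_j$ on all $[v,b]$, which lies in $\ker H_j$ by \eqref{eq:H_1}. The $t$-fold analogue of \eqref{eq:HGP_basis_F} gives $L_k^{i_1,\dots,i_t}$: the $k$-th tensor factor is $e_k^{i_k}$ and every other factor $j$ is $\zeta_j^{i_j}$. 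Cocycle membership in $\ker\delta^1$ follows from the HGP form \eqref{eq:HGP_coboundary}, since every term of $\delta^1 L_k$ contains some $H_j\zeta_j=0$.

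\emph{Step 2: the cup product is a single cube.} For a permutation $\alpha$, I compute $L_{\alpha(1)}\smile_c\cdots\smile_c L_{\alpha(t)}$ with Lemma~\ref{lemma:cup_induct_1} and the path description of Example~\ref{example:3D}. Each $L_k$ has its $\mathcal{T}_k$-direction fixed to the single edge $e_k^{i_k}$. Hence the only $t$-cube whose subdivision path can use one edge from each factor is $\pi=[(v_{i_1},\dots,v_{i_t});a_{i_1},\dots,a_{i_t}]$, traversed in the order $\alpha$. By \eqref{eq:cup_local_vector}, the coefficient on $\pi$ is a product of factors $h_j^T(a_{i_j},\text{-})z_j=1$, so the cup product equals $\pi$ with coefficient $1$.

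\emph{Step 3: pairing against cycles.} Writing $\eta_{i_j}$ for the kernel vector of $\mathring{H}_j^T$ attached to the free variable $i_j$, the tensor $\eta_{i_1}\otimes\cdots\otimes\eta_{i_t}$ lies in $\bigotimes_j\ker\mathring{H}_j^T$, which is contained in $\ker\partial_t$ as in \eqref{eq:xi_2}. Its restriction to cubes built only from free variables is $\delta$-like in each factor. Pairing with it therefore gives the addressable form: $T_{\alpha,i'}(\cdots)=\prod_j\delta_{i_j,i_j'}$. For the diagonal form I take $\xi=\sum_i\eta_i\otimes\cdots\otimes\eta_i$, with $i$ ranging over $\min_j\kappa_j^T$ indices after matching free variables across the $t$ factors. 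This gives $T_\xi(L_1,\dots,L_t)=\delta_{i_1,\dots,i_t}$.

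\emph{Step 4: inequivalence and subrank.} Inequivalence of the representatives follows from the Leibniz/coboundary argument already displayed above: pairing a coboundary against a cycle vanishes. So any relation among the $L$'s would contradict the value $1$ in Step 3. The subrank claim needs the most care. The index set must be arranged so that the $t$ representatives with index $i$ give a $t$-tuple that is diagonal in $T_\xi$, while $T_\xi$ vanishes on mixed tuples. This is the main obstacle, because $L_k^{i_1,\dots,i_t}$ depends on all $t$ indices through the $z_j$. I handle it by restricting to the diagonal family $L_k^{i,\dots,i}$. On mixed tuples the cup product is supported on a cube whose free-variable pattern differs from every $(i,\dots,i)$, so Step 3 gives $0$. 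Restricted to the inequivalent diagonal representatives, $T_\xi$ is the identity tensor of size $\min_j\kappa_j^T$. This exhibits that subrank directly, via projections onto the span of the diagonal family.
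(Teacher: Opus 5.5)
Your proposal is correct and is essentially the paper's own argument. The shared ingredients are: free variables of $\mathring{H}_j^T$; constant $\zeta_j^{i_j}\equiv z_j$ normalized by $h_j^T(a_{i_j},\text{-})z_j=1$, so that the matched cup product is the single cube $[(v_{i_1},\dots,v_{i_t});a_{i_1},\dots,a_{i_t}]$ with coefficient $1$; pairing against $\eta_{i_1'}\otimes\cdots\otimes\eta_{i_t'}$ or $\sum_i\eta_i\otimes\cdots\otimes\eta_i$; and the coboundary-pairing argument for inequivalence.

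Your diagonal family $L_k^{i,\dots,i}$ just makes explicit how the paper's joint-index identity yields a unit-tensor restriction of size $\min_j\kappa_j^T$. As in the paper, that only bounds the subrank from below; neither you nor the paper proves the reverse inequality behind ``equal to''.

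One fix is needed in Step~1: full row rank of $h_j$ does not exclude zero columns. Instead, $h_j(\text{-},a)\neq 0$ for every $a$ follows from $\ker h_j$ having minimum distance at least $2$, since a zero column $a$ would put the unit vector $e_a$ in $\ker h_j$. The paper's assumptions on the local codes ensure this.
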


Note that in defining $T_\xi$, we have to align the free variables of $\mathring{H}_1^T, \ldots,\mathring{H}_t^T$. Alternatively, let $h_1 = \cdots = h_t$ be the same local codes, then these matrices will be the same.   

\begin{remark}\label{remark:HGP_2}
	We note that the above HGP code encodes $k = \Omega(N^{1/t})$ logical qudits. In the worst case, it may have the same scaling as the previous case without local codes. This is due to the fact that a linear lower bound on $k_i = \dim \ker H_i$ is missing (see \eqref{eq:HGP_parameters}). Although $H_i^T$ are the parity-check matrices of good classical codes such that $k_i^T, d_i^T$ and even $d_i$ are proportional to $V$~\cite{SS1996,PK2021}, a general lower bound on $k_i$ is still unknown. There is one method proposed in~\cite{Zemor2014} that can achieve constant coding rate: for example in 2D, we take products between $H_1: \mathbb{F}_q^{M_i} \rightarrow \mathbb{F}_q^{N_i}$ and $H_2^T: \mathbb{F}_q^{N_i} \rightarrow \mathbb{F}_q^{M_i}$. Then parameters become
	\begin{align}
		[\![ \Theta(N^2), k_1 k_2 + k_1^T k_2^T = \Theta(N^2), \Theta(N) ]\!].
	\end{align}
	However, if we do so, there will be a conceptual difficulty to define cup products as the physical qudits are now located on both $0$-cubes and $2$-cubes, but the stabilizers are on 1-cubes. In addition, there are hyperbolic codes that admit constant coding rates and cup product gates, while the distance has a logarithmic scaling~\cite{Zhu2023,RainbowCode,Breuckmann2024Cups}. 
	
	There is another different framework using simplicial complexes~\cite{Zhu2025A,Zhu2025B}: to create HGP codes with constant coding rates and larger distances, symmetrized parity-check matrices $\bar{H}_i := H_i H_i^T$, which achieve optimal $\bar{k}_i,\bar{k}_i^T,\bar{k}_i,\bar{d}_i^T$, are used. The manifest structure of cubical complexes would be lost, but the code-to-manifold mapping~\cite{FreedmanHastings2021} can be applied, and then the triangulation on manifolds yields simplicial complexes which still support cup products.
\end{remark}


\subsection{Hypergraph product with lifts}\label{sec:HGP_lift}

In Theorem~\ref{thm:cup_HGP_2}, the subrank of $T_\xi$ is bounded by the unknown number of free variables of $\mathring{H}_i^T$, but we can resolve this issue with more structured cup products. To be precise, recall the cup products defined in \eqref{eq:HGP_cup}. Suppose we vary the third free variable $i_3'$ but fix $i_1,i_2$. Then
\begin{align}\label{eq:HGP_cup'}
	L^{i_1} \smile_c L^{i_2} \smile_c L^{i_3'} = [ (v_{i_1}, v_{i_2}, v_{i_3'}); a_{i_1}, a_{i_2}, a_{i_3'} ].
\end{align}
On the other hand, $\xi$ is chosen specifically by the free variables to ensure that $T_\xi( L^{i_1} \smile_c L^{i_2} \smile_c L^{i_3'}  ) = 0$. This makes it straightforward to count the subrank of $T_\xi$, given the fact that estimating subrank is complicated for general tensors. This is also the case for cup products in \eqref{eq:HGP_cup_F}, and both Theorem~\ref{thm:cup_HGP_1} and~\ref{thm:cup_HGP_2} rely on appropriate choices of $\xi$ in \eqref{eq:eta} and \eqref{eq:xi_2}, respectively. 

We are inspired to search for better cup products such that
\begin{align}\label{eq:HGP_cup_L}
	L_{1}^{j_1} \smile_c L_{2}^{j_2} \smile_c L_{3}^{j_3} = 0
\end{align}
unless $j_1 = j_2 = j_3$, then there would be less dependence on the choice of $\xi$. We utilize lifts to achieve this. The idea can also be extended to (almost) good qLDPC code families and we elaborate on the details in Section~\ref{sec:proof_sheaf}. 

Before proceeding, we also provide more expositions on the choice of local codes:
\begin{enumerate}
	\item By Theorem~\ref{thm:cLDPC}, the parity-check matrix $h_i$ should be of full rank and the associated code as well as dual code are required to have good distances, in order to ensure that $d_i^T$ and $d_i$ are linear. This improves the HGP code distance relative to the version without local codes in Remark~\ref{remark:HGP_1}.
	
	\item Since $H_i^T \in \mathbb{F}_q^{m_i V \times E}$, it is conventional to take $m_i < \frac{1}{2} n$ so that $k_i^T = \dim \ker H_i^T$ will scale with the system size. As mentioned after \eqref{eq:xi_2}, $m_i$ cannot be too small. Otherwise, several code parameter bounds forbid the local codes to have good distances. This range of $m_i$ is adopted in~\ref{sec:HGP_local} and the example in~\ref{sec:cycles}.
	
	\item Unconventionally, we consider $\frac{1}{2} n < m_i < n$. Then $k_i = \dim \ker H_i$ begins to increase with the system size. The Gilbert--Varshamov bound confirms that the local codes with desired (dual) distances still exist and the corresponding HGP code in \eqref{eq:HGP_t} now has parameters
	\begin{align}
		[\![ N, k = \Omega(N^{(t-1)/t}), d = \Theta(N^{1/t}) ]\!]
	\end{align} 
	because $k_i$ dominates in the formula of $k$. Admittedly, a large $m_i$ would cause more trouble when solving $\xi$, but together with a large lift, it helps to magnify the coding rate, and we are going to prove that it further facilitates the formation of well-behaved cup products. This range of $m_i$ is adopted in this subsection and more specific choices of $m_i$ are discussed in~\ref{sec:HGP_like}.
\end{enumerate}

We now consider an exponentially large lift $\hat{\mathcal{G}}$ of $\mathcal{G}$ via a group $\HH$ such that the lift is still an expander graph. We still use the HGP $\hat{X}'$ defined by the $t$-fold Cartesian product of the double cover of $\hat{\mathcal{G}}$. Let $C^\bullet(\hat{X}',\F)$ be the cochain complex. Let $\vert{\HH}\vert = l$. By definition, the coboundary operator $\hat{\delta}: \mathbb{F}_q^{l \cdot V} \to \mathbb{F}_q^{l \cdot E}$ of the double cover of $\hat{\mathcal{G}}$ is defined by substituting some $l \times l$ permutation matrix for each nonzero entry $\delta(\tau,\sigma)$. In the same spirit as in Remark~\ref{remark:lift}, for each edge $\tau = [v;a]$ with endpoints $[v;0]$ and $[a \cdot v; 1]$, 
\begin{align}
	\text{ if } \delta(\tau,[v;0]) \mapsto I_l, 
	\text{ then }  \delta(\tau,[a \cdot v;1]) \mapsto \gamma_{(v, a \cdot v)}
\end{align}
where $\gamma_{(v, a \cdot v)}$ is the representation of the group element of $\HH$ acting on $(v, a \cdot v)$. As a reminder, we take the double cover of the lift in Section~\ref{sec:cubical}, but do not lift the double cover. Then the local codes are imposed to define $\hat{H}_j$. For instance, each row of $\hat{H}_1$ indexed by any lifted edge is of the form 
\begin{align}\label{eq:row}
	\begin{pmatrix}
		\cdots & h_1^T(a_1, \text{-}) & \cdots &  h_1^T(a_1, \text{-}) & \cdots
	\end{pmatrix}.
\end{align}
which resembles a corresponding row of $H_1$ in \eqref{eq:H_1} but with added zeros.

The lift of $\delta^1$ from \eqref{eq:HGP_coboundary} becomes
\begin{align}\label{eq:HGP_coboundary_lift}
	\hat{\delta}^1 = \begin{pmatrix} 
		0 & I_{l \cdot V} \otimes I_{l \cdot E} \otimes \hat{H}_3 & I_{l \cdot V} \otimes \hat{H}_2 \otimes I_{l \cdot E} & \\ 
		I_{l \cdot E} \otimes I_{l \cdot V} \otimes  \hat{H}_3 & 0 &  \hat{H}_1 \otimes I_{l \cdot V} \otimes I_{l \cdot E} \\
		I_{l \cdot E} \otimes  \hat{H}_2 \otimes I_{l \cdot V} &  \hat{H}_1 \otimes I_{l \cdot E} \otimes I_{l \cdot V} & 0 
	\end{pmatrix}
\end{align}
The definitions of canonical $X$ logical representatives are still the same as in Section~\ref{sec:HGP_local}, but we have to explicitly include the lifts. For instance, for any $h_1 \in \HH$, a free variable of $\hat{H}_1^T$ is of the form $[h_1, v_1; a_1]$ and a logical representative can be
\begin{align}\label{eq:HGP_basis_L}
	[h_1, v_1; a_1] \otimes \sum_{h_2,v_2,a_2} [h_2, v_2; a_2] \otimes \sum_{h_3,v_3,a_3} [h_3, v_3; a_3],
\end{align}
where the last two components of the tensor product, together with local vectors, come from $\ker \hat{H}_2$ and $\ker \hat{H}_3$, respectively.

One advantage of exponential lifting is that $\vert{\HH}\vert = l$ dominates in the system scaling. Since we assume $\frac{1}{2} n < m_i < n$, $\dim \ker \hat{H}_i \geq \Big( m_i - \frac{1}{2}n \Big) l \cdot V$ also scales with $l$. Intuitively, since $V$ is rather small, by using row reductions on the solutions to $\hat{H}_i x = 0$, we must be able to find some $[v_i;a_i]$ with $\Theta(l)$ independent solutions such that their restrictions to $[(h,v_i);a_i]$ for $\Theta(l)$ many $h$ are still independent. This argument is clarified in Section~\ref{sec:module} for sheaf codes. For HGP, we have the following theorem.

\begin{theorem}\label{thm:cup_HGP_3}
	Under the generalized Riemann hypothesis, there are infinitely many primes $l$ such that for each lift of size $l$, there are $\zeta_i^j \in \ker \hat{H}_i$ with $j = 1, \ldots,l$ and for all $h \in \HH$,
	\begin{align}
		h_i^T(a_i,\text{-}) \Big( \zeta_i^j ([(h,v_i); 0]) \Big)
	\end{align}
	constitute the standard basis of $\mathbb{F}_q^l$ for constant fraction of $v_i$ and $a_i$. 
\end{theorem}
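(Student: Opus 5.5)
We work in the group-algebra picture, taking $\HH = C_l$ and $R := \mathbb{F}_q[C_l] \cong \mathbb{F}_q[\x]/(\x^l - 1)$. Each $\mathbb{F}_q^{l}$-block indexed by a base cell $\sigma$ is then a copy of $R$, and the permutation matrix $\gamma_{(v,a\cdot v)}$ acts as multiplication by a monomial $\x^{s}$. Hence $\hat H_i$ becomes an $R$-linear map $R^{m_i V} \to R^{E}$, namely $H_i$ with some entries multiplied by monomials, and $\ker \hat H_i$ is an $R$-submodule of $R^{m_i V}$.

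For fixed $v_i$ and $a_i$, consider the $R$-linear evaluation
\[
\phi_{v_i,a_i}: \ker\hat H_i \to R, \qquad \zeta \mapsto h_i^T(a_i,\text{-})\,\zeta([v_i;0]).
\]
Its image $I_{v_i,a_i}$ is an ideal of $R$. The statement asks for $l$ elements $\zeta_i^1,\ldots,\zeta_i^l$ whose images are the standard basis $\{\x^h\}$ of $R$ over $\mathbb{F}_q$. It therefore suffices to show $I_{v_i,a_i} = R$: once some $\zeta$ maps to $1$, the translates $\x^h\zeta$ lie in $\ker\hat H_i$ and map to $\x^h$.

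To get $I_{v_i,a_i} = R$, we use that $\gcd(l,2)=1$, so $R$ is semisimple:
\[
R \cong \mathbb{F}_q[\x]/(\x-1) \times \prod_j \mathbb{F}_q[\x]/(f_j),
\]
where the $f_j$ are the irreducible factors of $\Phi_l$. The ideal $I$ is proper exactly when it lies in some maximal ideal $\langle \x - 1\rangle$ or $\langle f_j\rangle$. Under Artin's conjecture (which holds under GRH by Hooley), there are infinitely many primes $l$ for which $q$ has order $l-1$ modulo $l$. For these $l$, $\Phi_l$ is irreducible, so $R \cong \mathbb{F}_q \times \mathbb{F}_{q^{l-1}}$ and there are only two maximal ideals. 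The two exceptional ideals are then handled separately.

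\textbf{At $\x = 1$.} Reducing modulo $\x - 1$ sends $\hat H_i$ to the unlifted $H_i$. Since $R$ is semisimple, $\ker$ commutes with this projection, so the image of $I$ in $\mathbb{F}_q$ is the image of $\zeta \mapsto h_i^T(a_i,\text{-})\zeta([v_i;0])$ on $\ker H_i$. The constant cochain $\zeta \equiv z$ from Section~\ref{sec:HGP_local} with $h_i^T(a_i,\text{-})z = 1$ lies in $\ker H_i$ and makes this image nonzero. This holds for every $v_i,a_i$.

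\textbf{At $\Phi_l$.} This component of $\ker\hat H_i$ is the kernel of the twisted matrix $H_i(\omega)$ over $\mathbb{F}_{q^{l-1}}$, where $\omega$ is a primitive $l$-th root of unity. Its dimension is at least $(m_i - n/2)V > 0$, since $m_i > n/2$. Pick any nonzero $\zeta$ in this kernel. Because $h_i$ is full rank with dual distance $d^\perp$, some block of $\zeta$ at a vertex $[v_i;0]$ satisfies $h_i^T(a_i,\text{-})\zeta \neq 0$ for most $a_i$. The vector $h_i^T\zeta([v_i;0]) \in \mathbb{F}^n$ lies in $\C^\perp$, so it has weight at least $d^\perp$, and hence at least a $d^\perp/n$ fraction of the $a_i$ are good. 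To reach a constant fraction of the $v_i$ as well, we use the dimension count: the kernel has dimension linear in $V$, so the set of vertices on which all kernel vectors vanish has size at most $V - (m_i - n/2)V/m_i$, leaving a constant fraction of good vertices.

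Finally, by the Chinese remainder theorem, pick $\zeta$ in $\ker\hat H_i$ whose two components are nonzero at the chosen $(v_i,a_i)$. Then $\phi(\zeta)$ is a unit, and the translates $\x^h\zeta$ give the required basis.

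The main obstacle is the $\Phi_l$ step. It needs a lower bound on the twisted kernel that is uniform over the good primes $l$, together with a careful count showing that a constant fraction of the pairs $(v_i,a_i)$ are hit simultaneously, rather than just a single pair. I would first attempt the naive rank bound $\rk H_i(\omega) \le EV$-type counting, and fall back on averaging over $(v_i,a_i)$ if vertex-wise nonvanishing fails.
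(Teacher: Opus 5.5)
Your argument is correct, and its skeleton is the paper's. You reduce to showing that the evaluation ideal $I_{v_i,a_i}$ is all of $R=\mathbb{F}_q[C_l]$, which is the content of Lemma~\ref{lemma:basis}. You use Artin's conjecture so that the only maximal ideals are $\langle 1+\x\rangle$ and $\langle \Phi_l\rangle$. You escape $\langle 1+\x\rangle$ with the constant cochain, whose all-ones lift evaluates to $1+\x+\cdots+\x^{l-1}\notin\langle 1+\x\rangle$ because $l$ is odd.

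The genuine difference is how you escape $\langle\Phi_l\rangle$.
\begin{itemize}
\item The paper stays over $\mathbb{F}_q$. It row-reduces the $\Theta(lV)$ independent solutions of $\hat H_i\phi=0$ and pigeonholes the pivots to find a constant fraction of vertices carrying $\Theta(l)$ pivot rows (Claim~\ref{claim_1}). It then uses the dual distance to find labels whose evaluation keeps rank $\Theta(l)$ (Claim~\ref{claim_2}). So the ideal has dimension $\Theta(l)>1=\dim\langle\Phi_l\rangle$.
\item You instead split $\ker\hat H_i\cong\ker H_i\times\ker H_i(\omega)$ by semisimplicity, and need only \emph{one} nonzero evaluation in the $\mathbb{F}_{q^{l-1}}$-component. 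The bound $(m_i-\tfrac{n}{2})V$ is a plain row count for $H_i(\omega)$, and a support count replaces the pivot bookkeeping. Dual distance then gives at least $d^\perp$ good labels per good vertex. It is worth saying explicitly that dual distance survives extension of scalars, since $h_i^T\zeta([v_i;0])$ now lies in $\C^\perp\otimes\mathbb{F}_{q^{l-1}}$.
\end{itemize}
Your route is shorter and keeps the quantifiers clean. Each good pair gets its own $\zeta$, which is exactly the per-position form in which the paper uses the result in Claim~\ref{claim_3}. In particular, the ``main obstacle'' you flag is already resolved by your own count, since $(m_i-\tfrac{n}{2})V$ depends on neither $l$ nor $\omega$.

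What the paper's route buys is an $\mathbb{F}_q$-dimension statement: a $\Theta(l)$-dimensional ideal at a constant fraction of positions. The paper falls back on this when Artin's conjecture is dropped. Your argument as written uses Artin essentially, because with several nontrivial components the good sets for different components need not intersect.

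Two small repairs are needed.
\begin{itemize}
\item Artin's conjecture only applies when $q=2^s$ is not a perfect square, i.e.\ $s$ odd. For even $s$, $q$ is a quadratic residue modulo every odd prime and so is never a primitive root. The paper imposes this, and you should too.
\item Your support count gives $|T|\ge(1-\tfrac{n}{2m_i})V<\tfrac{V}{2}$ good vertices of the double cover. This does not by itself yield any vertex with $b=0$. To fix it, use the row of $\hat H_i$ at the edge $[v;a]$, which gives $h_i^T(a,\text{-})\zeta([v;0])=\x^{s}\,h_i^T(a,\text{-})\zeta([a\cdot v;1])$ for a unit monomial $\x^{s}$ (the analogue of \eqref{eq:delta_condition1}). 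Hence good pairs with $b=0$ and $b=1$ are in bijection, and a constant fraction of pairs $([v_i;0],a_i)$ is good.
\end{itemize}
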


The proof follows from Section~\ref{sec:cup_sheaf} for general sheaf codes. We only discuss the consequence here. Let  
\begin{align}
	& L_{1}^j : = [(h_j,v_1); a_1] \otimes \zeta_2^j \otimes \zeta_3^j, \\
	& L_{2}^j : = \zeta_1^j \otimes [(h_j,v_2); a_2] \otimes \zeta_3^j, \\
	& L_{3}^j : = \zeta_1^j \otimes \zeta_2^j \otimes [(h_j,v_3); a_3] 
\end{align}
Then $L_{1}^j \smile_c L_{2}^j \smile_c L_{3}^j$ yields the unique 3-cube $[ ( (h_j,v_1), (h_j,v_2), (h_j,v_3) ); a_1, a_2, a_3 ]$ with local coefficient:
\begin{align}
	\begin{aligned}
		& h_2^T(a_2,\text{-}) \otimes h_3^T(a_3,\text{-}) \Big( \zeta_2^j [(h_j,v_2), 0] \otimes \zeta_3^j [(h_j,v_3), 0] \Big) \\
		\cdot & h_1^T(a_1,\text{-}) \otimes h_3^T(a_3,\text{-}) \Big( \zeta_1^j [ a_1 \cdot (h_j, v_1), 1] \otimes \zeta_3^j [ (h_j, v_3), 0] \Big) \\
		\cdot & h_1^T(a_1,\text{-}) \otimes h_2^T(a_2,\text{-}) \Big( \zeta_1^j [ a_1 \cdot (h_j, v_1), 1] \otimes \zeta_2^j [ a_2 \cdot (h_j, v_2), 1] \Big) = 1
	\end{aligned}
\end{align}
by \eqref{eq:row}. In contrast to \eqref{eq:HGP_cup}, $	L_{1}^{j_1} \smile_c L_{2}^{j_2} \smile_c L_{3}^{j_3} = 0$ unless $j_1 = j_2 = j_3$. The only problem is when $m_i > \frac{n}{2}$, it is unclear how to find $\xi$ and we discuss more details in Section~\ref{sec:cup_sheaf} and~\ref{sec:cycles}.


\section{Logical action on sheaf codes}\label{sec:proof_sheaf}

We now analyze the cup products on general sheaf codes. The ultimate goal is to establish Theorem~\ref{thm:cup_HGP_3} on high dimensional expanders with local coefficients.

\subsection{Polarized logical representatives}\label{sec:HGP_like}

To uncover the logical action of invariant forms   on general sheaf codes, we adapt the insight of using the canonical logical representatives on HGP in Section~\ref{sec:proof_HGP}. 
To this end, we try to reformulate the sheaf coboundary operators into the form like \eqref{eq:HGP_coboundary}. We still consider the 3D case for simplicity with the cochain complex
\begin{align}
	C^0(X,\F) \xrightarrow{\delta^0} C^1(X,\F) \xrightarrow{\delta^1} C^2(X,\F) \xrightarrow{\delta^2} C^3(X,\F),
\end{align} 
where $X$ is the 3D expander defined by some $\HH$-lift on the Cartesian product of $\mathcal{G}$. 
Without the lift, $\delta^1$ is just
\begin{align}\label{eq:HGP_coboundary'}
	\begin{pmatrix} 
		0 & I_{V} \otimes I_{E} \otimes H_3 & I_{V} \otimes H_2 \otimes I_{E} \\ 
		I_{E} \otimes I_{V} \otimes  H_3 & 0 &  H_1 \otimes I_{V} \otimes I_{E} \\
		I_{E} \otimes H_2 \otimes I_{V} & H_1 \otimes I_{E} \otimes I_{V} & 0 
	\end{pmatrix},
\end{align}
as in \eqref{eq:HGP_coboundary}. After taking the lift,  we point out in Section~\ref{sec:induct_boundary} that $\delta^1$ can be upgraded by rewriting each of its subblock matrices. For instance, we rewrite $I_{E} \otimes I_{V} \otimes H_3$ by lifting $H_3$ to $\hat{H}_3$ like in \eqref{eq:HGP_coboundary_lift}. Without changing $I_{E}$, we write down $I_{E} \otimes I_{m_2 V} \otimes \hat{H}_3$. To verify that the definition is correct, given any 1-cube in $X(1)$,
\begin{align}\label{eq:lift_cube}
	[g = (h,v_1,v_2,v_3);a_1,b_2,b_3] = [h] \otimes [v_1;a_1] \otimes [v_2;b_2] \otimes [v_3;b_3] = [v_1;a_1] \otimes [v_2;b_2] \otimes [h,v_3;b_3], 
\end{align}
it is mapped to $[g'; a_1,b_2,a_3]$ with 
\begin{align}
	g' = (a_3^{-1})^{b_3} \cdot g = ( (a_3^{-1})^{b_3} \cdot h, v_1, v_2, (a_3^{-1})^{b_3} \cdot v_3 ), \quad (a_3^{-1})^{b_3} = \begin{cases} \text{id}, & b_3 = 0, \\ a_3, & b_3 = 1, \end{cases}
\end{align}
and $a_3^{-1} \cdot h = \gamma_{(v_3, a_3 \cdot v_3)}^{-1} \cdot h$. After the mapping, $[v_1;a_1] \otimes [v_2;b_2]$ stays intact while the action on $[h,v_3;b_3]$ and its local vectors is exactly given by $\hat{H}_3$.

In the same spirit, we redefine $I_{E} \otimes H_2 \otimes I_{V}$ as $I_{E} \otimes \hat{H}_2 \otimes I_{m_3 V}$ as well as for all other subblocks. However, integrating them into a single matrix with consistent indexing necessitates a nontrivial reorganization of rows and columns.  For instance, $I_{E} \otimes \hat{H}_2 \otimes I_{m_3 V}$ uses the column indices
\begin{align}
	[g = (h,v_1,v_2,v_3);a_1,b_2,b_3] = [h] \otimes [v_1;a_1] \otimes [v_2;b_2] \otimes [v_3;b_3] = [v_1;a_1] \otimes [h,v_2;b_2] \otimes [v_3;b_3],
\end{align}  
which is obviously different from \eqref{eq:lift_cube}. This ambiguity can be easily resolved by using a more algebraic expression. Let $\mathbb{F}_q[\HH]$ be the \emph{group algebra} of $\HH$ defined by all formal linear combinations of elements of $\HH$. It is isomorphic to $\mathbb{F}_q^{l}$ as a vector space, but it also inherits the multiplicative structure of $\HH$. As mentioned above, a lifted coboundary operator is defined by replacing nonzero entries $\delta^\bullet(\tau',\sigma')$ by permutation matrices $\gamma_{\sigma',\tau'}$. The morphisms $\mathcal{F}_{\sigma',\tau'}$ are imposed as $\gamma_{\sigma',\tau'} \otimes \mathcal{F}_{\sigma',\tau'}$. By definition, $\gamma_{\sigma',\tau'} \in \HH \subset \mathbb{F}_q[\HH]$, so we can view $\delta^\#$ as operators with entries in the commutative algebra $R = \mathbb{F}_q[\HH]$ that act on $C^\bullet(X,\F)$ as $R$-modules, instead of $\mathbb{F}_q$-spaces. However, for a convenient illustration, we would always work over $\mathbb{F}_q$. The $R$-module structure also plays an important role, e.g., in proving Lemma~\ref{lemma:product}. 

It is expected that a logical representative could be of the form
\begin{align}\label{eq:sheaf_basis}
	[v_1;a_1] \otimes \zeta_{23} = [v_1;a_1] \otimes \sum_{h,v_2,v_3,b_2,b_3} [(h,v_2,v_3);b_2,b_3],
\end{align}
like in \eqref{eq:HGP_basis_1}, \eqref{eq:HGP_basis_F} and \eqref{eq:HGP_basis_L} for HGP codes. The most accessible way to find such logical representatives is to start with the Cartesian product $X'$, whose lift is $X$. Then we make the following definition.

\begin{definition}\label{def:all_ones}
	Let $L_{1}^{i_1,i_2,i_3} = e_1^{i_1} \otimes \zeta_2^{i_2} \otimes \zeta_3^{i_3}$ be a logical representative of the code defined by $X'$ in \eqref{eq:HGP_basis_F}. When the lift size $l$ is odd, its canonical extension obtained by Lemma~\ref{lemma:codeword_2} is a valid codeword. It can be expressed by Eq.~\eqref{eq:sheaf_basis}. Recall that $\zeta_2^{i_2}, \zeta_3^{i_3}$ are defined by the all-ones vector in $\mathbb{F}_q^V$ with fixed local vectors, so we call this kind of canonical extensions \emph{all-ones lifts}. 
\end{definition}

The number of all-ones lifts is limited by the size of the base graph, which is thus exponentially smaller than the current system size. Therefore, we have to search for more logical representatives. The only obstacle is that the block matrix of the \emph{restricted coboundary operator} 
\begin{align}\label{eq:delta_R}
	\delta_{\text{res}}^1 = \begin{pmatrix} 
		0 \\ 
		I_{E} \otimes_R I_{m_2 V} \otimes_R \hat{H}_3 \\
		I_{E} \otimes_R \hat{H}_2 \otimes_R I_{m_3 V} 
	\end{pmatrix} \text{ over } R, \text{ or }
	\begin{pmatrix} 
		0 \\ 
		I_{E} \otimes I_{m_2 V} \otimes \hat{H}_3 \\
		I_{E} \otimes \hat{H}_2 \otimes I_{m_3 V} 
	\end{pmatrix} =
	I_{E} \otimes \begin{pmatrix} 
		0 \\ 
		I_{m_2 V} \otimes \hat{H}_3 \\
		\hat{H}_2 \otimes I_{m_3 V} 
	\end{pmatrix}
\end{align}
over $\mathbb{F}_q$ with reorganized entries, has the size 
\begin{align}\label{eq:delta_size}
	\vert{\HH}\vert \cdot E ( m_2 V E + E V m_3) \times \vert{\HH}\vert \cdot E m_2 m_3 V^2 = l \cdot \frac{1}{2} n (m_2 + m_3) E V^2 \times l \cdot m_2 m_3 E V^2,
\end{align}  
where $V$ is the number of vertices in the double cover of the base graph and $E$ is the number of edges as in Section~\ref{sec:HGP_scalar}. Then $l \cdot V^3 = \vert X(0) \vert$. Even if we require $\frac{1}{2}n \leq m_i \leq n$, Eq.~\eqref{eq:delta_size} says that simply counting the rows is not enough to obtain a valid lower bound on $\dim \ker \delta_{\text{res}}^1$. Let $0 < \nu, \nu' < \frac{1}{2}$ and let the local codes be taken with
\begin{align}
	m_1 = \nu' n, \quad m_j = (1 - \nu) n, \ 2 \leq j \leq t.
\end{align} 
We do not require $\nu' = \nu$, in order to make a more flexible choices in the local codes. Then we borrow a method in~\cite{Dinur2024sheaf} to control the coding rate and prove the following lemma.

\begin{lemma}\label{lemma:rank}
	Let $X$ be a $t$-dimensional cubical complex. The restricted coboundary operator $\delta_{\text{res}}^1$ defined as Eq.~\eqref{eq:delta_R} with the above local codes satisfies
	\begin{align}\label{eq:delta_R_ker}
		\dim \ker\delta_{\text{res}}^1 \geq 
		\Big[ & \frac{1}{t-1} \Big( (1-\nu)^{t-1} 
		+ \frac{t-2}{2^{t-1}} + \frac{(t-2)(t-1)}{2^{t-3}} \nu' \nu \Big) \\
		& - \frac{t-2}{t-1} \Big( (t-1) \nu' (1-\nu)^{t-2} + \frac{1}{2^{t-2}} (\nu' + (t-1)\nu ) + \frac{(t-1)(t-2)}{2^{t-3}} \nu' \nu^2 \Big)
		\Big] \frac{1}{2} n^t l \cdot V^t. \notag
	\end{align}
	Namely, $\dim \ker\delta_{\text{res}}^1$ has the same order with $n^t l \cdot V^t$ for small $\nu'$ and $\nu$.
\end{lemma}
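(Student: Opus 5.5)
The plan is to first strip off the inert tensor factor and reduce the statement to a lower bound on the zeroth cohomology of a $(t-1)$-dimensional sheaf complex, and then to bound that cohomology by an alternating dimension count in the style of~\cite{Dinur2024sheaf}.

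\emph{Step 1 (factorization).} By the reorganization in Eq.~\eqref{eq:delta_R}, over $\mathbb{F}_q$ we have $\delta_{\text{res}}^1 = I_E \otimes D$. Here $D$ is exactly the coboundary $\delta^0_{(2..t)}$ of the lifted $(t-1)$-dimensional cubical complex $Y$ spanned by directions $2,\ldots,t$, with sheaf generated by $h_2,\ldots,h_t$. This follows from Lemma~\ref{lemma:sheaf} and the $R=\mathbb{F}_q[\HH]$-module description, since the lift acts only through the $\hat H_j$ blocks. Hence $\dim\ker\delta_{\text{res}}^1 = E\cdot\dim\ker D = E\cdot \dim H^0(Y,\F)$. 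The first direction enters only through the factor $E=\frac12 nV$ and through the choice $m_1=\nu' n$ that fixes the local spaces on edges of type $1$. It therefore suffices to show that $\dim\ker D$ is a positive constant fraction of $n^{t-1} l V^{t-1}$.

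\emph{Step 2 (alternating count).} I will use the elementary inequality $\rk\delta^0 \le \dim C^1 - \dim\ker\delta^1$, iterated upward:
\begin{align}
	\dim\ker\delta^0 \geq \sum_{p\geq 0} (-1)^p \dim C^p(Y,\F),
\end{align}
truncated at an even or odd level as convenient. I will then compute each $\dim C^p(Y,\F)=\sum_{|S|=p}\binom{\cdot}{\cdot}2^{t-1-p} n^p\prod_{j\notin S} m_j\cdot l V^{t-1}$ explicitly, using $|X(p)|=\binom{t}{p}2^{t-p}n^p|\G|$ from Section~\ref{sec:cubical} and $m_j=(1-\nu)n$. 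Because a single alternating sum for one direction-set is too lossy, I intend to average: take the bound obtained from each of the $t-1$ choices of which direction is treated as the ``$H$-side'' (this is the trick of~\cite{Dinur2024sheaf} used to control the rate). That averaging is what produces the weights $\frac1{t-1}$ and $\frac{t-2}{t-1}$ in Eq.~\eqref{eq:delta_R_ker}. Where the top-dimensional cells carry the $\F$-tensor spaces, the $\nu'$ terms will arise from the cross terms in which the type-$1$ local space $\mathbb{F}_q^{\nu' n}$ is paired with the other directions. After expanding $(1-\nu)^{t-1}$, $\nu'(1-\nu)^{t-2}$, and so on, the terms should collect into the displayed bracket.

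\emph{Main obstacle.} The hard part is Step 2, namely obtaining a bound that is genuinely positive rather than a cancelling alternating sum. The naive row count fails, as noted after Eq.~\eqref{eq:delta_size}. What makes the sum positive is $m_j>\frac n2$: this makes $\dim C^0=\prod m_j$ per cell dominate $\dim C^1$ to leading order in $(1-\nu)$. I will first try truncating after $C^2$ and then bound the discarded tail crudely by its largest term. If the resulting constants do not match the stated ones, I will check the bound for small $\nu,\nu'$, where the bracket tends to $\frac{1}{t-1}\bigl(1+\frac{t-2}{2^{t-1}}\bigr)>0$; this small-$(\nu,\nu')$ regime is all that the ``same order'' conclusion actually needs.
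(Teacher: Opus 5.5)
Your Step 1 is correct, and it is a cleaner start than the paper's. For each of the $E$ choices of $(v_1,a_1)$, the block of $\delta^1_{\text{res}}$ is exactly $\delta^0$ of the $(t-1)$-dimensional lifted complex $Y$ in directions $2,\dots,t$, with sheaf generated by $h_2,\dots,h_t$. Hence $\dim\ker\delta^1_{\text{res}}=E\cdot\dim H^0(Y,\F)$. The paper does not factor this way. It compares $\rk\delta^1_{\text{res}}$ with $\rk\delta^1$ on all of $X$ and lower-bounds $\dim H^1(X,\F)$ through the auxiliary complex $C^\bullet(X,\F,T)$, $T=\{2,\dots,t\}$, of~\cite{Dinur2024sheaf}. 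That is where $m_1=\nu'n$ and the weights $\frac{1}{t-1},\frac{t-2}{t-1}$ come from; they do not come from averaging over directions.

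The gap is Step 2. Your starting inequality $\rk\delta^0\le\dim C^1-\dim\ker\delta^1$ says $\rk\delta^0\le\rk\delta^1$, which is false in general. The correct inequality is $\rk\delta^0\le\dim\ker\delta^1$, and iterating it requires lower bounds on $\rk\delta^1$, which no dimension count supplies. So the only counting bound you have is $\dim H^0(Y,\F)\ge\dim C^0(Y,\F)-\dim C^1(Y,\F)$. Truncating after $C^2$ only bounds $\dim H^0-\dim H^1+\dim H^2$ from above and gives no lower bound on $\dim H^0$.

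That one valid bound is negative for $t\ge3$. We have $|Y(1)|=\frac{(t-1)n}{2}|Y(0)|$, with local dimensions $(1-\nu)^{t-2}n^{t-2}$ on edges versus $(1-\nu)^{t-1}n^{t-1}$ on vertices. Therefore $\dim C^1(Y,\F)/\dim C^0(Y,\F)=\frac{t-1}{2(1-\nu)}>1$. The condition $m_j>\frac n2$ makes $C^0$ dominate only when there is a single direction, and averaging non-positive bounds cannot make them positive.

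What you need is a count on a complex with the same $H^0$ but a smaller degree-one term. One option is the auxiliary complex of~\cite{Dinur2024sheaf} for $Y$ with $T=\{2,\dots,t\}$, in which the local spaces become $\mathbb F_q^{n-m_j}=\mathbb F_q^{\nu n}$. Its first two terms give $\dim H^0(Y,\F)\ge(1-2(t-1)\nu)\,n^{t-1}l(V/2)^{t-1}$. Hence $\dim\ker\delta^1_{\text{res}}\ge(1-2(t-1)\nu)\,2^{-t}n^t l V^t$, which proves the ``same order'' conclusion.

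Do not aim for the displayed bracket, though. Your own Step 1 shows that $\dim\ker\delta^1_{\text{res}}$ does not depend on $h_1$, so no $\nu'$ terms can emerge from a count on $Y$. Worse, for $t=3$ and small $\nu,\nu'$ the displayed inequality is false:

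\begin{itemize}
\item Fix $(v_1,a_1,v_3,b_3)$, which gives $\frac n2V^2$ blocks. On each block, the direction-$2$ rows alone force $x\in\ker\hat H_2\otimes\mathbb F_q^{m_3}$.
\item $\dim\ker\hat H_2\le\frac{(1-\nu)nlV}{2}$, because $\dim\ker\hat H_2^T\le\frac{lV}{2}\nu n$. An element of $\ker\hat H_2^T$ is determined by its local codewords in $\ker h_2$ at the vertices with $b_2=0$.
\item Hence $\dim\ker\delta^1_{\text{res}}\le\frac14(1-\nu)^2n^3lV^3$.
\end{itemize}

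Meanwhile the right-hand side of~\eqref{eq:delta_R_ker} tends to $\frac{5}{16}n^3lV^3$ as $\nu,\nu'\to0$. For example, at $\nu=\nu'=0.05$ it is about $0.247\,n^3lV^3$, while the upper bound is about $0.226\,n^3lV^3$. So your fallback of checking the bracket near $\nu=\nu'=0$ would be checking a false inequality. Only the order-of-magnitude statement is provable, with constants like the ones above.
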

\begin{proof}
	Let $C^\bullet(X,\F,T)$ be the auxiliary cochain complex defined in~\cite[Section 4.1]{Dinur2024sheaf} such that $\dim H^i(X,\F)$ $= \dim H^i(X,\F,T)$. We do not introduce the definitions here, the interested readers are recommended to look into the original paper. For some fixed $T \subset [t]$, it is shown that
	\begin{align}
		D_i' = \dim C^i(X,\F,T) = N \sum_{S \subset[t], \vert S \triangle T \vert = i } n^{\vert S \vert} 2^{t - \vert S \vert} \prod_{j \notin S} m_j',
	\end{align}
	where $N = \frac{1}{2^t} l \cdot V^t$ and $S \triangle T$ is the symmetric difference between sets. In our case $T = \{2, \ldots,t\}$, $m_1' = \nu' n$ and $m_j' \equiv \nu' n$ for $2 \leq j \leq t$. Then,
	\begin{itemize}
		\item For $D_0'$, $S = T = \{2, \ldots,T\} $ and 
		\begin{align}
			D_0' = N n^{t-1} 2 \nu' n = \frac{1}{2^{t-1}} \nu' n^t l \cdot V^t.
		\end{align}
		
		\item For $D_1'$, $S$ can be $[t]$ or $\{2,..,\hat{i}, \ldots,t\}$ with $i$ removed. Then 
		\begin{align}
			D_1' = N n^t + (t-1) N n^{t-2} 2^2 \nu' \nu n^2
			= \left( \frac{1}{2^t} + \frac{1}{2^{t-2}} (t-1) \nu' \nu \right) n^t l \cdot V^t.
		\end{align}
		
		\item For $D_2'$, $S$ can be $\{1\} \cup \{2,..,\hat{i}, \ldots,t\}$  or $\{2,..,\hat{i}, \ldots,\hat{j}, \ldots,t\}$. Then 
		\begin{align}
			D_2' = \left( (t-1) 2\nu + \binom{t-1}{2} 2^3 \nu' \nu^2 \right) N n^t 
			= \left( \frac{t-1}{2^{t-1}} \nu + \frac{1}{2^{t-3}} \frac{(t-1)(t-2)}{2} \nu' \nu^2 \right) n^t l \cdot V^t.
		\end{align}
	\end{itemize}
	Therefore, $\dim H^1(X,\F) \geq D_1' - D_2' - D_0'$ and $\dim \ker \delta^1 = \dim H^1(X,\F) + \dim \Ima \delta^0$ imply
	\begin{align}\label{eq:upper_bound}
	\begin{aligned}
		& \rk \delta^1 = \dim C^1(X,\F) - \dim \ker \delta^1 \leq \dim C^1(X,\F) - \dim H^1(X,\F)  \\
		\leq & \frac{1}{2} \Big[ (t-1) \nu' (1-\nu)^{t-2} + (1-\nu)^{t-1}  \\
		& - \frac{1}{2^{t-1}} - \frac{t-1}{2^{t-3}} \nu' \nu 
		+  \frac{1}{2^{t-2}} (\nu' + (t-1)\nu ) + \frac{(t-1)(t-2)}{2^{t-3}} \nu' \nu^2
		\Big] n^t l \cdot V^t.
	\end{aligned}
	\end{align}
	Note that the above estimations have to be positive in order to be nontrivial. The lower bound of $\dim H^1(X,\F)$ is $\frac{1}{2^t} n^t l \cdot V^t$ when $\nu' = \nu = 0$, so it is always positive for small $\nu'$ and $\nu$. The upper bound \eqref{eq:upper_bound} is $\frac{1}{2} n^t l \cdot V^t$ when $\nu' = \nu = 0$.
	
	On the other hand, the restricted coboundary operator in Eq.~\eqref{eq:delta_R} has columns being
	\begin{align}
		\text{either } \frac{1}{2} \nu' (1-\nu)^{t-2} n^t l \cdot V^t,
		\text{ or } \frac{1}{2} (1-\nu)^{t-1} n^t l \cdot V^t. 
	\end{align}
	Its rank is also upper bounded by $\rk \delta^1$. We find that $\frac{1}{2} \nu' (1-\nu)^{t-2} n^t l \cdot V^t$ minus \eqref{eq:upper_bound} is always negative. However, for small $\nu'$ and $\nu$,
	\begin{align}
		\frac{1}{2} \Big[
		\frac{1}{2^{t-1}} + \frac{t-1}{2^{t-3}} \nu'\nu - (t-1) \nu' (1-\nu)^{t-2} - \frac{1}{2^{t-2}} (\nu' + (t-1)\nu ) - \frac{(t-1)(t-2)}{2^{t-3}} \nu' \nu^2 \Big] n^t l \cdot V^t.
	\end{align} 
	is positive. In other words, for $j \geq 2$, when $\hat{H}_j$ are defined by local codes with $(1-\nu) n$ checks, the corresponding $\delta_{\text{res}}^1$ must have a nontrivial kernel that increases with the size of the group $\HH$. 
	
	The lower bound in \eqref{eq:delta_R_ker} is more precise and we show how to obtain it in the 3D case. Let us divide independent rows of $\delta_{\text{res}}^1$ into two groups from the rows of $I_{E} \otimes I_{m_2 V} \otimes \hat{H}_3$ and $I_{E} \otimes \hat{H}_2 \otimes I_{m_3 V}$:
	\begin{align}
		\rk \delta_R^1 = r_3 + r_2.
	\end{align}
	In general, $r_3 \leq \rk I_{E} \otimes I_{m_2 V} \otimes \hat{H}_3$ and $r_2 \leq \rk I_{E} \otimes \hat{H}_2 \otimes I_{m_3 V}$. These $r_3$ independent rows can also be found in the coboundary operator $\delta^1$ restricted to the top middle block: 
	\begin{align}
		\begin{pmatrix} 
			I_{E} \otimes I_{m_2 V} \otimes \hat{H}_3 \\
			0 \\
			\hat{H}_1 \otimes I_{E} \otimes I_{m_3 V} 
		\end{pmatrix}.
	\end{align}
	All the corresponding row vectors in $\delta^1$ are independent, and hence $r_3 + r_3 + r_2 \leq \rk \delta^1$. Since $\rk \delta^1 = \dim C^1(X,\F) - \dim H^1(X,\F) - \Ima \delta^0$ and since $\Ima \delta^0 \geq r_2, r_3$,
	\begin{align}
		3 r_3 + r_2 \leq \dim C^1(X,\F) - \dim H^1(X,\F). 
	\end{align}
	Suppose $r_3 \geq r_2$, then basic linear programming implies that
	\begin{align}
		& \rk \delta_{\text{res}}^1 = r_3 + r_2 \leq \frac{1}{2} ( \dim C^1(X,\F) - \dim H^1(X,\F)) \\
		\implies & \dim\ker \delta_{\text{res}}^1 
		\geq  \frac{1}{2} (1-\nu)^2 n^2 l \cdot V^3 - \frac{1}{2} ( \dim C^1(X,\F) - \dim H^1(X,\F)) 
	\end{align}
	whose generalization is \eqref{eq:delta_R_ker}.
\end{proof}


Similarly to the case of HGP, the first component $[v_1;a_1]$ of any cocycle in the form of Eq.~\eqref{eq:sheaf_basis} can be taken arbitrarily. However, we cease to use the notion of free variables because the corresponding $\hat{H}_1$ has been lifted in defining $\delta^0$. Instead, by Lemma~\ref{lemma:rank}, for any fixed $[v_1;a_1]$, there are independent cocycles in $\ker \delta_{\text{res}}^1$ with a number proportional to $n^{t-1} l \cdot V^t$. Moreover, the following lemma holds.

\begin{lemma}\label{lemma:inequivalent}
	Let $\HH = C_l$ be the cyclic group with $l$ a prime number. Elements in $\ker \delta_{\text{res}}^1$ of the form
	\begin{align}
		[v_1;a_1] \otimes \zeta_{23} = [v_1;a_1] \otimes \sum_{h,v_2,v_3,b_2,b_3} [(h,v_2,v_3);b_2,b_3],
	\end{align}
	with respect to any fixed $[v_1;a_1]$ are inequivalent logical representatives.
\end{lemma}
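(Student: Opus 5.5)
We must show that nonzero elements $x = [v_1;a_1]\otimes\zeta_{23}\in\ker\delta^1_{\text{res}}$, for a fixed edge $[v_1;a_1]$, are cocycles of the full code and are not coboundaries. More precisely, a nonzero $\mathbb{F}_q$-linear combination of them never lies in $\Ima\delta^0$, so distinct independent elements give inequivalent logical classes.

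\textbf{Cocycle property.} First we check that such $x$ is a cocycle. Since $x$ is supported only on 1-cubes whose first coordinate is the edge $[v_1;a_1]$, only the blocks of $\delta^1$ in the column corresponding to the first direction act on it. These are exactly the blocks collected in $\delta^1_{\text{res}}$ in \eqref{eq:delta_R}; the block $0$ comes from the $(1,1)$ position of \eqref{eq:HGP_coboundary'}. Hence $\delta^1 x=\delta^1_{\text{res}}x=0$.

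\textbf{Non-coboundary via a projection.} The core is to show $x\notin\Ima\delta^0$ when $x\neq0$. We plan to reduce to the first direction alone by a cochain-level projection, imitating the HGP argument where a free variable $e_1^{i_1}$ pairs against $\eta_{i_1}\in\ker\partial$. Suppose $x=\delta^0y$ with $y\in C^0(X,\F)$. Restrict attention to the block of $\delta^0$ landing on 1-cubes of type $[g;a_1,b_2,b_3]$. Over $R=\mathbb{F}_q[C_l]$ this block is $\hat H_1\otimes_R I\otimes_R I$. Consequently, for each fixed $(h,v_2,v_3,b_2,b_3)$-slice, the vector $x$ restricted to the first-direction edges must lie in $\Ima\hat H_1$, the image of the lifted Tanner-type coboundary of the first factor. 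Yet $x$ is supported on the lifts of a single base edge $[v_1;a_1]$, i.e.\ on at most $l$ edges $[(h,v_1);a_1]$. Such a vector lies in $\Ima\hat H_1$ only if it is orthogonal to $\ker\hat H_1^T$.

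We would therefore construct, for each $h\in C_l$, an element $\eta_h\in\ker\hat H_1^T$ whose restriction to the fiber over $[v_1;a_1]$ is the indicator of $[(h,v_1);a_1]$, or at least a set of such elements spanning all of $\mathbb{F}_q^l$ on that fiber. Pairing $x$ against $\eta_h\otimes(\text{appropriate dual vectors in directions }2,3)$ then yields a nonzero value whenever $x\neq0$. This contradicts $\langle\delta^0y,\xi\rangle=\langle y,\partial\xi\rangle=0$, exactly as in the HGP computation following \eqref{eq:xi_2}. To pass from the HGP setting to the twisted product, we would use the $R$-module description: $\delta^0$ and $\partial_1$ are tensor products over $R$. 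Hence a cycle of the form $\eta\otimes_R\theta$ with $\eta\in\ker_R\hat H_1^T$ and $\theta$ a suitable element in degree $0$ of directions $2,3$ is a genuine cycle in $C_1(X,\F)$.

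\textbf{Main obstacle: where primality of $l$ enters.} The hardest step is showing that the restrictions of $\ker\hat H_1^T$ to the fiber over the base edge $[v_1;a_1]$ are rich enough. This requires the induced $R$-submodule of $R=\mathbb{F}_q[\x]/(\x^l-1)$ to be all of $R$, and equivariance already makes the image an $R$-submodule, i.e.\ an ideal. For $l$ prime with $\gcd(l,2)=1$, $R$ is semisimple and factors as $\mathbb{F}_q\times\prod\mathbb{F}_q[\x]/(f_i)$, where the $f_i$ are the irreducible factors of $1+\x+\cdots+\x^{l-1}$.

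It therefore suffices to show the ideal is not contained in any maximal ideal. At the trivial factor $\langle1+\x\rangle$ this follows by projecting to the base complex: $\ker H_1^T\neq0$ on the HGP side, together with Lemma~\ref{lemma:codeword_2} for odd $l$. At the remaining factors we would argue via a dimension count from Theorem~\ref{thm:cLDPC}, namely that $k_1^T$ grows with $l$. If this count fails to cover every factor, we would weaken the conclusion to inequivalence modulo the component where the ideal vanishes. The same issue is what Theorem~\ref{thm:cup_HGP_3} later resolves using Artin-type primes.
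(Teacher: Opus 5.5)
Your reduction is correct, and in fact sharp. The proposal, however, stops at the step that carries the content of the lemma.

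\textbf{The reduction is sharp.} Split the semisimple ring $R=\mathbb{F}_q[C_l]$ into its field factors $K=\mathbb{F}_q[\x]/(f)$, with $\x\mapsto\omega$, and write $\hat H_j(\omega)$ for the specialized operators. By K\"unneth over $K$, an element $e_{[v_1;a_1]}\otimes\zeta$ with $0\neq\zeta\in\ker\hat H_2(\omega)\otimes\ker\hat H_3(\omega)$ is a coboundary if and only if $e_{[v_1;a_1]}\in\Ima\hat H_1(\omega)$. Since $m_2,m_3=(1-\nu)n>n/2$, every $\ker\hat H_j(\omega)$ has dimension at least $(m_j-\tfrac n2)V>0$. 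Hence the lemma is \emph{equivalent} to your condition that the fiber restrictions of $\ker\hat H_1^T$ at $[v_1;a_1]$ generate all of $R$. Concretely, suppose $e_{[v_1;a_1]}\in\Ima\hat H_1(\omega)$ for some $\omega$. Take $\hat H_1w=\epsilon\, e_{[v_1;a_1]}$ with $\epsilon$ the idempotent of that factor, and $\zeta\in\ker\hat H_2\otimes_R\ker\hat H_3$ with $\epsilon\zeta\neq0$. Then $\delta^0(w\otimes_R\zeta)$ is a nonzero coboundary of exactly the stated form.

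\textbf{The trivial factor is under-stated.} At $\omega=1$, knowing $\ker H_1^T\neq0$ is not enough. You need some $\eta\in\ker H_1^T$ with $\eta([v_1;a_1])\neq0$. The paper gets this by copying a local codeword $z\in\ker h_1$ with $z_{a_1}\neq0$, which exists since the dual distance is at least $2$, to every vertex of the double cover. The all-ones lift of $\eta$ then has fiber sum $l\,\eta([v_1;a_1])\neq0$ because $l$ is odd.

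\textbf{The gap is at the nontrivial factors.} A dimension count via Theorem~\ref{thm:cLDPC} is global. That $\dim\ker\hat H_1^T$ grows with $l$ says nothing about whether the single coordinate $[v_1;a_1]$ vanishes identically on the $\omega$-twisted Tanner code; a large code can still have a forced-zero coordinate. The copying trick also fails here, because the voltages $\omega^{\gamma}$ twist the local views at the side-$1$ vertices. By the sharpness above, this step cannot be bypassed. Your fallback, ``inequivalence modulo the component where the ideal vanishes'', is a strictly weaker statement than the lemma.

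Invoking Artin-type primes would change the hypotheses, since the lemma assumes only that $l$ is prime. Even then it would only reduce matters to the single factor $1+\x+\cdots+\x^{l-1}$, where the same twisted condition still has to be proved. Theorem~\ref{thm:cup_HGP_3} concerns induced vectors of $\ker\hat H_i$, not restrictions of $\ker\hat H_1^T$.

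\textbf{Comparison with the paper.} The paper never passes through $\ker\hat H_1^T$. It keeps the direction-$2$ cocycle condition and, for $x=x_E\otimes x_2\otimes x_3$, derives $x_E([(h,v_1);a_1])\,c_0=x_E([(a_2\cdot h,v_1);a_1])\,c_1$. Primality then makes the fiber values a geometric progression over all of $C_l$ with nonzero sum, which contradicts the base projection. That recursion is derived for a single product whose $h$-dependence sits entirely in $x_E$, essentially the trivial-character part. So it does not provide the twisted statement that your reduction shows is needed for general elements of $\ker\delta^1_{\text{res}}$.
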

\begin{proof}
	As the difference of any two cycles supported on $[v_1;a_1]$ remains supported on $[v_1;a_1]$, we only need to show that none of them is in $\Ima \delta^0$, where
	\begin{align}\label{eq:delta^0}
		\delta^0 = \begin{pmatrix} \hat{H}_1 \otimes I_{m_2 V} \otimes I_{m_3 V} \\ I_{m_1 V} \otimes \hat{H}_2 \otimes I_{m_3 V} \\ I_{m_1 V} \otimes I_{m_2 V} \otimes \hat{H}_3 \end{pmatrix}
	\end{align}
	and its entries need being rearranged. Forgetting the other two subblocks, we are going to prove by contradiction that $x \notin \Ima \hat{H}_1 \otimes I_{m_2 V} \otimes I_{m_3 V}$ for any concerned cocycle $x$.
	
	Assume that 
	\begin{align}
		x = x_E \otimes x_2 \otimes x_3 \in \Ima \hat{H}_1 \otimes I_{m_2 V} \otimes I_{m_3 V}
	\end{align}
	In other words, some linear combination of columns $\hat{H}_1 \otimes I_{m_2 V} \otimes I_{m_3 V}$ is in the form of Eq.~\eqref{eq:sheaf_basis}. We study $x_E \in \Ima \hat{H}_1$ and $x_2 \otimes x_3 \in \Ima I_{m_2 V} \otimes I_{m_3 V}$ separately. With scalar coefficients, $\ker H_1^T$ is spanned by closed cycles in the graph, so the elements in $\Ima H_1$ cannot be one-hot vectors. With nontrivial local coefficients but without lifts, it is generally difficult to rule out one-hot vectors in $\Ima H_1$. Fortunately, we will see in Theorem~\ref{thm:cycle} that when the operator is defined on the double cover of the base graph, any local codeword in $\ker h_1$ can be extended to an element in $\ker H_1^T$. Assume there is a one-hot vector in $\Ima H_1$. Then any element in $\ker H_1^T$ has to be zero in the one-hot component and every local codeword in $\ker h_1$ must always be zero in the corresponding bit. This indicates that there is a one-hot dual codeword, which is impossible because we require $\ker h_1$ and its dual to have good distances relative to $n$.
	
	Now assume $x_E \in \Ima \hat{H}_1$ is supported solely on $[(h,v_1);a_1]$ for some $h \in \HH$. We can project it into $\Ima H_1$ as in Lemma~\ref{lemma:cochain_1}, i.e.,
	\begin{align}\label{eq:inequivalent_1}
		(P^1 x_E)( [v_1;a_1]) = \sum_{h} x_E ([(h,v_1);a_1] ) \in \mathbb{F}_q.
	\end{align}
    The projection must be a zero vector by the above discussion. Although the sum is zero, we will show that the component $x_E ([(h,v_1);a_1] )$ is nonzero for every $h \in \HH$. This is obtained by considering either $x_2 \in \Ima I_{m_2 V}$ or $x_3 \in \Ima I_{m_3 V}$. Since we discard the bottom blocks in \eqref{eq:delta^0}, after multiplying the components of $x_E$, we must have
	\begin{align}
		x_{E,2} ([(h,v_2),b_2]) := x_E ([(h,v_1);a_1] ) \cdot x_2([v_2;b_2]) \in \ker \hat{H}_2.
	\end{align}
	Suppose $x_2([v_2;0])$ is a nonzero vector in $\mathbb{F}_q^{m_2}$. Then we can find some $a_2$ such that
	\begin{align}
		h_2^T(a_2, \text{-}) \Big( x_2([v_2;0]) \Big) \neq 0 \in \mathbb{F}_q
	\end{align}
    On the other hand, the local vector $x_2([a_2 \cdot v_2;1])$ must satisfy
    \begin{align}
    \begin{aligned}
    	& h_2^T(a_2, \text{-}) \Big( x_{E,2}([(h,v_2);0]) \Big) =
    	x_E ([(h,v_1);a_1] ) \cdot 	h_2^T(a_2, \text{-}) \Big( x_2([v_2;0]) \Big) \\
    	= &
    	h_2^T(a_2, \text{-}) \Big( x_{E,2}([a_2 \cdot (h,v_2);1]) \Big)
    	= x_E ([(a_2 \cdot h,v_1);a_1] ) \cdot h_2^T(a_2, \text{-}) \Big( x_2([a_2 \cdot v_2;1]) \Big) 
    \end{aligned} 
    \end{align}
	Therefore, $x_E ([(a_2 \cdot h,v_1);a_1] )$ cannot be zero, which further implies that $x_{E,2}([(a_2 \cdot h,v_2);0]) \neq 0$. Then we continue writing:
	\begin{align}
		\begin{aligned}
			& h_2^T(a_2, \text{-}) \Big( x_{E,2}([(a_2 \cdot h,v_2);0]) \Big) =
			x_E ([(a_2 \cdot h,v_1);a_1] ) \cdot h_2^T(a_2, \text{-}) \Big( x_2([v_2;0]) \Big) \\
			= &
			h_2^T(a_2, \text{-}) \Big( x_{E,2}([a_2 \cdot (a_2 \cdot h,v_2);1]) \Big)
			= x_E ([(a_2^2 \cdot h,v_1);a_1] ) \cdot h_2^T(a_2, \text{-}) \Big( x_2([a_2 \cdot v_2;1]) \Big) 
		\end{aligned} 
	\end{align}
	which requires $x_E ([(a_2^2 \cdot h,v_1);a_1] ) \neq 0$. Since $\HH = C_l$ is the cyclic group with $l$ a prime, applying $a_2$ repeatedly to $h$ generates the whole group.
	
	Let $h_2^T(a_2, \text{-}) \Big( x_2([v_2;0]) \Big) = c_0$ and $h_2^T(a_2, \text{-}) \Big( x_2([a_2 \cdot v_2;1]) \Big) = c_1$. The above identities can be reformulated as
	\begin{align}
		& x_E ([(h,v_1);a_1] ) \cdot c_0 = x_E ([(a_2 \cdot h,v_1);a_1] ) \cdot c_1, \\
		& x_E ([(a_2 \cdot h,v_1);a_1] ) \cdot c_0 = x_E ([(a_2^2 \cdot h,v_1);a_1] ) \cdot c_1, \cdots \cdots \\
		\implies & x_E ([(h,v_1);a_1] ) = \frac{c_1}{c_0} x_E ([(a_2 \cdot h,v_1);a_1] )
		= \Big(\frac{c_1}{c_0}\Big)^2 x_E ([(a_2^2 \cdot h,v_1);a_1] ) = \cdots.
	\end{align}
	By assumption $c_0/c_1 \neq 1$, otherwise, $\sum_{h \in \HH} x_E ([(h,v_1);a_1] ) = x_E ([(h,v_1);a_1] ) \neq 0$. Let $d_\alpha$ be its order, i.e., $(c_0/c_1)^{d_\alpha} = 1 \in \mathbb{F}_q$, then
	\begin{align}
		x_E ([(h,v_1);a_1] ) \cdot \Big( 1 + \Big(\frac{c_2}{c_1}\Big) + \cdots + \Big(\frac{c_2}{c_1}\Big)^{d_\alpha-1} \Big) = x_E ([(h,v_1);a_1] ) \cdot \frac{1 - (c_1/c_2)^{d_\alpha} }{1 - c_2/c_1} = 0.
	\end{align}
	However, since $l$ is a prime, it cannot be divided by $d_\alpha$ and 
	\begin{align}
		\sum_{h \in \HH} x_E ([(h,v_1);a_1] ) = x_E ([(h,v_1);a_1] ) \cdot \Big( 1 + \Big(\frac{c_2}{c_1}\Big) + \cdots + \Big(\frac{c_2}{c_1}\Big)^{l-1} \Big) \neq 0,
	\end{align}
	which leads to the contradiction.
\end{proof}

We say that $[v_1;a_1] \otimes \zeta_{23}$ is \emph{polarized} in the first direction.
As a caveat, for polarization in the $i$-th direction with $i = 2, \ldots,t$, there is no guarantee on a large number of inequivalent logical representatives due to the choice of local codes and Lemma~\ref{lemma:rank}. We discuss more details in the following and in Section~\ref{sec:cup_sheaf}. To compare two polarized logical representatives of $[v_1;a_1]$ and $[v_1';a_1']$ in the first direction,  we know that they are inequivalent if $[v_1;a_1]$ and $[v_1';a_1']$ are different free variables of $H_1^T$. Without lifts, this is obvious as in the case of HGP codes in Section~\ref{sec:HGP_scalar}. With lifts, a similar argument after Eq.~\eqref{eq:inequivalent_1} in Lemma~\ref{lemma:inequivalent} can help to verify.

The polarized logical representatives correspond to logical Pauli operators, as the simplest cohomological invariants on sheaf codes. They are expected to have broad applications, including forming cup products. We summarize Lemma~\ref{lemma:rank} and~\ref{lemma:inequivalent} into the following theorem.

\begin{theorem}\label{thm:polarized}
	Let $X$ be any $l$-shift-lift of $X'$ such that $l$ is a prime and let $\F$ be generated by local codes of ranks (independent numbers of local checks):
	\begin{align}
		m_1 = \nu' n, \quad m_j = (1 - \nu) n, \ 2 \leq j \leq t.
	\end{align} 
	If $\nu,\nu'$ are small constants, and if the local codes and its dual have distances $\geq 2$, then the sheaf code $C^0(X,\F) \to C^1(X,\F) \to C^2(X,\F)$ admits $\Theta(N)$ inequivalent polarized logical representatives.
\end{theorem}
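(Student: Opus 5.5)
Theorem~\ref{thm:polarized} combines Lemma~\ref{lemma:rank} and Lemma~\ref{lemma:inequivalent}. The plan has three steps.

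\textbf{Step 1: count polarized cocycles.} Fix an edge $[v_1;a_1]$ of the double cover of the base graph. By the block structure of $\delta^1$ after reorganizing rows and columns over $R=\mathbb{F}_q[C_l]$, a vector of the form $[v_1;a_1]\otimes\zeta_{23}$ is a cocycle iff:
\begin{itemize}
\item $\zeta_{23}$ lies in the kernel of the block $\begin{pmatrix} I_{m_2V}\otimes\hat H_3\\ \hat H_2\otimes I_{m_3V}\end{pmatrix}$ (the rows not involving $\hat H_1$);
\item its image under the $\hat H_1$-rows vanishes, which holds automatically because those rows only touch cubes whose first coordinate is a bit rather than an edge.
\end{itemize}
So $\ker\delta^1_{\mathrm{res}} = I_E\otimes K$, where $K$ is that kernel. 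Lemma~\ref{lemma:rank} gives $\dim\ker\delta^1_{\mathrm{res}} = \Theta(n^t l V^t)$ for small $\nu,\nu'$. Since $E=\Theta(nV)$ and $n$ is constant, $\dim K = \Theta(lV^{t-1})$ per fixed edge, and this is $\Theta(N/(nV))$ because $N=\Theta(n l V^t)$.

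\textbf{Step 2: choose the edges $[v_1;a_1]$.} I will take them to be a set $\mathcal{E}$ of free variables of the unlifted $H_1^T$, of size $k_1^T\ge(\tfrac12 n-m_1)V=\Theta(V)$ because $m_1=\nu' n<n/2$. The candidate family is $\{e\otimes\zeta : e\in\mathcal{E},\ \zeta \text{ in a basis of } K\}$, which has size $\Theta(V)\cdot\Theta(lV^{t-1})=\Theta(N)$. The free-variable choice is what makes the families for different edges independent.

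\textbf{Step 3: show no nonzero combination is a coboundary.} This is the crux. For a single edge this is Lemma~\ref{lemma:inequivalent}. For combinations across edges in $\mathcal{E}$, I will mimic its proof.
\begin{itemize}
\item Suppose $\sum_{e\in\mathcal{E}} e\otimes\zeta_e\in\Ima\delta^0$.
\item Project the $E$-component by $P^1$ as in Eq.~\eqref{eq:inequivalent_1}, reducing to the unlifted $\Ima H_1$.
\item The projection is supported on free variables of $H_1^T$. The rows $(J_1\ I)$ pair trivially with $\Ima H_1$, so this support forces every projected coefficient to vanish: $\sum_h x_E([(h,v_1);a_1])=0$ for each $e\in\mathcal{E}$.
\item Lemma~\ref{lemma:inequivalent}'s propagation argument then applies edge by edge. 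The ratio $c_1/c_0$ propagates along the $C_l$ orbit, and primality of $l$ forces the sum over $h$ to be nonzero unless $x_E$ vanishes on that edge. This gives the contradiction.
\end{itemize}

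The hypotheses enter as follows. The distance conditions $\ge 2$ on the local codes and their duals rule out one-hot vectors in $\Ima H_1$, as used in Lemma~\ref{lemma:inequivalent}. They also guarantee that the local-vector constants $c_0,c_1$ there are nonzero for a suitable $a_2$.

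\textbf{Main obstacle.} I expect the difficulty to lie in Step 3, for two reasons.
\begin{itemize}
\item The argument of Lemma~\ref{lemma:inequivalent} treats $x$ as a pure tensor $x_E\otimes x_2\otimes x_3$ and discards the other blocks of $\delta^0$. For a general preimage $y$ with $\delta^0 y = x$, one must decompose $y$ by $R$-module components and argue block by block.
\item The ratio $c_1/c_0$ may vary with $v_2$.
\end{itemize}
My first attempt will handle this by working over $R$ and using that $1+\mathrm{x}+\cdots+\mathrm{x}^{l-1}$ acts invertibly modulo the relevant relations. The upper bound $O(N)$ is trivial because the number of logical qubits is at most $N$, so $\Theta(N)$ follows.
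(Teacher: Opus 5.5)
Your outline is the paper's own route: Lemma~\ref{lemma:rank} for the count, free variables of the unlifted $H_1^T$ to separate edges, and the propagation argument of Lemma~\ref{lemma:inequivalent} for non-triviality. Steps~1 and~2 are fine. But Step~3, in the form you need it---no nonzero element of $\bigoplus_{e\in\mathcal{E}} e\otimes K$ is a coboundary---is false in general, so neither $P^1$ nor the propagation argument can prove it.

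Since $l$ is odd, $R=\mathbb{F}_q[C_l]$ is semisimple. The voltage in direction $j$ depends only on the direction-$j$ edge. Hence, over a splitting field, $C^\bullet(X,\F)$ decomposes as $\bigoplus_{\omega^l=1}$ of genuine hypergraph products of twisted operators $H_j(\omega)$: these are $H_j$ with the entry at the endpoint $[a\cdot v;1]$ multiplied by $\omega^{\gamma_{(v,a\cdot v)}}$. Write $K_\omega=\bigotimes_{j\geq 2}\ker H_j(\omega)$. By K\"unneth in each summand, $\sum_e e\otimes\zeta_e$ is a coboundary iff, for every $\omega$, its image in $\mathrm{coker}\,H_1(\omega)\otimes K_\omega$ vanishes. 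So your family is independent in cohomology iff $\mathcal{E}$ is linearly independent modulo $\Ima H_1(\omega)$ for every $\omega$ with $K_\omega\neq 0$.

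The projection $P^1$ only sees $\omega=1$. Likewise $1+\mathrm{x}+\cdots+\mathrm{x}^{l-1}$ is exactly the idempotent of the trivial component (it annihilates $(1+\mathrm{x})R$); it is a zero divisor, not a unit. The propagation argument only treats pure tensors $x_E\otimes x_2\otimes x_3$ with $h$-independent $x_2,x_3$. These force $x_E(a_2\cdot h)=(c_0/c_1)\,x_E(h)$, so $x_E$ is a $C_l$-eigenvector and, for $l\nmid q-1$, constant, i.e. an all-ones lift. It says nothing about the bulk of $K$, which has $P\zeta=0$.

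The failure does occur. $\mathcal{E}$ has $\dim\mathrm{coker}\,H_1(1)$ elements, and the all-ones construction puts the constant local vectors into $\ker H_1(1)$. A nontrivial twist typically kills them, so $|\mathcal{E}|>\dim\mathrm{coker}\,H_1(\omega)$. Concretely, take $m_1=1$ and $h_1=(1,\ldots,1)$, whose local code and dual both have distance $\geq 2$, with any connected shift lift.
\begin{itemize}
\item $\mathcal{E}$ is the complement of a spanning tree, so $|\mathcal{E}|=E-V+1$.
\item For all $\omega\neq 1$, $\ker H_1(\omega)=0$: a flat section needs trivial holonomy, and $l$ is prime. Hence $\dim\Ima H_1(\omega)=V$.
\item By dimension count, some nonzero $u=\sum_{e\in\mathcal{E}}c_e e$ lies in $\Ima H_1(\omega)$, say $u=H_1(\omega)z$.
\item Take $0\neq\kappa\in K_\omega$, which exists because $m_j>n/2$. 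Then $u\otimes\kappa=\delta^0(z\otimes\kappa)$ is a nonzero coboundary in the span of your family.
\end{itemize}

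The missing idea is to do the bookkeeping per component. Split $R$ as in \eqref{eq:CRT_irrep} and choose free variables of $H_1(\omega)$ separately for each irreducible factor of $\mathrm{x}^l-1$. Pair them with the matching isotypic part of $K$. The resulting $e\otimes\zeta$ are still polarized and are independent by K\"unneth. Using $\dim\mathrm{coker}\,H_1(\omega)\geq(\tfrac12-\nu')nV$ and $\dim\ker H_j(\omega)\geq(\tfrac12-\nu)nV$ for $j\geq 2$, they number at least $l(\tfrac12-\nu')(\tfrac12-\nu)^{t-1}n^tV^t=\Theta(N)$. This also makes Lemma~\ref{lemma:rank}, primality of $l$ and the distance hypotheses unnecessary.
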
 

We did not mention $Z$ logical representatives explicitly, but their formations are similar. It is also worth to mention that in Theorem~\ref{thm:polarized}, we do not assume any additional properties of $X$ or the local codes, but one can enhance the structures to achieve almost good distances~\cite{Dinur2024sheaf,kalachev2025}. Besides, by redefining the sheaf $\F$ with
\begin{align}
	m_i = \nu' n, \quad m_j = (1 - \nu) n, \ j \neq i,
\end{align} 
we can easily change the direction of polarization. Moreover, the code can be defined by high level cochains $C^{p-1}(X,\F) \to C^p(X,\F) \to C^{p+1}(X,\F)$. The polarized logical representatives still exist by setting 
\begin{align}
	m_i = \nu' n, \ 1 \leq i \leq p, \quad  m_j = (1 - \nu) n, \ i+1 \leq j \leq t.
\end{align} 

We also emphasize that much work remains to be done before we can forge the polarized logical representatives into cup products. This difficulty arises from the intricate nature of the local coefficients. The next three subsections are devoted to this task. 


\subsection{Cyclic action on lifts and Artin's conjecture}\label{sec:group}

In order to prove Theorem~\ref{thm:cup_HGP_3} and generalize to the previous polarized logical representatives on sheaf code, the key point turns out to hinge on a deeper understanding on the group action of $\HH$, drawing on celebrated results from algebra and number theory.

Let $x$ be any logical representative of $\delta_{\text{res}}^1$. By definition,
\begin{align}
	x ( [v_1;a_1] \otimes [(h,v_2,v_3);(b_2,b_3)] ) \in \mathbb{F}_q^{m_2 \times m_3}.
\end{align}
By Eq.~\eqref{eq:sheaf_basis}, $[v_1;a_1]$ can be taken arbitrarily, so we hide it for brevity. As when analyzing cup products of HGP, the values of $b_2, b_3$ are also insignificant to some extent because $x \in \ker \delta_{\text{res}}^1$ indicates
\begin{align}
	& h_2^T(a_2, \text{-}) \otimes I_{m_3} \Big( x ( [(h,v_2,v_3);(0,0)] ) \Big) =
	h_2^T(a_2, \text{-}) \otimes I_{m_3} \Big(  x (  [a_2 \cdot (h,v_2,v_3);(1,0)] ) \Big), \label{eq:delta_condition1} \\
	& h_2^T(a_2, \text{-}) \otimes I_{m_3} \Big( x ( [(h,v_2,v_3);(0,1)] ) \Big) =
	h_2^T(a_2, \text{-}) \otimes I_{m_3} \Big(  x (  [a_2 \cdot (h,v_2,v_3);(1,1)] ) \Big), \label{eq:delta_condition2} \\
	& I_{m_2} \otimes h_3^T(a_3, \text{-}) \Big( x (  [(h,v_2,v_3);(0,0)] ) \Big) =
	I_{m_2} \otimes h_3^T(a_3, \text{-}) \Big( x ( [a_3 \cdot (h,v_2,v_3);(0,1)] ) \Big), \label{eq:delta_condition3} \\
	& I_{m_2} \otimes h_3^T(a_3, \text{-}) \Big( x (  [(h,v_2,v_3);(1,0)] ) \Big) =
	I_{m_2} \otimes h_3^T(a_3, \text{-}) \Big( x ( [a_3 \cdot (h,v_2,v_3);(1,1)] ) \Big). \label{eq:delta_condition4}
\end{align}
Then a complete evaluation on the local vectors shows, for example,
\begin{align}
	h_2^T(a_2, \text{-}) \otimes h_3^T(a_3, \text{-}) \Big( x ( [(h,v_2,v_3);(0,0)] ) \Big) =
	h_2^T(a_2, \text{-}) \otimes h_3^T(a_3, \text{-}) \Big(  x (  [a_2 \cdot (h,v_2,v_3);(1,0)] ) \Big).
\end{align}
These equations are still useful later. We highlight in Section~\ref{sec:cubical} that $\HH$ has to be abelian for the present construction of high dimensional expanders. Here is another advantage when $\HH$ is abelian. 

\begin{lemma}\label{lemma:group_action}
	For any group elements $h' \in \HH$, we shift the local vector on $[(h,v_2,v_3);(b_2,b_3)]$ to $[(h' \cdot h,v_2,v_3);(b_2,b_3)]$ and define
	\begin{align}
		h' \cdot x ( [(h,v_2,v_3);(b_2,b_3)] ) := x( [(h'^{-1} \cdot h,v_2,v_3);(b_2,b_3)] ).
	\end{align}   
	Then $h' \cdot x$ is still a logical representative. 
\end{lemma}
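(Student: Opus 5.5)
We need to show that the shifted vector $h'\cdot x$ still lies in $\ker\delta_{\text{res}}^1$, and that it is a nontrivial, inequivalent logical representative whenever $x$ is. The plan rests on one observation. Because $\HH$ is abelian, the shift by $h'$ is an automorphism of the whole cubical complex $X$ that preserves the sheaf $\F$. The map $\phi_{h'}:[(h,v_1,\ldots,v_t);(a_i),(b_j)]\mapsto[(h'\cdot h,v_1,\ldots,v_t);(a_i),(b_j)]$ sends cubes to cubes of the same type. For each generator we have
\begin{align}
a_j\cdot(h'h,\ldots,v_j,\ldots)=(\gamma_{(v_j,v_j')}h'h,\ldots,v_j',\ldots)=(h'\gamma_{(v_j,v_j')}h,\ldots,v_j',\ldots),
\end{align}
where the second equality uses commutativity of $\HH$. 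Hence $\phi_{h'}(a_j\cdot g)=a_j\cdot\phi_{h'}(g)$, and by conditions (a)--(c) of the inclusion relation in Section~\ref{sec:cubical}, $\sigma\prec\tau$ holds iff $\phi_{h'}\sigma\prec\phi_{h'}\tau$.

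\textbf{Step 1: $\phi_{h'}$ preserves the sheaf.} By the construction in Section~\ref{sec:DLV_code}, or directly by Lemma~\ref{lemma:sheaf}, $\F(\sigma)$ and $\F_{\sigma,\tau}$ depend only on the labels $(a_i),(b_j)$ and not on $h$. Thus $\F(\phi_{h'}\sigma)=\F(\sigma)$ and $\F_{\phi_{h'}\sigma,\phi_{h'}\tau}=\F_{\sigma,\tau}$. Define the induced operator $(\phi_{h'}^\ast x)(\sigma)=x(\phi_{h'}^{-1}\sigma)$, which is exactly $h'\cdot x$ in the statement. It follows that $\phi_{h'}^\ast$ commutes with every $\delta^p$ and every $\partial_p$. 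Over $R=\mathbb{F}_q[\HH]$ the argument is immediate: $\phi^\ast_{h'}$ is multiplication by the scalar $h'\in R$, and the coboundary operators are $R$-linear because $R$ is commutative. The paper already stresses this $R$-module viewpoint, so I would phrase Step 1 in that language.

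\textbf{Step 2: cocycle property and restricted form.} From Step 1, $\delta^1(h'\cdot x)=h'\cdot\delta^1x=0$. The restricted operator $\delta^1_{\text{res}}$ is a block of $\delta^1$ in which the first factor $[v_1;a_1]$ is untouched. Since the shift acts only on the $h$-coordinate, $h'\cdot x$ keeps the polarized form $[v_1;a_1]\otimes\zeta_{23}'$. Equivalently, the relations \eqref{eq:delta_condition1}--\eqref{eq:delta_condition4} are invariant under the substitution $h\mapsto h'^{-1}h$, because $a_j\cdot(h'^{-1}h,\ldots)=h'^{-1}\cdot(a_j\cdot(h,\ldots))$.

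\textbf{Step 3: nontriviality.} Suppose $h'\cdot x=\delta^0 y$. Then $x=h'^{-1}\cdot\delta^0 y=\delta^0(h'^{-1}\cdot y)$, which contradicts the assumption that $x$ is nontrivial. So the shift is an automorphism of $H^1(X,\F)$. Moreover, since $h'\cdot x$ again has the polarized form, Lemma~\ref{lemma:inequivalent} applies to it directly (with $l$ prime).

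The only delicate point, and the main obstacle, is the commutation identity $\phi_{h'}(a_j\cdot g)=a_j\cdot\phi_{h'}(g)$. This is exactly where abelianness enters. For a non-abelian lift, left translation would not commute with the voltages, and one would have to shift by right multiplication instead. I would verify the identity cube-type by cube-type using the explicit vertex labels in Figure~\ref{fig:3-cube}.
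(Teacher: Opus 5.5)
Your proof is correct. It rests on the same key fact as the paper's proof: commutativity of $\HH$ gives $a_j\cdot(h'^{-1}h,\ldots)=h'^{-1}\cdot\bigl(a_j\cdot(h,\ldots)\bigr)$. You package it differently, though.

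The paper uses this identity only to re-verify the defining relations \eqref{eq:delta_condition1}--\eqref{eq:delta_condition4} of $\ker\delta^1_{\text{res}}$ term by term. It then concludes that $h'\cdot x$ is not a coboundary by citing Lemma~\ref{lemma:inequivalent}.

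You instead promote the shift to a sheaf-preserving automorphism of all of $X$, so that $h'\cdot(-)$ is a cochain isomorphism commuting with every $\delta^p$. The cocycle property is then immediate, and non-triviality follows by pulling back: $h'\cdot x=\delta^0y\Rightarrow x=\delta^0(h'^{-1}\cdot y)$. This second step is more elementary and more general than the paper's. It needs neither $\HH=C_l$ with $l$ prime nor the distance hypotheses on the local codes behind Lemma~\ref{lemma:inequivalent}. It also shows that $\HH$ acts on $H^1(X,\F)$ by automorphisms.

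The paper's route, conversely, only needs $x\neq 0$ rather than $x\notin\Ima\delta^0$. This is because Lemma~\ref{lemma:inequivalent} certifies every nonzero polarized element of $\ker\delta^1_{\text{res}}$ as a non-coboundary.

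One small caution about your opening sentence: the lemma does not assert that $h'\cdot x$ is inequivalent to $x$. Indeed, for an all-ones lift, $h'\cdot x=x$.
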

\begin{proof}
	The statement follows easily by reading Eq.~\eqref{eq:delta_condition1} to \eqref{eq:delta_condition4} again. For instance,
	\begin{align}
	& I_{m_2} \otimes h_3^T(a_3, \text{-}) \Big( h' \cdot x (  [(h,v_2,v_3);(1,0)] ) \Big)  \\
	= & I_{m_2} \otimes h_3^T(a_3, \text{-}) \Big( x (  [(h'^{-1} \cdot h,v_2,v_3);(1,0)] ) \Big)
	= I_{m_2} \otimes h_3^T(a_3, \text{-}) \Big( x ( [a_3 \cdot (h'^{-1} \cdot h,v_2,v_3);(1,1)] ) \Big) \notag \\
	= & I_{m_2} \otimes h_3^T(a_3, \text{-}) \Big( x ( [(a_3 \cdot h'^{-1} \cdot h,v_2,a_3 \cdot v_3);(1,1)] ) \Big) \notag \\
	= & I_{m_2} \otimes h_3^T(a_3, \text{-}) \Big( x ( [(h'^{-1} \cdot a_3 \cdot h,v_2,a_3 \cdot v_3);(1,1)] ) \Big)
	= I_{m_2} \otimes h_3^T(a_3, \text{-}) \Big( h' \cdot x ( [a_3 \cdot (h,v_2, v_3);(1,1)] ) \Big). \notag
	\end{align}
	By Lemma~\ref{lemma:inequivalent}, $h' \cdot x$ is a logical representative but not a coboundary.
\end{proof}

Note that the $x$ and $h' \cdot x$ could be equivalent. For example, when $x$ is just some all-ones lift in Definition~\ref{def:all_ones}, then $x \equiv h' \cdot x$ for any $h' \in \HH$. To avoid possible degeneracy, we put more effort into studying the group action. For any fixed 1-cube $\sigma' = [(v_1,v_2,v_3);(a_1,b_2,b_3)] \in X'$, the restriction of $x$ to the lifts of $\sigma'$ can be seen as a vector $\mathrm{v}_{\sigma'} \in \mathbb{F}_q[\HH]$. Rigorously, since there are still local vectors, we define the \emph{induced vector} in $\mathbb{F}_q[\HH]$ by evaluating the local vectors: 
\begin{align}\label{eq:induced_vector}
	\mathrm{v}_{\sigma'}(h) := h_2^T(a_2, \text{-}) \otimes h_3^T(a_3, \text{-}) \Big( x ( [(h,v_1,v_2,v_3);(a_1,0,0)] ) \Big) 
\end{align}
for specific choices of $a_2,a_3$ that will be given later.

By Theorem~\ref{thm:abelian_lift}, we set $\HH = C_l$ as the cyclic group with $l$ elements. Let $\x$ be a generator of $C_l$. Then $\mathbb{F}_q[C_l] \cong \mathbb{F}_q[\x]/(\x^l - 1)$, i.e., the collection of all polynomials over $\mathbb{F}_q$ quotient under the equivalence relation $\x^l = 1$. Let $\mathrm{v} \in \mathbb{F}_q[C_l]$ be a nonzero vector. The independence of the vectors $\mathrm{v}, \x \cdot \mathrm{v}, \cdots,  \x^{l-1} \cdot \mathrm{v}$ is determined by whether or not we can find $c_i \in \mathbb{F}_q$ such that
\begin{align}
	\sum_{i=0}^{l-1} c_i (\x^{i} \cdot \mathrm{v}) = 0 \Leftrightarrow \Big( \sum_{i=0}^{l-1} c_i \x^{i} \Big) \cdot \mathrm{v} = 0.
\end{align}
Mathematically, if $\mathrm{v}$ is a \emph{unit}, but not a \emph{zero divisor}, $c_i$ does not exist. More importantly, the actions of $\x$ on $\mathrm{v}$ generate a complete basis for $\mathbb{F}_q[C_l]$. For example, $\mathrm{v} = \x$ is a unit, but both $1 + \x$ and $1 + \x \cdots + \x^{l-1}$ are zero divisors. 

In the real setting, with respect to fixed $\sigma'$ and $a_2,a_3$, suppose the logical representatives $x_i$ induce a collection of vectors $\{\mathrm{v}_i\} \subset \mathbb{F}_q[C_l]$. As a fundamental result from commutative algebra, if the subspace $\{ \mathrm{v}_i\} \subset \mathbb{F}_q[C_l]$ is not contained in any of its maximal ideal, then it must generate $\mathbb{F}_q[C_l]$ with units. 
This enables us to recover a complete basis for $\mathbb{F}_q[C_l]$. Let $f_{i,j}(\x)$ be polynomials in $\mathbb{F}_q[C_l] \cong \mathbb{F}_q[\x]/(\x^l - 1)$ such that 
\begin{align}
	\sum_i f_{i,j}(\x) \cdot \mathrm{v}_i = \x^j \in \mathbb{F}_q[C_l].
\end{align}
That is, one of the standard basis vector. Accordingly, we have
\begin{align}\label{eq:poly_logical}
	\sum_i f_{i,j}(\x) \cdot x_i \in \ker \delta_{\text{res}}^1.
\end{align}
By Lemma~\ref{lemma:group_action} and \eqref{eq:induced_vector}, for any $h' \in C_l$, with $a_2,a_3$ being fixed,
\begin{align}
	& h_2^T(a_2, \text{-}) \otimes h_3^T(a_3, \text{-}) ( h' \cdot x_i \vert_{\sigma'} ) = h' \cdot ( h_2^T(a_2, \text{-}) \otimes h_3^T(a_3, \text{-}) x_i \vert_{\sigma'} ) = h' \cdot \mathrm{v}_i \\
	\implies & h_2^T(a, \text{-}) \otimes h_3^T(a, \text{-}) \Big( \sum_i f_{i,j}(\x) \cdot x_i \vert_{\sigma'} \Big)  = \sum_i f_{i,j}(\x) \cdot \mathrm{v}_i.
\end{align}   
The collection of logical representatives $\{\sum_i f_{i,j}(\x) \cdot x_i\}$ must be linearly independent, and hence inequivalent by Lemma~\ref{lemma:inequivalent}. This is because their evaluations on the local vectors by $h_2^T(a, \text{-}) \otimes h_3^T(a, \text{-})$ give rise to the standard basis of $\mathbb{F}_q[C_l]$. Conclusively, we have the following lemma.

\begin{lemma}\label{lemma:basis}
	Let $x_i$ be logical representative in $\ker \delta_{\text{res}}^1$. Suppose there is a 1-cube $\sigma'=$ $[(v_1,v_2,v_3);$ $(a_1,b_2,b_3)]$ with some $a_2,a_3$ such that the induced vectors $\mathrm{v}_i$ is not contained in any maximal ideal of $\mathbb{F}_q[C_l]$, then there are exactly $l$ inequivalent logical representatives $L^j$ such that
	\begin{align}
		h_2^T(a_2, \text{-}) \otimes h_3^T(a_3, \text{-}) ( L^j \vert_{\sigma'} ) = \x^j.
	\end{align}
\end{lemma}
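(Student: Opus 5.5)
The proof will run along the discussion immediately preceding the statement, made rigorous in three steps: pass from the induced vectors to the ideal they generate, use the module structure to move the generator to each standard basis vector, and transport this back to cocycles.

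\emph{Step 1: the ideal is the whole algebra.} Let $R = \mathbb{F}_q[C_l] \cong \mathbb{F}_q[\x]/(\x^l-1)$ and let $I = \sum_i R\,\mathrm{v}_i$ be the ideal generated by the induced vectors. Since $R$ is a finite commutative ring, $I$ is either $R$ or is contained in some maximal ideal. The latter is excluded by hypothesis, so $I = R$. Hence there are $g_i(\x) \in R$ with $\sum_i g_i(\x)\,\mathrm{v}_i = 1$. Multiplying by $\x^j$ gives $f_{i,j} := \x^j g_i$ with $\sum_i f_{i,j}\mathrm{v}_i = \x^j$ for $j = 0,\dots,l-1$.

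\emph{Step 2: the induced vector is an $R$-module map.} Define $L^j := \sum_i f_{i,j}(\x)\cdot x_i$, where $\x$ acts on cochains by the shift of Lemma~\ref{lemma:group_action}. That lemma shows that each $h'\cdot x_i$ again lies in $\ker\delta^1_{\text{res}}$. Since $\ker\delta^1_{\text{res}}$ is an $\mathbb{F}_q$-subspace, it is therefore an $R$-submodule, and so $L^j \in \ker\delta^1_{\text{res}}$.

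The map $x \mapsto h_2^T(a_2,\text{-})\otimes h_3^T(a_3,\text{-})(x|_{\sigma'})$, sending a cochain to its induced vector in $R$, commutes with the shift. The reason is that the local maps $h_2^T(a_2,\text{-})\otimes h_3^T(a_3,\text{-})$ do not depend on $h$: by Lemma~\ref{lemma:sheaf}, all lifts of a cube carry identical local coefficients. The map is also $\mathbb{F}_q$-linear, so it is $R$-linear. Applying it to $L^j$ therefore yields $\sum_i f_{i,j}\mathrm{v}_i = \x^j$.

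\emph{Step 3: exactly $l$ inequivalent classes.} The vectors $\x^0,\dots,\x^{l-1}$ are linearly independent in $R$. A linear relation $\sum_j c_j L^j = 0$ would map to $\sum_j c_j\x^j = 0$, so all $c_j = 0$ and the $L^j$ are linearly independent.

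Inequivalence needs more care. Every element of $\ker\delta^1_{\text{res}}$ of the polarized form is supported on the fixed first component $[v_1;a_1]$, and so is every nonzero combination $\sum_j c_j L^j$. Lemma~\ref{lemma:inequivalent} states that no such nonzero element is a coboundary. Hence any nontrivial combination represents a nonzero class, and the classes $[L^0],\dots,[L^{l-1}]$ are independent. This gives $l$ inequivalent representatives.

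The word "exactly" I would read as saying that the evaluation map lands in the $l$-dimensional space $R$. Consequently no family of more than $l$ representatives can have distinct standard-basis induced vectors, and the $L^j$ realize this maximum.

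The main obstacle is Step 2. One must check that the shift action of Lemma~\ref{lemma:group_action} is compatible with the specific evaluation at the fixed $a_2,a_3$ and the fixed binary bits of $\sigma'$. In particular, the evaluated cube $[(h,v_1,v_2,v_3);(a_1,0,0)]$ must be mapped to $[(h'h,\dots);(a_1,0,0)]$ without picking up a twist from $\gamma_{(v,v')}$. This holds because $C_l$ is abelian, so left translation commutes with the voltage actions; I would verify it directly from \eqref{eq:delta_condition1}--\eqref{eq:delta_condition4}.
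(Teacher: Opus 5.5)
Your proposal is correct and follows the paper's argument essentially step for step: the induced vectors generate all of $\mathbb{F}_q[C_l]$, you set $L^j=\sum_i f_{i,j}(\x)\cdot x_i$ with $\sum_i f_{i,j}\mathrm{v}_i=\x^j$, use Lemma~\ref{lemma:group_action} to get $L^j\in\ker\delta^1_{\text{res}}$ and the shift-equivariance of the evaluation, and then obtain linear independence from the standard basis and inequivalence from Lemma~\ref{lemma:inequivalent}. The step you call the main obstacle is in fact immediate: the shift is a pure relabeling of the $h$-coordinate, so equivariance of the evaluation holds by definition, and abelianness is needed only for the kernel to be closed under the shift, which Lemma~\ref{lemma:group_action} already provides.
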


Generally, $\mathbb{F}_q[C_l]$ admits several maximal ideals generated by irreducible factors of $\x^l - 1$, including $\x + 1$. The others are irreducible factors of $1 + \x \cdots + \x^{l-1}$. As a basic property of quotient rings, when $f(\x)$ is irreducible, 
\begin{align}
	\mathbb{F}_q[\x]/(f(\x)) \cong \mathbb{F}_{q^{\deg f}} \cong \mathbb{F}_q^{\deg f}
\end{align} 
and the kernel of the projection $\pi: \mathbb{F}_q[\x] \to \mathbb{F}_q[\x]/(f(\x))$ is $\langle f(\x) \rangle$. The rank--nullity theorem says that
\begin{align}\label{eq:ideal_dim}
	\dim \langle f(\x) \rangle = \dim  \mathbb{F}_q[\x] - \dim \mathbb{F}_q[\x]/(f(\x)) = l - \deg f.
\end{align}
Except $\deg( \x+1) = 1$, the degree satisfied $q^{\deg f} \equiv 1 \mod l$, as the size of orbits formed by the Frobenius automorphism~\cite{Aluffi2009}. We also note that $\langle \x+1 \rangle$ is just the kernel of the evaluation function on polynomials with $\x = 1$ and thus the cyclotomic polynomial $1 + \x + \cdots + \x^{l-1} \notin \langle \x+1 \rangle$ when $l$ is odd. Recall that any all-ones lift, with a proper choice of $a_2,a_3$, induces $1 + \x + \cdots + \x^{l-1}$. Our task of meeting the condition of Lemma~\ref{lemma:basis} reduces to ruling out all other maximal ideals. We provide some examples to motivate the results.

\begin{example}
	Let $q = 2$ and $l = 5$, then the smallest solution to $2^{\deg f} \equiv 1 \mod 5$ is $\deg f = 4$. Except $\langle \x+1 \rangle$, the other maximal ideal is $\langle \x + 1 + \x^2 + \x^3 + \x^4 \rangle$ of dimension $1$. In this case, finding one inequivalent logical representative distinct from the all-ones lifts is sufficient.
	
	Let $q = 2$ and $l = 7$, then the smallest solution to $2^{\deg f} \equiv 1 \mod 7$ is $\deg f = 3$ and we have 
	\begin{align}
		\x + 1 + \x^2 + \x^3 + \x^4 + \x^5 + \x^6 = (1 + \x + \x^3) (1 + \x^2 + \x^3)
	\end{align}
	which corresponds to two more maximal ideals of dimension $4$.
\end{example}

More concepts from representation theory are helpful to study the dimensions of the maximal ideals~\cite{Goodman2009}. Provided that $l$ is a prime, the irreducible representations of $C_l$ over $\mathbb{C}$ are all 1-dimensional and determined by the \emph{Dirichlet Characters} $\chi: C_l \rightarrow \mathbb{C}$. The character is simply taking the trace of representations and thus $\chi(1) = 1$ and $\chi(\x^i \cdot x^j) = \chi(\x^i) \chi(\x^j)$. The representation is unitary and thus $\vert \chi(\x^i) \vert = 1$. Since $C_l$ can be view as $\mathbb{Z}$ with the periodic boundary condition modulo $l$, we extend
\begin{align}
	\chi: \mathbb{Z} \rightarrow \mathbb{C} \text{ with } \chi(n+l) = \chi(n).
\end{align}
There are exactly $l-1$ distinct Dirichlet characters. The Schur orthogonality says that for any integer $a$ with $\gcd(a, l) = 1$:
\begin{align}
	\frac{1}{l-1} \sum_{\chi} \chi(n) \overline{\chi(a)} =
	\begin{cases}
		1 & \text{if } n \equiv a  \mod l, \\
		0 & \text{otherwise.}
	\end{cases}
\end{align}
To detect when $q^{\deg f} \equiv 1 \mod l$, we set $n = q^{\deg f}$ and $a = 1$ in the above formula and thus
\begin{align}\label{eq:ideal_number}
	\text{Indicator}(q^{\deg f} \equiv 1 \mod l) = \frac{1}{l-1} \sum_{\chi} \chi(q^{\deg f}) \overline{\chi(1)}  = \frac{1}{l-1} \sum_{\chi} \chi(q^{\deg f}).
\end{align}
By Eq.~\eqref{eq:ideal_dim}, if the maximal ideals other than $\langle 1 + \x \rangle$ have dimensions lower than the span of $\mathrm{v}_i$, we acquire Lemma~\ref{lemma:basis} immediately. For instance, we may set $\deg f = o(l)$ in Eq.~\eqref{eq:ideal_number} and search for $l$ among the distribution of prime numbers. Remarkably, many deep results in number theory have been discovered during the process and one of the most famous is:

\begin{conjecture}[Artin~\cite{Artin1966}]
	If $a \in \mathbb{Z}$ be neither $\pm 1$ nor a perfect square, then there are infinitely many primes $p$ for which $a$ is a primitive root, modulo $p$.
\end{conjecture}

In our case, $a = q = 2^s$ and $p = l$. Here we require $s$ to be odd so that $q$ cannot be a perfect square. It is a primitive root modulo $l$ means that $q^{l-1} \equiv 1 \mod l$ and $l-1$ is the smallest solution, i.e., its order $\text{ord}_l(q) = l - 1$. As a result, $\deg f = l - 1$ and the only maximal ideal different from $\langle 1 + \x \rangle$ is $\langle 1 + \x \cdots + \x^{l-1} \rangle$. It becomes extremely easy for Lemma~\ref{lemma:basis} to hold. Artin's primitive root conjecture is proved in~\cite{Hooley1967} under GRH. There are several unconditional results. For example, in~\cite{Gupta1984,Hearth-Brown1986}, Artin's conjecture is proved for all primes with at most two exceptions. However, $a = q$ in our case is always a power of $2$. The weaker version of the conjecture is also verified when $\deg f = \Omega(l^r)$ for $r$ around $1/2$~\cite{Agrawal2021,Jarviniemi2021}. 

\begin{theorem}\label{thm:basis}
	Assuming GRH, Artin's conjecture holds, and hence there exists an infinite family of primes $l$ that can be used to form high dimensional expanders such that Lemma~\ref{lemma:basis} holds. 
\end{theorem}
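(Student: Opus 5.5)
Under GRH, Hooley's theorem gives Artin's conjecture for $a=q=2^s$ with $s$ odd, since such a $q$ is neither $\pm1$ nor a perfect square. So there are infinitely many primes $l$ with $\mathrm{ord}_l(q)=l-1$. For such $l$, the polynomial $1+\x+\cdots+\x^{l-1}$ is irreducible over $\mathbb{F}_q$. Indeed, every irreducible factor $f$ of it has $q^{\deg f}\equiv 1 \bmod l$, which forces $\deg f=l-1$. Since $l$ is odd and $q$ has characteristic $2$, $\x^l-1$ is separable, and the CRT gives
\begin{align}
\mathbb{F}_q[C_l]\cong \mathbb{F}_q[\x]/(\x+1)\times \mathbb{F}_q[\x]/(1+\x+\cdots+\x^{l-1})\cong \mathbb{F}_q\times\mathbb{F}_{q^{l-1}}.
\end{align}
Hence there are exactly two maximal ideals, $\langle 1+\x\rangle$ and $\langle 1+\x+\cdots+\x^{l-1}\rangle$, of dimensions $l-1$ and $1$ by Eq.~\eqref{eq:ideal_dim}. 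We take these primes as lift sizes. Theorem~\ref{thm:abelian_lift} applies to $C_l$-lifts provided $l\le e^{cN/n^2}$, so we choose the base graph of size $\Theta(\log l)$ to make the shift $l$-lift almost Ramanujan.

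Next we produce induced vectors, in the sense of Eq.~\eqref{eq:induced_vector}, that avoid both maximal ideals. Fix $\sigma'$ and $a_2,a_3$. To avoid $\langle 1+\x\rangle$, the kernel of evaluation at $\x=1$, we take an all-ones lift (Definition~\ref{def:all_ones}) built from $z_2,z_3$ chosen as in Eq.~\eqref{eq:z_i}. Its induced vector is $1+\x+\cdots+\x^{l-1}$, and evaluating this at $\x=1$ gives $l\equiv 1\neq 0$ because $l$ is odd. To avoid $\langle 1+\x+\cdots+\x^{l-1}\rangle$, which consists only of the constant vectors $c(1+\cdots+\x^{l-1})$, it suffices to find one logical representative in $\ker\delta^1_{\text{res}}$ whose induced vector at $(\sigma';a_2,a_3)$ is not constant in $h$. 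Once the span of the induced vectors lies in neither maximal ideal, Lemma~\ref{lemma:basis} and its polynomial recombination \eqref{eq:poly_logical} give the $l$ inequivalent representatives $L^j$.

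The main obstacle is showing that a non-constant induced vector exists. I would argue by counting. By Lemma~\ref{lemma:rank}, $\dim\ker\delta^1_{\text{res}}=\Theta(n^t l V^t)$, while $\sigma'$ ranges over only $O(n^tV^t)$ choices with constant-size local data. Suppose, for contradiction, that for every $\sigma'$ and every choice of $a_2,a_3$, all elements of the kernel induced constant vectors. Since $h_j$ has full rank, the functionals $h_2^T(a_2,\text{-})\otimes h_3^T(a_3,\text{-})$ over all $(a_2,a_3)$ jointly determine the local tensor. The restriction of every kernel element to the lifts of each $\sigma'$ would then be independent of $h$. This would confine $\ker\delta^1_{\text{res}}$ to the all-ones lifts, a space of dimension $O(n^tV^t)$ independent of $l$, contradicting the $\Theta(l)$ growth.

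For the lemma I need a single good $\sigma'$. The count above actually shows that a non-constant vector occurs for a constant fraction of the $\sigma'$, which is what Theorem~\ref{thm:cup_HGP_3} needs. If necessary, I would also use the $C_l$-equivariance from Lemma~\ref{lemma:group_action}: the $\mathbb{F}_q[C_l]$-module generated by the non-constant kernel elements surjects onto the $\mathbb{F}_{q^{l-1}}$ factor at a generic $\sigma'$, since any nonzero projection to that field already generates it. This is where irreducibility from Artin's conjecture is essential.
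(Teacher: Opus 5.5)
Your proof is correct, but you rule out the second maximal ideal by a different and more economical mechanism than the paper does.

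\textbf{Common ground.} Both arguments use Hooley's theorem to reduce the maximal ideals of $\mathbb{F}_q[C_l]$ to $\langle 1+\x\rangle$ and $\langle 1+\x+\cdots+\x^{l-1}\rangle$. Both escape $\langle 1+\x\rangle$ with an all-ones lift.

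\textbf{The paper's route.} Section~\ref{sec:module} first uses semisimplicity and flatness (Proposition~\ref{prop:intersect}, Lemma~\ref{lemma:product}) to factor $\ker\delta^1_{\text{res}}$ through $\ker\hat H_2\otimes_R\ker\hat H_3$. It then applies a row-reduction count (Claim~\ref{claim_1}) and a dual-distance argument (Claim~\ref{claim_2}) to each factor separately. These show that a constant fraction of $[v_j;b_j]$, together with at least $d_0$ choices of $a_j$, give a $\Theta(l)$-dimensional span of induced vectors. Such a span certainly escapes a one-dimensional ideal.

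\textbf{Your route.} You use the fact that, under Artin's conjecture, $\langle 1+\x+\cdots+\x^{l-1}\rangle$ is just the line of constants, so one non-constant induced vector suffices. You obtain it from Lemma~\ref{lemma:rank} and the injectivity of $h_2^T\otimes h_3^T$. If every induced vector were constant in $h$, then $\ker\delta^1_{\text{res}}$ would lie in the image of the canonical extension, whose dimension does not grow with $l$. This bypasses the module-theoretic machinery and the pivot bookkeeping for local vectors, and it proves the statement as posed.

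\textbf{What each route buys.} The paper's route gives independence and multiplicity of the choices. The data $([v_2;b_2],a_2)$ and $([v_3;b_3],a_3)$ can be chosen separately, with many options each (Claim~\ref{claim_3}). Theorem~\ref{thm:cup_sheaf_1} needs this to align the $t$ sheaves on a common $t$-cube. The $\Theta(l)$-dimensional ideals are also what remains useful in the discussion without Artin's conjecture. Your count only finds a good $\sigma'$ jointly with some $(a_2,a_3)$. That is exactly the limitation the paper notes in its remark after Lemma~\ref{lemma:product}.

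\textbf{Recovering independence.} You can get the independent version cheaply.
\begin{enumerate}
\item Run your counting argument on $\ker\hat H_2$ and on $\ker\hat H_3$ separately. Both dimensions grow with $l$, since $m_2,m_3>n/2$.
\item Add the all-ones vectors in each factor.
\item Take $R$-tensor products $\psi\otimes_R\phi$. These lie in $\ker\delta^1_{\text{res}}$ without any flatness argument.
\end{enumerate}
Their induced vectors are the products in $\mathbb{F}_q[C_l]$, so they generate $I_2I_3=\mathbb{F}_q[C_l]$.
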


As a reminder, we still need a proper $\sigma'$ and some $a_2$ and $a_3$, at which point Lemma~\ref{lemma:basis} will be activated. This is addressed in the next subsection. Without GRH or Artin's conjecture, the basis in Lemma~\ref{lemma:basis} degenerates to the basis of some proper ideal of $\mathbb{F}_q[C_l]$, which is unfavorable in the constitutions of independent cup products. More details are given in the end of Section~\ref{sec:cup_sheaf}.


\subsection{Proof of Theorem~\ref{thm:basis}}\label{sec:module}

We borrow more techniques from commutative algebra to determine $\sigma'$, $a_2$ and $a_3$. Let $R = \mathbb{F}_q[C_l]$. As mentioned in Section~\ref{sec:HGP_like}, $\delta_{\text{res}}^1$ has a more compact form as an operator on the $R$-modules in \eqref{eq:delta_R}:
\begin{align}
	\delta_{\text{res}}^1 = \begin{pmatrix} 
		0 \\ 
		I_{E} \otimes_R I_{m_2 V} \otimes_R \hat{H}_3 \\
		I_{E} \otimes_R \hat{H}_2 \otimes_R I_{m_3 V} 
	\end{pmatrix} = I_{E} \otimes_R \begin{pmatrix} 
		0 \\ 
		I_{m_2 V} \otimes_R \hat{H}_3 \\
		\hat{H}_2 \otimes_R I_{m_3 V} 
	\end{pmatrix}.
\end{align}
It is straightforward to check that $\ker \delta_{\text{res}}^1$ over $\mathbb{F}_q$ can be obtained from $\ker \delta_{\text{res}}^1$ over $R$ by expressing the solutions in the $\mathbb{F}_q$-standard basis of $R$. As an example, we show in Lemma~\ref{lemma:group_action} that the $C_l$ actions preserve $\ker \delta_{\text{res}}^1$. In the language of modules, taking these actions are simply multiplying by ``scalars" in $R$.

Lemma~\ref{lemma:rank} demonstrates that there are $\Theta(l)$ independent solutions over $\mathbb{F}_q$, and we now know that they come from
\begin{align}
	\ker (I_{m_2 V} \otimes_R \hat{H}_3 ) \cap \ker ( \hat{H}_2 \otimes_R I_{m_3 V} ) = ( R^{m_2 V} \otimes_R \ker \hat{H}_3 ) \cap (\ker \hat{H}_2 \otimes_R R^{m_3 V} ).
\end{align}
On the other hand, let $V' \subset V$ and $W' \subset W$ be subspaces. Then 
\begin{align}
	(V' \otimes W) \cap (V \otimes W') = V' \otimes W'.
\end{align}
This intersection property is implicitly used when solving the canonical logical representatives for HGP codes, e.g., in \eqref{eq:HGP_basis_1}, but it does not hold for general modules. Fortunately, when $l$ is odd and $q$ is even, Maschke's theorem says that $\mathbb{F}_q[C_l]$ is semisimple, i.e., its representations can be completely decomposed into irreducible ones over $\mathbb{F}_q$~\cite{Goodman2009}. As a result, all its modules are projective and hence flat, which provides the following properties~\cite{Bourbaki,Aluffi2009,Bosch2022}.  

\begin{definition}
	An $R$-module $N$ is \emph{flat} if for any injective homomorphism $M' \to M$, the map $M' \otimes_R N \rightarrow M \otimes_R N$ is injective.
\end{definition}

\begin{proposition}\label{prop:exact}
	Suppose $N$ is flat. Let $0 \to M' \to M \to M'' \to 0$ be exact, then
	\begin{equation}
		\begin{tikzcd}
			0 \arrow[r] & M' \otimes_R N \arrow[r] & M \otimes_R N \arrow[r] & M'' \otimes_R N \arrow[r] & 0
		\end{tikzcd}
	\end{equation}
	is exact.
\end{proposition}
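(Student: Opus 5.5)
The statement to prove is Proposition~\ref{prop:exact}: flatness of $N$ preserves short exact sequences. I would argue directly from two standard facts. The first is right exactness of $-\otimes_R N$, which holds for every module $N$. The second is the definition of flatness, which supplies the missing injectivity on the left.

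\textbf{Step 1 (right exactness).} Write the sequence as $0 \to M' \xrightarrow{f} M \xrightarrow{g} M'' \to 0$. I will show that $M' \otimes_R N \to M \otimes_R N \to M'' \otimes_R N \to 0$ is exact.

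Surjectivity of $g \otimes \mathrm{id}$ is immediate. Every generator $m'' \otimes n$ has the form $g(m) \otimes n$, because $g$ is onto.

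The composite $(g\otimes \mathrm{id})(f\otimes \mathrm{id}) = (gf)\otimes \mathrm{id}$ vanishes, since $gf = 0$.

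For exactness in the middle, set $Q = (M\otimes_R N)/\Ima(f\otimes\mathrm{id})$. Then $g\otimes \mathrm{id}$ factors through a map $\bar g : Q \to M''\otimes_R N$. I will build an inverse $\psi : M''\otimes_R N \to Q$ using the universal property of the tensor product:
\begin{align}
	\psi(m'' \otimes n) = [\, m \otimes n \,], \quad \text{for any } m \text{ with } g(m) = m''.
\end{align}
This is well defined. If $g(m_1) = g(m_2)$, then $m_1 - m_2 = f(m')$ for some $m'$, so $m_1\otimes n - m_2 \otimes n \in \Ima(f\otimes \mathrm{id})$. The map is $R$-bilinear in $(m'', n)$ because lifts can be chosen additively modulo $\Ima f$. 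Checking that $\psi\bar g = \mathrm{id}$ and $\bar g \psi = \mathrm{id}$ on generators gives $Q \cong M''\otimes_R N$. Hence $\ker(g\otimes\mathrm{id}) = \Ima(f\otimes\mathrm{id})$.

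\textbf{Step 2 (left exactness from flatness).} Since $f: M' \to M$ is injective, the definition of a flat module gives that $f\otimes \mathrm{id}: M'\otimes_R N \to M\otimes_R N$ is injective. Combined with Step 1, this yields exactness of the full sequence $0 \to M'\otimes_R N \to M\otimes_R N \to M''\otimes_R N \to 0$.

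The only real obstacle is the well-definedness check in Step 1, specifically bilinearity and independence of the chosen lift of $m''$. This reduces to the computation that $\ker g = \Ima f$ implies the ambiguity lies in $\Ima(f\otimes\mathrm{id})$.

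In the intended application $R = \mathbb{F}_q[C_l]$ with $l$ odd and $q$ even. There, Maschke's theorem makes every module projective, since each one is a direct summand of a free module. Flatness then follows because $-\otimes_R R^k$ is exact and tensor products commute with direct sums. So the proposition applies to all modules appearing in $\delta_{\text{res}}^1$.
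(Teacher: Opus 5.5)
Your proof is correct and is the standard textbook argument: right exactness of $-\otimes_R N$ (your quotient-and-inverse construction is sound), together with the definition of flatness to get injectivity of $f\otimes\mathrm{id}$. The paper gives no proof of its own; it simply invokes this as a standard fact from the commutative algebra references it cites, so your argument supplies exactly what the paper relies on.
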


\begin{proposition}\label{prop:intersect}
	 Suppose any $R$-module is flat. Let $M' \subset M$ and $N' \subset N$ be $R$-modules, then $(M' \otimes_R N) \cap (M \otimes_R N') = M' \otimes_R N'$. 
\end{proposition}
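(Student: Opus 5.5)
The plan is to reduce both claims to semisimplicity of $R$, which lets everything be checked componentwise over fields. Since $l$ is odd and $\mathrm{char}\,\mathbb{F}_q = 2$, the polynomial $\x^l - 1$ is separable over $\mathbb{F}_q$. It therefore factors as $\prod_k f_k(\x)$ with pairwise coprime irreducible $f_k$. The Chinese remainder theorem then gives a ring isomorphism
\begin{align}
R = \mathbb{F}_q[\x]/(\x^l-1) \;\cong\; \prod_k \mathbb{F}_q[\x]/(f_k(\x)) = \prod_k K_k,
\end{align}
where each $K_k$ is a finite field. This is the concrete form of Maschke's theorem cited above.

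First I treat modules. Every $R$-module $M$ decomposes as $M = \bigoplus_k e_k M$ using the primitive orthogonal idempotents $e_k$, and each $e_k M$ is a $K_k$-vector space. For flatness (Proposition~\ref{prop:exact} and the definition), I will show that every module is projective. A quotient $R^n \to M$ splits because each component $K_k^n \to e_k M$ is a surjection of vector spaces and so splits. Projective modules are direct summands of free modules, and free modules are flat because tensoring with $R^n$ is just taking $n$ copies. So every $R$-module is flat, and the exact sequence in Proposition~\ref{prop:exact} stays exact after tensoring with $N$.

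For Proposition~\ref{prop:intersect}, the key observation is that tensor products over $R$ split along idempotents:
\begin{align}
M \otimes_R N \cong \bigoplus_k e_k M \otimes_{K_k} e_k N,
\end{align}
because $e_k e_{k'} = 0$ kills the cross terms. Under this splitting, $M' \otimes_R N$ is identified with $\bigoplus_k e_kM' \otimes_{K_k} e_kN$. This identification is an inclusion precisely because flatness, from the first step, makes $M'\otimes_R N \to M\otimes_R N$ injective. The same holds for $M \otimes_R N'$. The intersection can then be computed one $k$ at a time, and each component reduces to the linear-algebra identity $(V'\otimes W)\cap(V\otimes W') = V'\otimes W'$ over the field $K_k$. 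I prove that identity by choosing a basis of $V$ extending one of $V'$ and a basis of $W$ extending one of $W'$, then expanding in the product basis. Reassembling the components gives $M'\otimes_R N'$.

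The main obstacle I expect is justifying that the idempotent decomposition is compatible with the tensor product and with the inclusions, that is, that the maps really are injective so the intersection is taken inside one ambient space. This is exactly where flatness from the first step is used, so I will prove the statements in that order.
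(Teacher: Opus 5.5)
Your proof is correct, but it takes a different route from the paper.

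\textbf{The paper's route.} The paper does not decompose $R$ at all. It tensors $0\to M'\to M\to M/M'\to 0$ with $N'$ and with $N$, and links the two exact rows by the maps induced from $N'\hookrightarrow N$. Flatness of $M/M'$ makes the right-hand vertical map injective, and a diagram chase (phrased via the snake lemma) then gives $\Ima(f_2\circ\phi_1)=\Ima f_2\cap\Ima\psi_1$. Flatness of $M$ and $N$ identifies these images with $M\otimes_R N'$ and $M'\otimes_R N$. The argument uses only the stated flatness hypothesis. It is therefore valid over any commutative ring in which $M$, $M/M'$, $N$, $N'$ are flat, e.g.\ any von Neumann regular ring, which need not be a finite product of fields.

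\textbf{Your route.} You exploit the specific ring $R=\mathbb{F}_q[C_l]$ with $l$ odd and $q$ even. Separability of $\x^l-1$ and the Chinese remainder theorem give $R\cong\prod_k K_k$. Tensor products, inclusions and intersections then split along the idempotents $e_k$, and the claim reduces to the basis-extension identity over each field $K_k$.

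\textbf{Trade-off.} Your approach gives a self-contained, elementary proof. It actually establishes the flatness that the paper only attributes to Maschke's theorem. It also gives an explicit componentwise picture consistent with the decomposition \eqref{eq:CRT_irrep} used later. The cost is that you do not prove the proposition for an abstract ring satisfying its stated hypothesis, only for this particular semisimple $R$.

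\textbf{Two small remarks.} First, for a module that is not finitely generated, use a free module $R^{(I)}$ rather than $R^n$. You also need the fact that direct summands of flat modules are flat. Second, your first step can be dropped entirely. Under the idempotent splitting, the map $M'\otimes_R N\to M\otimes_R N$ is $\bigoplus_k(\iota_k\otimes\mathrm{id})$ with $\iota_k\colon e_kM'\hookrightarrow e_kM$. Each summand is injective by linear algebra over $K_k$.
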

\begin{proof}
	We consider the following commutative diagram:
	\begin{equation}\label{eq:cd_flat}
		\begin{tikzcd}
			0 \arrow[r] & M' \otimes_R N' \arrow[r,"\phi_1"] \arrow[d,"f_1"] & M \otimes_R N' \arrow[r,"\phi_1"] \arrow[d,"f_2"] & M/M' \otimes_R N' \arrow[d,"f_3"] \arrow[r] & 0  \\		
			0 \arrow[r] & M' \otimes_R N \arrow[r,"\psi_1"] & M \otimes_R N \arrow[r,"\psi_1"] & M/M' \otimes_R N \arrow[r] & 0  
		\end{tikzcd}
	\end{equation}
	Since the canonical sequence for quotient module
	\begin{equation}
		\begin{tikzcd}
			0 \arrow[r] & M' \arrow[r] & M \arrow[r] & M/M' \arrow[r] & 0 
		\end{tikzcd}
	\end{equation}
	is always exact, Proposition~\ref{prop:exact} indicates that the two rows in \eqref{eq:cd_flat} are exact. Since $M/M'$ is flat, $f_3$ is injective. By Snake lemma, $\Ima (f_2 \circ \phi_1) = \Ima f_2 \cap \Ima \psi_1$. The exactness indicates that 
	\begin{align}
		\Ima \phi_1 \cong M' \otimes_R N', \quad \Ima \psi_1 \cong M' \otimes_R N.
	\end{align} 
	Since $M$ is flat, $\Ima f_2 = M \otimes_R N'$, which asserts the statement.
\end{proof}

Since $[v_1;a_1]$ can be taken arbitrarily in $\ker \delta_{\text{res}}^1$, Proposition~\ref{prop:intersect} implies the following lemma.

\begin{lemma}\label{lemma:product}
	The kernel of $\delta_{\text{res}}^1$ can be obtained from $\ker \hat{H}_2 \otimes_R \ker \hat{H}_3$.
\end{lemma}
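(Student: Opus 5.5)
Because $\delta_{\text{res}}^1 = I_E \otimes_R D$ with $D := \begin{pmatrix} 0 \\ I_{m_2 V} \otimes_R \hat{H}_3 \\ \hat{H}_2 \otimes_R I_{m_3 V} \end{pmatrix}$, and $I_E$ only records the first coordinate $[v_1;a_1]$, I would fix $[v_1;a_1]$ and reduce to describing $\ker D \subset R^{m_2 V} \otimes_R R^{m_3 V}$. Computing this kernel blockwise gives
\begin{align}
	\ker D = \ker(I_{m_2 V} \otimes_R \hat{H}_3) \cap \ker(\hat{H}_2 \otimes_R I_{m_3 V}).
\end{align}
The proof then has two steps: identify each factor as a tensor product, and intersect.

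\textbf{Step 1: kernels of the tensored operators.} Set $M = R^{m_2 V}$, $N = R^{m_3 V}$, $M' = \ker \hat{H}_2 \subset M$, $N' = \ker \hat{H}_3 \subset N$, with $\hat{H}_2,\hat{H}_3$ viewed as $R$-linear maps between free $R$-modules via the lifted-product description. Since $l$ is odd and $q$ is even, Maschke's theorem makes $R = \mathbb{F}_q[C_l]$ semisimple, so every $R$-module is flat. Apply Proposition~\ref{prop:exact} with $M$ flat to the exact sequence $0 \to N' \to N \xrightarrow{\hat{H}_3} \Ima \hat{H}_3 \to 0$. Tensoring by $M$ keeps it exact, and flatness of the target module makes $M \otimes_R \Ima \hat{H}_3 \to M \otimes_R R^{(\text{codomain})}$ injective. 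Hence
\begin{align}
	\ker(I_M \otimes_R \hat{H}_3) = M \otimes_R N'.
\end{align}
The symmetric argument gives $\ker(\hat{H}_2 \otimes_R I_N) = M' \otimes_R N$.

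\textbf{Step 2: intersect.} Proposition~\ref{prop:intersect}, which holds because all $R$-modules are flat, gives $(M' \otimes_R N) \cap (M \otimes_R N') = M' \otimes_R N'$. Therefore $\ker D = \ker \hat{H}_2 \otimes_R \ker \hat{H}_3$, and
\begin{align}
	\ker \delta_{\text{res}}^1 = R^{E} \otimes_R \ker \hat{H}_2 \otimes_R \ker \hat{H}_3,
\end{align}
where the first factor ranges freely over $[v_1;a_1]$. To return to $\mathbb{F}_q$, I would expand in the $\mathbb{F}_q$-basis $\{\x^i\}$ of $R$. The $\mathbb{F}_q$-kernel equals the $R$-kernel read in coordinates, as remarked before the lemma, so every element of $\ker \delta_{\text{res}}^1$ is an $\mathbb{F}_q$-combination of elements $[v_1;a_1] \otimes (r\,\zeta_2 \otimes \zeta_3)$ with $\zeta_i \in \ker \hat{H}_i$ and $r \in R$.

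\textbf{Main obstacle.} The delicate point is the identification of sheaf index sets. The reorganization of rows and columns discussed around \eqref{eq:lift_cube} has to really produce $R^{m_2 V} \otimes_R R^{m_3 V}$, a tensor product over $R$ rather than over $\mathbb{F}_q$. This holds only because the lift group acts diagonally, through the single $h$ shared by both factors, and because $\HH$ is abelian, so the $R$-actions from $\hat{H}_2$ and $\hat{H}_3$ commute; this commutativity is the same fact used in Lemma~\ref{lemma:group_action}. I would verify this on the generic cube $[g;a_1,b_2,b_3]$. Once that is settled, the rest is the formal flatness argument above.
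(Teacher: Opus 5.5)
Your proposal is correct and follows essentially the paper's route. Both arguments write the kernel, for fixed $[v_1;a_1]$, as $\ker(I_{m_2V}\otimes_R\hat H_3)\cap\ker(\hat H_2\otimes_R I_{m_3V})$, identify the two factors as $R^{m_2V}\otimes_R\ker\hat H_3$ and $\ker\hat H_2\otimes_R R^{m_3V}$, and conclude with Proposition~\ref{prop:intersect}, using Maschke's theorem ($l$ odd, $q$ even) so that every $R$-module is flat. The only difference is that you also justify the kernel identification in Step 1, which the paper states without comment; that step needs only that $R^{m_2V}$ is free, so semisimplicity is really used only in the intersection step.
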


Before proceeding, we reminder that Proposition~\ref{prop:intersect} and Lemma~\ref{lemma:product} cannot replace Lemma~\ref{lemma:rank} for two reasons. The first one is that the nullity-rank theorem and other related properties fails on modules. Despite $\frac{1}{2}n < m_i < n$, counting the rows cannot estimate $\dim_R \ker \hat{H}_i$. The second reason is that $\dim_R \ker \hat{H}_i \neq l \cdot \dim_{\mathbb{F}_q} \ker \hat{H}_i$ in general. Nevertheless, given the existence of logical representatives, decomposing them as $R$-tensors in $\ker \hat{H}_i \otimes_R \ker \hat{H}_i$ by Lemma~\ref{lemma:product} facilitates the following analysis. 


\medskip
Suppose $\frac{1}{2}n \leq m_2 \leq n$, there must be $\Theta(l \cdot V)$ independent solutions to $\hat{H}_2 \phi = 0$ over $\mathbb{F}_q$. We list all of them as row vectors and apply elementary row operations to get the reduced row echelon form $(I \ \vert \ M)$. The column indices are given by $[(h,v_2);b_2]$, together with $m_2$ local coefficients pasted on $b_2$. 
With respect to any $[v_2;b_2]$, we group together the row vectors with pivot variables, i.e., unit entries in the leading identity matrix of the echelon form with lifts $h$. By definition, the total number of pivot variables is just the number of independent solutions, then we have a simple claim.

\begin{claim}\label{claim_1}
	There is a constant fraction of vertices $[v_2;b_2]$ such that each associated collection contains $\Theta(l)$ row vectors.
\end{claim}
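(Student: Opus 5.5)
This is a pigeonhole (averaging) argument on the pivot variables of the reduced row echelon form $(I \ \vert \ M)$ of the solution space of $\hat{H}_2 \phi = 0$.

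\textbf{Setup.} Let $K = \dim_{\mathbb{F}_q} \ker \hat{H}_2$. Since $\frac{1}{2}n \leq m_2 \leq n$, we have $K \geq c\, l V$ for some constant $c>0$; this is the same row-versus-column count used before Theorem~\ref{thm:cup_HGP_3}, with $c$ depending only on $m_2 - \frac{1}{2}n$. Each row of the echelon form owns exactly one pivot column. A pivot column is indexed by a lifted vertex $[(h,v_2);b_2]$ together with one of the $m_2$ local coordinates. Grouping the pivots by the base vertex $[v_2;b_2]$ therefore partitions the $K$ rows into $V$ groups, where $V$ is the number of vertices of the double cover. Write $k_{[v_2;b_2]}$ for the size of the group at $[v_2;b_2]$.

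\textbf{Upper bound on each group.} The group at $[v_2;b_2]$ has at most $m_2 l \leq n l$ members, because that base vertex has only $l$ lifts and each lift carries $m_2$ coordinates. So $k_{[v_2;b_2]} \leq nl$ for every base vertex.

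\textbf{Averaging.} Call a base vertex \emph{heavy} if $k_{[v_2;b_2]} \geq \frac{c}{2} l$, and let $\theta V$ be the number of heavy vertices. Light vertices contribute fewer than $\frac{c}{2} l V$ rows in total, and heavy vertices contribute at most $\theta V \cdot nl$. Hence
\[
c\,lV \leq K \leq \tfrac{c}{2}\, lV + \theta V n l,
\]
which gives $\theta \geq \frac{c}{2n}$, a constant. So a constant fraction of the base vertices $[v_2;b_2]$ are heavy, and each heavy vertex has $\Omega(l)$ associated rows. Together with the bound $nl$, this is $\Theta(l)$ rows per heavy vertex, since $n$ is constant.

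\textbf{Main obstacle.} The only step needing care is the lower bound $K = \Theta(lV)$. One must check that the rank of $\hat{H}_2$ is at most $m_2 \cdot(\text{number of lifted vertices})$, and that the number of columns is $\frac{n}{2}$ times that while the number of rows is $m_2$ times it, or the reverse, according to which side the matrix acts on. The sign of $m_2 - \frac{n}{2}$ then has to make the kernel linear in $l V$. If the orientation turns out to be reversed, I would instead count $\ker \hat{H}_2^T$ in the same way; the averaging step is unchanged. The argument uses no structure of the lift beyond the fact that each fiber has size $l$.
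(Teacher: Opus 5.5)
Your averaging argument is correct and is essentially the paper's own (unstated) reasoning: the $\Theta(lV)$ pivots are spread over the $V$ base vertices $[v_2;b_2]$, each of which can hold at most $m_2 l$ of them, so a constant fraction of the base vertices must each carry $\Omega(l)$ pivots. The orientation you worried about is the favorable one ($\hat{H}_2$ has $m_2 lV$ vertex-indexed columns and $\frac{n}{2} lV$ edge-indexed rows, giving $\dim\ker\hat{H}_2 \geq (m_2-\frac{n}{2})lV$), so your fallback to $\ker\hat{H}_2^T$ is unnecessary, and it would not have given a lower bound anyway.
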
 

For any fixed $[v_2;b_2]$, we restrict the associated vectors to its lifts and discard other components (cf. Eq.~\eqref{eq:induced_vector}). We hope that they can form following block matrix:
\begin{align}\label{eq:restricted_block}
	\left(
	\begin{array}{ c c c | c c c }
		\ast  &  &  & \ast  & \cdots  & \ast \\
		& \ddots  &  & \vdots  & \ddots  & \vdots \\
		&  & \ast  & \ast  & \cdots  & \ast 
	\end{array} \right),
\end{align}
where each $\ast$ stands for a nonzero local vector in $\mathbb{F}_q^{m_2}$. However, the real case is much more complicated. Row operations may not be able to cancel out every single component of the local vectors thoroughly. A schematic expression of the correct matrix is: 
\begin{align}\label{eq:restricted_block_2}
	\left(
	\begin{array}{ c | c | c | c c c }
		(\boldsymbol{1},0,1,0)^T  & \cdots &  (1,0,0,0)^T & \ast  & \cdots  & \ast \\
		\vdots & \ddots  & \vdots  & \vdots  & \ddots  & \vdots \\
		(0,1,1,1)^T & \cdots & (0,1,0,\boldsymbol{1})^T   & \ast  & \cdots  & \ast 
	\end{array} \right),
\end{align}
where the bold $\boldsymbol{1}$s are the true pivot variables and the other components of local vectors may remain as free variables. Note that for one $[(h,v_2);b_2]$, we may find two or more pivot variables on different components of the local vectors, corresponding to additional rows. Since $m_2$ is a constant, we discard additional rows and this does not influence the $\Theta(l)$ size of \eqref{eq:restricted_block_2}.  

To define the induced vectors in Eq.~\eqref{eq:induced_vector}, we apply $h_2^T(a_2, \text{-})$ to local vectors. However, an arbitrarily choice of $a_2$ cannot guarantee that after the evaluation, \eqref{eq:restricted_block_2} still has $\Theta(l)$ independent rows. To address this issue, we evaluate through all $a_2$s, which is simply acting $h_2^T$ on the local vectors. The resultants are just elements from $\Ima h_2^T$, i.e., codewords from the dual code of the local code $\ker h_2$. Then, \eqref{eq:restricted_block} becomes 
\begin{align}\label{eq:restricted_block_3}
	\left(
	\begin{array}{ c c c | c c c }
		c_1  &  \cdots  &  d_1  & e_1  & \cdots  & f_1 \\
		\vdots & \ddots  &  \vdots & \vdots  & \ddots  & \vdots \\
		c_{l'}  & \cdots & d_{l'}  & e_{l'}  & \cdots  & f_{l'}
	\end{array} \right),
\end{align}
where $c_i, \ldots,f_i \in \Ima h_2^T$ and they are treated as row vectors for simplicity. Then the above matrix has $l' = \Theta(l)$ rows and $n \cdot l$ columns. Since $h_2^T$ is of full rank, replacing local vectors by local dual codewords preserves the rank of \eqref{eq:restricted_block_3}. Then we can verify another claim.

\begin{claim}\label{claim_2}
	We can find at least $d_0/n$ fraction of $a_2$s such that the evaluation through any one of them $h_2^T(a_2, \text{-})$ still produces $\Theta(l)$ independent vectors from \eqref{eq:restricted_block_2}.
\end{claim}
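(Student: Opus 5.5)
We must prove Claim~\ref{claim_2}: among the $n$ choices of $a_2$, at least a $d_0/n$ fraction are such that evaluating the matrix \eqref{eq:restricted_block_2} through $h_2^T(a_2,\text{-})$ still leaves $\Theta(l)$ independent rows. The plan is a pigeonhole/averaging argument on the $n$ coordinates of the dual local codewords, using the distance of $\Ima h_2^T$.

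\textbf{Step 1: keep a full-rank block.} Start from \eqref{eq:restricted_block_3}, where each local vector has been replaced by its image under $h_2^T$, a codeword of $\C_2^\perp=\Ima h_2^T\subset\mathbb{F}_q^n$. Since $h_2^T$ is injective, this matrix still has $l'=\Theta(l)$ independent rows. Reindex its $n\cdot l$ columns as pairs $(h,a_2)$ with $h\in C_l$ and $a_2\in[n]$. Write $B_{a_2}$ for the $l'\times l$ submatrix on the columns with a fixed $a_2$. This $B_{a_2}$ is exactly the evaluated matrix produced by $h_2^T(a_2,\text{-})$.

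\textbf{Step 2: total rank is spread over coordinates.} Since the full matrix is the column concatenation of $B_1,\dots,B_n$, subadditivity gives $l'=\rk(B_1\mid\cdots\mid B_n)\le\sum_{a_2}\rk B_{a_2}$. Each $\rk B_{a_2}\le l'$, so this bound alone only says some $a_2$ has rank at least $l'/n$. To obtain many good $a_2$, I would instead take a fixed nonzero vector $u$ in the left kernel of $B_{a_2}$, i.e.\ a row combination killed at coordinate $a_2$. The combination $u^T(\text{matrix})$ is, block by block in $h$, a codeword of $\C_2^\perp$. Because the rows are independent, this codeword is nonzero in some block $h$. There it has at least $d_0$ nonzero coordinates, since the dual distance is at least $d_0$.

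\textbf{Step 3: the averaging that yields the fraction.} Suppose the set $\mathcal{B}$ of bad coordinates, those with $\rk B_{a_2}<cl'$, contained more than $n-d_0$ elements. Intersect the left kernels of the $B_{a_2}$ over any $n-d_0+1$ bad coordinates. Each has codimension at most $cl'$, so the intersection has dimension at least $l'-(n-d_0+1)cl'$, which is positive for a small constant $c$. A nonzero $u$ in the intersection produces, in every block $h$, a dual codeword vanishing on $n-d_0+1$ coordinates. Such a codeword has weight below $d_0$, so it is zero in every block, contradicting row independence. Hence at least $d_0$ coordinates $a_2$ satisfy $\rk B_{a_2}\ge cl'=\Theta(l)$, which is a $d_0/n$ fraction.

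\textbf{Expected obstacle.} The delicate point is Step 1: the reduction from \eqref{eq:restricted_block_2} to \eqref{eq:restricted_block_3} must genuinely preserve $\Theta(l)$ independent rows after extra pivot rows are discarded, and the evaluation must be blockwise in $h$ with a common $h_2^T$. Lemma~\ref{lemma:sheaf} ensures the local maps are identical on all lifts, which is what makes the blockwise evaluation valid. Once that is secured, the dimension count in Step 3 is routine. The constant $c$ depends only on $n$ and $d_0$.
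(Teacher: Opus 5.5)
Your proof is correct and rests on the same key fact as the paper's: since $\Ima h_2^T$ has distance $d_0$, no nonzero combination of the $l'$ rows can vanish on the same $n-d_0+1$ of the $a_2$-coordinates in every $h$-block. The paper uses this fact greedily: subadditivity gives one block of rank at least $l'/n$, deleting fewer than $d_0$ blocks preserves full row rank, and one iterates $d_0$ times. You reach the same conclusion in a single step by intersecting the left kernels of $n-d_0+1$ putatively bad blocks, so the difference is only in bookkeeping.
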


The reason is also simple. We can divide \eqref{eq:restricted_block_2} into $n$ subblocks $A_i$ and each of which corresponds to an $a_2^{(i)}$. By basic linear algebra, the rank of \eqref{eq:restricted_block_2} is upper bounded by $\sum_{i=1}^n \rk A^{(i)}$. Since $n = O(1)$, there must be at least one of them, saying $A_1$, with rank equal to $\Theta(l)$. Then we consider the matrix formed by remaining subblocks. It is still of full rank. Otherwise, some trivial linear combination exists. However, using the same combination coefficients, \eqref{eq:restricted_block_2} is nontrivial and this leads to some dual codeword of distance $1$ whose nonzero component gets removed with $A^{(1)}$. This contradicts the fact that the dual code $\Ima h_2^T$ also has a large distance $d_0$ relative to $n$ (see Theorem~\ref{thm:cLDPC}). By induction, we can find $A^{(2)}, \ldots,A^{(d_0)}$ with rank equal to $\Theta(l)$. This gives at least $d_0$ choices of $a_2$ where the action of $h_2^T(a_2, \text{-})$ produces $\Theta(l)$ independent induced vectors from \eqref{eq:restricted_block_2}.


\medskip
The above solutions to $\hat{H}_2 \phi = 0$ over $\mathbb{F}_q$ can be redefined over $R$, and their tensor products with the solutions to $\hat{H}_3 \psi = 0$ over $R$, together with any $[v_1;a_1]$, give rise to the logical representatives of $\ker \delta_{\text{res}}^1$. Let $h \in C_l$ be the standard basis of $R = \mathbb{F}_q[C_l]$, then
\begin{align}\label{eq:logical_R}
	\Big( (\psi \otimes_R \phi) ([(h,v_2,v_3); (b_2,b_3) ]) \Big) = \psi ([v_2;b_2]) \otimes_R \phi ([v_3;b_3]). 
\end{align}
Since $h_2^T(a_2, \text{-})$ and $h_3^T(a_3, \text{-})$ consist of scalars, they can be naturally viewed as actions over $R$ and thus
\begin{align}\label{eq:eva_R}
\begin{aligned}
	& h_2^T(a_2, \text{-}) \otimes_R h_3^T(a_3, \text{-}) \Big( \psi ([v_2;b_2]) \otimes_R \phi ([v_3;,b_3]) \Big) \\
	= & \Big( h_2^T(a_2, \text{-}) \psi ([v_2;b_2]) \Big) \otimes_R \Big( h_3^T(a_3, \text{-}) \psi ([v_3;b_3]) \Big). 
\end{aligned}
\end{align} 
The second line in Eq.~\eqref{eq:eva_R} is simply a product of two elements in $R$. Let $I_2, I_3 \subset \mathbb{F}_q[C_l]$ be the ideals generated by the induced vectors of $\psi,\phi$, respectively. Then by Eq.~\eqref{eq:eva_R}, the induced vectors of logical representatives in \eqref{eq:logical_R} span $I_2 \cdot I_3$. 

\begin{remark}
	Without Lemma~\ref{lemma:product}, we can apply row reductions and other arguments from the above directly to logical representatives from $\ker \delta_{\text{res}}^1$. We will end up with a constant fraction of $[(v_2,v_2);b_2,b_3]$ and $(a_2,a_3)$ such that the corresponding evaluations generate an ideal of dimension $\Theta(l)$. The only drawback is that $[v_2;b_2]$, $[v_3;b_3]$, $a_2$ and $a_3$ cannot be selected independently, which will influence the formation of cup products. 
\end{remark}


With GRH and Artin's conjecture, $I_2 = I_3 = \mathbb{F}_q[C_l]$. For general cases, if $I_2, I_3$ are \emph{coprime}, then $I_2 + I_3 = \langle 1 \rangle$ by Bezout's identity and thus $I_2 \cdot I_3 = I_2 \cap I_3$. The worst case without Artin's conjecture is when $I_2, I_3$ are generated by complementary divisors of $\x^l-1$, then $I_2 \cdot I_3 = \{0\}$ even each of them are in $\Theta(l)$-dimension. There are still various possibilities depending on the specific ideals and structure of the algebra \cite{Aluffi2009,HuffmanPless2003}.

\begin{claim}\label{claim_3}
	With Artin's conjecture, for any fixed $[v_1;a_1]$, we have a constant fraction of $[v_2;b_2]$, $[v_3;b_3]$, $a_2$ and $a_3$ such that there are exactly $l$ inequivalent polarized logical representatives $L_{1}^i$ whose induced vectors obtained via $h_2^T(a_2, \text{-}) \otimes h_3^T(a_3, \text{-})$ constitute the standard basis of $\mathbb{F}_q[C_l] \cong \mathbb{F}_q^l$.
\end{claim}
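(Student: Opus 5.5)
Fix $[v_1;a_1]$ and build the representatives as $R$-tensors $[v_1;a_1]\otimes(\psi\otimes_R\phi)$, with $\psi\in\ker\hat H_2$ and $\phi\in\ker\hat H_3$ over $R=\mathbb{F}_q[C_l]$. This is legitimate by Lemma~\ref{lemma:product}. The argument then reduces to showing two things: the induced ideals $I_2$ and $I_3$ both equal $R$ for suitable choices, and the resulting standard-basis representatives are inequivalent. I will handle the two directions $2$ and $3$ separately and combine them at the end through Eq.~\eqref{eq:eva_R}.

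\textbf{Step 1 (direction 2).} Take the reduced row echelon form of a basis of $\ker\hat H_2$ over $\mathbb{F}_q$; it has $\Theta(l\cdot V)$ rows because $m_2>n/2$. By Claim~\ref{claim_1}, a constant fraction of the $[v_2;b_2]$ carry $\Theta(l)$ pivot rows. By Claim~\ref{claim_2}, for each such $[v_2;b_2]$ at least $d_0$ choices of $a_2$ give $\Theta(l)$ linearly independent induced vectors $h_2^T(a_2,\text{-})\psi([(h,v_2);b_2])\in R$. The $C_l$-action of Lemma~\ref{lemma:group_action} preserves $\ker\hat H_2$, so these vectors span an ideal $I_2$ of dimension $\Theta(l)$.

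Now invoke Artin's conjecture (Theorem~\ref{thm:basis}). With $\mathrm{ord}_l(q)=l-1$, the only maximal ideals of $R$ are $\langle 1+\x\rangle$, of dimension $l-1$, and $\langle 1+\x+\cdots+\x^{l-1}\rangle$, of dimension $1$. Since $\dim I_2=\Theta(l)>1$, $I_2$ is not contained in the second ideal. To keep $I_2$ out of $\langle 1+\x\rangle$, I add the all-ones lift of Definition~\ref{def:all_ones}: with $a_2$ chosen so that its fixed local vector $z_2$ satisfies $h_2^T(a_2,\text{-})z_2=1$, as in Eq.~\eqref{eq:z_i}, its induced vector is $1+\x+\cdots+\x^{l-1}$. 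Because $l$ is odd, this element is not in $\langle1+\x\rangle$. Hence $I_2$ lies in no maximal ideal, so $I_2=R$. The only care needed is that the all-ones choice of $a_2$ must lie in the $d_0$-set of Claim~\ref{claim_2}. If it does not, I will instead use that a $\Theta(l)$-dimensional $C_l$-stable subspace contained in $\langle1+\x\rangle$ can be enlarged by the all-ones vector, so the sum is still an ideal not inside either maximal ideal.

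\textbf{Step 2 (direction 3 and the product).} Repeat Step 1 for $\hat H_3$ to get $I_3=R$ on a constant fraction of $[v_3;b_3]$ and $a_3$. These choices are made independently of $[v_2;b_2]$ and $a_2$, which is exactly what Lemma~\ref{lemma:product} provides. By Eq.~\eqref{eq:eva_R}, the induced vectors of the representatives in Eq.~\eqref{eq:logical_R} span $I_2\cdot I_3=R$. Lemma~\ref{lemma:basis} then gives $R$-linear combinations $L_1^j=\sum_i f_{i,j}(\x)\cdot x_i$, for $j=0,\dots,l-1$, whose induced vectors are $\x^j$. Their number is exactly $l$ because $\dim_{\mathbb{F}_q}R=l$.

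\textbf{Step 3 (inequivalence).} The $L_1^j$ are linearly independent, since their evaluations under $h_2^T(a_2,\text{-})\otimes h_3^T(a_3,\text{-})$ are linearly independent. By Lemma~\ref{lemma:inequivalent} (with $l$ prime), no nonzero combination of them supported on $[v_1;a_1]$ is a coboundary, so the $L_1^j$ are inequivalent.

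The main obstacle is Step 1: guaranteeing that the good evaluation slots $a_2$ from Claim~\ref{claim_2} are compatible with escaping $\langle 1+\x\rangle$. I would try first to show that $I_2\subset\langle 1+\x\rangle$ forces every element of $\ker\hat H_2$, evaluated at $a_2$ and then projected by $P$, to vanish. Via Lemma~\ref{lemma:codeword_2} and the all-ones codeword, that would contradict the full-rank and dual-distance assumptions on $h_2$.
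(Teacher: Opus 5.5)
Your proposal is correct and takes essentially the paper's route. It uses Lemma~\ref{lemma:product} to decouple directions $2$ and $3$, Claims~\ref{claim_1}--\ref{claim_2} to get a $\Theta(l)$-dimensional ideal at each slot, and Artin's conjecture to leave only $\langle 1+\x\rangle$ and $\langle 1+\x+\cdots+\x^{l-1}\rangle$ as maximal ideals, which the all-ones lift and the dimension count respectively escape; it then concludes via $I_2 I_3=R$, Lemma~\ref{lemma:basis} and Lemma~\ref{lemma:inequivalent}. The compatibility obstacle you flag is not real: on the double cover in characteristic $2$ every constant local vector $z_2\in\mathbb{F}_q^{m_2}$ gives an element of $\ker\hat H_2$, and local distance $\geq 2$ makes every column $h_2(\text{-},a_2)$ nonzero, so you may pick $z_2$ with $h_2^T(a_2,\text{-})z_2=1$ for whichever $a_2$ Claim~\ref{claim_2} supplies, putting $1+\x+\cdots+\x^{l-1}$ among the induced vectors at every slot (this is what your fallback and your $P$-projection argument both need, and it uses the local distance rather than the dual distance).
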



As a caveat, we cannot assume $h_2 = h_3$ in general because, despite that $X$ is built by products the same graph, the local codes of $h_2, h_3$ may still be distinct and thus $\hat{H}_2 \neq \hat{H}_3$. Actually, for HGP codes, except Theorem~\ref{thm:cLDPC}, there is no extra constraints on the local codes. However, to achieve (almost) good qLDPC codes, the local codes are randomly sampled in~\cite{kalachev2025} with probability $1 - n^t 2^{n^t - \log q + 1}$ to fulfill the requirements. On the other hand, let $m_2 = m_3$ and let $k = n - m_2$. Then the total number of $k$-dimensional subspaces in $\mathbb{F}_q^n$ is the Gaussian binomial coefficient $\binom{n}{k}_q$. For $t = 3$, the probability to have $h_2 = h_3$ is only 
\begin{align}
	1 \Big/ \binom{n}{k}_q \approx q^{-k(n-k)} = 2^{-k(n-k) \log q},
\end{align} 
but
\begin{align}
	1 - n^3 2^{n^3 - \log q + 1} < 1 - 2^{-k(n-k) \log q},
\end{align}
though both of them tend to $1$ for large $q$.
We explain how to work with random local codes in the next subsection. 


\subsection{Hypergraph Turán problem and local codes}\label{sec:cup_sheaf}

By synthesizing the results of Section~\ref{sec:group} and Section~\ref{sec:module}, we proceed to prove our final results. By Claim~\ref{claim_3}, we have exactly $l$ inequivalent logical representatives $L_{1}^i$ such that
\begin{align}\label{eq:local_vector}
	h_2^T(a_2, \text{-}) \otimes h_3^T(a_3, \text{-}) \Big( L_{1}^i ( [v_1;a_1] \otimes [(h,v_2,v_3); b_2,b_3] ) \Big)
\end{align} 
form the standard basis in $\mathbb{F}_q[C_l] \cong \mathbb{F}_q^l$. Recall that the binary bits $b_2,b_3$ can be taken arbitrarily due to Eq.~\eqref{eq:delta_condition1} to \eqref{eq:delta_condition4}. However, in the formation of cup products, if $(b_2,b_3) = (1,1)$ for instance, we need to substitute $a_2^{-1} \cdot v_2$ and $a_3^{-1} \cdot v_3$ for $v_2$ and $v_3$, respectively. For simplicity, we assume $(b_2,b_3) = (0,0)$ in the following.

Lemma~\ref{lemma:rank} is only valid for the restricted coboundary operator corresponding to $h_2,h_3$. We now denote by $h_1'$ the parity-check matrix of the first local code with $m_1 = \nu' n$. Since $h_1'$ has a rather small rank compared to $h_2, h_3$, the rank estimation always fails for other restrictions. In order to define $L_{2}^i$ and $L_{3}^i$, we propose a different strategy employing distinct sheaves: let $\F_1 = \F$ denote the sheaf generated by $(h_1', h_2, h_3)$. We prepare two more qLDPC codes based on the same cubical complex $X$, but the local codes are embedded as
\begin{align}\label{eq:local_code_Mantel}
	(h_1, h_2', h_3), \quad (h_1, h_2, h_3'),
\end{align}
where 
\begin{align}
	& \rk h_1' = \rk h_2' = \rk h_3' = m_1 = \nu' n = n - k_1', \\
	& \rk h_1 = \rk h_2 = \rk h_3 = m_2 = m_3 = (1-\nu) n = n - k_1.
\end{align}
Then we perform all previous proof techniques on the corresponding restricted coboundary operators to define $L_{2}^i$ and $L_{3}^i$. 

To prove the existence of these local codes, as in~\cite{kalachev2025} and the discussion at the end of Section~\ref{sec:module}, we can spot the preferred $(h_1', h_2, h_3)$, or more rigorously, the subspaces $(\ker h_1', \ker h_2, \ker h_3)$ from the product of Grassmannians $\text{Gr}_q(n,k_1') \times \text{Gr}_q(n,k_1) \times \text{Gr}_q(n,k_1)$ with probability $1 - n^3 2^{n^3 - \log q + 1}$. By viewing these subspaces as points, ignoring the first one, there is $\geq 1 - n^3 2^{n^3 - \log q + 1}$ fraction of usable subspaces in $\text{Gr}_q(n,k_1) \times \text{Gr}_q(n,k_1)$. Then we notice the following fundamental result from extremal graph theory.

\begin{theorem}[Mantel~\cite{Mantel1907}]
	If a graph contains no triangles, then $\vert E(\mathcal{G}) \vert \leq \lfloor \frac{1}{4} \vert V(\mathcal{G}) \vert^2 \rfloor$.
\end{theorem}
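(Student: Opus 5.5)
The statement to prove is Mantel's theorem: a triangle-free graph $\mathcal{G}$ satisfies $\vert E(\mathcal{G}) \vert \leq \lfloor \frac{1}{4}\vert V(\mathcal{G})\vert^2\rfloor$. I would use the standard double-counting argument over edges, combined with Cauchy--Schwarz. Since the edge count is an integer, it suffices to prove the real inequality $\vert E\vert \le \vert V\vert^2/4$ and then take the floor.

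\textbf{Step 1 (edge constraint).} Write $N=\vert V(\mathcal{G})\vert$, $m = \vert E(\mathcal{G})\vert$, and $d(v)$ for the degree of $v$. Fix an edge $(u,v)$. Because $\mathcal{G}$ has no triangle, the neighborhoods of $u$ and $v$ are disjoint; otherwise a common neighbor $w$ would give the triangle $u,v,w$. Both neighborhoods lie inside $V$, so
\begin{align}
d(u)+d(v)\le N .
\end{align}

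\textbf{Step 2 (sum over edges).} Summing this inequality over all $m$ edges gives a bound of $mN$ on the right. On the left, each vertex $v$ is an endpoint of exactly $d(v)$ edges, so its degree is counted $d(v)$ times. Hence
\begin{align}
\sum_{(u,v)\in E}\big(d(u)+d(v)\big)=\sum_{v\in V} d(v)^2 \le mN .
\end{align}

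\textbf{Step 3 (Cauchy--Schwarz and the handshake identity).} By the handshake identity $\sum_v d(v) = 2m$, and by Cauchy--Schwarz,
\begin{align}
\sum_v d(v)^2 \ge \frac{(2m)^2}{N} = \frac{4m^2}{N}.
\end{align}
Combining with Step 2 gives $4m^2/N \le mN$, so $m\le N^2/4$ (the case $m=0$ is trivial). Since $m$ is an integer, $m\le\lfloor N^2/4\rfloor$.

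\textbf{Where care is needed.} There is no real obstacle. The only step that needs attention is the identity in Step 2, i.e.\ that summing $d(u)+d(v)$ over edges yields $\sum_v d(v)^2$. As a remark, equality in Mantel's bound is attained by the balanced complete bipartite graph $K_{\lfloor N/2\rfloor,\lceil N/2\rceil}$. This explains why the bound is later usable for the Grassmannian ``usable pairs'' graph: any graph with edge density exceeding about $1/2$ must contain a triangle of mutually compatible local codes.
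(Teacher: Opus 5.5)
Your argument is correct and complete. The disjoint-neighbourhood bound $d(u)+d(v)\le N$ on each edge, the identity $\sum_{uv\in E}(d(u)+d(v))=\sum_v d(v)^2$, and Cauchy--Schwarz together give $m\le N^2/4$, and integrality of $m$ then yields the floor.

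The paper gives no proof of its own: it only quotes Mantel's theorem and applies it to the graph of compatible local codes, so there is no route to compare against. Yours is the standard textbook proof. It correctly depends on $\mathcal{G}$ being simple (no loops or multiple edges), which is the paper's standing convention from Section~\ref{sec:expander}.
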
 

Here, the graph is defined by points in $\text{Gr}_q(n,k_1)$, two of them form an edge if they are valid local codes. The aforementioned probability indicates that the graph is almost complete for larger $q$, so by Mantel's theorem, we must be able to find a triangle inside the graph. The edges of the triangle are $(h_2,h_3)$, $(h_1,h_2)$ and $(h_1,h_3)$, but they are actually sampled from $\text{Gr}_q(n,k_1') \times \text{Gr}_q(n,k_1) \times \text{Gr}_q(n,k_1)$ as
\begin{align}
	(h_1',h_2,h_3), \quad (h_3',h_1,h_2), \quad (h_2',h_1,h_3)
\end{align}
Translating the orders, we obtain the desired local codes and sheaves. The quantum codes extracted from 
\begin{align}
	C^\bullet(X,\F_1), \quad C^\bullet(X,\F_2), \quad C^\bullet(X,\F_3),	
\end{align}
are not identical, but the notion of cup products involving $\F_1 \otimes \F_2 \otimes \F_3$ is well-defined in Section~\ref{sec:cup}. We have $L_{2}^i$ and $L_{3}^i$ in $C^\bullet(X,\F_2), C^\bullet(X,\F_3)$, respectively, and by Lemma~\ref{lemma:basis} and Eq.~\eqref{eq:delta_condition1} to \eqref{eq:delta_condition4},
\begin{align}\label{eq:sheaf_cup}
	L_{1}^{i_1} \smile_c L_{2}^{i_2} \smile_c L_{3}^{i_3} = \delta_{i_1,i_2,i_3} [(h_{i_1},v_1,v_2,v_3); a_1,a_2,a_3].
\end{align}

Mantel's theorem is generalized to the \emph{hypergraph Turán problem} in high dimensions. We exemplify the case of $t = 4$ where we need
\begin{align}
	(h_1',h_2,h_3,h_4), \quad (h_1,h_2',h_3,h_4), \quad (h_1,h_2,h_3',h_4), \quad (h_1,h_2,h_3,h_4').
\end{align}
Masking the primed matrices, they form a $3$-uniform hypergraph $K_4^3$ on $4$ vertices. In the language of combinatorics, we are studying the \emph{$3$-uniform hypergraphs} on $\vert \text{Gr}_q(n,k) \vert$ vertices. Each edge in the hypergraph has $3$ distinct vertices and the task is determining the smallest number of edges such that the hypergraph \emph{must} contain $K_4^3$. The case of ordinary graphs is solved in~\cite{Mantel1907} and generalized to different subgraphs in Turán theorem~\cite{Turan1941} and Erdös--Stone theorem~\cite{Erdos1946}. It is still an open problem to find tight bounds even for 3-uniform hypergraphs, but clear estimations do exist.

Formally, let $\mathcal{G}$ be an $r$-uniform hypergraph, it is $\mathcal{H}$-free if $\mathcal{H} \nsubseteq \mathcal{G}$. The \emph{extremal number} $\text{ex}( \vert V(\mathcal{G}) \vert,\mathcal{H})$ is the largest possible number of edges such that $\mathcal{G}$ is $\mathcal{H}$-free. Then~\cite{kostochka1982,Chung1999,Razborov2010,Keevash2011} shows
\begin{align}\label{eq:extremal_number}
	\Big( 1 - \Big( \frac{t-2}{t-1} \Big)^{t-2} - o(1) \Big) \binom{\vert \text{Gr}_q(n,k_1) \vert}{t-1} \leq \text{ex}(n,K_t^{t-1}) \leq \Big( 1 - \frac{1}{\binom{t-1}{t-2}} + o(1) \Big) \binom{\vert \text{Gr}_q(n,k_1) \vert}{t-1}.
\end{align}
Since $t$ is a constant, we always have larger $q$ to make $1 - n^3 2^{n^3 - \log q + 1}$ exceed to proportion. This proves the $t$-dimensional case. Combining with the main results in~\cite{kalachev2025}, we have: 

\begin{theorem}\label{thm:local_codes}
	For any $k_1', k_1 \leq n$, there exists $\rho > 0$ such that for all sufficiently large integers $n$ and $q = (n+1)^t$ there exist tuples of $\mathbb{F}_q$ local codes 
	\begin{align}\label{eq:local_code_Turan}
		(\C_1',\C_2, \ldots,\C_t), \quad (\C_1,\C_2', \ldots,\C_t), \cdots\cdots, (\C_1,\C_2, \ldots,\C_t')
	\end{align}
	for which $\dim \C_i' = k_1'$, the others are of dimension $k_1$ and each tuple is two-way $\rho$-product-expanding.
\end{theorem}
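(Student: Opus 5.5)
The plan is to reduce Theorem~\ref{thm:local_codes} to the probabilistic statement of~\cite{kalachev2025} combined with a hypergraph Turán argument, exactly as sketched for $t=3$ via Mantel's theorem. First, I will invoke the main result of~\cite{kalachev2025}: for $q=(n+1)^t$ and $n$ large, a uniformly random tuple $(\C_1,\ldots,\C_t)$ with prescribed dimensions lies outside a bad set of density at most $\varepsilon_q := n^t 2^{n^t-\log q+1}$, and every tuple outside it is two-way $\rho$-product-expanding. Here $\rho>0$ depends only on the ratios $k_1'/n$, $k_1/n$. I apply this separately to the $t$ coordinate patterns in which the primed code of dimension $k_1'$ sits in position $i$. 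Since $\rho$ is uniform over the patterns, a single $\rho$ works for all of them.

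Next I define the hypergraph. For each $i\in[t]$, let $\mathcal{E}_i$ be the $(t-1)$-tuples $(\C_j)_{j\neq i}$ in $\text{Gr}_q(n,k_1)^{t-1}$ for which the good tuples with primed entry in position $i$ form a large fraction. Concretely, I require at least a $1-\sqrt{\varepsilon_q}$ fraction of choices $\C_i'\in\text{Gr}_q(n,k_1')$ to give a good tuple. By Markov's inequality, $\mathcal{E}_i$ misses at most a $\sqrt{\varepsilon_q}$ fraction of all ordered $(t-1)$-tuples.

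I want a single vertex set, so I symmetrize. I let the $(t-1)$-uniform hypergraph $\mathcal{G}$ on vertex set $\text{Gr}_q(n,k_1)$ contain a $(t-1)$-set $\{\C_j\}$ as an edge iff every ordering of it lies in every $\mathcal{E}_i$. Tuples with repeated entries are negligible, since $|\text{Gr}_q(n,k_1)|\to\infty$. Then $\mathcal{G}$ misses at most a $t\cdot(t-1)!\sqrt{\varepsilon_q}+o(1)$ fraction of $\binom{|\text{Gr}_q(n,k_1)|}{t-1}$.

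Once $q$ is large, this deficit falls below $(t-1)^{-1}-o(1)$, so the edge count exceeds the upper bound on $\text{ex}(\cdot,K_t^{t-1})$ in \eqref{eq:extremal_number}. Hence $\mathcal{G}$ contains a copy of $K_t^{t-1}$ on vertices $\C_1,\ldots,\C_t$. Each of its faces $\{\C_j\}_{j\neq i}$ is an edge, so there is a set of good $\C_i'$ of density at least $1-\sqrt{\varepsilon_q}>0$, and I pick one for each $i$. This yields the tuples \eqref{eq:local_code_Turan}, each two-way $\rho$-product-expanding with the stated dimensions.

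The main obstacle is the quantitative bookkeeping. I need $\varepsilon_q\to0$ for the specific choice $q=(n+1)^t$, yet with $\log q=t\log(n+1)$ the quantity $2^{n^t-\log q}$ does not obviously vanish, so the density bound quoted from~\cite{kalachev2025} may need its actual form, or $q$ taken larger. My first move is to check the precise failure probability in~\cite{kalachev2025}; if needed, I will state the result for $q$ sufficiently large as a function of $n$ and $t$. A secondary subtlety is the ordering of entries, since the product-expansion condition is not symmetric in the positions. The symmetrization over all orderings above handles this, at the cost of only a constant factor $t!$ in the deficit.
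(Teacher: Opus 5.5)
\textbf{Verdict.} Your route is the paper's route. You import the random-code theorem of \cite{kalachev2025} for each of the $t$ dimension patterns, then use Mantel ($t=3$) or the Tur\'an-density bound \eqref{eq:extremal_number} to find a $K_t^{t-1}$ among the unprimed codes, and attach a primed code to each face.

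\textbf{The gap.} The genuine gap is the one you flag yourself, and the paper does not close it either. The paper attributes to \cite{kalachev2025} the failure density $\varepsilon_q=n^t2^{n^t-\log q+1}$. For this to be small one needs $\log q\gtrsim n^t$; the paper's own outlook in Section~\ref{sec:discussion} says $q>n^t2^{n^t-1}$. With $q=(n+1)^t$ you get $\log q=t\log(n+1)$, so $\varepsilon_q$ is of order $2^{n^t}\gg 1$. The random-code input then says nothing, and no density argument can start. The paper's ``we always have larger $q$'' quietly abandons $q=(n+1)^t$. So, as written, both arguments prove the statement only for $q$ sufficiently large in terms of $n$ and $t$. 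To keep $q=(n+1)^t$ you would have to find that field size guaranteed in \cite{kalachev2025} itself; nothing in the combinatorial step can compensate.

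\textbf{Simplifications once that input is available.}
\begin{itemize}
\item Your positional worry is moot. Permuting the codes of a tuple permutes the tensor factors of $\mathbb{F}_q^n\otimes\cdots\otimes\mathbb{F}_q^n$, which preserves Hamming and line weights. Hence (two-way) product-expansion is invariant under reordering. Moreover, in \eqref{eq:local_code_Turan} every unprimed $\C_j$ already sits in position $j$.
\item Your Markov step is stronger than needed. One good $\C_i'$ per face suffices, and the faces admitting none have density at most $\varepsilon_q$ directly, since all their completions are bad. This is all the paper's ``ignoring the first one'' uses.
\item You can skip the Tur\'an step entirely. Sample $\C_1,\ldots,\C_t,\C_1',\ldots,\C_t'$ independently and uniformly. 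Each of the $t$ tuples is then uniform on its own product of Grassmannians, so a union bound gives failure probability at most $t\varepsilon_q$. Existence follows once $\varepsilon_q<1/t$, which is weaker than your condition $t\cdot(t-1)!\sqrt{\varepsilon_q}<\frac{1}{t-1}$.
\end{itemize}
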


To summarize, let $\{X\}$ be a family of high dimensional expanders lifted from products of graphs $\{X'\}$ and let $\F_1, \ldots,\F_t$ be defined by the two-sided product-expanding local codes from~\cite{Dinur2024sheaf,kalachev2025} and Theorem~\ref{thm:local_codes}. Suppose the dual codes have distance $\geq d_0$. For any $i \in [t]$, $\{C^0(X,\F_i) \to C^1(X,\F_i) \to C^2(X,\F_i)\}$ defines a qLDPC code family with nearly optimal parameters:
\begin{align}
	[\![ N, \Theta(N), \Theta(N/\mathrm{polylog}\,N)  ]\!]
\end{align}
and the following theorem holds.

\begin{theorem}\label{thm:cup_sheaf_1}
	Assume Artin's conjecture on primitive roots, whose validity is implied by GRH~\cite{Hooley1967}, then there are infinitely many primes $l$ such that for each $l$ and any $X'$ of size $O(\log l)$, there is an $l$-lift $X$. With the above conditions, there are exactly $l^t$ tuples of inequivalent logical representatives $( L_{1}^{i_1}, \ldots,L_{t}^{i_t})$ from the nearly optimal codes defined by $C^\bullet(X,\F_i)$, \ldots,$C^\bullet(X,\F_t)$ for which,
	\begin{align}\label{eq:sheaf_cup_thm}
		L_{1}^{i_1} \smile_c \cdots \smile_c L_{t}^{i_t} = \delta_{i_1, \ldots,i_t} [(h_{i_1},v_1, \ldots,v_t); a_1, \ldots,a_t],
	\end{align}
	for a constant fraction of vertices $v_j$ in the base graph and at least $d_0$ choices of $a_j$. 
\end{theorem}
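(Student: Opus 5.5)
The plan is to assemble the ingredients of Section~\ref{sec:proof_sheaf} for each of the $t$ sheaves separately, and then compute the cup product cube by cube using Lemma~\ref{lemma:cup_induct_1} and Eq.~\eqref{eq:cup_local_vector}.

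\emph{Choice of the lift and the sheaves.} Assuming Artin's conjecture with $q=2^s$, $s$ odd, Theorem~\ref{thm:basis} gives infinitely many primes $l$ with $\mathrm{ord}_l(q)=l-1$. For such $l$ I take a base graph of size $O(\log l)$ and an almost Ramanujan shift $l$-lift (Theorem~\ref{thm:abelian_lift}), which yields $X$. The local codes are fixed by Theorem~\ref{thm:local_codes}: the $j$-th sheaf $\F_j$ carries the small-rank code $h_j'$ in direction $j$ and the large-rank codes $h_i$ in every direction $i\neq j$. Theorem~\ref{thm:polarized}, applied with the polarization direction moved to $j$, then gives $\Theta(N)$ inequivalent logical representatives of $C^\bullet(X,\F_j)$ polarized in direction $j$. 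These have the form $[v_j;a_j]\otimes\zeta^{(j)}$, where $\zeta^{(j)}$ lies in the kernel of the restricted operator for the remaining directions. The near-optimal code parameters come from~\cite{Dinur2024sheaf,kalachev2025}.

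\emph{Standard-basis representatives.} For each $j$ I apply Lemma~\ref{lemma:product} to split the polarized kernel over $R=\mathbb{F}_q[C_l]$ as $\bigotimes_{i\neq j}\ker\hat H_i$. Claims~\ref{claim_1} and~\ref{claim_2} then give a constant fraction of vertices $v_i$, and at least $d_0$ choices of $a_i$, such that the induced vectors generate an ideal $I_i$ of dimension $\Theta(l)$. Under Artin's conjecture the only maximal ideals of $R$ are $\langle 1+\x\rangle$ and $\langle 1+\cdots+\x^{l-1}\rangle$. The first is avoided because an all-ones lift evaluates to $1+\cdots+\x^{l-1}\notin\langle 1+\x\rangle$ (since $l$ is odd). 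The second has dimension $1$, which is much smaller than $\Theta(l)$. Hence each $I_i=R$, so $\prod_{i\neq j}I_i=R$. Lemma~\ref{lemma:basis} (Claim~\ref{claim_3}) then produces exactly $l$ inequivalent representatives $L_j^{i}$, indexed so that evaluating at $\bigotimes_{i\neq j}h_i^T(a_i,\text{-})$ on the lifts of $\sigma_j'$ gives $\x^{i}$.

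The main obstacle is making one choice of $(v_1,\dots,v_t,a_1,\dots,a_t)$ work simultaneously for all $t$ sheaves. Representative $L_j$ needs good $(v_i,a_i)$ for every $i\neq j$, and these are constraints coming from different codes. My approach is to observe that in direction $i$ the codes $\F_j$ with $j\neq i$ all use the same $h_i$, so Claims~\ref{claim_1}--\ref{claim_2} for direction $i$ depend only on $\hat H_i$. This means one constant-fraction set of $v_i$ and one set of $\ge d_0$ values $a_i$ serve every $j\neq i$. The intersection over $i$ is then a product set, which stays a constant fraction. I also need the anchoring edge $[v_j;a_j]$ of $L_j$ to be this same pair, which is allowed because the polarized edge is arbitrary.

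\emph{Cup product.} Take $L_1^{i_1},\dots,L_t^{i_t}$. Each $L_j^{i_j}$ is supported on 1-cubes whose $j$-th component is the fixed edge $[v_j;a_j]$. By Lemma~\ref{lemma:cup_induct_1} and Example~\ref{example:3D}, a nonzero term of the cup product must come from a path in a single $t$-cube $[(h,v_1,\dots,v_t);a_1,\dots,a_t]$, with $h$ ranging over $C_l$. Using Eq.~\eqref{eq:cup_local_vector} and the bit-independence relations \eqref{eq:delta_condition1}--\eqref{eq:delta_condition4}, the coefficient of this cube equals $\prod_j \mathrm{v}_j^{(i_j)}(h)=\prod_j[\x^{i_j}](h)$. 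This product is $1$ exactly when $h=h_{i_1}$ and $i_1=\dots=i_t$, and $0$ otherwise. The lifts move the group coordinate consistently along the path, and the common action of $a_i$ on the lifts is absorbed because $C_l$ is abelian. Since each $L_j$ runs over $l$ representatives, this gives exactly $l^t$ tuples and the diagonal formula \eqref{eq:sheaf_cup_thm}.
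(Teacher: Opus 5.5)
Your proposal is correct and follows essentially the paper's own route: Artin primes $l$, the Turán-arranged sheaves of Theorem~\ref{thm:local_codes}, Lemma~\ref{lemma:product} with Claims~\ref{claim_1}--\ref{claim_3} and Lemma~\ref{lemma:basis} to obtain standard-basis polarized representatives in each $C^\bullet(X,\F_j)$, and then Eqs.~\eqref{eq:delta_condition1}--\eqref{eq:delta_condition4} to see that only the cube at $h_{i_1}$ survives, and only when $i_1=\cdots=i_t$. Your explicit remarks make precise what the paper leaves implicit: the shared unprimed code $h_i$ in direction $i$ lets one product set of $(v_i,a_i)$ serve all $t$ sheaves, and the lift shifts along the path cancel because $C_l$ is abelian (one point to reconcile, in your write-up as in the paper: Artin's conjecture needs $q$ to be a non-square, which $q=(n+1)^t$ from Theorem~\ref{thm:local_codes} is not when $t$ is even).
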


As mentioned after Theorem~\ref{thm:polarized}, without requiring product-expansion, as long as the local codes satisfy the rank conditions in Lemma~\ref{lemma:rank} and are arranged as in Eq.~\eqref{eq:local_code_Turan}, we still have $k = \Theta(N)$ and Eq.~\eqref{eq:sheaf_cup_thm} holds. As a caveat, suppose there is another collection of logical representatives such that 
\begin{align}
	L_{1}'^{i_1} \smile_c \cdots \smile_c L_{t}'^{i_t} = \delta_{i_1, \ldots,i_t} [(h_{i_1},v_1', \ldots,v_t'); a_1', \ldots,a_t'],
\end{align}
we may not conclude that both collections are inequivalent to each other. The reason is behind their formulation where global cyclic actions in Lemma \ref{lemma:group_action} are applied to transform local vectors into standard basis of $\mathbb{F}_q^l$. When we formulate new logical representatives for cup products on another $t$-cube, it is possible that the old ones are used via cyclic actions. Nevertheless, since constant fractions of $v_j,a_j$ are admissible, as long as there exists a cycle $\xi' \in C_t(X',\F_1 \otimes \cdots \otimes F_t)$ that supports on one of the $t$-cubes formed by these $v_j,a_j$, then its canonical extension $\xi$ in Lemma~\ref{lemma:codeword_2} defines the $T_\xi$ such that
\begin{align}\label{eq:sheaf_invariant}
	T_\xi ( L_{1}^{i_1}, \cdots, L_{t}^{i_t} ) = \delta_{i_1, \ldots,i_t}
\end{align}
and the subrank is $\Theta(N/\mathrm{polylog}\,N)$, almost proportional to $k$. 
Section~\ref{sec:cycles} shows an example of $\xi$ using Reed-Solomon codes, but they are not two-way product-expanding. We leave it as a future research opportunity to seek appropriate local codes under which both Theorem~\ref{thm:cup_sheaf_1} and the existence of $\xi$ holds.


\medskip
At the end of this subsection, we discuss more details about estimating the subrank without assuming GRH and Artin's conjecture. We only know that the induced vectors only generate some proper ideal $I_i \subset \mathbb{F}_q[C_l]$. Suppose the cycle $\xi$ has been found as in Eq.~\eqref{eq:sheaf_invariant}, and suppose $I_i \equiv I = \langle f(\x) \rangle$ for simplicity. Then by Eq.~\eqref{eq:sheaf_cup}, evaluating $T_\xi ( L_{1}^{i_1}, \cdots, L_{t}^{i_t} )$ is equivalent to computing the standard multiple dot products within the ideal $I$. To be precise, let $f(\x) = \sum_{i=0}^{l-1} c_i \x^i$, the induced vectors of the logical representatives are simply $\x^{i_1} f(\x), \ldots,\x^{i_t} f(\x)$. Then
\begin{align}\label{eq:dot_product}
	T_\xi ( L_{1}^{i_1}, \cdots, L_{t}^{i_t} ) = \sum_{i=0}^{l-1} c_{i - i_1} \cdots c_{i - i_t}, 
\end{align}
where $i - i_1,..,i - i_t \mod l$. For $t = 2$, this is the standard bilinear form on $F_q[C_l]$ restricted to $I$, whose rank is determined by the degeneracy of the bilinear form. Let 
\begin{align}
	I^\perp = \{ \mathrm{u} \in F_q[C_l]: \langle \mathrm{u}, \mathrm{v} \rangle = 0 \text{ for any } \mathrm{v}\in I \}.
\end{align}
By basic linear algebra, $\dim I + \dim I^\perp = l$. Moreover, since $\langle \x \cdot \mathrm{u}, \mathrm{v} \rangle = \langle \mathrm{u}, \x^{l-1} \cdot \mathrm{v} \rangle$, $I^\perp$ is also an ideal. From classical coding theory, $I$ defines a cyclic code and $I^\perp$ is exactly its dual code. Then, we have the following standard results from the theory of cyclic codes~\cite{HuffmanPless2003}:

\begin{proposition}
	Let $g(\x) \in F_q[C_l] \cong \mathbb{F}_q[\x]/(1+\x^l)$ be any polynomial and let $h(\x) = (1 + \x^l)/ g(\x)$. Then $h^\ast(\x)/h(0)$, where $h^\ast(\x) := \x^{\deg h} h(1/\x)$ is the \emph{reciprocal polynomial} and $h(0)$ is the constant term of $h$, generate the dual code of the cyclic code $\langle g(\x) \rangle$.  
\end{proposition}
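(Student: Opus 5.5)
This is the standard duality statement for cyclic codes, and I would prove it by an orthogonality computation followed by a dimension count. Write $\x^l - 1 = g(\x)h(\x)$ in $\mathbb{F}_q[\x]$; over characteristic $2$ this is the same as $1+\x^l$. Take $g$ to be a divisor of $1+\x^l$. If $g$ is an arbitrary polynomial, first replace it by $\gcd(g, 1+\x^l)$, which generates the same ideal of $\mathbb{F}_q[C_l]$. The ideal $\langle g(\x)\rangle$ then has $\mathbb{F}_q$-dimension $l-\deg g = \deg h$, by Eq.~\eqref{eq:ideal_dim}, applied now to a divisor rather than an irreducible factor. The reciprocal $h^\ast$ has degree $\deg h$, because $h(0)\neq 0$: indeed $h(0)g(0)=1$. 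Hence $h^\ast(\x)/h(0)$ is a monic divisor of $(1+\x^l)^\ast = 1+\x^l$, since $(gh)^\ast = g^\ast h^\ast$. It therefore generates an ideal of dimension $l-\deg h = \deg g$. Thus $\dim\langle g\rangle + \dim\langle h^\ast\rangle = l = \dim\langle g\rangle + \dim\langle g\rangle^\perp$, and it remains to prove the inclusion $\langle h^\ast\rangle \subseteq \langle g\rangle^\perp$.

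For the inclusion, I would translate the standard dot product on $\mathbb{F}_q[C_l]$ into ring multiplication. For $\mathrm{u}=\sum u_i\x^i$ and $\mathrm{v}=\sum v_i\x^i$, the coefficient of $\x^{0}$ in $\mathrm{u}(\x)\,\mathrm{v}(\x^{-1})$, computed modulo $\x^l-1$, equals $\sum_i u_iv_i = \langle \mathrm{u},\mathrm{v}\rangle$. More generally, the coefficient of $\x^{k}$ equals $\langle \mathrm{u}, \x^{k}\mathrm{v}\rangle$. Now take $\mathrm{u}=a(\x)g(\x)$ and $\mathrm{v}=b(\x)h^\ast(\x)$. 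Modulo $\x^l-1$ we have $h^\ast(\x)=\x^{\deg h}h(\x^{-1})$, so
\begin{align}
\mathrm{u}(\x)\,\mathrm{v}(\x^{-1}) = a(\x)b(\x^{-1})\,\x^{-\deg h}\,g(\x)h(\x) = a(\x)b(\x^{-1})\,\x^{-\deg h}(\x^l-1) = 0
\end{align}
in $\mathbb{F}_q[C_l]$. All its coefficients vanish, and in particular $\langle\mathrm{u},\mathrm{v}\rangle=0$. Combined with the dimension equality, this gives $\langle h^\ast/h(0)\rangle = \langle g\rangle^\perp$. Dividing by $h(0)$ only normalizes the generator to be monic.

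The only step needing care is the reduction to $g \mid 1+\x^l$ for an arbitrary $g$. I expect to handle it by Bezout's identity: $\gcd(g,1+\x^l) = ag + b(1+\x^l)$, so the gcd lies in $\langle g\rangle$, while it also divides $g$. The two ideals therefore coincide, and $h$ should be read as $(1+\x^l)/\gcd(g,1+\x^l)$. Everything else is a routine dimension count.
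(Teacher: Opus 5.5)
Your proof is correct. The paper gives no proof of its own; it states this as a standard fact and cites \cite{HuffmanPless2003}, and your argument (orthogonality via the vanishing of $\mathrm{u}(\x)\,\mathrm{v}(\x^{-1})$ in $\mathbb{F}_q[C_l]$, then the dimension count $\dim\langle g\rangle+\dim\langle h^\ast\rangle=l$) is the standard textbook proof. Replacing $g$ by $\gcd(g,1+\x^l)$ is also the right way to make sense of ``any polynomial'' in the statement, since $h=(1+\x^l)/g$ is only a polynomial when $g$ divides $1+\x^l$.
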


In our case, let $h(\x) = f(\x)$ and we further simplify the problem by assuming that $f(\x)$ is irreducible. Then $g(\x) = (1 + \x^l 1)/f(\x)$ generates a minimal ideal. Since $g(\x)$ consists of all the other irreducible polynomials other than $f(\x)$, if $f(\x) = f^\ast(\x)$ is \emph{self-reciprocal}, then $\langle g(\x) \rangle \cap \langle f(\x) \rangle = \emptyset$. The bilinear form is nondegenerate with rank equal to $\dim I = l - \deg f = \Theta(l)$. 

If not, then by the fact that $f^\ast(\x)$ is irreducible as $f(\x)$ is, $\langle g(\x) \rangle \subset \langle f(\x) \rangle$. The bilinear form is nondegenerate with rank equal to $\dim I -\deg f = l - 2 \deg f$. When $l$ is odd and $q$ is even, by the Chinese remainder theorem (CRT), or by decomposing $\mathbb{F}_q[C_l]$ into irreducible representations over $\mathbb{F}_q$, we have
\begin{align}\label{eq:CRT_irrep}
	\mathbb{F}_q[C_l] \cong \mathbb{F}_q[\x]/(1+\x^l) \cong \mathbb{F}_q[\x]/(1+\x) \times \prod_i \mathbb{F}_q[\x]/(f_i(\x)) \cong \mathbb{F}_q \times \prod_i \mathbb{F}_{q^{\deg f}},
\end{align} 
where $f_i(\x)$ are irreducible factors of the cyclotomic polynomial $1 + \x + \cdots + \x^{l-1}$. Then $\deg f \ \vert \ l-1$. Except for the most desirable case of $\deg f = l-1$, we always have $2 \deg f < l$. Depending on the scaling behavior of $\deg f$, the rank could be a constant. 

We now consider $t = 3$ that specifies the case of $\CCZ$. The task of bounding the subrank of Eq.~\eqref{eq:dot_product} becomes drastically difficult. A native lower bound $\lfloor l/d \rfloor = O(1)$ arises from the maximum number of pairwise disjoint translations of the support of $f(\x)$. Actually, this also requires the triple dot product $\sum_{i=0}^{l-1} c_i^3 \neq 0$ which is guaranteed over $\mathbb{F}_2$ because 
\begin{align}
	\sum_{i=0}^{l-1} c_i^3  = \sum_{i=0}^{l-1} c_i = f(1) = 1,
\end{align}
given the fact that the cyclotomic polynomial is nonzero at $\x = 1$. For general $\mathbb{F}_q$, the triple dot product of $f(\x),\x^i f(\x), \x^i f(\x)$ can be nonzero for some $i$, but this will further decrease the lower bound. Other methods may be envisaged, such as applying row reduction to the circulant matrix of basis vectors $\x^{i_1} f(\x)$. It yields $(I_{\deg f} \vert M)$, but the submatrix $M$ is generally nonzero and causes more difficulties to bound the subrank. 
Within this simplified setting where $I_i \equiv I$ is a maximal ideal, the real behavior of $T_\xi$ as a $t$-tensor is still elusive and we leave the general case as a future research opportunity.

It is worth mentioning that the above property could be reminiscent of the multi-orthogonal codes~\cite{Bravyi_2012,Paetznick_2013,Krishna_Tillich2019,Wills2024magic}, but they are conceptually distinct as Eq.~\eqref{eq:dot_product} should be nonzero on targeted tuples of codewords


\subsection{Solutions to the cycles}\label{sec:cycles}

We now discuss when the boundary operator from $C_3(X,\F_1 \otimes \F_2 \otimes \F_3)$ to $C_2(X,\F_1 \otimes \F_2 \otimes \F_3)$ has a nontrivial kernel. That is, nontrivial solutions of $\xi$ exist. Suppose $\F_1,\F_2,\F_3$ are generated by
\begin{align}
	(h_1,h_2,h_3), \quad (h_1',h_2',h_3'), \quad (h_1'',h_2'',h_3'').
\end{align}
In the end of Section~\ref{sec:cup}, we introduce that the boundary operator on the tensor product sheaf $\F_1 \otimes \F_2 \otimes \F_3$ is merely defined by the following tensor product of local codes:
\begin{align}
	& \Big[ \cdots h_1(\text{-},a_1) \otimes h_1'(\text{-},a_1) \otimes	h_1''(\text{-},a_1) \cdots \Big], \\
	& \Big[ \cdots h_2(\text{-},a_2) \otimes h_2'(\text{-},a_2) \otimes	h_2''(\text{-},a_2) \cdots \Big], \\
	& \Big[ \cdots h_3(\text{-},a_3) \otimes h_3'(\text{-},a_3) \otimes	h_3''(\text{-},a_3) \cdots \Big].
\end{align}

\begin{theorem}\label{thm:cycle}
	The kernel is nontrivial if and only if the tensor product of local codes has a nontrivial kernel. 
\end{theorem}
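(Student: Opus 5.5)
We first reduce to the hypergraph-product case. The top boundary operator is block-diagonal along the base-graph structure. By Lemma~\ref{lemma:codeword_2} (canonical extension/projection for chains), a nonzero cycle on $X'$ lifts to a nonzero cycle on $X$. Conversely, $\partial_3$ on the lift is obtained by substituting permutation matrices, so local nonvanishing behaves the same. We therefore work with $X'$, the $3$-fold Cartesian product of the double cover $\mathcal{G}\times K_2$. As in \eqref{eq:xi_2}, the kernel factorizes:
\begin{align}
\ker \partial_3 = \ker \mathring{H}_1^T \otimes \ker \mathring{H}_2^T \otimes \ker \mathring{H}_3^T,
\end{align}
where $\mathring{H}_j^T$ is the Tanner-type boundary operator on the double cover. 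Its local map at each vertex is
\begin{align}
\mathring{h}_j := h_j \otimes h_j' \otimes h_j'' \in \mathbb{F}_q^{m_jm_j'm_j''\times n},
\end{align}
whose columns are the column-wise (Khatri--Rao) tensor products $h_j(\text{-},a)\otimes h_j'(\text{-},a)\otimes h_j''(\text{-},a)$. The theorem thus reduces to a one-dimensional claim: $\ker \mathring{H}_j^T\neq 0$ if and only if $\ker \mathring{h}_j\neq 0$ for each $j$. Here ``tensor product of local codes has a nontrivial kernel'' means that each $\ker\mathring{h}_j\neq0$.

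For the direction ``only if'', take $\eta\in\ker\mathring{H}_j^T$ nonzero and an edge $[v;a]$ with $\eta([v;a])\neq0$. The condition at vertex $(v,0)$ says that the restriction of $\eta$ to the $n$ edges at $(v,0)$, ordered by the local labels, lies in $\ker\mathring{h}_j$. That restriction is nonzero, so $\ker\mathring{h}_j\neq0$.

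For the direction ``if'', we use the double-cover structure. In $\mathcal{G}\times K_2$, each edge $[v;a]$ joins $(v,0)$ to $(a\cdot v,1)$, with the same label $a$ at both ends. This is the convention fixed in Section~\ref{sec:DLV_code}, where the same column $h^T(a,\text{-})$ is used at both endpoints. Given a nonzero $c\in\ker\mathring{h}_j\subset\mathbb{F}_q^n$, define $\eta([v;a]) := c_a$ for every $v$. At vertex $(v,0)$, the local view is $(c_a)_{a}=c$, which is killed by $\mathring{h}_j$. At vertex $(u,1)$, the incident edges are $[a^{-1}\cdot u;a]$, one for each label $a$; this uses that the labelling is a proper ``coloring'' (no two edges into $w$ share a label, as required for Cayley/Schreier graphs in Section~\ref{sec:DLV_code}). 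The local view at $(u,1)$ is again $c$. Hence $\eta$ is a nonzero cycle.

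Taking $\xi=\eta_1\otimes\eta_2\otimes\eta_3$ then gives a nonzero element of $\ker\partial_3$. Since $\ker\partial_3$ is a tensor product, it is zero if any factor vanishes, which proves the equivalence on $X'$. The nonzero $\xi$ then extends canonically to $X$.

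The main obstacle is the factorization $\ker\partial_3=\bigotimes_j\ker\mathring{H}_j^T$ on the lifted complex $X$ itself, should one want the ``only if'' direction there rather than on $X'$. Over $R=\mathbb{F}_q[\HH]$ this is exactly the intersection property of Proposition~\ref{prop:intersect}, which holds because $R$ is semisimple for $|\HH|$ odd in characteristic $2$. I would invoke that proposition. Alternatively, I would argue directly that any nonzero cycle on $X$ has a nonzero local view at some vertex in each direction, which already forces every $\ker\mathring{h}_j\neq0$ by the same restriction argument as above. The latter is the route I would try first, since it avoids module theory.
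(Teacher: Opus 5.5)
Your proof is correct and follows the paper's approach. The easy direction comes from restricting a nonzero cycle to its local view at a single vertex (or, on the lifted complex $X$, at a single $2$-cube). The substantive direction copies a local kernel vector $c$ label-by-label along the double cover, which is the paper's ``all-ones'' extension illustrated on the $S_3$ Cayley graph, then tensors the three one-dimensional solutions and pushes the result to the lift by canonical extension. Your write-up is more explicit than the paper's example-based sketch, and your direct restriction argument on $X$ makes the appeal to Proposition~\ref{prop:intersect} unnecessary.
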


The ``if'' part is obvious. The ``only if'' part simply uses the all-ones vector. We illustrate the proof idea by using 1D classical codes: we ask when the Tanner code is nontrivial. As a reminder, in our case, the underlying graphs are upgraded by their double cover to build cubical complexes (see Section~\ref{sec:cubical}) and the local codes are imposed as in Section~\ref{sec:DLV_code}. This makes it more easy to write down typical Tanner codewords as in the following example.

We consider the Cayley graph over $S_3$ with $r = (123)$, $s = (12)$. Then \(r^3 = s^2 = e, s r = r^2 s\) and
\begin{align}
	S_3 = \{ v_1 = e, \ v_2 = r, \ v_3 = r^2 = (132), \ v_4 = s, \ v_5 = sr = (23), \ v_6 = sr^2 = (13) \}.
\end{align}
We take an invertible generating set of $S_3$ as $\{s,r,r^{-1}\}$. We do have a simpler one: $\{(12),(23)\}$, but to exhibit the benefit of double cover, a general setting with at least one generator not equal to its inverse is more suitable.

By the generation relation, there are nine edges labeled as:
\begin{align}
\begin{aligned}
	& e_1 = [v_1,v_2], \quad e_2 = [v_1,v_3], \quad e_3 = [v_1,v_4], \quad e_4 = [v_2,v_3], \quad e_5 = [v_2,v_6], \\
	& e_6 = [v_3,v_5], \quad e_7 = [v_4,v_5], \quad e_8 = [v_4,v_6], \quad e_9 = [v_5,v_6].
\end{aligned}
\end{align}

We order the generators as $\{a_1,a_2,a_3\} = \{s,r,r^{-1}\}$ and let $h = (h^{(1)},h^{(2)},h^{(3)})$ be the local parity-check matrix with three columns. Then the boundary operator and its associated Tanner code are given by
\begin{align*}
	\partial = 
	\begin{bmatrix}
		1 & 1 & 1 & 0 & 0 & 0 & 0 & 0 & 0 \\
		1 & 0 & 0 & 1 & 1 & 0 & 0 & 0 & 0 \\
		0 & 1 & 0 & 1 & 0 & 1 & 0 & 0 & 0 \\
		0 & 0 & 1 & 0 & 0 & 0 & 1 & 1 & 0 \\
		0 & 0 & 0 & 0 & 0 & 1 & 1 & 0 & 1 \\
		0 & 0 & 0 & 0 & 1 & 0 & 0 & 1 & 1 \\
	\end{bmatrix}, \quad
	H = 
	\begin{bmatrix}
		h^{(2)} & h^{(3)} & h^{(1)} & 0 & 0 & 0 & 0 & 0 & 0 \\
		h^{(3)} & 0 & 0 & h^{(2)} & h^{(1)} & 0 & 0 & 0 & 0 \\
		0 & h^{(2)} & 0 & h^{(3)} & 0 & h^{(1)} & 0 & 0 & 0 \\
		0 & 0 & h^{(1)} & 0 & 0 & 0 & h^{(2)} & h^{(3)} & 0 \\
		0 & 0 & 0 & 0 & 0 & h^{(1)} & h^{(3)} & 0 & h^{(2)} \\
		0 & 0 & 0 & 0 & h^{(1)} & 0 & 0 & h^{(2)} & h^{(3)}
	\end{bmatrix}.
\end{align*}
Generally, even if $\ker h \neq \emptyset$, we may not be sure that $\ker H$ is nontrivial because different parts of $h$ may be inserted into one column of $\partial$. However, in Section~\ref{sec:covering} and~\ref{sec:DLV_code}, what we are using is the double cover of the Cayley graph:
\begin{align}
\resizebox{0.7\textwidth}{!}{$\displaystyle 
\left(
\begin{array}{cc|cc|cc|cc|cc|cc|cc|cc|cc}
	1&0&1&0&1&0&0&0&0&0&0&0&0&0&0&0&0&0 \\
	0&1&0&1&0&1&0&0&0&0&0&0&0&0&0&0&0&0 \\
	\hline
	0&1&0&0&0&0&1&0&1&0&0&0&0&0&0&0&0&0 \\
	1&0&0&0&0&0&0&1&0&1&0&0&0&0&0&0&0&0 \\
	\hline
	0&0&0&1&0&0&0&1&0&0&1&0&0&0&0&0&0&0 \\
	0&0&1&0&0&0&1&0&0&0&0&1&0&0&0&0&0&0 \\
	\hline
	0&0&0&0&0&1&0&0&0&0&0&0&1&0&1&0&0&0 \\
	0&0&0&0&1&0&0&0&0&0&0&0&0&1&0&1&0&0 \\
	\hline
	0&0&0&0&0&0&0&0&0&0&0&1&0&1&0&0&1&0 \\
	0&0&0&0&0&0&0&0&0&0&1&0&1&0&0&0&0&1 \\
	\hline
	0&0&0&0&0&0&0&0&0&1&0&0&0&0&0&1&0&1 \\
	0&0&0&0&0&0&0&0&1&0&0&0&0&0&1&0&1&0 \\
\end{array}
\right).$}
\end{align}
Accordingly, the classical code is defined by
\begin{align}\label{eq:Tanner_cycle}
\resizebox{0.9\textwidth}{!}{$\displaystyle 
\left(
\begin{array}{cc|cc|cc|cc|cc|cc|cc|cc|cc}
	h^{(2)}&0&h^{(3)}&0&h^{(1)}&0&0&0&0&0&0&0&0&0&0&0&0&0 \\
	0&h^{(3)}&0&h^{(2)}&0&h^{(1)}&0&0&0&0&0&0&0&0&0&0&0&0 \\
	\hline
	0&h^{(3)}&0&0&0&0&h^{(2)}&0&h^{(1)}&0&0&0&0&0&0&0&0&0 \\
	h^{(2)}&0&0&0&0&0&0&h^{(3)}&0&h^{(1)}&0&0&0&0&0&0&0&0 \\
	\hline
	0&0&0&h^{(2)}&0&0&0&h^{(3)}&0&0&h^{(1)}&0&0&0&0&0&0&0 \\
	0&0&h^{(3)}&0&0&0&h^{(2)}&0&0&0&0&h^{(1)}&0&0&0&0&0&0 \\
	\hline
	0&0&0&0&0&h^{(1)}&0&0&0&0&0&0&h^{(2)}&0&h^{(3)}&0&0&0 \\
	0&0&0&0&h^{(1)}&0&0&0&0&0&0&0&0&h^{(3)}&0&h^{(2)}&0&0 \\
	\hline
	0&0&0&0&0&0&0&0&0&0&0&h^{(1)}&0&h^{(3)}&0&0&h^{(2)}&0 \\
	0&0&0&0&0&0&0&0&0&0&h^{(1)}&0&h^{(2)}&0&0&0&0&h^{(3)} \\
	\hline
	0&0&0&0&0&0&0&0&0&h^{(1)}&0&0&0&0&0&h^{(2)}&0&h^{(3)} \\
	0&0&0&0&0&0&0&0&h^{(1)}&0&0&0&0&0&h^{(3)}&0&h^{(2)}&0 
\end{array}
\right).$}
\end{align}
Now given any solution to $hz = 0$, copying the components of $z$ to suitable positions of the large code space can solve $Hx = 0$. 

For lifted graphs and the corresponding cubical complexes, we use Theorem~\ref{thm:induct_cup} to extend the solution from the base graph via lifts, which yields $\xi$.


\medskip 
We provide an explicit example to solve $\xi$. Let \(\mathbb{F}_q\) be a finite field with \(q = 2^s\) such that $q > n-1$. Choose $n$ distinct elements \(\beta_1, \beta_2, \dots, \beta_n \in \mathbb{F}_q\). For a given integer $m$ with $1 \le m \le n-1$, we define the $m \times n$ Vandermonde matrix over $\mathbb{F}_q$ as:
\begin{align}
\begin{pmatrix}
	1      & 1      & \dots  & 1      \\
	\beta_1 & \beta_2 & \dots  & \beta_n  \\
	\beta_1^2  & \beta_2^2  & \dots  & \beta_n^2  \\
	\vdots & \vdots & \ddots & \vdots \\
	\beta_1^{m-1} & \beta_2^{m-1} & \dots & \beta_n^{m-1}
\end{pmatrix}.	
\end{align}
As a generator matrix, it generates the $[\![ n, m, n-m+1 ]\!]$ Reed–Solomon code. As a parity-check matrix, its kernel is also an RS code with parameter $[\![ n, k = n-m, m+1 ]\!]$. To fulfill the conditions in Theorem~\ref{thm:cLDPC}, let
\begin{align}
	m_1 = \lceil 2\sqrt{n-1} \rceil, \quad m_2 = m_3 = \frac{n}{2} - \lfloor \sqrt{n-1} + c \rfloor,
\end{align}
where $c \in (-1,1)$ is a parameter for making $m_1 + m_2 + m_3 = n + 1$. Given this, the 3-fold tensor product of $h_1 = h', h_2 = h_3 = h$ is the Vandermonde matrix with the highest degree $\beta_i^{n-2}$ and many other repeated rows. Then the tensor product parity-check matrix yields the code with parameter $[\![ n, k = 1, n ]\!]$. The codeword satisfies:
\begin{align}
	x_i = \frac{1}{\prod_{j \neq i} (\beta_i - \beta_j)} \quad \text{for } i = 1,\dots,n.
\end{align} 
and we can use Theorem~\ref{thm:cycle}.

As in Section~\ref{sec:cup_sheaf}, we construct $\F_1,\F_2,\F_3$ by
\begin{align}
	(h',h,h), \quad (h,h',h), \quad (h,h,h').
\end{align}
Let $\mathcal{G}$ be a family of Lubotzky--Phillips--Sarnak (LPS) graphs of fixed degree $n$ with increasing vertices $\frac{V}{2}$~\cite{LPS1988}. We define the parity-check matrices $H_1^T = H'^T$, $H_2^T = H_3^T = H^T$ of the Tanner code based on the double cover of $\mathcal{G}$ and the corresponding local codes. Their hypergraph product 
\begin{align}
	C^0(X,\F_i) \rightarrow C^1(X,\F_i) \rightarrow C^2(X,\F_i) \rightarrow C^3(X,\F_i)
\end{align}
contains three code families. Recall that in Section~\ref{sec:HGP_local},
\begin{align}
	\xi \in \ker \begin{pmatrix} \mathring{H}_1^T \otimes I_E \otimes I_E \\ I_E \otimes \mathring{H}_2^T \otimes I_E \\ I_E \otimes I_E \otimes \mathring{H}_3^T \end{pmatrix} = \ker \mathring{H}_1^T \otimes \ker \mathring{H}_2^T \otimes \ker \mathring{H}_3^T 
\end{align}
where $\mathring{H}_i$ is defined by the tensor product sheaves. Given the above Vandermonde matrices, by Theorem~\ref{thm:cycle}, there is a unique $\xi$ up to scalars. It is obtained by solving each $\ker \mathring{H}_i^T$ as a classical Tanner code. By \eqref{eq:Tanner_cycle}, $\xi$ is nowhere vanishing.
 

The next step is to find proper logical representatives. Since $m_2 = m_3 < \frac{n}{2}$, we have no theoretical guarantee that $\dim \ker H$ can increase with $V$, but these results are sufficient to demonstrate the following corollary without any extra assumptions.

\begin{corollary}
	Given $t$-dimensional HGP codes based on LPS graph families equipped with the RS local codes defined by Vandermonde matrices of rank 
	\begin{align}\label{eq:HGP_RS_m}
		m_1 = \lceil 2\sqrt{n-1} \rceil, \quad m_2 = \cdots m_t = \frac{n}{2} - \frac{2}{t-1}\lfloor \sqrt{n-1} + c \rfloor,
	\end{align}
	where $c \in (-1,1)$ is a parameter to make $\sum_i m_i = n + 1$, the codes have parameters
	\begin{align}
		[\![ N, k = \Omega(N^{1/t}), d = \Theta(N^{1/t}) ]\!]
	\end{align} 
	and they support nontrivial constant-depth cup product gates.
\end{corollary}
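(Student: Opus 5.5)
We would prove the corollary by assembling three ingredients already established for HGP codes: the code parameters coming from Theorem~\ref{thm:cLDPC} and \eqref{eq:HGP_parameters}, the existence of a nowhere vanishing cycle $\xi$ from Theorem~\ref{thm:cycle} together with the Reed--Solomon computation above, and the cup-product evaluation from Section~\ref{sec:HGP_local} (Theorem~\ref{thm:cup_HGP_2}).

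\textbf{Parameters.} LPS graphs are $n$-regular Ramanujan graphs, and the double cover of a Ramanujan graph is again Ramanujan up to the trivial eigenvalues $\pm n$. So $\lambda \le 2\sqrt{n-1}$. The local code $\ker h_1$ has distance $m_1+1 > 2\sqrt{n-1}$, and its dual (an RS code of dimension $m_1$) has distance $n-m_1+1$, which also exceeds $\lambda$. Hence Theorem~\ref{thm:cLDPC} gives $d_1,d_1^T=\Theta(V)$. For $j\ge 2$ the choice \eqref{eq:HGP_RS_m} keeps $m_j$ roughly equal to $n/2-\Theta(\sqrt n)$. The code distance $m_j+1$ then exceeds $\lambda$ for large $n$, and so does the dual distance $n-m_j+1$. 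Thus all $d_j,d_j^T$ are linear in $V$.

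Plugging into \eqref{eq:HGP_parameters}, generalized to $t$ factors with qudits on $1$-cubes, gives $d_Z=\min_j d_j^T=\Theta(V)$ and $d_X$ at least a product of linear distances. Since $N=\Theta(V^t)$, this yields $d=\Theta(N^{1/t})$. For $k$ we use \eqref{eq:kappa^T_2}: $m_j<n/2$ gives $k_j^T\ge (n/2-m_j)V=\Omega(V)$. One term of $k$ is $k_1^T\prod_{j\ge2}k_j$, and each $k_j\ge 1$ because the constant local vector lies in $\ker H_j$ as in Section~\ref{sec:HGP_local}. Hence $k=\Omega(N^{1/t})$.

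\textbf{Cycle.} Because $\sum_i m_i=n+1$, the entrywise tensor product of the $t$ Vandermonde matrices spans the monomials $\beta^{e}$ with $0\le e\le n-2$. We view this as the union of exponents in the product of the rows $h_1(\text{-},a)\otimes\cdots$. Its kernel is the one-dimensional generalized RS dual spanned by $x_i=\prod_{j\ne i}(\beta_i-\beta_j)^{-1}$, which is nowhere zero. Theorem~\ref{thm:cycle} then extends this kernel over the double cover to a nowhere vanishing element of each $\ker\mathring{H}_j^T$. Their tensor product is a nowhere vanishing $\xi\in\ker\partial_t$ on $C_t(X',\F_1\otimes\cdots\otimes\F_t)$, with the sheaves arranged as in Section~\ref{sec:cup_sheaf}.

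\textbf{Nontrivial gate.} Following \eqref{eq:HGP_basis_F}, we pick edges $[v_j;a_j]$ and local vectors $z_j$ satisfying $h_j^T(a_j,\text{-})z_j=1$. Such $z_j$ exist because $h_j$ has full rank and no zero column, since its dual distance is at least $2$. The cup product $L_1\smile_c\cdots\smile_c L_t$ is then the single $t$-cube with coefficient $1$, so $T_\xi$ equals the value of $\xi$ on that cube, which is nonzero. Theorem~\ref{thm:cup_cubical} then supplies invariance and constant depth.

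\textbf{Main obstacle.} The step we expect to be hardest is verifying that the entrywise tensor product of the rows actually generates all monomials up to degree $n-2$ and nothing beyond. This requires the exponent sumset $\{0..m_1-1\}+\cdots+\{0..m_t-1\}=\{0..n-t\}$ to reach exactly the right degree. The exact bookkeeping of the parameter $c$ with the $\frac{2}{t-1}$ factor in \eqref{eq:HGP_RS_m} may need adjustment, for example to $\sum m_i=n+t-2$, so that the kernel is exactly one-dimensional.
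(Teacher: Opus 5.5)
Your route is the paper's own. Section~\ref{sec:cycles} carries out exactly this argument for $t=3$, and your parameter count agrees with it. The ingredients are the same:
\begin{itemize}
\item Reed--Solomon local codes with $m_1+m_2+m_3=n+1$;
\item the face-splitting Vandermonde product of degree at most $n-2$;
\item the nowhere-vanishing vector $x_i=\prod_{j\neq i}(\beta_i-\beta_j)^{-1}$, copied over the double cover via Theorem~\ref{thm:cycle};
\item the representatives of Section~\ref{sec:HGP_local}, whose cup product is a single cube with coefficient $1$.
\end{itemize}
The obstacle you flagged is genuine, but it is not about hitting degree exactly $n-2$, and your proposed repair does not fix it. One-dimensionality of the local kernel is never needed. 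We have $\sum_i\beta_i^e x_i=0$ for every $0\le e\le n-2$, since this sum is the $X^{n-1}$-coefficient of the Lagrange interpolant of $X^e$. So any choice with $\sum_i m_i-t\le n-2$ already yields a nowhere-vanishing $\xi$.

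The real failure is for $t\ge 4$. The stated $m_2=\cdots=m_t=\frac n2-O(\sqrt n)$ give $\sum_i m_i\approx\frac{(t-1)n}{2}$. Hence no $c\in(-1,1)$ achieves $\sum_i m_i=n+1$, nor $n+t-2$. For these $m_j$ the exponent sumset of $h'\otimes h^{\otimes(t-1)}$ contains $\{0,\dots,n-1\}$. The face-splitting matrix therefore contains an invertible $n\times n$ Vandermonde block and has trivial kernel. Every local view of an element of $\ker\mathring{H}_j^T$ must then vanish, so $\ker\partial_t=\bigotimes_j\ker\mathring{H}_j^T=0$. Consequently no cycle $\xi$ exists for the degree-$t$ invariant, and your argument, like the paper's, proves the corollary only for $t=3$.

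For general $t$ you must take $m_2=\cdots=m_t$ of order $n/(t-1)$ rather than $n/2$, e.g.\ $m_j\approx(n+t-2-m_1)/(t-1)$. For large $n$ these still satisfy $2\sqrt{n-1}<m_j<n/2$. Your parameter paragraph then goes through unchanged, since only $m_1<n/2$ enters the bound on $k$. Your cycle and cup-product paragraphs then apply verbatim.
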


As mentioned in Remark~\ref{remark:HGP_2}, the above distance is optimal in the regime of HGP codes, but a tight bound on $k$ is missing and we make the following conjecture.

\begin{conjecture}
	 For a fixed $t$ and large $n$ for which $m_2, \ldots,m_t$ from \eqref{eq:HGP_RS_m} is close enough to $\frac{n}{2}$, $\dim \ker H = \Theta(V)$, implying $k = \Theta(V^3) = \Theta(N)$.
\end{conjecture} 

If this is the case, most techniques to tackle local coefficients in Section~\ref{sec:module} can be performed here, e.g., in 3D, we are able to construct $\Theta(V^3)$ logical representatives for each code family such that 
\begin{align}
	L_{1}^{j_1} \smile_c L_{2}^{j_2} \smile_c L_{3}^{j_3} = \delta_{j_1,j_2,j_3}.
\end{align}
Together with the nowhere vanishing $\xi$, we would obtain the invariant form $T_\xi$ of subrank equal to $\Theta(V^3) = \Theta(N)$.  Consequently, the validity of this conjecture directly implies constant-overhead magic state distillation with qLDPC codes.

\printbibliography[heading=bibintoc,title=References]

\end{document}